\definecolor{citecolor}{rgb}{0.8,0,0}
\definecolor{linkcolor}{rgb}{0,0,0.8}
\definecolor{urlcolor}{rgb}{0,0,0.8}
\def\th@plain{%
  \thm@notefont{}
  \itshape 
}
\def\th@definition{%
  \thm@notefont{}
  \normalfont 
}
\newtheorem{theorem}{Theorem}
\Crefname{theorem}{Theorem}{Theorems}
\newtheorem{corollary}[theorem]{Corollary}
\Crefname{corollary}{Corollary}{Corollaries}
\newtheorem{lemma}[theorem]{Lemma}
\Crefname{lemma}{Lemma}{Lemmata}
\theoremstyle{definition}
\newtheorem{definition}[theorem]{Definition}
\Crefname{definition}{Definition}{Definitions}
\theoremstyle{remark}
\Crefname{remark}{Remark}{Remarks}
\theoremstyle{definition}
\Crefname{algorithm}{Algorithm}{Algorithms}
\Crefname{proof}{Proof}{Proofs}
\Crefname{section}{Section}{Sections}
\newtheorem*{keywords}{Keywords}
\newcommand{\N}{\ensuremath{\mathbb{N}}}
\newcommand{\Z}{\ensuremath{\mathbb{Z}}}
\newcommand{\R}{\ensuremath{\mathbb{R}}}
\newcommand{\C}{\ensuremath{\mathbb{C}}}
\newcommand{\tz}{\tilde{z}}
\newcommand{\tP}{\tilde{P}}
\newcommand{\ty}{\tilde{y}}
\newcommand{\tO}{\tilde{O}}
\newcommand{\ceil}[1]{\lceil #1 \rceil}
\newcommand{\floor}[1]{\lfloor #1 \rfloor}
\newcommand{\Mea}{\operatorname{Mea}}
\newcommand{\AD}{\textsc{A}\textsc{New}\textsc{Dsc}}
\newcommand{\ignore}[1]{}
\newcommand{\multipoint}[2]{#1[#2]}
\newcommand{\set}[2]{\{ #1\; ;\; #2 \}}
\newcommand{\sset}[1]{\{ #1 \}}
\newcommand{\assign}{\mathop{:=}}
\newcommand{\abs}[1]{\left| #1 \right|}
\newcommand{\norm}[1]{\left\| #1 \right\|}
\newcommand{\onenorm}[1]{\norm{#1}}
\newcommand{\normshort}[1]{\| #1 \|}
\newcommand{\onenormshort}[1]{\normshort{#1}}
\newcommand{\sgn}{\operatorname{sgn}}
\newcommand{\var}{\operatorname{var}}
\newcommand{\Disc}{\mathrm{Disc}}
\newcommand{\cosym}{/\!\!/}
\newcommand{\makeremark}[2]{
  \newcommand{#1}[1]
    {
    {\color{blue}
     $\longrightarrow$ \textsc{#2: }
     ##1
     $\longleftarrow$}
    }
}
\newcommand{\MS}[1]{\textcolor{black}{#1}}
\newcommand{\Kurt}[1]{\textcolor{black}{#1}}
\makeremark{\MSs}{Michael says}
\makeremark{\Kurts}{Kurt says}
\title{Computing Real Roots of Real Polynomials 
}
\author{Michael Sagraloff\thanks{MPI for Informatics, Campus E1 4, 66123 Saarbr\"{u}cken, Germany. emails: \url{msagralo@mpi-inf.mpg.de} and \url{mehlhorn@mpi-inf.mpg.de}. The author ordering deviates from the default alphabetic ordering used in Theoretical Computer Science, because the first author contributed significantly more to the paper than the second author.} \and Kurt Mehlhorn$^*$}
\begin{document}

\maketitle

\begin{abstract}Computing the roots of a univariate polynomial is a fundamental and long-studied problem of computational algebra with applications in mathematics, engineering, computer science, and the natural sciences. 
For isolating as well as for approximating all complex roots, the best algorithm known is based on 
an almost optimal method for approximate polynomial factorization, introduced by Pan in 2002. Pan's factorization algorithm goes back to the splitting circle method from Sch\"onhage in 1982. The main drawbacks of Pan's method are that it is quite involved\footnote{In Victor Pan's own words: ``Our algorithms are quite involved, and their implementation would require a non-trivial work, incorporating numerous known implementation techniques and tricks''. In fact, we are not aware of any implementation of Pan's method.\label{footnote:paniscomplex}} and that all roots have to be computed at the same time. For the important special case, where only the real roots have to be computed, much simpler methods are used in practice; however, they considerably lag behind Pan's method with respect to complexity.

\Kurt{In this paper, we resolve this discrepancy by introducing a hybrid of the Descartes method and Newton iteration, denoted $\AD$, which is simpler than Pan's method, but achieves a run-time comparable to it. Our algorithm computes isolating intervals for
the real roots of any real square-free polynomial, given by an oracle that provides arbitrary good approximations of the polynomial's coefficients. $\AD$ can also be used to only isolate the roots in a given interval and to refine the isolating intervals to an arbitrary small size; it achieves near optimal complexity for the latter task.}
\end{abstract}

\begin{keywords}
    root finding, root isolation, root refinement, approximate arithmetic, certified computation, complexity analysis
  \end{keywords}

\section{Introduction}

Computing the roots of a univariate polynomial is a fundamental problem in computational algebra. Many problems from mathematics, 
engineering, computer science, and the natural sciences can be reduced to solving a system of polynomial equations, which in turn reduces to solving a polynomial equation in one variable by 
means of elimination techniques such as resultants or Gr\"obner Bases. Hence, it is not surprising that this problem has been extensively studied and that numerous approaches have been developed; see~\cite{McNamee:2002,McNamee2007,McNamee-Pan2013,Pan:history} for 
an extensive (historical) treatment. \Kurt{Finding all roots of a polynomial and the  approximate factorization of a polynomial into linear factors are closely related. The most efficient algorithm for approximate factorization is due to Pan~\cite{Pan:alg}; it is based on the splitting circle method of Sch\"onhage~\cite{schonhage:fundamental} and considerably refines it. From an approximate factorization, one can derive arbitrary good approximations of all complex roots as well as corresponding isolating disks; e.g. see~\cite{Pan:survey,MSW-rootfinding2013}. The main drawbacks of Pan's algorithm are that it is quite involved (see Footnote~\ref{footnote:paniscomplex}) and that it necessarily computes all roots, i.e., cannot be used to only isolate the roots in a given region. It has not yet been implemented.} In contrast, simpler approaches, namely Aberth's, Weierstrass-Durand-Kerner's and QR algorithms, found their way into popular packages such as \textsc{MPSolve}~\cite{Bini-Fiorentino} or \texttt{eigensolve}~\cite{DBLP:journals/jsc/Fortune02}, although their excellent empirical behavior has never been entirely verified in theory.

In parallel, there is steady ongoing research on the 
development of dedicated real roots solvers that also allow to search for the roots only in a given interval. Several methods (e.g.~Sturm 
method, Descartes method, continued fraction method, Bolzano method) have been proposed, and there exist many corresponding implementations in computer algebra systems. With respect to computational complexity, all of these methods lag considerably behind the splitting circle approach. \emph{In this paper, we resolve this discrepancy by introducing a hybrid of the Descartes method and Newton iteration, denoted $\AD$ (read approximate-arithmetic-Newton-Descartes). \Kurt{Our algorithm is simpler than Pan's algorithm, is already implemented with very promising results for polynomials with integer coefficients~\cite{Kobel-Rouillier-Sagraloff}, and has a complexity comparable to that of Pan's method.}}

\subsection{Algorithm and Results}

Before discussing the related work in more detail, we first outline our algorithm and provide the main results. Given a square-free univariate polynomial $P$ with real coefficients, the goal is to compute disjoint intervals on the real line such that all real roots are contained in the union of the intervals and each interval contains exactly one real root. The Descartes or Vincent-Collins-Akritas\footnote{Descartes did not formulate an algorithm for isolating the real roots of a polynomial but (only) a method for bounding the number of positive real roots of a univariate polynomial (Descartes' rule of signs). Collins and Akritas~\cite{Collins-Akritas} based on ideas going back to Vincent formulated a bisection algorithm based on Descartes' rule of signs.} method is a simple and popular algorithm for real root isolation. It starts with an open interval guaranteed to contain all real roots and repeatedly subdivides the interval into two open intervals and a split point. The split point is a root if and only if the polynomial evaluates to zero at the split point. For any interval $I$, Descartes' rule of signs (see Section~\ref{sec:DRS:basis}) allows one to compute an integer $v_I$, which bounds the number $m_I$ of real roots in $I$ and is equal to $m_I$, if $v_I \le 1$. The method discards intervals $I$ with $v_I = 0$, outputs intervals $I$ with $v_I = 1$ as isolating intervals for the unique real root contained in them, and splits intervals $I$ with $v_I \ge 2$ further. The procedure is guaranteed to terminate for square-free polynomials, as $v_I = 0$,  if the circumcircle of $I$ \Kurt{(= the one-circle region of $I$)} contains no root of $p$, and $v_I = 1$, if the union of the circumcircles of the two equilateral triangles with side $I$ \Kurt{(the two-circle region of $I$)} contains exactly one root of $I$, see Figure~\ref{fig:Obreshkoff} \Kurt{on page~\pageref{page:Obreshkoff}}.

The advantages of the Descartes method are its simplicity and the fact that it applies to polynomials with  real coefficients. The latter has to be taken with a grain of salt. The method uses the four basic arithmetic operations (requiring only divisions by two) and the sign-test for numbers in the field of coefficients. In particular, if the input polynomial has integer or rational coefficients, the computation stays within the rational numbers. Signs of rational numbers are readily determined. In the presence of non-rational coefficients, the sign-test becomes problematic. 

The disadvantages of the Descartes method are its inefficiency when roots are clustered and its need for exact arithmetic. When roots are clustered, there can be many subsequent subdivision steps, say splitting $I$ into $I'$ and $I''$, where $\min(v_{I'},v_{I''}) = 0$ and $\max(v_{I'},v_{I''}) = v_I$. Such subdivision steps exhibit only linear convergence to the cluster of roots as an interval $I$ is split into equally sized intervals. The need for exact arithmetic stems from the fact that it is crucial for the correctness of the algorithm that sign-tests are carried out exactly. It is known how to overcome each one of the two weaknesses separately; however, it is not known how to overcome them simultaneously. Our main result achieves this. That is, we present an algorithm $\AD$ that overcomes both shortcomings at the same time. Our algorithm applies to real polynomials given through coefficient oracles\footnote{A coefficient oracle provides arbitrarily good approximations of the coefficients.}, and our algorithm works well in the presence of clustered roots. 

We are now ready for a high-level description of the algorithm. The correctness of the Descartes method rests on exact sign computations; however, the exact computation of the sign of a number does not necessarily require the exact computation of the number. The Descartes algorithm uses the sign-test in two situations: It needs to determine whether the polynomial evaluates to zero at the split point, and it determines $v_I$ as the number of sign changes in the coefficient sequence of polynomials $P_I$, where $P_I$ is a polynomial determined by the interval $I$ and the input polynomial $P$. We borrow from~\cite{DBLP:conf/casc/EigenwilligKKMSW05} the idea of carefully choosing split points so as to guarantee that $P$ is relatively large at split points. Our realization of the idea is, however, quite different and is based on a fast method for approximately evaluating a polynomial at many points~\cite{DBLP:journals/jc/Kirrinnis98,DBLP:journals/corr/abs-1304.8069,DBLP:journals/corr/KobelS14,Ritzmann19861,JvH2008}. We describe the details in Section~\ref{sec:multipoint}. The choice of a split point where the polynomial has a large absolute value has a nice consequence for the sign change computation. Namely, the cases $v_I = 0$, $v_I = 1$, and $v_I > 1$ can be distinguished with approximate arithmetic. We give more details in Section~\ref{sec:apxdsc}. 

The recursion tree of the Descartes method may have size $\Omega(n \tau)$. However, there are only $O(n)$ nodes where both children are subdivided further. This holds because the sum of the number of sign changes at the children of a node is at most the number of sign changes at the node. 
In other words, large subdivision trees must have long chains of nodes, where the interval that is split off is immediately discarded. Long chains are an indication of clustered roots. We borrow from~\cite{NewDsc}  the idea of traversing such chains more efficiently by combining Descartes' rule of signs, Newton iteration, and a subdivision strategy similar to Abbott's quadratic interval refinement (QIR for short) method~\cite{abbott-quadratic}. As a consequence, quadratic convergence towards the real roots instead of the linear convergence of pure bisection is achieved in most iterations. Again, our realization of the idea is quite different. For example, Newton iteration towards a cluster of $k$ roots needs to know $k$ (see equation~\ref{newton}). The algorithm in~\cite{NewDsc} uses exact arithmetic and, therefore, can determine the exact number $v_I$ of sign changes in the coefficient sequence of polynomials $P_I$; $v_I$ is used as an estimate for the size of a cluster contained in interval $I$. We cannot compute $v_I$ (in the case $v_I > 1$, we only know this fact and have no further knowledge about the value of $v_I$) and hence have to estimate the size of a cluster differently. We tentatively perform Newton steps from both endpoints of the interval and determine a $k$ such that both attempts essentially lead to the same iterate. Also, we use a variant of quadratic interval refinement because it interpolates nicely between linear and quadratic convergence. Details are given in Section~\ref{sec:Algorithm}. This completes the high-level description of the algorithm.

A variant of our algorithm using randomization instead of multi-point evaluation for the choice of split point has already been implemented for polynomials with integer coefficients~\cite{Kobel-Rouillier-Sagraloff}. First experiments are quite promising. The implementation seems to be competitive with the fastest existing real root solvers for all instances and much superior for some hard instances with clustered roots, e.g., Mignotte polynomials.\footnote{Mignotte~\cite{Mignotte}[Section 4.6] considers the polynomial $x^n - 2 ( a x - 1)^2$, where $n \ge 3$ and $a \ge 10$ is an integer. He shows that the polynomial has two real roots in the interval $(1/a - h,1/a + h)$, where $h = a^{-(n + 2)/2}$.} 

The worst-case bit complexity of our algorithm essentially matches the bit complexity of the algorithms that are based on Pan's algorithm. More specifically, we prove the following theorems:

\begin{theorem}\label{thm: main} Let $P(x) = P_n x^n + \ldots + P_1 x^1 + P_0 \in \R[x]$ be a real, square-free polynomial of degree $n$ with\footnote{If $P(x)$ is arbitrary, it suffices to determine an integer $t$ with $2^t/4 \le P_n \le 2^t$ and to consider $2^{-t} P(x)$.}  $1/4 \le P_n \le 1$. The algorithm $\AD$ determines isolating intervals
for all real roots of $P$ with a number of bit operations bounded by\footnote{The $\tilde{O}$-notation suppresses polylogarithmic factors, i.e., $\tilde{O}(T) = O(T (\log T)^k)$, where $k$ is any fixed integer.}
\begin{align}\label{main:complexity bound:1}
&\tilde{O}(n\cdot(n^2+n \log \Mea(P)+\sum_{i=1}^n \log M(P'(z_i)^{-1}))) \\
                              &\qquad= \tilde{O}(n(n^2 + n \log \Mea(P) + \log M(\Disc(P)^{-1}))).\nonumber
\end{align}
The coefficients of $P$ have to be approximated with quality\footnote{Let $L \ge 1$ be an integer and let $z$ be real. We call $\tz = s \cdot 2^{-(L+1)}$ with $s \in \Z$  an \emph{approximation of $z$ with quality $L$} if $\abs{z - \tz} \le 2^{-L}$.}
\[
\tilde{O}(n+\tau_{P}+\max_{i}(n\log M(z_{i})  + \log M(P'(z_{i})^{-1}) )).\]
Here $M(x) := \max(1, \abs{x})$, $z_1$ to $z_n$ are the roots of $P$, $\Mea(P):=|P_n|\cdot\prod_{i=1}^{n} M(|z_i|)$ denotes the \emph{Mahler Measure} of $P$, $\Disc(P) := P_n^{2n - 2} \prod_{1 \le i < j \le n} (z_j - z_i)^2$ is the \emph{discriminant} of $P$, $\tau_P := M( \abs {\log(\max_i \abs{P_i})})$,
 and $P'$ is the derivative of $P$. 
\end{theorem}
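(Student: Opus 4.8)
The plan is to analyze the algorithm $\AD$ as a recursion tree, bounding (i) the size of the tree and (ii) the cost incurred at each node, and then to multiply the two. For the tree size, I would follow the classical Davenport--Mahler--Schönhage machinery. The Descartes part of the tree has two kinds of nodes: branching nodes, where both children are processed further, and ``chain'' nodes along which one child is always discarded. Since the number of sign variations is subadditive over a split, there are only $O(n)$ branching nodes, hence $O(n)$ maximal chains. Along each chain the Newton/QIR acceleration gives quadratic convergence towards a cluster of roots in all but $O(1)$ or $O(\log n)$ steps per ``doubling'' of the accuracy, so the number of nodes on a chain that ends up isolating a root near $z_i$ is $\softOh(\log(\text{something like }M(z_i)) + \log M(P'(z_i)^{-1}))$ — the latter term measuring how much one must zoom in before the one- or two-circle (Obreshkoff) test succeeds, which in turn is governed by the distance from $z_i$ to the other roots, i.e.\ by $|P'(z_i)|$. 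Summing over the $O(n)$ chains gives a total tree size of $\softOh(n + n\log\Mea(P) + \sum_i \log M(P'(z_i)^{-1}))$.

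Next I would bound the work per node. At each node on an interval $I$ the algorithm must: pick a good split/test point where $|P|$ is not too small (using the multipoint-evaluation machinery of Section~\ref{sec:multipoint}), evaluate the Taylor shift $P_I$ to enough precision, count sign variations well enough to decide among $v_I=0$, $v_I=1$, $v_I>1$ (Section~\ref{sec:apxdsc}), and perform the tentative Newton steps from both endpoints. Each of these is a constant number of polynomial operations on a degree-$n$ polynomial carried out to a working precision of roughly $\rho = \softOh(n + \tau_P + n\log M(z_i) + \log M(P'(z_i)^{-1}))$ bits — this precision bound is exactly what the coefficient-oracle quality in the statement provides, and it is the precision needed so that the approximate sign computations and the approximate Newton steps are provably correct. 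Fast polynomial multiplication/Taylor shift/multipoint evaluation then costs $\softOh(n\cdot\rho)$ bit operations per node. Multiplying the per-node cost $\softOh(n\rho)$ by the tree size, and being slightly careful to charge each node the precision appropriate to the root its chain is converging to, yields the bound $\softOh\big(n\,(n^2 + n\log\Mea(P) + \sum_i \log M(P'(z_i)^{-1}))\big)$.

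Finally I would convert the $\sum_i \log M(P'(z_i)^{-1})$ term into the discriminant form. This is the standard identity $\prod_i |P'(z_i)| = |P_n|^{-1}\,\big|\prod_{i\neq j}(z_i - z_j)\big| = |\Disc(P)|/|P_n|$ (up to the $P_n$ powers), so that $\sum_i \log M(P'(z_i)^{-1}) \le \log M(\Disc(P)^{-1}) + O(n\log\Mea(P))$, the extra term being absorbed into the already-present $n\log\Mea(P)$. Likewise $\sum_i \log M(z_i) = O(\log\Mea(P))$, which reconciles the two displayed forms of the complexity and the oracle-quality bound.

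I expect the main obstacle to be step (i) combined with the per-node precision accounting: proving that the Newton/QIR scheme on a chain really achieves the near-quadratic speedup \emph{when only the trichotomy} $v_I \in \{0,1,{>}1\}$ \emph{is available} — since, unlike~\cite{NewDsc}, we cannot read off the exact cluster size $v_I$ and must instead recover the multiplicity $k$ by matching the two tentative Newton iterates from the endpoints. Making that estimate of $k$ provably good, and showing the rare failures of the Obreshkoff test (when the chosen point is ``borderline'') cost only polylogarithmically many extra steps, is the delicate part; everything else is a careful but routine bookkeeping of precisions feeding into the fast arithmetic cost bounds.
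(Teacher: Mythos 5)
Your high-level architecture — bound the tree size, bound the per-node cost, and amortize by charging each node to a nearby root — is the same as the paper's, but one central estimate is off by an entire level of logarithm, and this would propagate to an incorrect final bound.

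You claim that the number of nodes on a chain converging to $z_i$ is $\softOh(\log M(z_i) + \log M(P'(z_i)^{-1}))$. This is the chain length you would get from \emph{bisection}: you need $\Theta(\log(\text{width ratio}))$ halvings. But the entire point of the Newton/QIR acceleration is that $N_I$ is \emph{squared} at each quadratic step, so the interval width shrinks doubly exponentially along a run of quadratic steps. The paper's Lemma~\ref{lemma:sequence} and Lemma~\ref{lemma:bound on smax} make this precise: the length of any maximal chain is $s_{\max} = O(\log n + \log(\Gamma + \log M(\sigma_P^{-1})))$, i.e.\ a $\log\log$ of the geometric quantities. Consequently the tree size is $O(n\,s_{\max}) = \softOh(n)$, not $\softOh(n + n\log\Mea(P) + \sum_i \log M(P'(z_i)^{-1}))$. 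Your stated tree size is really a bound on the total \emph{precision work}, not on the number of nodes. If one tries to complete the calculation with your chain length, the cost attributed to root $z_i$ becomes $\softOh\big(n\,(\log M(P'(z_i)^{-1}))^2\big)$ rather than $\softOh\big(n\,\log M(P'(z_i)^{-1})\big)$, i.e.\ one log factor too many, so the final bound would not match the theorem.

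The second thing the proposal does not address is how one actually \emph{proves} that the working precision at a node is bounded by $\rho = \softOh(n + \tau_P + n\log M(z_i) + \log M(P'(z_i)^{-1}))$. It is not enough to quote the multipoint-evaluation machinery; one must show that the special points produced by $\AD$ really satisfy a lower bound of the form $|P(\xi)| > 2^{-O(n\log n + \tau_P)}\,M(z_i)^{-O(n)}\,\sigma(z_i,P)^{O(1)}\,|P'(z_i)|^{O(1)}$. The paper does this in Lemma~\ref{lem:inequalityspecial} by an induction on tree depth: the endpoints of an interval $I$ inherit the lower bound from the ancestor interval on whose multipoint they were chosen, and freshly chosen admissible split points are controlled via Lemma~\ref{lem:boundonmax} (with a careful bookkeeping of $N_I$ and $w(J)/w(I)$). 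This inductive argument is the technical crux and is missing from your outline.

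Finally, you correctly flag that the Newton step must be shown to succeed even though only the trichotomy $v_I \in \{0,1,{>}1\}$ is available; this is exactly what Lemma~\ref{newtonsuccess} proves, using the two tentative Newton iterates from well-separated base points to recover the cluster multiplicity $k$. You identify that difficulty correctly, but without the corrected chain-length bound and the inductive per-node precision lemma, the argument does not close.
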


For polynomials with integer coefficients, the bound can be stated more simply. 

\begin{theorem}\label{thm: main integer case}
For a square-free polynomial $P \in\Z[x]$ with integer coefficients of absolute value $2^\tau$ or less, the algorithm $\AD$ computes isolating intervals for all real roots of $P$ with $\tilde{O}(n^3+n^2\tau)$ bit operations. \Kurt{If $P$ has only $k$ non-vanishing coefficients, the bound improves to $\tilde{O}(n^2(k + \tau))$.}
\end{theorem}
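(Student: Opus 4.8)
The plan is to derive \Cref{thm: main integer case} from \Cref{thm: main} by specializing the input-dependent quantities; no new algorithmic ingredient is required. First I would dispose of the coefficient oracle: since $P\in\Z[x]$ its coefficients are available exactly, so the oracle is trivial and the approximation-quality demand of \Cref{thm: main} is met for free. If $|P_n|\neq 1$ one replaces $P$ by $2^{-t}P$ for the integer $t$ with $2^t/4\le P_n\le 2^t$ (see the footnote to \Cref{thm: main}); here $t=O(\tau)$, and this scaling alters $\Mea(P)$ and $\Disc(P)$ only by fixed powers of $2^t$, hence does not change the asserted bounds.

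Next I would estimate the two nontrivial terms of \eqref{main:complexity bound:1}. Since the Mahler measure is bounded by the Euclidean norm of the coefficient vector (Landau's inequality), $\Mea(P)\le\|P\|_{2}\le\sqrt{n+1}\,\|P\|_{\infty}\le\sqrt{n+1}\,2^{\tau}$, so $\log\Mea(P)=O(\tau+\log n)$ and $n\log\Mea(P)=\tilde{O}(n\tau)$. The decisive observation concerns the discriminant: because $P$ is square-free with integer coefficients, $\Disc(P)$ is a \emph{nonzero integer}, whence $|\Disc(P)|\ge 1$, so $M(\Disc(P)^{-1})=1$ and $\log M(\Disc(P)^{-1})=0$. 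Plugging these into the bound of \Cref{thm: main} collapses it to $\tilde{O}\bigl(n\,(n^{2}+n\tau)\bigr)=\tilde{O}(n^{3}+n^{2}\tau)$, which is the first claim.

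For the refinement to $k$-nomials I would re-enter the complexity analysis behind \Cref{thm: main} rather than treat it as a black box. Descartes' rule of signs bounds the number of positive (resp.\ negative) real roots of a polynomial with $k$ non-vanishing coefficients by $k-1$, so $P$ has $O(k)$ real roots in total. Consequently the number of isolating intervals returned, the number of branching nodes of the subdivision tree, and the number of Newton/quadratic-interval-refinement phases that are charged in the analysis all decrease from $O(n)$ to $O(k)$. The summand $n^{2}$ inside \eqref{main:complexity bound:1} is precisely the number of such combinatorial events times the $\tilde{O}(n)$ bit operations spent per event on degree-$n$ polynomial arithmetic; replacing that counting factor $n$ by $k$ while leaving the per-event cost untouched yields $\tilde{O}\bigl(n\,(nk+n\tau)\bigr)=\tilde{O}(n^{2}(k+\tau))$.

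I expect the substitutions in the first two paragraphs to be routine; the step that needs care is the last one. One has to verify that each contribution to the $n^{3}$ bound genuinely carries a factor counting combinatorial events --- produced intervals, branching nodes, or root clusters --- and not the total number of leaves of the subdivision tree or the degree itself, so that the $\tilde{O}(n)$ per-event arithmetic cost is not double-counted into the $k$-saving. One also has to confirm that the working precisions used along the recursion stay in $\tilde{O}(n+\tau+\cdots)$ and that the aggregate clustering term $\sum_{i}\log M(P'(z_{i})^{-1})$ stays $\tilde{O}(n\tau)$; here, once more, the inequality $|\Disc(P)|\ge 1$ for a square-free integer polynomial is what keeps this quantity under control. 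All remaining estimates are direct insertions into the already-established bound of \Cref{thm: main}.
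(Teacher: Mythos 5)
Your derivation of the first bound $\tilde{O}(n^3+n^2\tau)$ matches the paper's route (Corollary~\ref{complexity:integer}): normalize the leading coefficient, bound $\tau_P$ and $\Mea(P)$ by $O(\tau+\log n)$ via inequality~(\ref{upper bound on measure}), and use that the discriminant of a square-free integer polynomial is a nonzero integer, hence $|\Disc(p)|\ge 1$. One does need to track the effect of the scaling on the discriminant — $\Disc(2^{-t}p)=2^{-(2n-2)t}\Disc(p)$, so $\log M(\Disc(P)^{-1})=O(n\tau)$ rather than $0$ after scaling — but the extra $n\cdot O(n\tau)$ is absorbed in $\tilde{O}(n^2\tau)$, and your first paragraph anticipates this. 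So the non-sparse part is fine.

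For the $k$-nomial claim there is a genuine gap. You infer from Descartes' rule that $P$ has $O(k)$ \emph{real} roots, and then assert that the number of branching/splitting nodes in the subdivision forest drops from $O(n)$ to $O(k)$. This inference is not justified by what you have written. In Theorem~\ref{thm:treesize}, the splitting-node count is governed by $\sum_k|\mathcal{M}(\mathcal{I}_k)|$, which counts \emph{all} roots of $P$ — including non-real ones — lying in the one-circle regions $\Delta(\mathcal{I}_k)$. A $k$-nomial with only $O(k)$ real roots can still have $\Theta(n)$ complex roots near the real axis, so knowing the count of real roots does not, by itself, cap the number of splitting nodes or the tree size. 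The paper instead proves Theorem~\ref{thm:treesizesparse} via a \emph{sign-variation} bound on the tree (Theorem~\ref{thm:sizetreealt}): the number of ``special'' nodes in $\mathcal{T}_k$ is controlled by a potential $\mu=\sum_{I\in\mathcal{A}}(\var(P,I)-1)$ that decreases at each special node, giving $|\mathcal{T}_k|=O(\var(P,\mathcal{I}_k)\log(n\tau))$; then, by subadditivity of $\var(P,\cdot)$ (Theorem~\ref{subad}) and splitting the real line at $0$, one bounds $\sum_k\var(P,\mathcal{I}_k)$ by $v^+(P)+v^-(P)\le 2(k-1)$, the sign changes in the coefficient sequences of $P(x)$ and $P(-x)$. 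It is this purely combinatorial sign-variation count — not the number of real roots — that yields $O(k\log(n\tau))$ iterations and hence $\tilde{O}(n^2(k+\tau))$ bit operations. Your own closing caveat (``one has to verify that each contribution to the $n^3$ bound genuinely carries a factor counting combinatorial events'') flags exactly this issue, but the argument you give does not close it: you would need the $\var$-based tree bound, not merely the real-root count from Descartes' rule on $P$ itself.
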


For general real polynomials, the bit complexity of the algorithm $\AD$ matches the bit complexity of the best algorithm known (\cite{MSW-rootfinding2013}). For polynomials with integer coefficients, the bit complexity of the best algorithm known (\cite[Theorem 3.1]{Pan:survey}) is $\tilde{O}(n^{2}\tau)$, however, for the price of using $\Omega(n^{2}\tau)$ bit operations for every input.\footnote{More precisely, Pan's algorithm uses $O(n(\log^2n)(\log^2 n+\log b))$ arithmetic operations carried out with a precision $b$ of size $\Omega(n(\tau+\log n))$, and thus $O(n(\log^2 n+\log \tau)\mu(n(\tau+\log n))$ bit operations, where $\mu(b)$ denotes the cost for multiplying two $b$-bit integers. A straight forward, but tedious, calculation yields the bound $O((n^3+n^2\tau)(\log^6n)(\log^2 (n\tau)))$ for the bit complexity of our method, when ignoring $\log \log $-factors. This is weaker than Pan's method, however, we have neither tried to optimize our algorithm in this direction nor to consider possible amortization effects in the analysis when considering $\log$-factors.}
Both algorithms are based on Pan's approximate factorization algorithm~\cite{Pan:alg}, which is quite complex, and always compute all complex roots.

Our algorithm is simpler and has the additional advantage that it can be used to isolate the real roots in a given interval instead of isolating all roots. Moreover, the complexities stated in the theorems above are worst-case complexities, and we expect a better behavior for many instances. \Kurt{We have some theoretical evidence for this statement. For sparse integer polynomials with only $k$ non-vanishing coefficients, the complexity bound reduces from $\tilde{O}(n^2(n + \tau))$ to $\tilde{O}(n^2(k + \tau))$ (Theorem~\ref{thm: main integer case}).}
\Kurt{Also, if we restrict the search for roots in an interval $I_0$, then only the roots contained in the one-circle region $\Delta(I_0)$ of $I_0$ have to be considered in the 
complexity bound (\ref{main:complexity bound:1}) in Theorem~\ref{thm: main}.  More precisely, the first summand $n^3$ can be replaced by $n^2\cdot m$, where $m$ denotes the number of roots contained in $\Delta(I_0)$, and the last summand $\sum_{i=1}^n \log M(P'(z_i)^{-1})$ can be replaced by $\sum_{i:z_i\in \Delta(I_0)} \log M(P'(z_i)^{-1})$. We can also bound the size of the subdivision tree in terms of the number of sign changes in the coefficient sequence of the input polynomial (Theorem~\ref{thm:treesize}). In particular, if $P$ is a sparse integer polynomial, e.g., a Mignotte polynomial, with only $(\log(n\tau))^{O(1)}$ non-vanishing coefficients, our algorithm generates a tree of size $(\log(n\tau))^{O(1)}$. Our algorithm generates a tree of size $(\log(n\tau))^{O(1)}$, whereas bisection methods, such as the classical Descartes method, generate a tree of size $\Omega(n\tau)$, and the continued fraction method~\cite{Collins} generates a tree of size $\Omega(n)$.  }

A modification of our algorithm can be used to refine roots once they are isolated. 

\begin{theorem}\label{main: refinement}
Let $P = P_n x^n + \ldots +P_0 \in \R[x]$ be a real, square-free polynomial with $1/4 \le \abs{P_n} \le 1$, and let $\kappa$ be a positive integer. Isolating intervals of size less than $2^{-\kappa}$ for all real roots can be computed with
\begin{align*} \tilde{O}(n\cdot(\kappa+n^2+n \log \Mea(P)+  \log M(\Disc(P)^{-1})))\end{align*}
bit operations using coefficient approximations of quality
$$\tilde{O}(\kappa+n+\tau_{P}+\max_{i}(n\log M(z_{i})  + \log M(P'(z_{i})^{-1}) )).
$$
For a square-free polynomial $P$ with integer coefficients of size less than $2^\tau$,  isolating intervals of size less than $2^{-\kappa}$ for all real roots can be computed with $\tilde{O}(n(n^2+n\tau+\kappa))$ bit operations.
\end{theorem}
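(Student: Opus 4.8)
The plan is to reduce the refinement statement to the isolation statement (Theorem~\ref{thm: main}) by running the same algorithm $\AD$ with a stricter termination criterion. Once an interval $I$ has been certified to be isolating for a single real root $z_i$ (i.e.\ $v_I = 1$ after the Descartes test on the two-circle region), instead of outputting $I$ we keep subdividing/Newton-refining it until its width drops below $2^{-\kappa}$. The key point is that the machinery already developed for the isolation phase — the choice of split points at which $|P|$ is large, the approximate evaluation via multipoint methods, and the combined Newton/QIR step — applies verbatim to this continued refinement: on an interval already known to contain exactly one simple root, the QIR step converges quadratically (Newton iteration towards a single root needs only $k=1$, which is known), so the number of additional refinement steps per root is $O(\log(\kappa + \log M(z_i) + \log M(P'(z_i)^{-1})) )$, and each step costs the same as a step in the isolation phase.

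First I would make precise the reduction: describe the modified algorithm $\AD_\kappa$ that behaves exactly like $\AD$ except that an interval is declared ``finished'' only when it is isolating \emph{and} has width $< 2^{-\kappa}$. Second, I would bound the size of the additional part of the subdivision tree. Above each leaf of the original isolation tree there is now a chain of QIR/Newton steps; since the root is simple and isolated, the analysis from Section~\ref{sec:Algorithm} (the same one used to bound the isolation tree) gives that this chain has length $\tilde O(1)$ in $n$ after accounting for $\kappa$ and the local separation/derivative bounds, contributing an extra additive $\tilde O(n\kappa)$ over all $n$ roots to the tree size, while the rest of the tree is unchanged and already bounded in Theorem~\ref{thm: main}. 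Third, I would bound the precision demand: reaching width $2^{-\kappa}$ near $z_i$ requires evaluating $P$ (and performing the Newton step) with $\tilde O(\kappa)$ extra bits beyond what isolation needed, which is exactly the additive $\kappa$ appearing in the stated coefficient-quality bound; the cost per operation grows correspondingly, yielding the extra $\tilde O(n\kappa)$ term in the bit-complexity bound. Fourth, I would specialize to $P\in\Z[x]$ with $\|P\|_\infty < 2^\tau$ by inserting the standard root-separation and Mahler-measure estimates ($\log\Mea(P) = O(\tau + \log n)$, $\log M(\Disc(P)^{-1}) = \tilde O(n\tau)$, $\log M(z_i) = O(\tau)$), collapsing the general bound to $\tilde O(n(n^2 + n\tau + \kappa))$.

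The main obstacle I expect is the second step: controlling the length of the refinement chains with \emph{approximate} arithmetic. In the exact-arithmetic setting one knows $v_I$ exactly and can drive Newton iteration deterministically, but here we only know $v_I = 1$ and must re-run the tentative-Newton-from-both-endpoints procedure (as in the isolation phase) to certify each refinement step; I must argue that this procedure does not stall — i.e.\ that on an interval containing a single simple root the QIR step succeeds with its quadratic rate except for an $\tilde O(1)$ number of ``failed'' steps that fall back to bisection — and that the error introduced by approximate evaluation, which is controlled because $|P|$ is kept large at split points, never corrupts the $v_I = 1$ certificate during refinement. Once this local quadratic-convergence lemma is in place, the rest is bookkeeping: summing the per-root costs and substituting the standard bounds, exactly as in the proof of Theorem~\ref{thm: main}.
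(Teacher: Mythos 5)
There is a genuine gap in your accounting for the $n\kappa$ term. You write that "each step costs the same as a step in the isolation phase," and then that reaching width $2^{-\kappa}$ near $z_i$ requires $\tilde O(\kappa)$ extra bits of precision, so "the cost per operation grows correspondingly, yielding the extra $\tilde O(n\kappa)$ term." But the cost of a single Horner/Newton evaluation at precision $\tilde O(\kappa)$ is already $\tilde O(n\kappa)$ bit operations, and each of the $n$ roots needs at least one such evaluation in the final refinement step. Summing naively over the roots therefore gives $\tilde O(n^2\kappa)$, not $\tilde O(n\kappa)$. Your proposal, made precise, proves the weaker bound. The missing idea is an amortization across roots: the paper's Section~\ref{analysis:refinement2} processes all active intervals "in rounds" indexed by scale $l$ (widths $\ge 2^{-2^l}$), aggregating the evaluations for all $l$-active intervals into a single fast approximate \emph{multipoint} evaluation (Theorem~\ref{lem:multpolyeval}). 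Since the number of points is $O(n)$ and the precision needed in round $l$ is $\tilde O(2^l + \cdots)$, the cost per round is $\tilde O(n(2^l + n + \tau_P + \cdots))$, and summing over the $O(\log\kappa)$ rounds with a charging argument (allocating each iteration's cost to a root) gives the $\tilde O(n\kappa)$ term. Without this cross-root aggregation the claimed complexity does not follow.

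Two secondary points. First, the claim that "Newton iteration towards a single root needs only $k=1$, which is known" is an oversimplification: an isolating interval with $\var(P,I)=1$ guarantees exactly one real root in $I$, but its one-circle region can still contain a cluster of nearby complex roots, so the general Newton-Test machinery (guessing a multiplicity $k$ for the cluster and falling back to bisection) is still what is run, and the per-interval iteration count is bounded by $s_{\max,k}\cdot|\mathcal{M}(I_k)|$ rather than by a universal $\tilde O(1)$; only the \emph{sum} of $|\mathcal{M}(I_k)|$ over the disjoint one-circle regions is $O(n)$. Second, the paper actually simplifies the refinement phase rather than running $\AD$ verbatim: once $\var(P,I)=1$ is certified, it replaces the $0$-Test/$1$-Test by a plain sign test at the endpoints and shrinks each multipoint from $2\lceil n/2\rceil+1$ points to $2$ points. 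This is not needed for correctness, but you should at least note why the original $1$-Test (which recomputes coefficient sequences of $P_{I'},P_{I''}$ to precision tied to $\kappa$) would not inflate the cost; the paper avoids the question by using the sign test.
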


The complexity of the root refinement algorithm is $\tO(n \kappa)$ for large $\kappa$. This is optimal up to logarithmic factors, as the size of the output is $\Omega(n \kappa)$. The complexity matches the complexity shown in~\cite{MSW-rootfinding2013}, and when considered as a function in $\kappa$ only, it also matches the complexity as shown in ~\cite{DBLP:journals/corr/abs-1104-1362} and as sketched in~\cite{pantsi:ISSAC13}.

\subsection{Related Work}\label{sec: related work}

Isolating the roots of a polynomial is a fundamental and well-studied problem. One is either interested in isolating all roots, or all real roots, or all roots in a certain subset of the complex plane. A related problem is the approximate factorization of a polynomial, that is, to find $\tz_1$ to $\tz_n$ such that $\onenormshort{P(x) - P_n \prod_{1 \le i \le n} (x - \tz_i)}$ is small. Given the number of distinct complex roots of a polynomial $P$, one can derive isolating disks for all roots from a sufficiently good approximate factorization of $P$; see~\cite{MSW-rootfinding2013}. In particular, this approach applies to polynomials that are known to be square-free.

Many algorithms for approximate factorization and root isolation are known, see~\cite{McNamee:2002,McNamee2007,McNamee-Pan2013,Pan:history} for surveys. The algorithms can be roughly split into two groups: There are iterative methods for simultaneously approximating all roots (or a  single root if a sufficiently good approximation is already known); there are subdivision methods that start with a region containing all the roots of interest, subdivide this region according to certain rules, and use inclusion- and exclusion-predicates to certify that a region contains exactly one root or no root. Prominent examples of the former group are the Aberth-Ehrlich method (used for \textsc{MPSolve}~\cite{Bini-Fiorentino}) and the Weierstrass-Durand-Kerner method. These algorithms work well in practice and are widely used. However, a complexity analysis and global convergence proof is missing. Prominent examples of the second group are the Descartes method~\cite{Collins-Akritas,eigenwillig-phd,DBLP:conf/casc/EigenwilligKKMSW05,rouillier-zimmermann:roots:04}, the Bolzano method~\cite{DBLP:journals/jsc/BurrK12,Yap-Sagraloff-Bolzano}, the Sturm method~\cite{du-sharma-yap:sturm:07}, the continued fraction method~\cite{akritas-strzebonski:comparison:05,sharma,te-cf:08}, and the splitting circle method~\cite{schonhage:fundamental,Pan:alg}. 

The splitting circle method was introduced by 
Sch\"{o}nhage~\cite{schonhage:fundamental} and later considerably refined by Pan~\cite{Pan:alg}. Pan's algorithm computes an 
approximate factorization and can also be used to isolate all complex roots of a polynomial. For integer polynomials, it isolates all roots with $\tO(n^2 \tau)$ bit operations. It also serves as the key routine in a recent 
algorithm~\cite{MSW-rootfinding2013} for complex root isolation, which achieves a worst case complexity similar to the one stated in 
our main theorem. There exists a ``proof of concept'' implementation of the splitting circle method in the computer algebra 
system Pari/GP~\cite{Gourdon}, whereas we are not aware of any implementation of Pan's method itself.

The Descartes, Sturm, and continued fraction methods isolate only the real roots. They are popular for their simplicity, ease of implementation, and  practical efficiency. The papers~\cite{snc-benchmarks09,Kamath,rouillier-zimmermann:roots:04} report about implementations and experimental comparisons. The price for the simplicity is a considerably larger worst-case complexity. We concentrate on the Descartes method. 

The standard Descartes method has a complexity of $\tO(n^4 \tau^2)$ for isolating the real roots of an integer polynomial of degree $n$ with coefficients bounded by $2^\tau$ in absolute value, see~\cite{Eigenwillig-Sharma-Yap}. The size of the recursion tree is $O(n (\tau + \log n))$, and $\tO(n)$ arithmetic operations on numbers of bitsize $O(n^2(\tau+\log n))$ need to be performed at each node. For $\tau = \Omega(\log n)$, these bounds are tight, that is, there are examples where the recursion tree has size $\Omega(n\tau)$ and the numbers to be handled grow to integers of length $\Omega(n^2 \tau)$ bits. 

Johnson and Krandick~\cite{Johnson-Krandick} and Rouillier and Zimmermann~\cite{rouillier-zimmermann:roots:04} suggested the use of approximate arithmetic to speed up the Descartes method. They fall back on exact arithmetic when sign computations with approximate arithmetic are not conclusive. 
Eigenwillig et al.~\cite{DBLP:conf/casc/EigenwilligKKMSW05} were the first to describe a Descartes method that has no need for exact arithmetic. It works for polynomials with real coefficients given through coefficient oracles and 
isolates the real roots of a square-free real polynomial $P(x) = P_nx^n+\ldots+P_0$ with root separation\footnote{The root separation of a polynomial is the minimal distance between two roots} $\rho$, coefficients $\abs{P_n} \ge 1$, and $\abs{P_i} \le 2^{\tau}$, with an expected cost of $O(n^4 (\log(1/\rho) +\tau)^2)$ bit operations. For polynomials with integer coefficients, it constitutes no improvement. Sagraloff~\cite{Sagraloff2014DSC} gave a variant of the Descartes method that, when applied to integer polynomials, uses approximate arithmetic with a working precision of only $\tO(n\tau)$ bits. This leads to a bit complexity of $\tO(n^3 \tau^2)$; the recursion tree has size $O(n (\tau + \log n))$, there are $\tO(n)$ arithmetic operations per node, and arithmetic on numbers of length $\tO(n \tau)$ bits is required. 

As already mentioned before, the recursion tree of the Descartes method may have size $\Omega(n \tau)$, and there are only $O(n)$ nodes where both children are subdivided further. Thus, large subdivision trees must have
long chains of nodes, where one child is immediately discarded.  
Sagraloff~\cite{NewDsc}  showed how to traverse such chains more efficiently using a hybrid of bisection and Newton iteration.
His method reduces the size of the recursion tree to $O(n \log (n\tau))$, which is optimal up to logarithmic factors.\footnote{As there might be $n$ real roots, $n$ is a trivial lower bound on the worst-case tree size.} It only applies to polynomials with integral coefficients, uses exact rational arithmetic, and achieves a bit complexity of $\tO(n^3 \tau)$. In essence, the size of the recursion tree is $O(n\log(n\tau))$, there are $\tO(n)$ arithmetic operations per node, and arithmetic is on numbers of amortized length $\tO(n \tau)$ bits. 
Other authors have also shown how to use Newton iteration for faster convergence after collecting enough information in a slower initial phase. 
For instance, Renegar~\cite{DBLP:journals/jc/Renegar87} combines 
the Schur-Cohn test with Newton iteration. His algorithm makes crucial use of the fact that each (sufficiently small) cluster 
consisting of $k$ roots of a polynomial $P$ induces the existence of a single nearby ordinary root of the $(k-1)$-th 
derivative $P^{(k-1)}$, and thus, one can apply Newton iteration to $P^{(k-1)}$ in order to efficiently compute a good 
approximation of this root (and the cluster) if the multiplicity $k$ of the cluster is known. Several methods have been proposed to compute 
$k$, such as approximating the winding number around the perimeter of a disk by discretization of the contour, 
numerically tracking a homotopy path near a cluster, or the usage of Rouch\'e's and Pellet's Theorem. For a more detailed discussion and 
more references, we refer the reader to Yakoubsohn's paper~\cite{Yakoubsohn2000}. Compared to these methods, our approach 
is more light-weight in the sense that we consider a trial and error approach that yields less information in the 
intermediate steps without making sacrifices with respect to the speed of convergence. More precisely, we use a variant of 
Newton iteration for multiple roots in order to guess the multiplicity of a cluster (provided that such a cluster exists), however, we actually never verify our guess. Nevertheless, our analysis shows that, if there exists a well-separated cluster of roots, then our method yields the correct 
multiplicity. From the so obtained "multiplicity", we then compute a better approximation of the cluster (again assuming the existence of a cluster) and finally aim to verify via Descartes' Rule of Signs that we have not missed any root. 
If the latter cannot be verified, we fall back to bisection.   

The bit complexity of our new algorithm is $\tilde{O}(n^3 + n^2 \tau)$ for integer polynomials. Similar as in~\cite{NewDsc}, the size of the recursion tree is $O(n\log(n\tau))$ due to the combination of bisection and Newton steps. The number of arithmetic operations per node is $\tO(n)$ and
arithmetic is on numbers of amortized length $\tO(n + \tau)$ bits (instead of $\tO(n\tau)$ as in~\cite{NewDsc})  due to the use of approximate multipoint evaluation and approximate Taylor shift.\smallskip

Root refinement is the process of computing better approximations once the roots are isolated. In~\cite{DBLP:journals/corr/abs-1104-1362,DBLP:journals/corr/KobelS14,DBLP:journals/jc/Kirrinnis98,MSW-rootfinding2013,pantsi:ISSAC13}, algorithms have been proposed which scale like $\tO(n \kappa)$ for large $\kappa$. The former two algorithms are based on the splitting circle approach and compute approximations of all complex roots. The latter two solutions are dedicated to approximate only the real roots. They combine a fast convergence method (i.e., the secant method and Newton iteration, respectively) with approximate arithmetic and efficient multipoint evaluation; however, there are details missing in~\cite{pantsi:ISSAC13} when using multipoint evaluation. In order to achieve complexity bounds comparable to the one stated in Theorem~\ref{main: refinement}, the methods from~\cite{DBLP:journals/corr/abs-1104-1362,pantsi:ISSAC13} need as input isolating intervals whose size is comparable to the separation of the corresponding root, that is, the roots must be "well isolated". This is typically achieved by using a fast method, such as Pan's method, for complex root isolation first. Our algorithm does not need such a preprocessing step.

Very recent work~\cite{sagraloff-sparse-2014} on isolating the real roots of a sparse integer polynomial $P\in\Z[x]$ makes crucial use of a 
slight modification of the subroutine Newton-Test as proposed in Section~\ref{sec:Newton-Test and Boundary-Test}. There, it is used to refine an isolating 
interval $I$ for a root of $P$ in a number of arithmetic operations that is nearly linear in the number of roots that are 
close to $I$ and polynomial in $m\cdot\log (n\cdot \tau_P)$, where $n:=\deg P$ and $m$ denotes the number of non-vanishing coefficients of $P$. This eventually yields the first real root isolation algorithm that needs only a number of arithmetic operations over the rationals that is polynomial in the input size of the sparse representation of $P$. Furthermore, for very sparse polynomials (i.e.~$m=O(\log^c(n\tau_P))$ with $c$ a constant), the algorithm from~\cite{sagraloff-sparse-2014} uses only $\tilde{O}(n\tau_P)$ bit operations to isolate all real roots of $P$ and is thus near-optimal.

\subsection{Structure of Paper and Reading Guide}

We introduce our new algorithm in Section~\ref{sec:Algorithm} and analyze its complexity in Section~\ref{sec:analysis}. We first derive a bound on the size of the subdivision tree (Section~\ref{sec:size-subdivision-tree}) and then a bound on the bit complexity (Section~\ref{sec:bit-complexity}). Section~\ref{sec:Root Refinement} discusses root refinement. Section~\ref{sec:basics} provides background material, which we recommend to go over quickly in a first reading of the paper. We provide many references to Section~\ref{sec:basics} in Sections~\ref{sec:Algorithm} and~\ref{sec:analysis} so that the reader can pick up definitions and theorems as needed.

\section{The Basics}\label{sec:basics}

\subsection{Setting and Basic Definitions}\label{sec:setting-definitions}

We consider a square-free polynomial 
\begin{equation}\label{def:P}                                 
P(x)=P_{n}x^{n}+\ldots +P_{0}\in\R[x],  \quad\text{where $n \ge 2$ and $1/4 \le P_n \le 1$}. 
\end{equation}
We fix the following notations. 

\begin{definition}\label{def:basic definitions}\mbox{}\par
\begin{compactenum}[(1)]
\item $M(z):=\max(1, |z|)$ for all $z\in \mathbb{C}$.
\item $\onenorm{P} :=\norm{P}_{1}:= \abs{P_0} + \ldots + \abs{P_n}$ denotes the $1$-norm of $P$, and $\norm{P}_{\infty}:=\max_{i}|P_{i}|$ denotes the infinity-norm of $P$. 
\item $\tau_{P}:=M(\log\norm{P}_\infty)$.
\item $z_{1},\ldots,z_n\in\C$ are the complex roots of $P$. 
\item For each root $z_{i}$, we define the \emph{separation of $z_{i}$} as the value
$\sigma_i:=\sigma(z_{i},P):=\min_{j\in\{1,\ldots,n\}\backslash i}|z_{i}-z_{j}|$. The \emph{separation of $P$} is defined as $\sigma_P:=\min_{i}\sigma(z_i,P)$.
\item $\Gamma_P:=M(\log\max_{i} |z_i|)$ denotes the \emph{logarithmic root bound} of $P$, and
\item $\Mea(P):=|P_n|\cdot\prod_{i=1}^{n} M(|z_i|)$ denotes the \emph{Mahler Measure} of $P$. 
\item For an interval $I=(a,b)$, $m(I):=\frac{a+b}{2}$ denotes the \emph{midpoint} and $w(I):=b-a$  the \emph{width of $I$}. The open disk in complex space with center $m(I)$ and radius $\frac{w(I)}{2}$ is denoted by $\Delta(I)$. We call $\Delta(I)$ the \emph{one-circle region of $I$}.\footnote{The choice of name will become clear when we discuss Descartes' rule of signs in Section~\ref{sec:DRS:basis}; see also Figure~\ref{fig:Obreshkoff}.}
\item $\mathcal{M}(I)$ denotes the set of roots of $P$ which are contained in $\Delta(I)$.
\item A dyadic fraction is any rational of the form $s\cdot 2^{-\ell}$ with $s \in \Z$ and $\ell \in \Z_{\ge 0}$. \end{compactenum}
\end{definition}

We assume the existence of an oracle that provides arbitrary good approximations of the  polynomial $P$. 
Let $L \ge 1$ be an integer and let $z$ be a real. We call $\tz = s \cdot 2^{-(L+1)}$ with $s \in \Z$ an \emph{(absolute) $L$-approximation} of $z$ or an \emph{approximation of $z$ with quality $L$} if $\abs{z - \tz} \le 2^{-L}$. We call a polynomial $\tP=\tP_{n}x^{n}+\ldots+\tP_{0}$ an
\emph{(absolute) $L$-approximation} of $P$ or an \emph{approximation of quality $L$} if every coefficient of $\tP$ is an approximation of quality $L$ of the corresponding coefficient of $P$. 
We assume that we can obtain such an approximation $\tP$ at $O(n(L+\tau_P))$ cost. This is the cost of reading the coefficients of $\tP$.

We have $\tau_P\le M(\log (2^n\cdot \Mea( P)))\le M(n+n\Gamma_P)=n(1+\Gamma_P)\le 2n\Gamma_P$. According to~\cite[Theorem 1]{MSW-rootfinding2013} (or~\cite[Section 6.1]{Sagraloff2014DSC}), we can compute an integer approximation $\tilde{\Gamma}_{P}$ of $\Gamma_{P}$ with 
\begin{align}\label{def:Gammatilde}
\Gamma_{P}+1\le \tilde{\Gamma}_{P}\le\Gamma_{P}+8\log n+1   
\end{align}
with $\tO(n^{2}\Gamma_{P})$ many bit operations. From $\tilde{\Gamma}_{P}$, we can then immediately derive a $\Gamma=2^{\gamma}$, with $\gamma:=\lceil \log \tilde{\Gamma}_{P}\rceil\in\N_{\ge 1}$, such that
\begin{align}\label{def:Gamma}
\Gamma_{P}+1\le \tilde{\Gamma}_{P}\le \Gamma\le 2\cdot\tilde{\Gamma}_{P} \le 2\cdot(\Gamma_{P}+8\log n+1).
\end{align}
Thus, $2^{\Gamma}=2^{2^{\gamma}}$ is an upper bound for the modulus of all roots (in fact, we have $2^{\Gamma}\ge \max_{i}|z_{i}|+1$ for all $i=1,\ldots,n$), and $\Gamma=O(\Gamma_{p}+\log n)$.

\subsection{Approximate Polynomial Evaluation}\label{sec:multipoint}

We introduce the notions \emph{multipoint} (Definition~\ref{definition:multipoint}) and \emph{admissible point} (Definition~\ref{admissible point}). A point $x^*$ in a set $X$ is admissible if $\abs{P(x^*)} \ge \frac{1}{4}\cdot\abs{P(x)}$ for all $x \in X$. We show how to efficiently compute an admissible point in a multipoint (Corollary~\ref{cor:multipoint}) and derive a lower bound on the value of $P$ at such a point. Corollary~\ref{cor:multipoint} is our main tool for choosing subdivision points.

\begin{theorem}\label{lem:singlepolyeval}
Let $P$ be a polynomial as defined in (\ref{def:P}), $x_{0}$ be a real point, and $L$ be a positive integer. 
\begin{compactenum}[(a)]
\item The algorithm stated in the proof of part (a) computes an approximation $\tilde{y}_{0}$ of $y_{0}:=P(x_{0})$ with $|y_{0}-\tilde{y}_{0}|\le 2^{-L}$ with
\[
\tilde{O}(n(\tau_{P}+n\log M(x_{0})+L)).
\]
bit operations using approximations of the coefficients of $P$ and the point $x_0$ of quality $O(\tau_{P}+n\log M(x_0)+L+\log n)$. 

\item Suppose $y_{0}\neq 0$. The algorithm stated in the proof of part (b) computes an integer $t$ with $2^{t-1}\le |y_{0}|\le 2^{t+1}$ with 
\[\tilde{O}(n(\tau_{P}+n\log M(x_{0})+\log M(y_{0}^{-1})))\]
 bit operations. The computation uses fixed-precision arithmetic with a precision of $O(\tau_{P}+n\log M(x_{0})+M(y_{0}^{-1})+\log n)$ bits.
\end{compactenum}
\end{theorem}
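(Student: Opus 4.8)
The plan is to use Horner's scheme with a suitably chosen working precision, carefully tracking how rounding errors propagate. For part (a), I would evaluate $P$ at $x_0$ by the recurrence $b_n := P_n$, $b_{i} := b_{i+1} x_0 + P_i$, so that $b_0 = P(x_0) = y_0$; each intermediate value $b_i$ is bounded by $\onenorm{P} \cdot M(x_0)^n \le 2^{\tau_P + 1} M(x_0)^n$, hence has bit-length $O(\tau_P + n \log M(x_0))$ before the fractional part. If each of the $n$ multiplications and additions is carried out with a fixed absolute precision of $\rho := L + c(\tau_P + n\log M(x_0) + \log n)$ bits after the binary point (for a suitable constant $c$), then the error introduced in each step is at most $2^{-\rho}$ times a factor that is polynomially bounded in the magnitude of the operands, and over $n$ steps the total accumulated error telescopes to at most $2^{-L}$. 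A minor point: one also needs to feed in $L'$-approximations of the coefficients $P_i$ and of $x_0$ with $L' = O(\tau_P + n\log M(x_0) + L + \log n)$, which introduce an additional error term of the same order that is absorbed into the constant $c$. The cost is $n$ arithmetic operations on numbers of bit-length $O(\tau_P + n\log M(x_0) + L)$, and using fast (FFT-based) multiplication this gives the claimed $\tilde{O}(n(\tau_P + n\log M(x_0) + L))$ bound.

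For part (b), the point is that we do not know $\log M(y_0^{-1})$ in advance — that is exactly the quantity that controls how much precision we need — so I would use an exponentially increasing precision (a ``galloping'' or ``repeated doubling'' strategy). Concretely, for $k = 1, 2, 4, 8, \ldots$ run the algorithm of part (a) with target precision $L = 2^k$ to obtain $\ty_0$ with $\abs{y_0 - \ty_0} \le 2^{-2^k}$; if $\abs{\ty_0} > 2 \cdot 2^{-2^k}$, then $\abs{y_0} \ge \abs{\ty_0} - 2^{-2^k} \ge \frac12 \abs{\ty_0}$ and also $\abs{y_0} \le \abs{\ty_0} + 2^{-2^k} \le 2\abs{\ty_0}$, so we can read off an integer $t$ with $2^{t-1} \le \abs{y_0} \le 2^{t+1}$ and stop. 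Since $y_0 \ne 0$, this terminates once $2^k \ge \log M(y_0^{-1}) + 2$, i.e.\ after $O(\log\log M(y_0^{-1}))$ rounds, and the last (most expensive) round has precision parameter $L = O(\log M(y_0^{-1}))$; because the costs of the rounds grow geometrically, the total cost is dominated by the last round, giving $\tilde{O}(n(\tau_P + n\log M(x_0) + \log M(y_0^{-1})))$ by part (a).

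The main obstacle — the part that requires genuine care rather than routine bookkeeping — is the rounding-error analysis in part (a): one must show that the errors in the Horner recurrence do not blow up multiplicatively over the $n$ iterations. The key observation is that the recurrence is affine in each step, so the error satisfies $\abs{e_i} \le \abs{x_0}\cdot\abs{e_{i+1}} + (\text{new roundoff}) + \abs{b_{i+1}}\cdot(\text{error in }x_0)$, and iterating this one gets $\abs{e_0} \le \sum_{i} \abs{x_0}^i \cdot (\text{roundoff}_i + \ldots)$; the geometric-like sum $\sum_i \abs{x_0}^i$ is bounded by $n\cdot M(x_0)^n$, which is why the extra $n\log M(x_0)$ term appears in the required precision. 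Once this bound is in place, choosing the working precision $\rho$ larger than $L$ by this additive amount (plus $\log n$ for the $n$ summands) makes the total error at most $2^{-L}$, and everything else is a standard complexity estimate using fast integer arithmetic. I would also remark that the $\log n$ overhead and the precise constants are not optimized, since only the $\tilde{O}$-bound matters for the rest of the paper.
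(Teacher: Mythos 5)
Your approach is essentially the same as the paper's: part (a) is Horner's scheme with fixed-point arithmetic and a standard error-propagation analysis (the paper simply cites this as a known lemma, so your explicit recurrence argument supplies the detail the paper leaves out), and part (b) is the same ``double the precision until the approximation certifies its own sign and magnitude'' loop, with cost dominated by the last round. Two small quantitative slips in part (b), however, need fixing. First, as written you iterate over $k = 1, 2, 4, 8, \ldots$ with $L = 2^k$, which would make $L$ \emph{square} rather than double from round to round; the final $L$ could then be as large as $\Theta\bigl((\log M(y_0^{-1}))^2\bigr)$, which is not $\tilde O(\log M(y_0^{-1}))$ and breaks the claimed bound. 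You clearly intend $L = 2, 4, 8, \ldots$ (your remarks that the round count is $O(\log\log M(y_0^{-1}))$ and that costs grow geometrically only hold in that case), so the index set should be $k = 1, 2, 3, \ldots$. Second, your stopping threshold $\abs{\ty_0} > 2 \cdot 2^{-L}$ is a hair too loose: it only yields $\abs{y_0} \in \bigl(\tfrac12 \abs{\ty_0}, \tfrac32 \abs{\ty_0}\bigr)$, a factor-$3$ window that cannot in general be squeezed into $[2^{t-1}, 2^{t+1}]$ once you round $\log\abs{\ty_0}$ to the nearest integer (for instance when $\abs{\ty_0}$ is close to $2^{t-1/2}$ you could have $\abs{y_0} < 2^{t-1}$). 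The paper stops at $\abs{\ty_0} \ge 4 \cdot 2^{-L}$, which gives $\abs{y_0} \in \bigl[\tfrac34\abs{\ty_0}, \tfrac54\abs{\ty_0}\bigr]$, and then $t := \operatorname{round}(\log\abs{\ty_0})$ does satisfy $2^{t-1} \le \abs{y_0} \le 2^{t+1}$. Both are local fixes; the underlying strategy is correct and matches the paper.
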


\begin{proof}
Part (a) follows directly from~\cite[Lemma 3]{DBLP:journals/corr/abs-1104-1362}, where it has been shown that we can compute a desired approximation $\tilde{y}_{0}$ via the Horner scheme and fixed-precision interval arithmetic, with a precision of $O(\tau_{P}+n\log M(x_0)+L+\log n)$ bits. 

For (b), we consider $L=1,2,4,8,\ldots$ and compute absolute $L$-bit approximations $\tilde{y}_{0}$ of $y_{0}$ until we obtain an approximation $\tilde{y}_0$ with $\abs{\ty_0} \ge 2^{2-L}$. Since $\ty_0$ is an absolute $L$-bit approximation of $y_0$, $\abs{\ty_0 - y_0} \le 2^{-L} \le \abs{\ty_0}/4$. 
Since $\ty_0$ is a dyadic fraction, we can determine $t \in \Z$ with $\abs{t - \log \abs{\ty_0}} \le 1/2$. 
Then, $2^{t-1}   \le (3/4) 2^{-1/2} 2^t \le (3/4) \abs{\ty_0} \le \abs{y_0} \le (5/4) \abs{\ty_0} \le (5/4) 2^{1/2} 2^t \le 2^{t+1}$. 
Obviously, we succeed if $L\ge 2\log M(y_{0}^{-1})$, and since we double $L$ in each step, we need at most $O(\log\log M(y_{0}^{-1}))$ many steps. Up to logarithmic factors, the total cost is dominated by the cost of the last iteration, which is bounded by $\tilde{O}(n(\tau_{P}+n\log M(x_0)+\log M(y_{0}^{-1})))$ bit operations according to Part (a). 
\end{proof}

It has been shown~\cite{DBLP:journals/jc/Kirrinnis98,DBLP:journals/corr/abs-1304.8069,DBLP:journals/corr/KobelS14,Ritzmann19861,JvH2008} that the cost for approximately evaluating a polynomial of degree $n$ at $N=O(n)$ points is is of the same order as the cost of approximately evaluating it at a single point.

\begin{theorem}[\cite{DBLP:journals/jc/Kirrinnis98,DBLP:journals/corr/abs-1304.8069,DBLP:journals/corr/KobelS14}]\label{lem:multpolyeval}
Let $P$ be a polynomial as in (\ref{def:P}), let $x_{1},\ldots,x_{N}$ be real points with $N=O(n)$, and let $L$ be a positive integer. The algorithm in~\cite{DBLP:journals/jc/Kirrinnis98,DBLP:journals/corr/abs-1304.8069,DBLP:journals/corr/KobelS14} computes approximations $\tilde{y}_{i}$ of $y_{i}:=P(x_{i})$ with $|y_{i}-\tilde{y}_{i}|\le 2^{-L}$, $i=1,\ldots,N$, 
with 
\[
\tilde{O}(n(n+\tau_{P}+n\log M(\max_{i}|x_{i}|)+L))
\]
bit operations using approximations of the coefficients of $P$ as well as the points $x_{i}$ of quality $O(\tau_{P}+n\log M(\max_{i}|x_i|)+L+n\log n)$.
\end{theorem}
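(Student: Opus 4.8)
The plan is to realize the classical Moenck--Borodin fast multipoint evaluation scheme with fixed-precision (approximate) arithmetic and to track the precision loss through the recursion; a naive evaluation of $P$ at each $x_i$ by Horner's scheme (Theorem~\ref{lem:singlepolyeval}) would only give $\tilde{O}(n^2(\dots))$, which is too slow. First, pad the list of points to a power of two $N' \le 2N = O(n)$ by repeating, say, $x_1$; evaluating at the extra points is harmless. Build the \emph{subproduct tree}: a balanced binary tree with $N'$ leaves, the $i$-th leaf carrying the linear polynomial $x - x_i$, and each internal node $v$ carrying the product $m_v$ of the polynomials at its two children. Thus $m_{\mathrm{root}} = \prod_{i}(x - x_i)$, every $m_v$ is monic, the tree has $1 + \log N' = O(\log n)$ levels, and the polynomials on any one level have total degree $N'$. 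Using FFT-based multiplication, the whole tree is assembled in $O(\log n)$ rounds, each round multiplying pairs of polynomials of total degree $O(n)$.

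Next comes the down-sweep. Associate with the root the remainder $P \bmod m_{\mathrm{root}}$, and then, moving from a node $v$ to its two children $v_\ell, v_r$, compute $(P \bmod m_v) \bmod m_{v_\ell}$ and $(P \bmod m_v) \bmod m_{v_r}$. Since $\deg m_{v_\ell} = \deg m_{v_r} = \tfrac12 \deg m_v$ and the divisors are monic, each such reduction is a single fast polynomial division, implemented by computing the reciprocal of the reversed divisor via Newton iteration followed by one more multiplication. At a leaf the stored value is $P \bmod (x - x_i) = P(x_i) = y_i$. On each of the $O(\log n)$ levels the dividends and divisors have total degree $O(n)$, so one level costs $\tilde{O}(n \cdot b)$ bit operations, where $b$ is the working precision; hence the total arithmetic cost is $\tilde{O}(n b)$.

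It remains to fix $b$. The coefficients of $m_v$ are, up to binomial factors, products of at most $N'$ of the numbers $x_i$, so $\log \|m_v\|_\infty = O(n + n\log M(\max_i|x_i|))$; the same bound holds for every remainder $P\bmod m_v$ after absorbing the contribution $\tau_P$ of $P$ and the $O(\log n)$ multiplication rounds. Carrying out all additions, multiplications, and Newton reciprocal steps in fixed-precision arithmetic with $b = O(\tau_P + n + n\log M(\max_i|x_i|) + L + \log n)$ bits, one shows inductively over the $O(\log n)$ levels that the accumulated error in each computed remainder, and in particular in each $\ty_i$, stays below $2^{-L}$: each level multiplies the error bound by at most a fixed polynomial in $n$ and in the $2^{b}$-scale coefficient magnitudes, and the $O(\log n)$-fold blow-up is a polylogarithmic factor that the $\tilde{O}$ absorbs. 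Reading the $P_i$ and the $x_i$ at quality $O(\tau_P + n\log M(\max_i|x_i|) + L + n\log n)$ supplies inputs accurate enough to start this induction. Multiplying $b$ by the $\tilde{O}(n)$ level cost and the $O(\log n)$ levels yields the claimed bound $\tilde{O}(n(n + \tau_P + n\log M(\max_i|x_i|) + L))$.

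The main obstacle is the numerical error analysis of the fast division step: Newton iteration for the reciprocal of the reversed divisor can amplify round-off, and the remainders are only known approximately, so one has to argue that errors do not compound geometrically down the tree. The point that makes it go through is that every divisor $m_v$ is \emph{monic} with controlled coefficient size, which bounds the condition number of the division and keeps the per-level error growth polynomially (not exponentially) bounded in $n$; combined with a generous-but-still-$\tilde{O}$ choice of $b$, this gives the stated accuracy. Alternatively, one may simply invoke the detailed error analysis of Kirrinnis~\cite{DBLP:journals/jc/Kirrinnis98} or of~\cite{DBLP:journals/corr/abs-1304.8069,DBLP:journals/corr/KobelS14}, which carry out exactly this bookkeeping.
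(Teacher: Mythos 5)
The paper does not give a proof of this theorem; it is stated as a direct citation of Kirrinnis and of Kobel--Sagraloff, and your reconstruction via the fixed-precision Moenck--Borodin subproduct-tree scheme with precision tracking down the tree is exactly the approach those references take. The one step you leave largely implicit---that the intermediate remainders $P\bmod m_v$ have coefficients of bit-size $O(n+\tau_P+n\log M(\max_i|x_i|))$, so that the per-level error amplification over the $O(\log n)$ tree levels remains of the same order and the stated working precision $b$ suffices---is indeed correct (e.g.\ writing $P\bmod m_v$ in Newton divided-difference form on the roots of $m_v$ and bounding divided differences by derivative estimates for $P$), so the proposal is sound.
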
 

We frequently need to select a point $x_i$ from a given set $X=\{x_{1},\ldots,x_{N}\}$ of points at which $|P(x_{i})|$ is close to maximal.

\begin{definition}\label{admissible point} Let $X:=\{x_{1},\ldots,x_{N}\}$ be a set of $N$ real points. We call a point $x^{*}\in X$ \emph{admissible with respect to $X$} (or just \emph{admissible} if there is no ambiguity) if $|P(x^{*})|\ge \frac{1}{4}\cdot\max_{i}|P(x_{i})|$.
\end{definition}

Fast approximate multipoint evaluation allows us to find an admissible point efficiently.\medskip

\begin{mdframed}[frametitle={{\bf Algorithm: Admissible Point}}]\label{alg:admissable point}
{\color{black}
\noindent{\bf Input:}
A polynomial $P(x)$ as in (\ref{def:P}), and a set $X = \{x_1,\ldots,x_N\}$ of points. \smallskip

\noindent{\bf Guarantee:} $\lambda:=\max_{i}|P(x_i)|>0$.\smallskip

\noindent{\bf Output:} An admissible point $x^*\in X$ and an integer $t$ with $2^{t-1}\le \lambda\le 2^{t+1}$. \medskip

\begin{itemize}
\item[(1)] $L:=1/2$.
\item[(2)] repeat
\begin{itemize}
\item[(2.1)] $L:=2\cdot L$ 
\item[(2.2)] Compute approximations $\tilde{\lambda}_i$ of quality $L$ 
for the values $P(x_i)$ for $1 \le i \le N$.
\item[(2.3)] $\tilde{\lambda}:=\max_{i}|\tilde{\lambda}_i|$.
\end{itemize} 
until  $\tilde{\lambda}\le 2^{2-L}$.
\item[(3)] Let $i_0$ be an index with $\tilde{\lambda}:=|\tilde{\lambda}_{i_0}|$ and $t$ be an integer with $|t-\log|\tilde{\lambda}||\le \frac{1}{2}$.
\item[(4)] return $x_{i_0}$ and $t$.
\end{itemize}}
\end{mdframed}
\bigskip

\ignore{
\hrule\medskip
\noindent{\bf Algorithm Admissible Point} (from a set $X = \{x_1,\ldots,x_N\}$):
We consider $L=1,2,4,8,\ldots$ and approximate all values $|P(x_{i})|$ with quality $L$ until, for at least one $i$, we obtain an approximation $2^{t_{i}}$ with $t_{i}\in\Z$ and $2^{t_{i}-1}\le |P(x_{i})|\le 2^{t_{i}+1}$. Now, let $i_{0}$ be such that $t_{i_{0}}$ is maximal; then, it follows that $2^{t_{i_{0}}-1}\le \lambda:=\max_{i}|P(x_{i})|\le 2^{t_{i_{0}}+1}$.
}
An argument similar to the one as in the proof of Part (b) in Lemma~\ref{lem:singlepolyeval} now yields the following result:

\begin{lemma}\label{lem:apxmultipointeval}
Let $X:=\{x_{1},\ldots,x_{N}\}$ be a set of $N=O(n)$ real points. The algorithm {\bf Admissible Point} applied to $X$ selects an admissible point $x^{*}\in X$ and an integer $t$ with 
\[
2^{t-1}\le |P(x^{*})|\le \lambda:=\max_{i}|P(x_{i})| \le 2^{t+1} 
\]
using $\tilde{O}(n(n+\tau_{P}+n\log M(\max_{i}|x_{i}|)+\log M(\lambda^{-1})))$ bit operations. It requires approximations of the coefficients of  $P$ and the points $x_{i}$ of quality $O(n+\tau_{P}+n\log M(\max_{i}|x_{i}|)+\log M(\lambda^{-1}))$. 
\end{lemma}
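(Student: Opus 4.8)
The plan is to mimic the structure used in the proof of Theorem~\ref{lem:singlepolyeval}(b), but with the multipoint evaluation of Theorem~\ref{lem:multpolyeval} replacing the single-point Horner evaluation, so that all $N$ values are handled simultaneously and the doubling of $L$ proceeds in lockstep. First I would establish correctness of the output. When the loop terminates we have $\tilde\lambda = |\tilde\lambda_{i_0}| \le 2^{2-L}$ is false, i.e.\ $\tilde\lambda > 2^{2-L}$; wait — re-reading the algorithm, termination is \emph{until} $\tilde\lambda \le 2^{2-L}$, so one must be slightly careful: the loop runs while $\tilde\lambda > 2^{2-L}$ and stops once $\tilde\lambda \le 2^{2-L}$. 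Actually the intended reading (consistent with part (b) of Lemma~\ref{lem:singlepolyeval}, where the test is $|\tilde y_0| \ge 2^{2-L}$) is that we stop as soon as the approximation is large relative to the error $2^{-L}$; I would simply adopt the same inequality $\tilde\lambda \ge 2^{2-L}$ as the termination condition and note the typo, or equivalently read the displayed "until" clause as its negation. With $\tilde\lambda \ge 2^{2-L}$ and each $|\tilde\lambda_i - |P(x_i)|| \le 2^{-L}$, we get $2^{-L} \le \tilde\lambda/4 \le |\tilde\lambda_{i_0}|/4$, and hence for the chosen index $(3/4)|\tilde\lambda_{i_0}| \le |P(x_{i_0})| \le (5/4)|\tilde\lambda_{i_0}|$. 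For every other $i$, $|P(x_i)| \le |\tilde\lambda_i| + 2^{-L} \le \tilde\lambda + \tilde\lambda/4 = (5/4)\tilde\lambda = (5/4)|\tilde\lambda_{i_0}|$. Combining, $\lambda = \max_i |P(x_i)| \le (5/4)|\tilde\lambda_{i_0}| \le (5/4)(4/3)|P(x_{i_0})| = (5/3)|P(x_{i_0})| \le 4 |P(x_{i_0})|$, so $x_{i_0}$ is admissible; also $\lambda \le (5/4)|\tilde\lambda_{i_0}|$ and $\lambda \ge |P(x_{i_0})| \ge (3/4)|\tilde\lambda_{i_0}|$, so $|\tilde\lambda_{i_0}|$ is within a factor $5/4$ (resp.\ $4/3$) of $\lambda$. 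Then the integer $t$ with $|t - \log|\tilde\lambda_{i_0}|| \le 1/2$ satisfies, exactly as in Lemma~\ref{lem:singlepolyeval}(b), $2^{t-1} \le (3/4)2^{-1/2}2^t \le (3/4)|\tilde\lambda_{i_0}| \le |P(x_{i_0})| \le \lambda \le (5/4)|\tilde\lambda_{i_0}| \le (5/4)2^{1/2}2^t \le 2^{t+1}$, which is the claimed enclosure of both $|P(x^*)|$ and $\lambda$.

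Next I would bound the number of iterations. The iteration with value $L$ succeeds (terminates the loop) as soon as $2^{-L} \le \lambda/4$, i.e.\ as soon as $L \ge 2 + \log M(\lambda^{-1})$ (using $\lambda \le \|P\|_1 \cdot M(\max_i|x_i|)^n$, or more simply that once $L$ exceeds $2 + \log(1/\lambda)$ we are done, and $L$ stays $O(1)$ relative to $\log M(\lambda^{-1})$ in the trivial case $\lambda \ge 1$). Since $L$ doubles each round, the number of rounds is $O(\log(2 + \log M(\lambda^{-1})))$ and the final value of $L$ is $O(1 + \log M(\lambda^{-1}))$. The total cost is, up to a logarithmic factor in the number of rounds, dominated by the last call to the multipoint evaluation of Theorem~\ref{lem:multpolyeval} with precision $L = O(1 + \log M(\lambda^{-1}))$ and $N = O(n)$ points, which costs $\tilde O(n(n + \tau_P + n\log M(\max_i|x_i|) + L)) = \tilde O(n(n + \tau_P + n\log M(\max_i|x_i|) + \log M(\lambda^{-1})))$ bit operations; summing the geometrically shrinking earlier rounds only contributes another logarithmic factor, absorbed by $\tilde O$. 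The coefficient- and point-approximation qualities required are those demanded by Theorem~\ref{lem:multpolyeval} at this final precision, namely $O(\tau_P + n\log M(\max_i|x_i|) + L + n\log n) = O(n + \tau_P + n\log M(\max_i|x_i|) + \log M(\lambda^{-1}))$, as claimed. Steps (2.3) and (3) add only $O(N \cdot (\text{bit length of the } \tilde\lambda_i))$ further operations, which is negligible.

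The main obstacle, as in part (b) of Lemma~\ref{lem:singlepolyeval}, is purely bookkeeping: making the constants in the termination test consistent so that (i) the loop provably terminates at $L = \Theta(\log M(\lambda^{-1}))$ and not later, and (ii) the factor-$4$ slack in the definition of admissibility is genuinely met by the chosen index. The key inequality chain is $2^{-L} \le \tilde\lambda/4$ at termination, which forces every competing $|P(x_j)|$ to lie within a bounded multiplicative factor of $|P(x_{i_0})|$; one must check that $5/3 \le 4$ (trivially true) and that the same $L$-bound simultaneously controls the accuracy of the winner. Because the algorithm and the argument are a verbatim multipoint analogue of the single-point case already proved, I would state the lemma's proof as "an argument identical to the proof of Lemma~\ref{lem:singlepolyeval}(b), using Theorem~\ref{lem:multpolyeval} in place of part (a) of Lemma~\ref{lem:singlepolyeval}", and only spell out the two displayed inequality chains above.
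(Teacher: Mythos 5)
Your proof is correct and follows exactly the approach the paper intends: the paper's own proof is the one-line remark that ``an argument similar to the one as in the proof of Part (b) in Lemma~\ref{lem:singlepolyeval} now yields the following result,'' and your write-up simply fleshes out that analogy by substituting Theorem~\ref{lem:multpolyeval} for the single-point evaluation. You also correctly spotted that the displayed termination condition ``until $\tilde\lambda\le 2^{2-L}$'' in the \textbf{Admissible Point} box is a typo for $\tilde\lambda\ge 2^{2-L}$, matching the stopping rule used in the proof of Theorem~\ref{lem:singlepolyeval}(b).
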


We will mainly apply the Lemma in the situation where $X$ is a set of $N=2\cdot \lceil n/2\rceil+1$ equidistant points. In this situation, we can prove a lower bound on 
$\lambda$ in the Lemma above in terms of the separations of the roots $z_{i}$, the absolute values of the derivatives $P'(z_{i})$, and the number of roots contained in a neighborhood of the points $X$. 

\begin{definition}\label{definition:multipoint}
For a real point $m$ and a real positive value $\epsilon$, the \emph{$(m,\epsilon)$-multipoint} $\multipoint{m}{\epsilon}$ is defined as
\begin{align}\label{def:multipoint}
\multipoint{m}{\epsilon} := \set{m_{i}:=m+(i-\lceil n/2\rceil)\cdot \epsilon}{i=0,\ldots,2\cdot \lceil n/2\rceil}.
\end{align}
\end{definition}

\begin{lemma}\label{lem:boundonmax} Let $m$ be a real point, let $\epsilon$ be a real positive value, and let $K$ be a positive real with $K\ge 2\cdot \ceil{n/2}$. If the disk $\Delta:=\Delta_{K\cdot \epsilon}(m)$ with radius $K\cdot \epsilon$ and center $m$ contains at least two roots of $P$, then each admissible point $m^{*}\in \multipoint{m}{\epsilon}$ satisfies
\[
|P(m^{*})|>2^{-4n-1}\cdot K^{-\mu(\Delta)-1}\cdot \sigma(z_i,P)\cdot |P'(z_i)|\quad\text{for all roots }z_i\in\Delta,
\] 
where $\mu(\Delta)$ denotes the number of roots of $P$ contained in $\Delta$.
\end{lemma}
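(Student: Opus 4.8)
The plan is to pick an arbitrary admissible point $m^* \in \multipoint{m}{\epsilon}$ and a root $z_i \in \Delta$, and to bound $|P(m^*)|$ from below by first bounding $\max_j |P(m_j)|$ from below (using admissibility to transfer the bound to $m^*$ up to the factor $1/4$), and then bounding $\max_j |P(m_j)|$ from below by exhibiting one specific node $m_j$ of the multipoint that is reasonably close to $z_i$ and yet not too close to any other root. First I would write $P(x) = P_n \prod_{k=1}^n (x - z_k)$, so that $|P(m_j)| = |P_n| \prod_k |m_j - z_k|$, and separately $|P'(z_i)| = |P_n| \prod_{k \neq i} |z_i - z_k|$. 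The idea is to compare these two products factor by factor: for $k \neq i$ we want $|m_j - z_k|$ to be comparable to (not much smaller than) $|z_i - z_k|$, and for the single factor $k = i$ we can afford something as small as, say, $\sigma(z_i,P)$ times a modest power of $K$.

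The key combinatorial step is the choice of $m_j$. The multipoint $\multipoint{m}{\epsilon}$ consists of $2\lceil n/2\rceil + 1 \ge n+1$ equidistant points spaced $\epsilon$ apart, all lying within distance $\lceil n/2\rceil \cdot \epsilon \le K\epsilon$ of $m$. Since $z_i \in \Delta = \Delta_{K\epsilon}(m)$, the whole grid lies within $O(K\epsilon)$ of $z_i$. I would select $m_j$ to be a grid point whose distance to $z_i$ is at most, roughly, $\lceil n/2\rceil \cdot \epsilon$ but at the same time at distance $\ge \epsilon/2$ from $\mathrm{Re}(z_i)$ — this is possible because among $n+1$ consecutive grid points at least one avoids the $\epsilon$-neighborhood of any single real value. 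Actually the cleaner approach, and the one I would pursue, is: among the $n+1$ grid points, the real parts of the $\mu(\Delta) \le n$ roots in $\Delta$ can each "spoil" at most a bounded number of grid points (those within $\epsilon/2$), so by pigeonhole there is a grid point $m_j$ with $|m_j - \mathrm{Re}(z_k)| \ge \epsilon/2$ for *every* root $z_k \in \Delta$; hence $|m_j - z_k| \ge \epsilon/2$ for all such $k$, while $|m_j - z_i| \le 2K\epsilon$ (crude bound from the grid diameter plus the radius). Roots $z_k \notin \Delta$ need separate, easier treatment below.

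With such an $m_j$ fixed, I would estimate $|P(m_j)|/(\sigma(z_i,P)|P'(z_i)|) = \prod_{k\neq i} \frac{|m_j - z_k|}{|z_i - z_k|} \cdot \frac{|m_j - z_i|}{\sigma(z_i,P)}$. For roots $z_k \in \Delta$ with $k \neq i$: $|m_j - z_k| \ge \epsilon/2$ and $|z_i - z_k| \le 2K\epsilon$ (both in $\Delta$), giving a ratio $\ge 1/(4K)$ per such factor, and there are at most $\mu(\Delta) - 1$ of them, contributing $\ge (4K)^{-(\mu(\Delta)-1)}$. For roots $z_k \notin \Delta$: here $|z_i - z_k| \le |m - z_k| + |z_i - m| \le |m - z_k| + K\epsilon \le 2|m-z_k|$ since $|m - z_k| > K\epsilon$, while $|m_j - z_k| \ge |m - z_k| - \lceil n/2\rceil\epsilon \ge |m-z_k|/2$; so each such ratio is $\ge 1/4$, and there are at most $n - \mu(\Delta)$ of them, contributing $\ge 4^{-(n - \mu(\Delta))}$. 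Finally $|m_j - z_i|/\sigma(z_i,P) \ge (\epsilon/2)/\sigma(z_i,P)$; but I'd rather not rely on a lower bound for $\epsilon/\sigma$, so instead I keep the factor $|m_j - z_i| \ge \epsilon/2$ separate and note $\Delta$ contains two roots so $\sigma(z_i,P) \le 2K\epsilon$, hence $|m_j - z_i| \ge \sigma(z_i,P)/(4K)$. Multiplying, using $\mu(\Delta) \le n$, $K \ge 2\lceil n/2\rceil \ge n$, and absorbing all the constants and the $4^{-n}$-type terms into $2^{-4n}$ and the $K^{-\mu(\Delta)}$ powers into $K^{-\mu(\Delta)-1}$, gives $|P(m_j)| \ge 2^{-4n+1} K^{-\mu(\Delta)-1} \sigma(z_i,P) |P'(z_i)|$ (with room to spare in the constants), and then admissibility yields $|P(m^*)| \ge \tfrac14 |P(m_j)| > 2^{-4n-1} K^{-\mu(\Delta)-1}\sigma(z_i,P)|P'(z_i)|$.

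The main obstacle is the pigeonhole argument for selecting $m_j$: one must check carefully that $n+1$ grid points suffice to dodge the real parts of all $\mu(\Delta) \le n$ roots simultaneously, which requires that each root excludes at most one grid point from a suitable "safe" sub-pattern — this works if we only demand $|m_j - \mathrm{Re}(z_k)| \ge \epsilon/2$ and note each real interval of length $\epsilon$ contains at most one grid point, but the bookkeeping of how many grid points survive and what the resulting bound on $|m_j - z_i|$ is (it could be as large as the grid diameter) needs care; a slightly larger multipoint or a more careful count may be needed, and tracking the exact powers of $K$ and $2$ through the product is the tedious part. The second, lesser obstacle is making sure the distance estimates for roots outside $\Delta$ are uniform; this uses only $|m - z_k| > K\epsilon \ge \lceil n/2\rceil \epsilon \cdot 2 \ge$ (grid radius) $\times 2$, which is exactly what the hypothesis $K \ge 2\lceil n/2\rceil$ is for.
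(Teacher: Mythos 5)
Your argument is correct and essentially reproduces the paper's proof: both you and the paper select a grid point at distance at least $\epsilon/2$ from the nearby roots by pigeonhole (there are $2\lceil n/2\rceil+1 > n$ grid points spaced $\epsilon$ apart, and each root excludes at most one), compare $|P(m_j)|$ with $|P'(z_i)|$ factor by factor, use $\sigma(z_i,P)<2K\epsilon$ to convert the leftover factor of $\epsilon$, and finish with admissibility. The only cosmetic difference is that you impose the $\epsilon/2$-avoidance condition only for roots inside $\Delta$ and handle the outer roots via $|m_j-z_k|\ge|m-z_k|/2$, whereas the paper imposes it for all roots and bounds the outer ratio by $2K/(K-\lceil n/2\rceil)\le 4$; these yield the same estimate and both rely on the hypothesis $K\ge 2\lceil n/2\rceil$ in the same way.
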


\begin{proof}
Since the number of points $m_{i} \in \multipoint{m}{\epsilon}$ is larger than the number of roots of $P$ and since their pairwise distances are $\epsilon$, there exists a point $m_{i_{0}} \in \multipoint{m}{\epsilon}$ whose distance to all roots of $P$ is at least $\epsilon/2$. We will derive a lower bound on $\abs{P(m_{i_0})}$. 
Let $z_{i}$ be any root in $\Delta$. For any different root $z_{j}\in \Delta$, we have $|z_i-z_{j}|/|m_{i_0}-z_{j}|<2K\epsilon/(\epsilon/2)=4K$, and, for any root $z_{j}\notin\Delta$, we have $|z_i-z_{j}|/|m_{i_0}-z_{j}|\le 2K/(K-\lceil n/2\rceil)\le 4$. Hence, it follows that
\begin{align*}
|P(m_{i_0})|&=|P_{n}|\cdot |m_{i_0}-z_i|\cdot \prod_{j\neq i}|m_{i_{0}}-z_{j}|>|P_{n}|\cdot \frac{\epsilon}{2}\cdot (4K)^{-\mu(\Delta)}\cdot 4^{-n+\mu(\Delta)}\cdot \prod_{j\neq i}|z_i-z_{j}|\\
&=|P'(z_i)|\cdot \frac{\epsilon}{2n}\cdot 4^{-n}\cdot K^{-\mu(\Delta)}=2^{-(\log n+1)-2n}\cdot\epsilon\cdot K^{-\mu(\Delta)}\cdot |P'(z_i)|\\
&>2^{-2n-\log n-2}\cdot K^{-\mu(\Delta)-1}\cdot \sigma(z_i,P)\cdot |P'(z_i)|,
\end{align*}
where we used $\sigma(z_i,P)<2K\epsilon$. Hence, for each admissible point $m^{*}\in\multipoint{m}{\epsilon}$, it follows that $|P(m^{*})|\ge \frac{|P(m_{i_{0}})|}{4}\ge 2^{-4n-1}\cdot K^{-\mu(\Delta)-1}\cdot \sigma(z_i,P)\cdot |P'(z_i)|$.
\end{proof}

We summarize the discussion of this section in the following corollary.

 \begin{corollary}\label{cor:multipoint} Let $m$ be a real point, let $\epsilon$ be a real positive value, and let $K$ be a positive real with $K\ge 2\cdot \ceil{n/2}$, and assume that the disk $\Delta:=\Delta_{K\cdot \epsilon}(m)$ contains at least two roots of $P$. Then, for each admissible point $m^{*}\in \multipoint{m}{\epsilon}$,
  \begin{align}\label{sizeofsubdivpoint}
 |P(m^{*})|> 2^{-4n-1}\cdot K^{-\mu(\Delta)-1}\cdot \sigma(z_i,P)\cdot |P'(z_i)|\quad\text{for all roots }z_i\in\Delta.\end{align}
The algorithm {\bf Admissible Point} applied to $\multipoint{m}{\epsilon}$ selects an admissible point from the set with
 \begin{align}
\tilde{O}(n(\mu(\Delta)\cdot \log K+n+\tau_{P}+n\log M(|m|+n\epsilon)+\log M(\max_{z_i\in\Delta} (\sigma(z_i,P)\cdot |P'(z_i)|))).
\end{align}
bit operations. It requires approximations of the 
coefficients of $P$ and the points $m_{i}$ of quality $$O(\mu(\Delta)\cdot \log K+n+\tau_{P}+n\log M(|m|+n\epsilon)+\log M(\max_{z_i\in\Delta} (\sigma(z_i,P)\cdot |P'(z_i)|)).$$
 \end{corollary}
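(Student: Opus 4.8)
The plan is to assemble Corollary~\ref{cor:multipoint} by simply combining the two results already proved in this section, instantiated for the multipoint $\multipoint{m}{\epsilon}$. First, the inequality~(\ref{sizeofsubdivpoint}) is nothing but a verbatim restatement of the conclusion of Lemma~\ref{lem:boundonmax}, which already takes as hypotheses exactly $K \ge 2\cdot\ceil{n/2}$ and that $\Delta := \Delta_{K\cdot\epsilon}(m)$ contains at least two roots of $P$; so that part requires no further work beyond pointing to the lemma. The only mild subtlety is that Lemma~\ref{lem:boundonmax} uses $\mu(\Delta)$ for the number of roots in $\Delta$ while the rest of the paper writes $\mathcal{M}(I)$ for roots in one-circle regions — here I would just note $\mu(\Delta)$ is the number of roots of $P$ in $\Delta$ and keep the notation consistent with the lemma.

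Second, for the complexity claim I would invoke Lemma~\ref{lem:apxmultipointeval} with $X = \multipoint{m}{\epsilon}$. That lemma gives the cost $\tilde{O}(n(n+\tau_P + n\log M(\max_i|x_i|) + \log M(\lambda^{-1})))$ where $\lambda = \max_i |P(m_i)|$, and requires coefficient/point approximations of quality $O(n+\tau_P+n\log M(\max_i|x_i|)+\log M(\lambda^{-1}))$. It then remains to substitute the two quantities $\max_i|m_i|$ and $\lambda^{-1}$ by the explicit bounds stated in the corollary. For the points: by Definition~\ref{definition:multipoint}, $|m_i| = |m + (i - \ceil{n/2})\epsilon| \le |m| + \ceil{n/2}\epsilon \le |m| + n\epsilon$, so $M(\max_i|m_i|) \le M(|m|+n\epsilon)$ and $\log M(\max_i|m_i|) \le \log M(|m|+n\epsilon)$. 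For $\lambda^{-1}$: I would pick any root $z_i \in \Delta$ and apply the lower bound on $|P(m^*)| \le \lambda$ from the already-established~(\ref{sizeofsubdivpoint}) to get $\lambda^{-1} < 2^{4n+1}\cdot K^{\mu(\Delta)+1}\cdot (\sigma(z_i,P)\cdot|P'(z_i)|)^{-1}$, and hence $\log M(\lambda^{-1}) = O(n + \mu(\Delta)\log K + \log M((\sigma(z_i,P)\cdot|P'(z_i)|)^{-1}))$; maximizing the last term over $z_i\in\Delta$ (to get a bound that does not depend on the choice of $z_i$, and since a larger value of $\sigma\cdot|P'|$ only makes $\lambda^{-1}$ smaller, so we want the one that gives the weakest bound — i.e. take the root $z_i$ maximizing $\sigma(z_i,P)|P'(z_i)|$) yields exactly the $\log M(\max_{z_i\in\Delta}(\sigma(z_i,P)\cdot|P'(z_i)|))$ term appearing in the corollary. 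Plugging these two substitutions into the cost and precision bounds of Lemma~\ref{lem:apxmultipointeval} gives the stated bounds.

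The only genuine point requiring a tiny bit of care — and the one I would call out explicitly — is the direction of the maximum in the $\sigma(z_i,P)\cdot|P'(z_i)|$ term: the lower bound~(\ref{sizeofsubdivpoint}) on $\lambda$ holds for \emph{every} root $z_i\in\Delta$, so $\lambda^{-1}$ is bounded above by the \emph{smallest} of the resulting right-hand sides, i.e. the one coming from the root $z_i\in\Delta$ with the \emph{largest} value of $\sigma(z_i,P)\cdot|P'(z_i)|$; this is why the bound features $\max_{z_i\in\Delta}$ rather than $\min$. Everything else is direct substitution, so I do not expect any real obstacle here — the corollary is a packaging of Lemma~\ref{lem:boundonmax} and Lemma~\ref{lem:apxmultipointeval} with the elementary estimate $\max_i |m_i| \le |m| + n\epsilon$.
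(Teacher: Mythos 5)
Your decomposition is exactly what the paper intends: Corollary~\ref{cor:multipoint} is introduced with ``We summarize the discussion of this section\ldots'', so there is no separate proof --- inequality~(\ref{sizeofsubdivpoint}) is Lemma~\ref{lem:boundonmax} verbatim, and the complexity and precision bounds are Lemma~\ref{lem:apxmultipointeval} applied to $X=\multipoint{m}{\epsilon}$, combined with the two substitutions $\max_i|m_i|\le|m|+n\epsilon$ and the lower bound on $\lambda$ supplied by~(\ref{sizeofsubdivpoint}). You also get the one genuinely delicate point right: since~(\ref{sizeofsubdivpoint}) holds for \emph{every} $z_i\in\Delta$, the sharpest bound on $\lambda^{-1}$ comes from the root maximizing $\sigma(z_i,P)\,|P'(z_i)|$, which is why a $\max$, not a $\min$, appears.

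There is one wrinkle you should make explicit rather than elide. Your intermediate step correctly produces
\[
\log M(\lambda^{-1}) \;=\; O\!\left(n + \mu(\Delta)\log K + \log M\!\left(\Bigl(\max_{z_i\in\Delta}\sigma(z_i,P)\,|P'(z_i)|\Bigr)^{-1}\right)\right),
\]
with an inverse inside the $M(\cdot)$. That is \emph{not} literally the term printed in the corollary, which reads $\log M\bigl(\max_{z_i\in\Delta}(\sigma(z_i,P)\cdot|P'(z_i)|)\bigr)$ with no inverse; as written, that printed quantity is zero whenever $\sigma(z_i,P)|P'(z_i)|\le 1$, although the required precision genuinely grows as $\sigma|P'|$ shrinks. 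The displayed formula appears to be a typo --- the analogous quantities in Lemma~\ref{lem:apxmultipointeval}, in Theorem~\ref{thm: main}, and in~(\ref{complexityatnode}) all carry the $(\cdot)^{-1}$ --- and your derivation supplies the correct form. Your closing claim that your bound ``yields exactly'' the term in the corollary silently drops the inverse; keep it. Minor verbal point: the phrase ``maximizing the last term over $z_i$'' is backwards (you are choosing $z_i$ to \emph{minimize} your upper bound on $\lambda^{-1}$, which means maximizing $\sigma|P'|$); your final paragraph already states the logic correctly, so just align the earlier wording with it.
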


Corollary~\ref{cor:multipoint} is a key ingredient of our root isolation algorithm. We will appeal to it whenever we have to choose a subdivision point. Assume, in an ideal world with real arithmetic at unit cost, we choose a subdivision point $m$. The polynomial $P$ may take a very small value at $m$, and this would lead to a high bit complexity. Instead of choosing $m$ as the subdivision point, we choose a nearby admissible point $m^{*}\in\multipoint{m}{\epsilon}$ and are guaranteed that  $|P(m^*)|$ has at least the value stated in (\ref{sizeofsubdivpoint}). The fact that $|P|$ is reasonably large at $m^*$ will play a crucial role in the analysis of our algorithm, cf.~Theorem~\ref{main:theorem}.

\subsection{Descartes' Rule of Signs in Monomial and in Bernstein Basis}\label{sec:DRS:basis}

\begin{figure}[t]
\begin{center}
\includegraphics[width=0.65\textwidth]{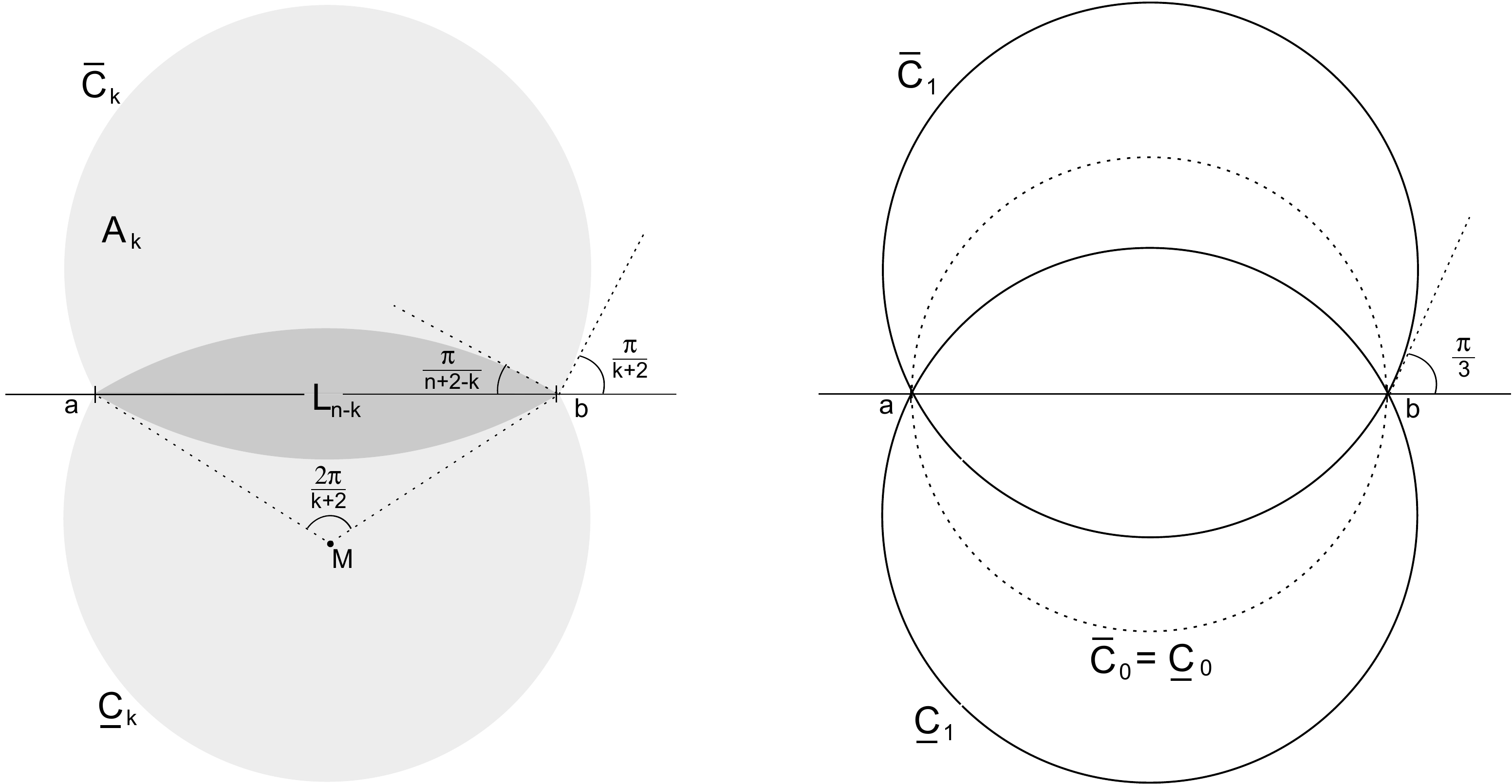}\end{center}
\caption{\label{fig:Obreshkoff}\label{page:Obreshkoff}For any $k$, $0 \le k \le n$, the \emph{Obreshkoff disks} 
$\overline{C}_k$ and $\underline{C}_k$ for $I=(a,b)$ have the endpoints of $I$ on their
boundaries; their centers see the line segment $(a,b)$ under the angle $\frac{2\pi}{k+2}$.  
The \emph{Obreshkoff lens} $L_k$ is the interior of $\overline{C}_k \cap
\underline{C}_k$, and the \emph{Obreshkoff area} $A_k$ is the interior of $\overline{C}_k \cup
\underline{C}_k$. Any point (except $a$ and $b$) on the boundary of $A_k$ sees $I$ under the
angle $\frac{\pi}{k+2}$, and any point (except
$a$ and $b$) on the boundary of
$L_k$ sees $I$ under the angle $\pi - \frac{\pi}{k+2}$. We have $L_n \subseteq \ldots \subseteq L_1 \subseteq L_0$ and $A_0
\subseteq A_1 \subseteq \ldots \subseteq A_n$. The cases $k=0$ and $k=1$ are of special interest: The circles
$\overline{C}_0$ and $\underline{C}_0$ coincide. They have their centers at the
midpoint of $I$. The circles $\overline{C}_1$
and $\underline{C}_1$ are the circumcircles of the two equilateral triangles
having $I$ as one of their edges. We call $A_0=\Delta(I)$ and $A_1$ the \emph{one-circle} and the \emph{two-circle region} for $I$, respectively.}
\end{figure}

This section provides a brief review of Descartes' rule of signs. We remark that most of what follows in this section has already been presented (in more detail) elsewhere (e.g.~in~\cite{eigenwillig-phd,DBLP:conf/casc/EigenwilligKKMSW05,NewDsc}); however, for the sake of a self-contained representation, we have decided to reiterate the most important results which are needed for our algorithm and its analysis.

In order to estimate the number $m_{I}$ of roots of $P$ contained in an interval $I=(a,b)\subseteq \mathcal{I}=(-2^{\Gamma},2^{\Gamma})$, we use Descartes' rule of signs: \textit{For a polynomial $F(x)=\sum_{i=0}^N f_i x^i\in\R[x]$, the number $m$ of positive real roots of $F$ is bounded by the number $v$ of sign variations\footnote{Zero entries are not considered. For instance, $\var(-1,0,0,2,0,-1)=\var(-1,2,-1)=2$.} in its coefficient sequence $(f_0,\ldots,f_N)$ and, in addition, $v\equiv m \text{ }\operatorname{mod}\text{ }2$}. We can apply this rule to the polynomial $P$ and the interval $I$ by considering a M\"obius transformation $x\mapsto \frac{ax+b}{x+1}$ that maps $(0,+\infty)$ one-to-one onto $I$. Namely, let
\begin{align}
P_I(x):=\sum_{i=0}^n p_{I,i} \cdot x^i:=(x+1)^n\cdot P\left(\frac{ax+b}{x+1}\right),\label{poly:PI}
\end{align}
and let $v_{I}:=\var(P,I):=\var(p_{I,0},\ldots,p_{I,n})$ be defined as \emph{the number of sign variations in the coefficient sequence $(p_{I,0},\ldots,p_{I,n})$ of 
$P_I$}. Then,
$v_{I}$ is an upper bound for $m_{I}$ (i.e.~$v_{I}\ge m_{I}$) and $v_{I}$ has the same parity as $m_{I}$ (i.e.~$v_{I}\equiv m_{I} \text{ }\operatorname{mod}\text{ }2$). Notice that the latter two properties imply that $v_{I}=m_{I}$ if $v_{I}\le 1$.

The following theorem states that the number $v_{I}$ is closely related to the number of roots located in specific neighborhoods of the interval $I$.

\begin{theorem}[\cite{Obreschkoff:book,Obrechkoff:book-english}]
\label{Obreshkoff} Let $I=(a,b)$ be an open interval and
$v_{I}= \var(P,I)$. If the Obreshkoff lens 
$L_{n - k}$ (see Figure~\ref{fig:Obreshkoff} for the definition of $L_{n - k}$) contains at least $k$ roots (counted with
multiplicity) of $P$, then $v _{I}\ge k$. If the Obreshkoff area $A_k$ contains no more than $k$ roots (counted with multiplicity) of $P$, then $v_I \le k$. In
particular,
\[
\#\text{ of roots of }P\text{ in }L_n\le    v_{I}=\var(P,I) \le \#\text{ of roots
of }P\text{ in }A_n.
\]
\end{theorem}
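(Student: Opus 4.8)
The plan is to transport the statement, via the M\"obius transformation $\phi$ underlying the definition of $P_I$, to the standard ``sector form'' of Obreshkoff's theorem, and then to invoke (with a short self-contained sketch) that classical result.

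Let $\phi(x):=\frac{ax+b}{x+1}$, a real M\"obius transformation with $\phi(0)=b$, $\phi(\infty)=a$, $\phi(1)=m(I)$, mapping $(0,\infty)$ bijectively onto $(a,b)$, so $P_I(x)=(x+1)^nP(\phi(x))$. Writing $P=P_n\prod_i(x-z_i)$ and using $(x+1)\bigl(\phi(x)-z\bigr)=(a-z)x+(b-z)=(a-z)\bigl(x-\phi^{-1}(z)\bigr)$, one gets (assuming w.l.o.g.\ $P(a)\neq 0$)
\[
P_I(x)=P(a)\prod_{i=1}^n\bigl(x-w_i\bigr),\qquad w_i:=\phi^{-1}(z_i)=\tfrac{z_i-b}{a-z_i},
\]
so that $G:=P_I$ is a real degree-$n$ polynomial whose roots are exactly the $\phi^{-1}$-images of the roots of $P$, and $v_I=\var(G)$. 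Since $\phi$ is a real M\"obius transformation sending $a\mapsto\infty$ and $b\mapsto 0$, it maps every circle through $a$ and $b$ to a line through the origin and commutes with conjugation; by conformality the line attached to an Obreshkoff circle meets the positive real axis at $0$ under that circle's tangent--chord angle at $b$, which the inscribed-angle theorem identifies as $\tfrac{\pi}{k+2}$ or $\pi-\tfrac{\pi}{k+2}$. Carrying this out and fixing orientations at the sample point $\phi^{-1}(m(I))=1$ yields
\[
\phi^{-1}(A_k)=\set{w\neq 0}{|\arg w|<\pi-\tfrac{\pi}{k+2}},\qquad
\phi^{-1}(L_{n-k})=\set{w\neq 0}{|\arg w|<\tfrac{\pi}{n-k+2}},
\]
both double-sectors about the \emph{positive} real axis; equivalently the complement of $\phi^{-1}(A_k)$ is the narrow closed double-sector $\set{w}{|\arg(-w)|\le\tfrac{\pi}{k+2}}$ about the \emph{negative} real axis. (Consistency check: $A_0=\Delta(I)$ goes to the open right half-plane, matching $\pi-\tfrac{\pi}{2}=\tfrac{\pi}{2}$, and the $A_k$ increase in $k$ while the narrow negative sector shrinks, as they must.)

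Under this dictionary the theorem is equivalent to two classical statements about the degree-$n$ real polynomial $G$ with roots $w_1,\dots,w_n$:
\begin{compactenum}[(i)]
\item if at least $k$ of the $w_i$ satisfy $|\arg w_i|<\tfrac{\pi}{n-k+2}$, then $\var(G)\ge k$;
\item if at least $n-k$ of the $w_i$ satisfy $|\arg(-w_i)|\le\tfrac{\pi}{k+2}$, then $\var(G)\le k$.
\end{compactenum}
These are Obreshkoff's sector theorems~\cite{Obreschkoff:book,Obrechkoff:book-english} (see also the modern expositions in~\cite{eigenwillig-phd,DBLP:conf/casc/EigenwilligKKMSW05,NewDsc}); Descartes' rule $v_I\ge m_I$ is the special case of (i) in which the $k$ chosen roots are positive reals. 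For completeness I would include a short proof of (ii): group the $\ge n-k$ roots lying in the closed sector $\set{w}{|\arg(-w)|\le\tfrac{\pi}{k+2}}$ (a conjugation-closed set) into a monic real factor $R$ and the remaining $\le k$ roots into $Q$, so $G=P(a)\cdot R\cdot Q$ with $\deg Q\le k$. Statement~(ii) is then a quantitative, sector-width-sensitive sharpening of the Stodola--Hurwitz criterion: its case $k=0$ is exactly ``all roots in the closed left half-plane $\Rightarrow$ all coefficients of one sign $\Rightarrow\var=0$'', and the general case follows by induction, removing the factors of $R$ one at a time while tracking how $\var$ changes when one multiplies by a linear factor $(x+c)$ with $c\ge 0$ or by a quadratic $(x+s)^2+t$ with $s>0$ and $t\ge 0$ small relative to $s^2$ (this last constraint being precisely what the width $\tfrac{\pi}{k+2}$ encodes). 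Statement~(i) is obtained dually, factoring out the roots in the narrow sector about the \emph{positive} real axis and using that multiplication by such a factor cannot decrease $\var$ by too much.

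The main obstacle is the quantitative heart of (ii) (equivalently (i)). One is tempted to argue that multiplication by the Hurwitz factor $R$ leaves $\var$ non-increasing, which would even give $\var(G)=\var(RQ)\le\var(Q)\le\deg Q\le k$; but this is \emph{false} --- e.g.\ $(x^2+cx+1)(x-1)$ has three sign variations when $0<c<1$, although $x-1$ has only one, and the roots of $x^2+cx+1$ then lie in the left half-plane but outside the $\tfrac{\pi}{3}$-sector about the negative real axis. Hence the proof must genuinely exploit the interplay between the sector width $\tfrac{\pi}{k+2}$ and the number $k$ of ``uncontrolled'' roots; pushing this bookkeeping through the induction (as in Obreshkoff's monograph) is the delicate step, whereas the M\"obius reduction and the identification of $\phi^{-1}(A_k)$, $\phi^{-1}(L_{n-k})$ with the sectors above are routine by comparison.
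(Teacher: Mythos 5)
The paper does not prove this statement at all: it is stated as a citation to Obreshkoff's monograph and used throughout as an external classical result, so there is no in-paper argument to compare against. That said, your Möbius reduction is correct and is precisely the standard bridge between the interval form stated in the paper and the sector form in which Obreshkoff's theorems are usually quoted. The identity $(x+1)\bigl(\phi(x)-z\bigr)=(a-z)\bigl(x-\phi^{-1}(z)\bigr)$ with $\phi^{-1}(z)=\tfrac{z-b}{a-z}$ does give $P_I=P(a)\prod_i(x-w_i)$, and conformality of $\phi^{-1}$ together with the tangent--chord (inscribed angle) description of $\partial A_k$ and $\partial L_k$ correctly sends $A_k$ and $L_{n-k}$ to the open double-sectors about the positive real axis of half-angle $\pi-\tfrac{\pi}{k+2}$ and $\tfrac{\pi}{n-k+2}$; your consistency checks ($A_0\mapsto$ right half-plane, monotonicity in $k$) are right. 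If you stop there and cite the sector theorems, you have a clean, correct argument that is more informative than the bare citation in the paper.

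The genuine gap is in your attempt to also \emph{prove} the sector theorems. You correctly observe that the naive ``multiplying in a Hurwitz factor cannot increase $\var$'' argument is false, and your counterexample $(x^2+cx+1)(x-1)$ with $0<c<1$ is apt. But the crux --- that $\le k$ unconstrained roots together with $\ge n-k$ roots in the closed double-sector of half-angle $\tfrac{\pi}{k+2}$ about the negative real axis force $\var(G)\le k$ --- is then asserted to follow ``by induction, removing the factors of $R$ one at a time,'' with no account of how the sector width enters the bookkeeping. The condition ``$t\ge 0$ small relative to $s^2$'' is not derived from the angle $\tfrac{\pi}{k+2}$, and statement (i) is handled by ``obtained dually.'' These are exactly the steps that occupy Obreshkoff; naming the obstacle (as you do, candidly) is not the same as overcoming it. So: as a reduction plus citation the proposal is sound; as a self-contained proof of the quoted theorem, the inductive heart of the sector bound is missing.
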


We remark that the special cases $k=0$ and $k=1$ appear as the one-circle and the two-circle theorems in the literature (e.g.~\cite{Alesina-Galuzzi,eigenwillig-phd}). Theorem~\ref{Obreshkoff} implies that  if the one-circle region $A_{0}=\Delta(I)$ of $I$ contains a root $z_{i}$ with separation $\sigma(z_i,P)>2w(I)=2(b-a)$, then this root must be real and $v_{I}=1$. Namely, the condition on $\sigma(z_i,P)$ guarantees that the two-circle region $A_{1}$ contains $z_{i}$ but no other root of $P$. If the one-circle region contains no root, then $v_I=0$. Hence, it follows that each interval $I$ of width $w(I)<\sigma_P/2$ yields $v_{I}=0$ or $v_{I}=1$. 
In addition, we state the variation diminishing property of the function $\var(P,I)$; e.g.,~see~\cite[Corollary 2.27]{eigenwillig-phd} for a self-contained proof:

\begin{theorem}[\cite{Schoenberg}]\label{subad}
Let $I$ be an interval and $I_1$ and $I_2$ be two disjoint subintervals of $I$. Then,
\[\var(P,I_1) + \var(P,I_2) \le \var(P,I).
\]
\end{theorem}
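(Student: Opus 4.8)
The plan is to reduce the statement to the case of a \emph{single} subdivision point and to settle that case via de Casteljau subdivision of the Bernstein representation of $P$ (the combinatorics underlying the Descartes bisection step).

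First, it suffices to prove the single-split inequality $\var(P,(\alpha,\gamma))+\var(P,(\gamma,\beta))\le \var(P,(\alpha,\beta))$ for all $\alpha<\gamma<\beta$. Applying it twice --- to peel off, in turn, the part of an interval $J$ lying to the left and to the right of a given subinterval $J'$ --- and using $\var(P,\cdot)\ge 0$ gives the monotonicity $\var(P,J')\le\var(P,J)$ whenever $J'\subseteq J$. For the general statement, recall that two disjoint open intervals $I_1,I_2\subseteq I=(a,b)$ are separated by a point: after possibly interchanging them we may take $\gamma:=\sup I_1\in(a,b)$, so that $I_1\subseteq(a,\gamma)$ and $I_2\subseteq(\gamma,b)$; then monotonicity on each half, followed by the single-split inequality at $\gamma$, yields $\var(P,I_1)+\var(P,I_2)\le\var(P,(a,\gamma))+\var(P,(\gamma,b))\le\var(P,(a,b))$.

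To prove the single-split inequality, write $P=\sum_{i=0}^n b_i\,B_i^n(\,\cdot\,;a,b)$ in the degree-$n$ Bernstein basis on $[a,b]$, with $B_i^n(x;a,b)=\binom{n}{i}(x-a)^i(b-x)^{n-i}/(b-a)^n$. Substituting $x=\frac{as+b}{s+1}$ and using $x-a=\frac{b-a}{s+1}$, $b-x=\frac{(b-a)s}{s+1}$ gives $P_{(a,b)}(s)=(s+1)^nP(\tfrac{as+b}{s+1})=\sum_i b_{n-i}\binom{n}{i}s^i$, so that $\var(P,(a,b))=\var(b_0,\dots,b_n)$ (positive binomial factors and reversal do not affect sign variations). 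De Casteljau subdivision at $\lambda=\frac{\gamma-a}{b-a}\in(0,1)$ builds the triangular array $b_i^{(0)}=b_i$, $b_i^{(r)}=(1-\lambda)\,b_i^{(r-1)}+\lambda\,b_{i+1}^{(r-1)}$, whose two slanted edges are, up to reversal, the Bernstein coefficients of $P$ on $[a,\gamma]$ and on $[\gamma,b]$; hence the left-hand side of the single-split inequality is the number of sign variations along these two edges. I would bound it by computing the array one level at a time and tracking the number of sign variations along the current boundary of the triangle. This quantity equals $\var(b_0,\dots,b_n)$ at the start and is at least the number of sign variations along the two slanted edges at the end, so it suffices to show it is non-increasing from one level to the next. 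That is the crux: one level replaces a stretch $(c_0,\dots,c_m)$ of the boundary by $(c_0,d_0,\dots,d_{m-1},c_m)$ with $d_i=(1-\lambda)c_i+\lambda c_{i+1}$, and this does not increase $\var$ because (i) splicing a convex combination of two consecutive entries between them leaves $\var$ unchanged, and (ii) deleting an entry never increases $\var$; the replacement is realized by inserting each $d_i$ between $c_i$ and $c_{i+1}$ using (i), then deleting $c_1,\dots,c_{m-1}$ using (ii).

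The reductions and the Bernstein identity are routine; the real content is this corner-cutting (variation-diminishing) step, and the delicate point there is the correct treatment of the zero-entry convention (ignored entries) in both elementary facts (i) and (ii) and along the boundary of the array. A route that sidesteps Bernstein bases is to reduce via a M\"obius transformation to $I=(0,\infty)$ and to prove the classical form $\var(P,(0,c))+\var\big(P(x+c),(0,\infty)\big)\le\var(P)$ by factoring the relevant change of variables into a Taylor shift $x\mapsto x+c$, a homothety, and the inversion $x\mapsto x/(x+1)$, each of which is individually variation-non-increasing; this is the form in which the property is classically attributed to Schoenberg, and it also follows from total positivity of the associated matrices. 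Either way, the theorem follows from the single-split inequality together with the reduction above.
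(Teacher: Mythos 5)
The paper does not prove this statement itself; it attributes the result to Schoenberg and points to Eigenwillig's thesis (Corollary~2.27) for a self-contained proof, so there is no ``paper's own proof'' to compare against. Your main line---reduce to a single split point, translate $\var(P,\cdot)$ into sign variations of Bernstein coefficients (which is the paper's Lemma~\ref{lem:MonomBernstein}), and show that the de Casteljau corner-cutting step is variation-diminishing level by level---is correct and is precisely the standard self-contained derivation that the paper refers to externally. The two elementary facts you invoke do hold under the zero-ignoring convention once one observes that $\lambda=(\gamma-a)/(b-a)\in(0,1)$ is a \emph{strict} convex combination: the inserted value $d_i=(1-\lambda)c_i+\lambda c_{i+1}$ is then nonzero and carries the sign of the nonzero neighbour whenever exactly one of $c_i,c_{i+1}$ vanishes, so insertion preserves $\var$, while deleting an entry from a sign sequence never increases $\var$. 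One small step worth writing out explicitly is the last passage from the boundary $\var$ at level $n$ to $\var(P,(\alpha,\gamma))+\var(P,(\gamma,\beta))$: the two slanted edges meet at the apex $b_0^{(n)}=P(\gamma)$, and one needs that $\var$ of a concatenation of two sequences sharing one endpoint dominates the sum of the two individual $\var$s, which holds even when the shared value is zero. Your sketched alternative via M\"obius/Taylor-shift factorizations and total positivity gestures at the right classical Schoenberg circle of ideas, but as written it is not clear how variation-diminishing of each individual change of variables yields the \emph{two-term} inequality; the Bernstein/de~Casteljau argument should remain the main route.
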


In addition to the above formulation of Descartes' rule of signs in the monomial basis, we provide corresponding results for the representation of $P(x)$ in terms of the Bernstein basis $B_{0}^{n},\ldots,B_{n}^{n}$ with respect to $I=(a,b)$, where 
\begin{align}
B_{i}^{n}(x):=B_{i}^{n}[a,b](x):=\binom{n}{i}\frac{(x-a)^{i}(b-x)^{n-i}}{(b-a)^{n}},\quad 0\le i\le n.
\end{align}
If $P(x)=\sum_{i=0}^{n}b_{i}B_{i}^{n}[a,b](x)$, we call $B=(b_{0},\ldots,b_{n})$ the Bernstein 
representation of $P$ with respect to $I$. For the first and the last coefficient, we have $b_{0}=P(a)$ 
and $b_{n}=P(b)$. The following Lemma provides a direct correspondence between the coefficients 
of the polynomial $P_{I}$ from (\ref{poly:PI}) and the entries of $B$. For a self-contained proof, we refer to~\cite{eigenwillig-phd}.

\begin{lemma}\label{lem:MonomBernstein}
Let $I=(a,b)$ be an interval, $P(x)=\sum_{i=0}^{n}b_{i}B_{i}^{n}[a,b](x)$ be the Bernstein representation of $P$ with respect to $I$, and $P_{I}(x)=\sum_{i=0}^n p_{I,i} \cdot x^i$ as in (\ref{poly:PI}). It holds that 
\begin{align}
p_{I,i}=b_{n-i}\cdot\binom{n}{i}\quad\text{for all }i=0,\ldots,n.
\end{align}
In particular, $v_{I}$ coincides with the number of sign variations in the sequence $(b_{0},\ldots,b_{n})$.
\end{lemma}

In essence, the above lemma states that, when using Descartes' Rule of Signs, it makes no difference whether we consider the Bernstein basis representation of $P$ with respect to $I$ or the 
polynomial $P_{I}$ from (\ref{poly:PI}). This will turn out to be useful in the next section, where we review results from~\cite{DBLP:conf/casc/EigenwilligKKMSW05} which allow us to treat the cases $v_{I}=0$ and $v_{I}=1$ by  using approximate arithmetic.

\subsection{Descartes' Rule of Signs with Approximate Arithmetic}\label{sec:apxdsc}

We introduce the $0$-Test and $1$-Test for intervals $I$ with the following properties.\smallskip

\begin{compactenum}[(1)]
\item If $\var(P,I) = 0$ ($\var(P,I) = 1$), then the $0$-Test ($1$-Test) for $I$ succeeds.
\item If the $0$-Test ($1$-Test) for $I$ succeeds, $I$ contains no (exactly one) root of $P$.
\item The $0$-Test and the $1$-Test for $I$ can be carried out efficiently with approximate arithmetic, see Corollaries~\ref{cor:cost0test} and~\ref{cor:cost1test}. 
\end{compactenum}

\subsubsection{The case $\var(P,I)=0$}\label{var0apx}
Consider the following Lemma, which follows directly from~\cite[Lemma 5]{DBLP:conf/casc/EigenwilligKKMSW05} and its proof:

\begin{lemma}\label{lem:bitstreamresult1}
Let  $P(x)=\sum_{i=0}^{n}b_{i}B_{i}^{n}[a,b](x)$ be the Bernstein representation of $P$ with respect to the interval $I=(a,b)$, and let $m$ be a subdivision point contained in $[m(I)-\frac{w(I)}{4},m(I)+\frac{w(I)}{4}]$. The Bernstein representations of $P$ with respect to $I'=(a,m)$ and $I''=(m,b)$ are given by $P(x)=\sum_{i=0}^{n}b_{i}'B_{i}^{n}[a,m](x)$ and $P(x)=\sum_{i=0}^{n}b_{i}''B_{i}^{n}[m,b](x)$, respectively. 
Suppose that $\var(P,I)=0$. Then, $\var(P,I')=\var(P,I'')=0$, and
\[
|b_{i}'|,|b_{i}''|>\min(|P(a)|,|P(b)|)\cdot 4^{-(n+1)}\quad\text{for all }i=0,\ldots,n. 
\]
\end{lemma}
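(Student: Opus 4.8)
The plan is to reduce everything to the statement in~\cite[Lemma 5]{DBLP:conf/casc/EigenwilligKKMSW05}, which is formulated in the same setting of Bernstein coefficients under de~Casteljau subdivision. First I would recall the de~Casteljau recurrences: if $B = (b_0,\dots,b_n)$ is the Bernstein representation of $P$ with respect to $I = (a,b)$ and $m = a + t(b-a)$ with $t \in [1/4, 3/4]$, then the coefficients $b_i'$ of the representation with respect to $I' = (a,m)$ and $b_i''$ of the representation with respect to $I'' = (m,b)$ are obtained by the triangular scheme $b_j^{(0)} := b_j$ and $b_j^{(r)} := (1-t) b_j^{(r-1)} + t\, b_{j+1}^{(r-1)}$, with $b_i' = b_0^{(i)}$ and $b_i'' = b_i^{(n-i)}$. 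In particular every $b_i'$ and every $b_i''$ is a convex combination (with weights that are products of $t$'s and $(1-t)$'s) of the original $b_0,\dots,b_n$.

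Next I would establish the sign-variation claim. By Theorem~\ref{subad} (variation diminishing property) applied to the disjoint subintervals $I'$ and $I''$ of $I$, we have $\var(P,I') + \var(P,I'') \le \var(P,I) = 0$, hence $\var(P,I') = \var(P,I'') = 0$; by Lemma~\ref{lem:MonomBernstein} these are exactly the numbers of sign variations in $(b_0',\dots,b_n')$ and $(b_0'',\dots,b_n'')$. So both subdivided coefficient sequences have no sign change, which (combined with $b_0' = P(a)$, $b_n'' = P(b)$ being nonzero — indeed $\var(P,I)=0$ forces $P$ to have no root in $\Delta(I) \supseteq \{a,b\}$) means all $b_i'$ share the sign of $P(a)$ and all $b_i''$ share the sign of $P(b)$, and moreover $P(a), P(b)$ have the same sign, since $b_n' = b_0''$ is the common value $P(m)$ and the two sequences agree in sign there.

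The quantitative lower bound is the main point, and here I would simply invoke~\cite[Lemma 5]{DBLP:conf/casc/EigenwilligKKMSW05}. The idea behind it (which I would sketch) is that $\var(P,I)=0$ means, via Theorem~\ref{Obreshkoff}, that the Obreshkoff area $A_n$ of $I$ contains no root of $P$, so on the whole segment $[a,b]$ the polynomial $P$ is monotone-ish and cannot dip far below $\min(|P(a)|,|P(b)|)$; more precisely, a classical estimate bounds each de~Casteljau value from below by $\min(|P(a)|,|P(b)|)$ times a factor that degrades by at most a constant per subdivision level, yielding the factor $4^{-(n+1)}$ after $n$ levels with a split ratio $t$ bounded away from $0$ and $1$ (which is exactly guaranteed by $m \in [m(I) - w(I)/4,\, m(I)+w(I)/4]$, i.e. $t \in [1/4,3/4]$). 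The restriction on $m$ is precisely what keeps the weights $t$ and $1-t$ in $[1/4,3/4]$, so that the $n$-fold products appearing as de~Casteljau weights do not collapse faster than $4^{-n}$; this is the step where one must be careful, but it is entirely contained in the cited lemma. I would therefore state that the bound $|b_i'|, |b_i''| > \min(|P(a)|,|P(b)|)\cdot 4^{-(n+1)}$ follows verbatim from~\cite[Lemma 5]{DBLP:conf/casc/EigenwilligKKMSW05} applied to $I$ and the split point $m$, and that no new argument is needed beyond checking that our hypothesis on $m$ matches theirs.

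The main obstacle, such as it is, is purely expository: making sure the normalization of the Bernstein basis and the admissible range of the split ratio in our statement line up exactly with those in~\cite{DBLP:conf/casc/EigenwilligKKMSW05}, so that the constant $4^{-(n+1)}$ can be quoted rather than re-derived.
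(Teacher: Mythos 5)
Your proposal takes the same route as the paper, which itself offers no proof beyond the one-line remark that the lemma ``follows directly from~\cite[Lemma~5]{DBLP:conf/casc/EigenwilligKKMSW05} and its proof''; your derivation of the sign-variation claim via Theorem~\ref{subad} and Lemma~\ref{lem:MonomBernstein} and your deferral of the quantitative bound to the cited lemma match this completely. One parenthetical aside, however, is wrong: you assert that ``$\var(P,I)=0$ forces $P$ to have no root in $\Delta(I)\supseteq\{a,b\}$''. This fails twice over. First, $\Delta(I)$ is by definition an \emph{open} disk, so $a$ and $b$ lie on its boundary and not inside it. Second, and more substantively, the direction of Theorem~\ref{Obreshkoff} you would need runs the other way: $\var(P,I)=0$ only yields root-freeness of the Obreshkoff lens $L_n\subsetneq\Delta(I)$, while the implication ``$\Delta(I)$ root-free $\Rightarrow\var(P,I)=0$'' is the one that holds, not its converse. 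Indeed $P(a)\neq0$ and $P(b)\neq0$ cannot be deduced from $\var(P,I)=0$ at all --- the Bernstein coefficient sequence $(0,1,\ldots,1)$ has zero sign changes yet $P(a)=b_0=0$ --- and so this is an implicit hypothesis of the lemma (needed for the stated lower bound to be non-vacuous, since $b_0'=P(a)$, and imposed explicitly as a precondition of the $0$-Test where the lemma is used). Because you ultimately invoke~\cite[Lemma~5]{DBLP:conf/casc/EigenwilligKKMSW05} for the quantitative estimate anyway, the slip does not break the proof, but that parenthetical justification should be replaced by simply assuming $P(a),P(b)\neq 0$.
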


Combining the latter result with Lemma~\ref{lem:MonomBernstein} now yields:

\begin{corollary}\label{cor:var0test}
Let $I$, $I'$, $I''$ be intervals as in Lemma~\ref{lem:bitstreamresult1}, and let $L$ be an integer with
\begin{align}\label{def:LI0}
L\ge L_{I,0}:=\log M(\min(|P(a)|,|P(b)|)^{-1})+ 2(n+1)+1.
\end{align}
Suppose that $\sum_{i=0}^n \tilde{p}_{I',i} \cdot x^i$ and $\sum_{i=0}^n \tilde{p}_{I'',i} \cdot x^i$ are absolute 
$L$-bit approximations of $P_{I'}$ and $P_{I''}$, respectively. 
If $\var(P,I)=0$, then $\var(\tilde{p}_{I',0},\ldots,\tilde{p}_{I',n})=\var(\tilde{p}_{I'',0},\ldots,\tilde{p}_{I'',n})=0$, and $|\tilde{p}_{I',i}|,|\tilde{p}_{I'',i}|>2^{-L}$ for all $i=0,\ldots,n$.
\end{corollary}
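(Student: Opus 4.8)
The statement is a quantitative version of Lemma~\ref{lem:bitstreamresult1} that replaces the exact Bernstein/monomial coefficients by absolute $L$-bit approximations. The strategy is to transfer the bounds from Lemma~\ref{lem:bitstreamresult1} through the identity $p_{I,i}=b_{n-i}\binom{n}{i}$ of Lemma~\ref{lem:MonomBernstein}, and then observe that if every true coefficient has absolute value comfortably larger than $2^{-L}$, then an absolute $L$-bit perturbation can neither flip a sign nor drop a coefficient below $2^{-L}$, so the sign-variation count is preserved and in particular stays $0$.

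\textbf{Step 1 (pass to monomial coefficients).} Apply Lemma~\ref{lem:MonomBernstein} to the subintervals $I'=(a,m)$ and $I''=(m,b)$: writing $P(x)=\sum_i b_i' B_i^n[a,m](x)$ and $P(x)=\sum_i b_i'' B_i^n[m,b](x)$, we get $p_{I',i}=b_{n-i}'\binom{n}{i}$ and $p_{I'',i}=b_{n-i}''\binom{n}{i}$ for all $i$. Since $\binom{n}{i}\ge 1$, the lower bound $|b_i'|,|b_i''|>\min(|P(a)|,|P(b)|)\cdot 4^{-(n+1)}$ from Lemma~\ref{lem:bitstreamresult1} (which applies because $m\in[m(I)-\tfrac{w(I)}{4},m(I)+\tfrac{w(I)}{4}]$ and $\var(P,I)=0$) gives $|p_{I',i}|,|p_{I'',i}|>\min(|P(a)|,|P(b)|)\cdot 4^{-(n+1)}$ for all $i=0,\ldots,n$; moreover $\var(P,I')=\var(P,I'')=0$, which by Lemma~\ref{lem:MonomBernstein} is exactly $\var(p_{I',0},\ldots,p_{I',n})=\var(p_{I'',0},\ldots,p_{I'',n})=0$.

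\textbf{Step 2 (the coefficients are safely above the perturbation level).} From the definition of $L_{I,0}$ in~(\ref{def:LI0}) we have $2^{-L}\le 2^{-L_{I,0}}=M(\min(|P(a)|,|P(b)|)^{-1})^{-1}\cdot 4^{-(n+1)}\cdot 2^{-1}\le \min(|P(a)|,|P(b)|)\cdot 4^{-(n+1)}\cdot 2^{-1}$, using $M(x^{-1})^{-1}\le |x|$ when $x\neq 0$ and $M(x^{-1})^{-1}\le 1$ always (so in fact $M(\min(\dots)^{-1})^{-1}\le\min(|P(a)|,|P(b)|)$, and the extra factor $2^{-1}$ only helps). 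Hence $|p_{I',i}|,|p_{I'',i}|>2\cdot 2^{-L}$ for all $i$.

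\textbf{Step 3 (perturbation argument).} Since $\tilde p_{I',i}$ is an absolute $L$-bit approximation of $p_{I',i}$, we have $|\tilde p_{I',i}-p_{I',i}|\le 2^{-L}$, so $|\tilde p_{I',i}|\ge |p_{I',i}|-2^{-L}>2^{-L}$ and $\sgn(\tilde p_{I',i})=\sgn(p_{I',i})$; likewise for $I''$. A sign-variation count of a real sequence depends only on the signs of its nonzero entries, and here no entry is zero and every sign agrees with the corresponding true coefficient, so $\var(\tilde p_{I',0},\ldots,\tilde p_{I',n})=\var(p_{I',0},\ldots,p_{I',n})=0$ and likewise $\var(\tilde p_{I'',0},\ldots,\tilde p_{I'',n})=0$, and $|\tilde p_{I',i}|,|\tilde p_{I'',i}|>2^{-L}$ for all $i=0,\ldots,n$, as claimed.

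\textbf{Main obstacle.} There is no real obstacle here; this is a bookkeeping corollary. The only point requiring a little care is Step~2 — correctly reading off from~(\ref{def:LI0}) that $2^{-L}$ is strictly below the lower bound on $|p_{I',i}|,|p_{I'',i}|$, paying attention to the two ways $M(\cdot)$ can behave (whether $\min(|P(a)|,|P(b)|)$ is $\le 1$ or $>1$) and to the ``$+1$'' in the exponent that provides the needed strict slack.
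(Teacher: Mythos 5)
Your proof is correct and follows essentially the same route as the paper's: combine Lemma~\ref{lem:bitstreamresult1} with Lemma~\ref{lem:MonomBernstein}, observe that $\min(|P(a)|,|P(b)|)\cdot 4^{-(n+1)}\ge 2\cdot 2^{-L_{I,0}}\ge 2\cdot 2^{-L}$, and conclude by the perturbation argument. Your Step 2 merely spells out the case analysis on $M(\cdot)$ that the paper's proof leaves implicit.
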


\begin{proof}
Suppose that $\var(P,I)=0$, then Lemma~\ref{lem:bitstreamresult1} yields $|b_{i}'|,|b_{i}''|>\min(|P(a)|,|P(b)|)\cdot 4^{-(n+1)}\ge 2\cdot 2^{-L_{I,0}}$ 
for all $i=0,\ldots,n$, and, in addition, all coefficients $b_{i}'$ and $b_{i}''$ have the same sign. Since 
$p_{I',i}=\binom{n}{i}b'_{n-i}$ and $p_{I'',i}=\binom{n}{i}b''_{n-i}$, it follows that the coefficients 
$p_{I',i}$ and $p_{I'',i}$ also have absolute value larger than $2\cdot 2^{-L_{I,0}}$. Thus, $|\tilde{p}_{I',i}|,|\tilde{p}_{I'',i}|>2^{-L_{I,0}}$ since 
$|p_{I',i}-\tilde{p}_{I',i}|\le 2^{-L_{0}}$ and $|p_{I'',i}-\tilde{p}_{I'',i}|\le 2^{-L}\le 2^{-L_{I,0}}$ for all $i$. In addition, all coefficients 
$\tilde{p}_{I',i}$ and $\tilde{p}_{I'',i}$ have the same sign because this holds for their exact counterparts.
\end{proof}

The above corollary allows one to discard an interval $I$ by using approximate arithmetic with a 
precision directly related to the absolute values of $P$ at the endpoints of $I$. More precisely, we consider the following exclusion test that applies to intervals $I=(a,b)$ with $P(a)\neq 0$ and $P(b)\neq 0$. \MS{Comments explaining the rationale behind our choices are typeset in italic and start with the symbol $\cosym$.}\medskip

\begin{mdframed}[frametitle={{\bf Algorithm: $\mathbf{0}$-Test}}]
{\color{black}
\noindent{\bf Input:} A polynomial $P(x)$ as in (\ref{def:P}) and an interval $I=(a,b)$ with $P(a)\neq 0$ and $P(b)\neq 0$.\smallskip

\noindent{\bf Output:} True or False. In case of True, $I$ contains no root of $P$. In case of False, it holds that $\var(P,I)>0$.\smallskip

\begin{itemize}
\item[(1)] Compute integers $t_a$ and $t_b$ with $2^{t_a-1}\le |P(a)|\le 2^{t_a+1}$ and $2^{t_b-1}\le |P(b)|\le 2^{t_b+1}$ using the algorithm {\bf Admissible Point} with input $X=\{a\}$ and $X=\{b\}$, respectively.
\item[ ]\noindent$\cosym$\textit{Notice that, according to Lemma~\ref{lem:singlepolyeval}, we can compute $t_{a}$ and $t_{b}$ with a number of bit operations bounded by $\tilde{O}(n(\tau_{P}+n\log M(a)+n\log M(b)+\log M(P(a)^{-1})+\log M(P(b)^{-1})))$.}
\item[(2)] $L:=M(-\min(t_{a}-1,t_{b}-1))+2(n+1)+1$
\item[ ]\noindent$\cosym$\textit{From the definition of $L_{I,0}$ in (\ref{def:LI0}), it follows that $L_{I,0}\le L\le L_{I,0}+2$.} 
\item[(3)] $I':=(a,m(I))$ and $I'':=(m(I),b)$  
\item[(4)] Compute absolute $L$-bit approximations $\tilde{P}_{I'}=\sum_{i=0}^n \tilde{p}_{I',i} \cdot x^i$ and $\tilde{P}_{I''}:=\sum_{i=0}^n \tilde{p}_{I'',i} \cdot x^i$ of the polynomials $P_{I'}$ and $P_{I''}$, respectively.
\item[ ]\noindent$\cosym$\textit{For an efficient solution of this step, consider the algorithm from the proof of Lemma~\ref{lem:computingpI}.}
\item[(5)] If $\var(\tilde{p}_{I',0},\ldots,\tilde{p}_{I',n})=\var(\tilde{p}_{I'',0},\ldots,\tilde{p}_{I'',n})=0$, and $|\tilde{p}_{I',j}|>2^{-L}$ and $|\tilde{p}_{I'',j}|>2^{-L}$ for all $j=0,\ldots,n$, then return True.
\item[ ]\noindent$\cosym$\textit{We conclude that $\var(p_{I',0},\ldots,p_{I',n})=\var(p_{I'',0},\ldots,p_{I'',n})=0$. Since $\tilde{p}_{I'',0}=P(m(I))$, we also have $P(m(I))\neq 0$, and thus $I$ contains no root of $P$.}
\item[(6)] return False.
\end{itemize}}
\end{mdframed}
\bigskip
\ignore{
\hrule\medskip
\noindent{\bf Algorithm $0$-Test:}
Compute approximations $2^{t_{a}}$ and $2^{t_{b}}$ for $|P(a)|$ and $|P(b)|$ with $t_{a},t_{b}\in\Z$, and $2^{t_a-1}\le |P(a)|\le 2^{t_{a}+1}$, and $2^{t_b-1}\le |P(b)|\le 2^{t_{b}+1}$.\footnote{According to Lemma~\ref{lem:singlepolyeval}, we can compute $t_{a}$ and $t_{b}$ with a number of bit operations bounded by $\tilde{O}(n(\tau_{P}+n\log M(a)+n\log M(b)+\log M(P(a)^{-1})+\log M(P(b)^{-1})))$.} 
It follows that $L_{I,0}\le L:=M(-\min(t_{a}-1,t_{b}-1))+2(n+1)+1\le L_{I,0}+2$. Now, for $I'=(a,m(I))$ and $I''=(m(I),b)$, compute absolute $L$-bit
approximations $\tilde{P}_{I'}=\sum_{i=0}^n \tilde{p}_{I',i} \cdot x^i$ and $\tilde{P}_{I''}:=\sum_{i=0}^n \tilde{p}_{I'',i} \cdot x^i$ of the polynomials $P_{I'}$ and $P_{I''}$, respectively. If all approximate coefficients $\tilde{p}_{I',i}$ and $\tilde{p}_{I'',i}$ have the same sign (i.e.~$\var(\tilde{p}_{I',0},\ldots,\tilde{p}_{I',n})=\var(\tilde{p}_{I'',0},\ldots,\tilde{p}_{I'',n})=0$) and if all of them have absolute 
value larger then $2^{-L}$, then $\var(P,I')=\var(P,I'')=0$ and, thus, $I$ 
contains no root of $P$.\footnote{Notice that $P(m)\neq 0$ because $|P(m)|=|p_{I',n}|>0$.} In this case, we say that the $0$-Test succeeds.\medskip
\hrule\bigskip}

 It remains to provide an efficient method to compute an absolute $L$-bit approximation of a polynomial $P_{I}$ as required in the $0$-Test:

\begin{lemma}\label{lem:computingpI}
Let $I=(a,b)$ be an interval, and let $L$ be a positive integer. The algorithm stated in the proof computes an absolute $L$-bit approximation $\tilde{P}_{I}(x)=\sum_{i=0}^{n}\tilde{p}_{I,i}$ of $P_{I}(x)=\sum_{i=0}^{n}p_{I,i}x^{i}$ with
\[
\tilde{O}(n(n+\tau_{P}+n\log M(a)+n\log M(b)+L))
\]
bit operations. It requires approximations of the coefficients of $P$ and the endpoints of $I$ of 
quality $O(n+\tau_{P}+n\log M(a)+n\log M(b)+L)$.
\end{lemma}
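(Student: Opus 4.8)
The plan is to compute $P_I(x) = (x+1)^n P\bigl(\frac{ax+b}{x+1}\bigr)$ as a composition of three elementary operations on polynomials, each of which can be carried out approximately with controlled error and controlled bit-size of the intermediate numbers, and then to track the precision needed through the chain. Concretely, $P_I$ is obtained from $P$ by: (i) a homothety $x \mapsto (b-a)x$ composed with a translation by $a$, i.e. first form $Q(x) := P(a + (b-a)x)$; (ii) a reversal combined with the substitution realizing the map to $(0,+\infty)$; and (iii) the Taylor shift $x \mapsto x+1$. More precisely, the standard identity is $P_I(x) = \big((b-a)^0\text{-scaled Taylor shift of the reversal of }Q\big)$; the cleanest decomposition is to write $P_I(x) = (x+1)^n Q\bigl(\frac{x}{x+1} \cdot \frac{b-a}{b-a}\bigr)$ — I would instead use the well-known factorization into (1) shift by $a$: $P(x) \mapsto P(x+a)$, (2) scaling by $b-a$: $R(x)\mapsto R((b-a)x)$, (3) reversal $R(x) \mapsto x^n R(1/x)$, (4) shift by $1$: $R(x)\mapsto R(x+1)$. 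Each of the four steps is either a Taylor shift (steps 1 and 4) or a scaling/reversal (steps 2 and 3, which are trivial coefficientwise operations).

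The key quantitative inputs are: a Taylor shift of a degree-$n$ polynomial can be computed with $\tilde O(n)$ arithmetic operations (via the classical divide-and-conquer / FFT-based method), and a scaling by a factor of bit-size $s$ multiplies coefficient $i$ by the $i$-th power of that factor, which has bit-size $O(ns)$ in the worst case. So the dominant contributions to the bit-size of the intermediate coefficients come from: $\tau_P$ (size of input coefficients); $n\log M(a)$ from the shift by $a$ (the coefficients of $P(x+a)$ have size $O(\tau_P + n\log M(a))$); $n\log M(b-a) \le n\log M(a) + n\log M(b) + O(n)$ from the scaling; and the shift by $1$ contributes only $O(n)$ extra bits. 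Adding the requested output precision $L$, every intermediate quantity has bit-size $O(n + \tau_P + n\log M(a) + n\log M(b) + L)$, and since there are $O(1)$ steps each costing $\tilde O(n)$ operations on numbers of that size, the total is $\tilde O(n(n + \tau_P + n\log M(a) + n\log M(b) + L))$ bit operations. The required input quality is governed by the same bound: to guarantee an absolute $L$-bit approximation of $P_I$, one propagates error backwards through the four steps, and since a scaling by a factor up to $2^{O(n\log M(a) + n\log M(b))}$ and a Taylor shift can each amplify absolute errors by at most $2^{O(n + \tau_P + n\log M(a) + n\log M(b))}$, it suffices to start from coefficient approximations and endpoint approximations of quality $O(n + \tau_P + n\log M(a) + n\log M(b) + L)$.

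The step I expect to require the most care is the backward error analysis for the Taylor shifts: a naive bound on how much a Taylor shift by $1$ (or by $a$) can blow up the absolute error in the coefficients involves the binomial coefficients and the powers of the shift amount, and one must check that using fixed-precision (interval) arithmetic throughout the divide-and-conquer recursion — rather than exact arithmetic followed by rounding — does not accumulate error faster than the claimed bound. The standard tool here is that each of the $O(\log n)$ levels of the recursion performs one polynomial multiplication, each multiplication with $b$-bit operands and degree $\le n$ incurs a relative error of $2^{-b + O(\log n)}$, and the magnitudes of the intermediate polynomials are bounded by $\|P\|_\infty \cdot 2^{O(n + n\log M(a))}$ (and similarly for the shift by $1$, where the factor is $2^{O(n)}$); choosing the working precision $b = \Theta(n + \tau_P + n\log M(a) + n\log M(b) + L)$ then absorbs all of this. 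I would cite the fixed-precision Taylor shift analyses already available in the literature (e.g. the bounds used in~\cite{Sagraloff2014DSC} and~\cite{DBLP:conf/casc/EigenwilligKKMSW05}) to avoid redoing the recursion-level bookkeeping from scratch, and present the argument at the level of "each of four elementary steps is a known $\tilde O(n)$-operation subroutine whose error amplification and output bit-size are bounded by $2^{O(\cdots)}$, hence composing them gives the stated bounds." The only genuinely new bit is assembling these four standard pieces in the right order and verifying that the error and size budgets telescope to exactly the claimed expression.
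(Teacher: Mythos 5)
Your proposal follows essentially the same route as the paper's proof: the same four-step decomposition (shift by $a$, scale by $b-a$, coefficient reversal, shift by $1$), the same magnitude and error-amplification bookkeeping for the two Taylor shifts and the scaling, and an appeal to an off-the-shelf fixed-precision fast Taylor shift (the paper cites \cite[Theorem 14]{DBLP:journals/corr/abs-1304.8069} and \cite[Theorem 8.4]{schonhage:fundamental}, while you point to \cite{Sagraloff2014DSC,DBLP:conf/casc/EigenwilligKKMSW05}, but the needed bound is the same). Apart from the momentarily garbled attempt at an alternative identity before you settle on the standard factorization, there is no substantive difference from the paper's argument.
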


\begin{proof}
The computation of $P_{I}$ decomposes into four steps: First, we substitute $x$ by $a+x$, which yields the 
polynomial $P_{1}(x):=P(a+x)$. Second, we substitute $x$ by $w(I)\cdot x$ in order to 
obtain $P_{2}(x):=P_{1}(a+w(I)\cdot x)$. Third, the coefficients of $P_{2}$ are reversed (i.e.~the $i$-th 
coefficient is replaced by the $(n-i)$-th coefficient), which yields the polynomial $P_{3}(x)=x^{n}P_{2}(1/x)=x^{n}P(a+w(I)/x)$. In the last 
step, we compute the polynomial $P_{4}(x):=P_{3}(x+1)=(x+1)^{n}P(a+w(I)/(x+1))=P_{I}(x)$.

Now, for the computation of an absolute $L$-bit approximation $\tilde{P}_{I}$, we proceed as follows: Let $L_{1}$ be a positive integer, which will be specified later. 
According to~\cite[Theorem 14]{DBLP:journals/corr/abs-1304.8069} (or~\cite[Theorem 8.4]{schonhage:fundamental}), we can compute an absolute 
$L_{1}$-bit approximations $\tilde{P}_{1}$ of $P_{1}$ with $\tilde{O}(n(n+\tau_{P}+n\log M(a)+L_{1}))$ bit 
operations, where we used that the coefficients of $P$ have absolute value of size $2^{\tau_{P}}$ or less.
For this step, the coefficients of $P$ as well as the endpoint $a$ have to be approximated with quality $\tilde{O}(n+\tau_{P}+n\log M(a)+L_{1})$.
The coefficients of $P_{1}$ have absolute value less than $2^{n+\tau_{P}}M(a)^{n}$, and thus, the coefficients of $\tilde{P}_{1}$ have absolute value less than $2^{n+\tau_{P}}M(a)^{n}+1<2^{n+1+\tau_{P}}M(a)^{n}$. Computing $w(I)^i$ for all $i=0,\ldots,n$ with quality $L_1$ takes $\tilde{O}(n(n\log M(w(I))+L_1))=\tilde{O}(n(n\log M(a)+n\log M(b)+L_1))$ bit operations. This yields an approximation $\tilde{P}_2$ of $P_2$ with quality $L_2:=L_1-n-1-\tau_P-n \log M(a)$.

The coefficients of $\tilde{P_{2}}$ have absolute value less than $2^{n+1+\tau_{P}}M(a)^{n}M(w(I))^{n}$. 
Reversing the coefficients of $\tilde{P}_{2}$ trivially yields an absolute $L_{2}$-bit approximation $\tilde{P}_{3}$ of $P_{3}$. 
For the last step, we again apply~\cite[Theorem 14]{DBLP:journals/corr/abs-1304.8069} to show that we can compute an absolute $L$-bit approximation of $P_{4}=P_{3}(x+1)$ from an $L_{3}$-bit 
approximation of $P_{3}$, where $L_{3}$ is an integer of size $\tilde{O}(L+n+\tau_{P}+n\log M(a)+n\log M(w(I)))$. The cost for this computation is bounded by $\tilde{O}(nL_{3})$ bit operations. Hence, it suffices to start with an integer
$L_{1}$ of size $\tilde{O}(L+n+\tau_{P}+n\log M(a)+n\log M(w(I)))$. This shows the claimed bound for the needed input precision, where we use that $w(I)\le |a|+|b|$.
The bit complexity for each of the two Taylor shifts (i.e.~$x\mapsto a+x$ and $x\mapsto x+1$) as 
well as for the approximate scaling (i.e.~$x\mapsto w(I)\cdot x$) is bounded by $\tilde{O}(n(n+\tau_{P}+n\log M(a)+n\log M(b)+L))$ bit operations.
\end{proof}

The above lemma (applied to the intervals $I'=(a,m(I))$ and $I''=(m(I),b)$) now directly yields a bound on the bit complexity for the $0$-Test:

\begin{corollary}\label{cor:cost0test}
For an interval $I=(a,b)$, the $0$-Test requires
\begin{align}\label{bound:confirm0}
\tilde{O}(n(n+\tau_{P}+n\log M(a)+n\log M(b)+\log M(\min(|P(a)|,|P(b)|)^{-1}))
\end{align}
 bit operations using approximations of the coefficients of $P$ and the endpoints of $I$ of quality $O(n+\tau_{P}+n\log M(a)+n\log M(b)+\log M(\min(|P(a)|,|P(b)|)^{-1}))$. 
 \end{corollary}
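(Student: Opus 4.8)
The plan is to walk through the four substantive steps of the algorithm $0$-Test and bound the cost of each, the only real point being to control the working precision $L$ fixed in step~(2). Concretely: the whole proof is an accounting argument on top of Lemmata~\ref{lem:singlepolyeval} and~\ref{lem:computingpI}, so I expect essentially no new ideas are needed, only a careful check that the separate subterms collapse into the compact bound claimed.

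First I would handle step~(1). By Lemma~\ref{lem:singlepolyeval}(b) (equivalently, Lemma~\ref{lem:apxmultipointeval} applied to the singleton sets $\{a\}$ and $\{b\}$), the integers $t_a,t_b$ with $2^{t_a-1}\le|P(a)|\le 2^{t_a+1}$ and $2^{t_b-1}\le|P(b)|\le 2^{t_b+1}$ can be computed with $\tilde{O}(n(\tau_{P}+n\log M(a)+n\log M(b)+\log M(P(a)^{-1})+\log M(P(b)^{-1})))$ bit operations and coefficient/point approximations of quality $O(\tau_{P}+n\log M(a)+n\log M(b)+\log M(P(a)^{-1})+\log M(P(b)^{-1})+\log n)$; since $\min(|P(a)|,|P(b)|)\le|P(a)|,|P(b)|$ we have $\log M(P(a)^{-1}),\log M(P(b)^{-1})\le\log M(\min(|P(a)|,|P(b)|)^{-1})$, so step~(1) already lies within the stated bounds. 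Next, step~(2): the inequalities $2^{t_a-1}\le|P(a)|$ and $2^{t_b-1}\le|P(b)|$ give $L_{I,0}\le L\le L_{I,0}+2$ with $L_{I,0}$ from (\ref{def:LI0}), hence
\[
L=O\bigl(n+\log M(\min(|P(a)|,|P(b)|)^{-1})\bigr).
\]
Then step~(4): I apply Lemma~\ref{lem:computingpI} to the two intervals $I'=(a,m(I))$ and $I''=(m(I),b)$. Since $|m(I)|\le\max(|a|,|b|)$ and $w(I')=w(I'')=w(I)/2\le|a|+|b|$, we have $\log M(m(I)),\log M(w(I'))=O(\log M(a)+\log M(b))$, so computing the two absolute $L$-bit approximations $\tilde P_{I'},\tilde P_{I''}$ costs $\tilde{O}(n(n+\tau_{P}+n\log M(a)+n\log M(b)+L))$ bit operations and needs input quality $O(n+\tau_{P}+n\log M(a)+n\log M(b)+L)$; substituting the bound for $L$ from step~(2) turns both into exactly the bounds of the corollary. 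Finally, steps~(3), (5), (6): computing the midpoint, the two sign-variation counts, and the $O(n)$ comparisons against $2^{-L}$ of dyadic numbers whose bitsizes are $O(n+\tau_{P}+n\log M(a)+n\log M(b)+L)$ (by the coefficient-size estimates inside the proof of Lemma~\ref{lem:computingpI}) is dominated by step~(4).

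Summing the four contributions and using once more that $L=O(n+\log M(\min(|P(a)|,|P(b)|)^{-1}))$, together with $\log M(P(a)^{-1}),\log M(P(b)^{-1})\le\log M(\min(|P(a)|,|P(b)|)^{-1})$, every term is absorbed into $\tilde{O}(n(n+\tau_{P}+n\log M(a)+n\log M(b)+\log M(\min(|P(a)|,|P(b)|)^{-1})))$, and the same substitutions give the claimed input precision. The only thing that requires a little care — and it is the closest this proof comes to an ``obstacle'' — is the bookkeeping that folds the separate appearances of $\log M(P(a)^{-1})$, $\log M(P(b)^{-1})$, $\log M(m(I))$ and $\log M(w(I'))$ into the single terms $\log M(\min(|P(a)|,|P(b)|)^{-1})$ and $\log M(a)+\log M(b)$ of the target bound, and that checks that step~(1)'s cost is dominated by that of step~(4).
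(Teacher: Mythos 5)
Your proposal is correct and follows the paper's argument: the paper proves this corollary simply by remarking that Lemma~\ref{lem:computingpI}, applied to $I'=(a,m(I))$ and $I''=(m(I),b)$, yields the bound. You spell out the same accounting — the cost of Step~(1) via Lemma~\ref{lem:singlepolyeval}(b), the bound $L=O(n+\log M(\min(|P(a)|,|P(b)|)^{-1}))$ from Step~(2), the absorption $\log M(m(I))=O(\log M(a)+\log M(b))$ in Step~(4), and the domination of the cheap Steps~(3),(5),(6) — in more explicit detail than the paper bothers to.
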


 \subsubsection{The case $\var(P,I)=1$}

We  need the following result, which follows directly from~\cite[Lemma 6]{DBLP:conf/casc/EigenwilligKKMSW05} and its proof.
 
 \begin{lemma}\label{lem:bitstreamresult2}

With the same definitions as in Lemma~\ref{lem:bitstreamresult1}, suppose that $\var(P,I)=1$ and $P(m) \not= 0$. Then,
\[
|b_{i}'|,|b_{i}''|>\min(|P(a)|,|P(b)|,|P(m)|)\cdot 16^{-n}\quad\text{for all }i=0,\ldots,n. 
\]
Furthermore, $\var(P,I')=1$ (and $\var(P,I'')=0$) or $\var(P,I'')=1$ (and $\var(P,I')=0$).
\end{lemma}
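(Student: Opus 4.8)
The statement splits into two parts: determining $\var(P,I')$ and $\var(P,I'')$, and bounding from below the moduli of the Bernstein coefficients $b_i'$ and $b_i''$. The first part needs only Descartes' rule of signs and \Cref{subad}; the second reduces, via one de Casteljau step, to the cancellation analysis carried out in \cite[Lemma~6]{DBLP:conf/casc/EigenwilligKKMSW05}.

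\emph{The $\var$-values.} Since $\var(P,I)=1$, Descartes' rule of signs — the parity statement, which together with $v_I\ge m_I$ forces $v_I=m_I$ once $v_I\le 1$ — shows that $I$ contains exactly one real root $\xi$ of $P$ counted with multiplicity; as $P$ is square-free, $\xi$ is a simple root. The hypothesis $P(m)\ne 0$ rules out $\xi=m$, so $\xi$ lies in exactly one of $I'=(a,m)$, $I''=(m,b)$; assume $\xi\in I'$ (the other case being symmetric). Then $I'$ contains exactly one real root of $P$ and $I''$ contains none, so $\var(P,I')\ge 1$ and $\var(P,I'')$ is even, while \Cref{subad} gives $\var(P,I')+\var(P,I'')\le\var(P,I)=1$. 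Hence $\var(P,I')=1$ and $\var(P,I'')=0$. In particular $P$ is nonzero on $[m,b]$, so $P(m)$ and $P(b)$ have the same sign, and $P(a)$, $P(b)$ have opposite signs because $\xi\in(a,m)$ is a simple root and $P$ does not vanish elsewhere on $[a,b]$.

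\emph{Reducing the coefficient bound.} Put $\nu:=\min(|P(a)|,|P(b)|,|P(m)|)>0$ and keep $\xi\in I'$. Recall that $b'$ and $b''$ arise from $b=(b_0,\dots,b_n)$ by a single de Casteljau subdivision at $m$: with $t:=(m-a)/(b-a)$ one has $b_i'=\sum_{j=0}^{i}\binom{i}{j}(1-t)^{i-j}t^{j}b_j$ and $b_i''=\sum_{j=0}^{n-i}\binom{n-i}{j}(1-t)^{n-i-j}t^{j}b_{i+j}$, so $b_i'$ is a convex combination of $b_0,\dots,b_i$ and $b_i''$ a convex combination of $b_i,\dots,b_n$, all weights positive. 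Because $m$ lies in the middle half of $I$ we have $t\in[1/4,3/4]$, hence the weight of $b_0$ in $b_i'$ is $(1-t)^i\ge 4^{-n}$ and the weight of $b_n$ in $b_i''$ is $t^{n-i}\ge 4^{-n}$. Moreover $b_0'=P(a)$, $b_n'=b_0''=P(m)$, $b_n''=P(b)$, so these extreme coefficients already have modulus $\ge\nu>16^{-n}\nu$. By \Cref{lem:MonomBernstein} the sequence $b$ has exactly one sign change, say between indices $k$ and $k+1$; after possibly replacing $P$ by $-P$ (which changes none of the relevant moduli) we may assume $b_0=P(a)>0$, so $b_0,\dots,b_k\ge 0$ and $b_{k+1},\dots,b_n\le 0$ with $b_n=P(b)<0$. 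For $i\le k$ the combination defining $b_i'$ involves only the non-negative $b_0,\dots,b_i$, so $|b_i'|\ge (1-t)^i b_0\ge 16^{-n}\nu$; symmetrically, for $i\ge k+1$ the combination for $b_i''$ involves only the non-positive $b_i,\dots,b_n$, so $|b_i''|\ge t^{n-i}|b_n|\ge 16^{-n}\nu$.

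\emph{The delicate cases.} There remain $b_i'$ for $i>k$ and $b_i''$ for $i\le k$ — the coefficients whose de Casteljau combination straddles the sign change of $b$ and is thus subject to cancellation. This is exactly the configuration treated in \cite[Lemma~6]{DBLP:conf/casc/EigenwilligKKMSW05}: on the side carrying the sign change (here $I'$, where $\var(P,I')=1$) one exploits the presence of the single simple real root $\xi\in I'$, and on the sign-free side (here $I''$, where $\var(P,I'')=0$) one uses that all $b_i''$ share a common sign together with $P(m)\ne 0$; in both cases one tracks the de Casteljau weights carefully to rule out near-cancellation, obtaining $|b_i'|,|b_i''|>\nu\cdot 16^{-n}$ for the remaining indices too, the generous exponent $16^{-n}$ absorbing every $n$-dependent constant. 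I would reproduce that computation. This last step is the only real obstacle: producing a lower bound on the Bernstein coefficients straddling the sign change using nothing but the three numbers $|P(a)|$, $|P(b)|$, $|P(m)|$; the parity/subadditivity bookkeeping and the convex-combination estimate for the single-signed coefficients are routine.
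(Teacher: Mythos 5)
Your proposal is correct as far as it goes and ultimately rests on the same reference as the paper: the paper gives no proof at all for this lemma, stating only that it ``follows directly from~\cite[Lemma 6]{DBLP:conf/casc/EigenwilligKKMSW05} and its proof.'' You reconstruct the easy portions self‑containedly — the $\var$-statement via Descartes' rule plus the subadditivity of Theorem~\ref{subad}, and the coefficient bound for the non‑straddling indices via the explicit de Casteljau convex combinations with $t\in[1/4,3/4]$ — and these derivations are all sound (including the handling of possible zero entries in the Bernstein sequence and the observation that $b_0',b_n',b_0'',b_n''$ are already among $P(a),P(m),P(b)$). But the actual content of the lemma is the lower bound on the coefficients whose de Casteljau combination straddles the single sign change, and there you write ``I would reproduce that computation'' and defer to the same~\cite[Lemma 6]{DBLP:conf/casc/EigenwilligKKMSW05}. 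So the load‑bearing step is cited, not proved, exactly as in the paper; the extra scaffolding you supply is correct and illuminating but does not bypass the cancellation analysis of the reference, which in fact covers all the indices, not just the ``delicate'' ones. In short: same approach, with helpful but non‑essential added exposition.
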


Again, combining the latter result with Lemma~\ref{lem:MonomBernstein} yields the following result, whose proof is completely analogous to the proof of Corollary~\ref{cor:var1test}.

\begin{corollary}\label{cor:var1test}
With the same definitions as in Lemma~\ref{lem:bitstreamresult1} and Lemma~\ref{lem:bitstreamresult2}, \MS{let $L$ be an integer with
\begin{align}
L\ge L_{I,1}:=\log M(\min(|P(a)|,|P(b)|,|P(m)|)^{-1})+ 4n+1,
\end{align}
and let $\sum_{i=0}^n \tilde{p}_{I',i} \cdot x^i$ and $\sum_{i=0}^n \tilde{p}_{I'',i} \cdot x^i$ be absolute 
$L$-bit approximations of $P_{I'}$ and $P_{I''}$, respectively.} 
Suppose that $\var(P,I)=1$ and $P(m) \not= 0$. \MS{Then, it follows that $|\tilde{p}_{I',i}|,|\tilde{p}_{I'',i}|>2^{-L}$} for all $i=0,\ldots,n$, and, in addition, $\var(\tilde{p}_{I',0},\ldots,\tilde{p}_{I',n})=1$ (and $\var(\tilde{p}_{I'',0},\ldots,\tilde{p}_{I'',n})=0$) or $\var(\tilde{p}_{I',0},\ldots,\tilde{p}_{I',n})=1$ (and $\var(\tilde{p}_{I'',0}\ldots,\tilde{p}_{I'',n})=0$).
\end{corollary}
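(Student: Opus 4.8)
The plan is to imitate, almost verbatim, the proof of Corollary~\ref{cor:var0test}, with Lemma~\ref{lem:bitstreamresult2} playing the role of Lemma~\ref{lem:bitstreamresult1}. The one structural difference is that, in the case $\var(P,I)=1$, the Bernstein coefficient sequences of $P$ on $I'$ and on $I''$ are no longer sign-constant; instead, by Lemma~\ref{lem:bitstreamresult2}, exactly one of them has a single sign change and the other none. So, rather than arguing ``all coefficients keep their sign'', I have to transport the whole sign pattern across the approximation and then read off the number of variations.

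Concretely I would proceed in four short steps. \textbf{Step 1 (exact lower bound and the dichotomy).} Assume $\var(P,I)=1$ and $P(m)\neq 0$. Lemma~\ref{lem:bitstreamresult2} gives $|b_i'|,|b_i''|>\min(|P(a)|,|P(b)|,|P(m)|)\cdot 16^{-n}$ for all $i$, together with $(\var(P,I'),\var(P,I''))\in\{(1,0),(0,1)\}$. Since $16^{-n}=2^{-4n}$ and $\min(|P(a)|,|P(b)|,|P(m)|)\ge M(\min(|P(a)|,|P(b)|,|P(m)|)^{-1})^{-1}$, the definition of $L_{I,1}$ yields $\min(|P(a)|,|P(b)|,|P(m)|)\cdot 16^{-n}\ge 2\cdot 2^{-L_{I,1}}$; this is precisely where the summand $4n+1$ in $L_{I,1}$ is tuned to absorb the factor $16^{-n}$ from Lemma~\ref{lem:bitstreamresult2} (the analogue of $2(n+1)+1$ absorbing $4^{-(n+1)}$ in Corollary~\ref{cor:var0test}). \textbf{Step 2 (pass to the monomial basis).} By Lemma~\ref{lem:MonomBernstein}, $p_{I',i}=\binom{n}{i}b_{n-i}'$ and $p_{I'',i}=\binom{n}{i}b_{n-i}''$; since $\binom{n}{i}\ge 1$, all $p_{I',i}$ and $p_{I'',i}$ are nonzero, satisfy $|p_{I',i}|,|p_{I'',i}|>2\cdot 2^{-L_{I,1}}$, and carry the same signs as the corresponding Bernstein coefficients, so that $\var(p_{I',0},\dots,p_{I',n})=\var(P,I')$ and $\var(p_{I'',0},\dots,p_{I'',n})=\var(P,I'')$. \textbf{Step 3 (perturbation).} Since $L\ge L_{I,1}$, each approximation error is at most $2^{-L}\le 2^{-L_{I,1}}<\tfrac12|p_{I',i}|$, hence $\tilde{p}_{I',i}$ is nonzero with the same sign as $p_{I',i}$ and $|\tilde{p}_{I',i}|\ge |p_{I',i}|-2^{-L}>2\cdot 2^{-L_{I,1}}-2^{-L}\ge 2^{-L}$, and symmetrically for $\tilde{p}_{I'',i}$. \textbf{Step 4 (conclusion).} The sequences $(\tilde{p}_{I',i})_i$ and $(\tilde{p}_{I'',i})_i$ have no zero entries and the same sign patterns as $(p_{I',i})_i$ and $(p_{I'',i})_i$, so their numbers of sign variations equal $\var(P,I')$ and $\var(P,I'')$; combined with the dichotomy of Step 1, this is exactly the assertion of the corollary.

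I do not expect a real obstacle: this is a routine sign-stability argument, structurally identical to the proof of Corollary~\ref{cor:var0test}. The only points needing a little care are (a) verifying that the constant $4n+1$ baked into $L_{I,1}$ dominates the worse factor $16^{-n}$ coming out of Lemma~\ref{lem:bitstreamresult2} (rather than the $4^{-(n+1)}$ of Lemma~\ref{lem:bitstreamresult1}), and (b) the already-noted fact that in the $\var(P,I)=1$ case one must carry the full sign pattern and appeal to the second assertion of Lemma~\ref{lem:bitstreamresult2} to locate the single variation, instead of the simpler ``constant sign'' reasoning available when $\var(P,I)=0$. Both are bookkeeping, not conceptual difficulties.
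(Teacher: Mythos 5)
Your proposal is correct and follows exactly the route the paper intends: the paper gives no proof of this corollary, saying only that it is ``completely analogous to the proof of Corollary~\ref{cor:var0test}'' (with an evident typo pointing back to the wrong label), and your four steps are precisely that analogy fleshed out. The only substantive adjustments you had to make --- replacing the bound $4^{-(n+1)}$ by $16^{-n}$ and tuning $L_{I,1}$ with the summand $4n+1$ accordingly, and transporting the full sign pattern (rather than mere sign-constancy) so as to preserve the variation count and pick out which of $I'$, $I''$ carries the single sign change from Lemma~\ref{lem:bitstreamresult2} --- are exactly the two points the paper leaves implicit, and you handle both correctly.
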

Based on the above Corollary, we can now formulate the \emph{$1$-Test}, which applies to intervals $I=(a,b)$ with $P(a)\neq 0$ and $P(b)\neq 0$:\medskip

\begin{mdframed}[frametitle={{\bf Algorithm: $\mathbf{1}$-Test}}]
{\color{black}
\noindent{\bf Input:}
A polynomial $P(x)$ as in (\ref{def:P}) and an interval $I=(a,b)$ with $P(a)\neq 0$ and $P(b)\neq 0$.\smallskip

\noindent{\bf Output:} True or False. In case of True, the algorithm also returns an interval $I'$, with $I'\subset I$, and such that $\frac{1}{4}\cdot w(I)\le w(I')\le \frac{3}{4}\cdot w(I)$, $\var(f,I)=1$, and $I\backslash I'$ contains no root. In case of False, it holds that $\var(P,I)\neq 1$.\smallskip
\begin{itemize}
\item[(1)] Compute integers $t_a$ and $t_b$ with $2^{t_a-1}\le |P(a)|\le 2^{t_a+1}$ and $2^{t_b-1}\le |P(b)|\le 2^{t_b+1}$ using the algorithm {\bf Admissible Point}.
\item[(2)] For $\epsilon:=w(I)\cdot 2^{-\ceil{\log n+2}}\le \frac{w(I)}{4n}$, compute an admissible point $m\in \multipoint{m(I)}{\epsilon}$ and an integer $t$ with $2^{t+1}\ge \max_{i}|P(m_{i})|\ge |P(m^*)|\ge 2^{t-1}$ using the algorithm {\bf Admissible Point}.
\item[(3)] $L:=M(-\min(t_{a}-1,t_{b}-1,t-1))+4n+2$
\item[(4)] $I':=(a,m)$ and $I'':=(m,b)$
\item[ ]\noindent$\cosym$\textit{Notice that each point from $\multipoint{m(I)}{\epsilon}$ is contained in $[m(I)-\frac{w(I)}{4},m(I)+\frac{w(I)}{4}]$ since $\lceil n/2\rceil\cdot 2^{-\lceil \log n+2 \rceil}<\frac{1}{4}$. In addition, we have $L_{I,1}\le L \le L_{I,1}+2$. Thus, we can use Corollary~\ref{cor:var1test}.}
\item[(5)] Compute absolute $L$-bit approximations $\tilde{P}_{I'}=\sum_{i=0}^n \tilde{p}_{I',i} \cdot x^i$ and $\tilde{P}_{I''}:=\sum_{i=0}^n \tilde{p}_{I'',i} \cdot x^i$ of the polynomials $P_{I'}$ and $P_{I''}$, respectively.
\item[(6)] If $|\tilde{p}_{I',j}|\le 2^{-L}$ or $|\tilde{p}_{I'',j}|>2^{-L}$ for some $j=0,\ldots,n$ then return False
\item[ ]\noindent$\cosym$\textit{Corollary~\ref{cor:var1test} implies that $\var(P,I)\neq 1$.}
\item[(7)] If $\var(\tilde{p}_{I',0},\ldots,\tilde{p}_{I',n})=1$ and $\var(\tilde{p}_{I'',0},\ldots,\tilde{p}_{I'',n})=0$ then return True and $I'$
\item[ ]\noindent$\cosym$\textit{Corollary~\ref{cor:var1test} implies that $\var(P,I')=1$ and $\var(P,I'')=0$.}
\item[(8)] If $\var(\tilde{p}_{I',0},\ldots,\tilde{p}_{I',n})=0$ and $\var(\tilde{p}_{I'',0},\ldots,\tilde{p}_{I'',n})=1$ then return True and $I''$ 
\item[ ]\noindent$\cosym$\textit{Corollary~\ref{cor:var1test} implies that $\var(P,I')=0$ and $\var(P,I'')=1$.}
\item[(9)] return False
\item[ ]\noindent$\cosym$\textit{Corollary~\ref{cor:var1test} implies that $\var(P,I)\neq 1$.}
\end{itemize}
}
\end{mdframed}
\bigskip
\ignore{
\hrule\medskip
\noindent{\bf Algorithm $1$-Test:}
Compute approximations $2^{t_{a}}$ and $2^{t_{b}}$ for $|P(a)|$ and $|P(b)|$ with $t_{a},t_{b}\in\Z$ and $2^{t_a-1}\le |P(a)|\le 2^{t_{a}+1}$, $2^{t_b-1}\le |P(b)|\le 2^{t_{b}+1}$. For $\epsilon:=w(I)\cdot 2^{-\ceil{\log n+2}}\le \frac{w(I)}{4n}$, compute (using the method from Lemma~\ref{lem:apxmultipointeval}) an admissible $m^*\in \multipoint{m(I)}{\epsilon}$ and an integer $t$ with $$2^{t+1}\ge \max_{i}|P(m_{i})|\ge |P(m^*)|\ge 2^{t-1}.$$
With $m:=m^*$,\footnote{Notice that each point $m_{i}$ is contained in $[m(I)-\frac{w(I)}{4},m(I)+\frac{w(I)}{4}]$ since $\lceil n/2\rceil\cdot 2^{-\lceil \log n+2 \rceil}<1/4$. Thus, we can use Lemma~\ref{lem:bitstreamresult2} with $m=m^*$.} it follows that $L_{I,1}\le L:=M(-\min(t_{a}-1,t_{b}-1,t-1))+4n+2\le L_{I,1}+2$. Now, compute absolute $L$-bit approximations for the polynomials $P_{I'}$ and $P_{I''}$, where $I'=(a,m^*)$ and $I''=(m^*,b)$. 
If all approximate coefficients $\tilde{p}_{I',i}$ and $\tilde{p}_{I'',i}$ have absolute value larger than $2^{-L}$, and if $\var(\tilde{p}_{I',0},\ldots,\tilde{p}_{I',n})=1$ and $\var(\tilde{p}_{I'',0},\ldots,\tilde{p}_{I'',n})=0$, then $I'$ isolates a root of $p$, whereas $I''$ contains no root of $I'$. 
If $|\tilde{p}_{I',i}|>2^{-L}$ and $|\tilde{p}_{I'',i}|>2^{-L}$ for all $i$, and if $\var(\tilde{p}_{I',0},\ldots,\tilde{p}_{I',n})=0$ and $\var(\tilde{p}_{I'',0},\ldots,\tilde{p}_{I'',n})=1$, then $I''$ isolates a root of $p$, whereas $I'$ contains no root of $I'$.\\
In each of the latter two cases, we say that the $1$-Test succeeds.\medskip
\hrule\bigskip
}

In completely analogous manner as for the $0$-Test, we can estimate the cost for the $1$-Test:

\begin{corollary}\label{cor:cost1test}
Let $I=(a,b)$ be an interval, let $\multipoint{m(I)}{\epsilon}$ be the multipoint defined in the $1$-Test, 
and let $\lambda \assign \max \set{|P(x)|}{x \in \multipoint{m(I)}{\epsilon}}$. Then, the $1$-Test applied to $I$ requires
\begin{align}\label{bound:confirm1}
\tilde{O}(n(n+\tau_{P}+n\log M(a)+n\log M(b)+\log M(\min(|P(a)|,|P(b)|,\lambda)^{-1})))
\end{align}
bit operations using approximations of the coefficients of $P$ and the endpoints of $I$ of quality $O(n+\tau_{P}+n\log M(a)+n\log M(b)+\log M(\min(|P(a)|,|P(b)|,\lambda)^{-1}))$.
 \end{corollary}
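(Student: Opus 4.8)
Proof proposal for Corollary~\ref{cor:cost1test}.

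The plan is to mirror the proof of Corollary~\ref{cor:cost0test}: walk through the steps of the $1$-Test and charge each to one of the cost lemmata already established, then take the maximum. Steps~(3), (4), and the decision steps~(6)--(9) are free (beyond comparing the already-computed $\tilde p_{I',j},\tilde p_{I'',j}$ against $2^{-L}$ and counting sign variations in two length-$(n{+}1)$ sequences), so the cost is concentrated in the two invocations of {\bf Admissible Point} in Steps~(1)--(2) and in the computation of the shifted polynomials in Step~(5). For Step~(1), applying Lemma~\ref{lem:apxmultipointeval} to the singletons $X=\{a\}$ and $X=\{b\}$ (with $N=1=O(n)$, $\lambda=|P(a)|$ resp. $|P(b)|$) gives a bit complexity and an input-precision requirement of order $\tilde O(n(n+\tau_P+n\log M(a)+n\log M(b)+\log M(P(a)^{-1})+\log M(P(b)^{-1})))$. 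For Step~(2), I would invoke Lemma~\ref{lem:apxmultipointeval} directly on the multipoint $\multipoint{m(I)}{\epsilon}$, which has $N=2\lceil n/2\rceil+1=O(n)$ points. Since each $m_i\in\multipoint{m(I)}{\epsilon}$ lies in $[m(I)-w(I)/4,\,m(I)+w(I)/4]\subset(a,b)$, we have $\log M(\max_i|m_i|)=O(\log M(a)+\log M(b))$ (and approximating the $m_i$ reduces to approximating $a,b$), so this step costs $\tilde O(n(n+\tau_P+n\log M(a)+n\log M(b)+\log M(\lambda^{-1})))$ bit operations with a matching precision bound.

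Next I would pin down the working precision $L$ chosen in Step~(3). Since $m=m^*$ is admissible for $\multipoint{m(I)}{\epsilon}$ and since Step~(2) returns $t$ with $2^{t-1}\le|P(m^*)|\le 2^{t+1}$ and $\lambda\le 2^{t+1}$, we get $\lambda/4\le 2^{t-1}\le|P(m^*)|\le\lambda$, hence $M(-(t-1))=\log M(|P(m^*)|^{-1})+O(1)=\log M(\lambda^{-1})+O(1)$; combined with the analogous estimates $M(-(t_a-1))=\log M(|P(a)|^{-1})+O(1)$ and $M(-(t_b-1))=\log M(|P(b)|^{-1})+O(1)$, this yields
\[
L = M(-\min(t_a-1,t_b-1,t-1))+4n+2 = O\bigl(n+\log M(\min(|P(a)|,|P(b)|,\lambda)^{-1})\bigr).
\]
Step~(5) then computes absolute $L$-bit approximations of $P_{I'}$ and $P_{I''}$ for $I'=(a,m)$, $I''=(m,b)$; applying Lemma~\ref{lem:computingpI} twice, and using once more that $m\in(a,b)$ so that $\log M(m)=O(\log M(a)+\log M(b))$, each call costs $\tilde O(n(n+\tau_P+n\log M(a)+n\log M(b)+L))$ bit operations with input precision of the same order. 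Substituting the bound on $L$ turns this into the expression in~(\ref{bound:confirm1}).

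Finally, adding the contributions of all steps and taking the maximum gives the claimed bit-complexity bound; the bit length of the coefficients $\tilde p_{I',j},\tilde p_{I'',j}$ produced in Step~(5) is $O(L+\tau_P+n\log M(a)+n\log M(b))$ by Lemma~\ref{lem:computingpI}, so Steps~(6)--(9) are absorbed, and the required coefficient/endpoint precision is just the maximum of the precisions demanded by the invoked subroutines, which is of the stated order. I expect the only mild subtlety to be the bookkeeping that replacing the exact midpoint $m(I)$ by the admissible point $m^*$ and using the multipoint of $O(n)$ nearby points costs nothing in the complexity order — i.e. that $\log M(m)$, $\log M(\max_i|m_i|)$, and $\log M(|P(m^*)|^{-1})$ are all absorbed into terms already present — which follows from $m^*\in(a,b)$, admissibility, and $\lambda/4\le|P(m^*)|\le\lambda$; everything else is a direct appeal to the lemmata used for the $0$-Test.
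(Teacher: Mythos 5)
Your proposal is correct and matches the paper's intended argument, which is stated only as ``completely analogous to the $0$-Test'' and likewise reduces to charging the two \textbf{Admissible Point} calls to Lemma~\ref{lem:apxmultipointeval} (resp.\ Lemma~\ref{lem:singlepolyeval}) and the computation of $P_{I'},P_{I''}$ to Lemma~\ref{lem:computingpI}, with the bound on $L$ following from $2^{t-1}\le|P(m^*)|\le\lambda\le 2^{t+1}$ and $m^*\in(a,b)$.
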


\subsection{Useful Inequalities}

\begin{lemma}\label{lem:usefulineq}
Let $p = \sum_{0 \le i \le n} p_ix^i = p_n \prod_{1 \le i \le n} (x- z_i) \in \R[z]$. Then, 
\begin{align}
\Mea(p) &\le \norm{p}_2 \le \norm{p}_1 \le (n+1) 2^{\tau_p}  \label{upper bound on measure}\\
\sigma_p &\ge \sqrt{|\Disc(p)|} \norm{p}_2^{-n + 1} n^{-(n+2)/2} \label{lower bound on separation}\\
\abs{\Disc(p)} &\le n^n (\Mea(p))^{2n - 2} \le n^n \norm{p}_2^{2n - 2}\label{upper bound on discriminant}\\
\log |p'(z_i)|  &=O(\log n+\tau_{p}+n\log M(z_{i})) \label{upper bound on derivative at root}\\
\sum_i\log M(p'(z_i)^{-1}) &= O(n \tau_p + n^2 + n \log \Mea(p) + |\log \Disc(p)|^{-1}) \label{upper bound on sum of inverse derivatives}\\
\Mea(p(x - z_i)) &\le 2^{\tau_p} 2^{n+1} M(z_i)^n \label{measure of shifted polynomial}\\
\tau_p &= O(n + \log \Mea(p))    \label{taup in terms of measure}.
\end{align}
\end{lemma}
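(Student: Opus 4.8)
The plan is to dispatch the seven estimates one at a time; six of them are routine, and essentially all the real work sits in the separation bound \eqref{lower bound on separation}.

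\textbf{The elementary chain.} For \eqref{upper bound on measure} I would note that $\Mea(p)\le\norm{p}_2$ is Landau's inequality: replacing each root $z_i$ with $\abs{z_i}>1$ by $1/\bar z_i$ leaves $\abs{p}$ unchanged on the unit circle, hence leaves $\norm{p}_2$ unchanged, while the modulus of the leading coefficient becomes $\Mea(p)$, which is at most the $2$-norm; then $\norm{p}_2\le\norm{p}_1$ is $\ell^2\le\ell^1$, and $\norm{p}_1\le(n+1)\norm{p}_\infty\le(n+1)2^{\tau_p}$ is immediate from the definition of $\tau_p$. Conversely, the coefficients of $p$ are $p_n$ times (signed) elementary symmetric functions of the roots, so $\abs{p_k}\le\binom{n}{k}\abs{p_n}\prod_iM(z_i)=\binom{n}{k}\Mea(p)\le 2^n\Mea(p)$; hence $\norm{p}_\infty\le 2^n\Mea(p)$, and with $\abs{p_n}\ge 1/4$ this gives \eqref{taup in terms of measure}. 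For \eqref{upper bound on derivative at root} I differentiate termwise, $\abs{p'(z_i)}\le\sum_kk\abs{p_k}\abs{z_i}^{k-1}\le n\norm{p}_1M(z_i)^{n-1}$, and take logarithms using \eqref{upper bound on measure}. For \eqref{measure of shifted polynomial}, the roots of $p(x-z_i)$ are the numbers $z_i+z_j$, and $M(z_i+z_j)\le M(z_i)+M(z_j)\le 2M(z_i)M(z_j)$, so $\Mea(p(x-z_i))=\abs{p_n}\prod_jM(z_i+z_j)\le 2^nM(z_i)^n\Mea(p)$, and \eqref{upper bound on measure} closes it (factors polynomial in $n$ are harmless at the resolution at which these bounds are later used).

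\textbf{Discriminant bounds.} I would write $\abs{\Disc(p)}^{1/2}=\abs{p_n}^{n-1}\prod_{i<j}\abs{z_i-z_j}=\abs{p_n}^{n-1}\cdot\bigl|\det(z_j^{\,i})_{0\le i\le n-1,\,1\le j\le n}\bigr|$. Hadamard's inequality applied column-wise, together with $\sum_{i=0}^{n-1}\abs{z_j}^{2i}\le nM(z_j)^{2(n-1)}$, gives $\abs{\Disc(p)}^{1/2}\le\abs{p_n}^{n-1}n^{n/2}\prod_jM(z_j)^{n-1}=n^{n/2}\Mea(p)^{n-1}$; squaring and invoking \eqref{upper bound on measure} proves \eqref{upper bound on discriminant}. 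For \eqref{upper bound on sum of inverse derivatives} the starting point is the identity $\abs{\Disc(p)}=\abs{p_n}^{n-2}\prod_i\abs{p'(z_i)}$ (from $\prod_ip'(z_i)=p_n^n\prod_{i\ne j}(z_i-z_j)$), so that $\sum_i\log\abs{p'(z_i)}=\log\abs{\Disc(p)}+O(n)$ since $\log\abs{p_n}\in[-2,0]$. Splitting the roots into $S=\sset{i:\abs{p'(z_i)}<1}$ and its complement, $\sum_i\log M(p'(z_i)^{-1})=-\sum_{i\in S}\log\abs{p'(z_i)}=-\log\abs{\Disc(p)}+O(n)+\sum_{i\notin S}\log\abs{p'(z_i)}$, and the last sum is $O(n\tau_p+n^2+n\log\Mea(p))$ by \eqref{upper bound on derivative at root} together with $\prod_iM(z_i)=\Mea(p)/\abs{p_n}$; bounding $-\log\abs{\Disc(p)}\le\log M(\Disc(p)^{-1})$ gives the claim.

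\textbf{The separation bound --- the main obstacle.} Here I would pick a closest pair of roots and label it so that $M(z_1)\le M(z_2)$ and $\sigma_p=\abs{z_1-z_2}$. In the Vandermonde determinant $V=\det(z_j^{\,i})_{0\le i\le n-1,\,1\le j\le n}=\prod_{i<j}(z_j-z_i)$, replace column $2$ by (column $2$)$-$(column $1$) and factor $(z_2-z_1)$ out of it, so that its entries in rows $1,\dots,n-1$ become $\sum_{r=0}^{k-1}z_2^{\,r}z_1^{\,k-1-r}$; the resulting matrix $W$ satisfies $\abs{\det W}=\prod_{i<j}\abs{z_j-z_i}/\sigma_p$, whence $\sigma_p=\abs{\Disc(p)}^{1/2}/\bigl(\abs{p_n}^{n-1}\abs{\det W}\bigr)$. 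Bounding $\abs{\det W}$ by Hadamard, the columns carrying a single root contribute $\le n^{1/2}M(z_j)^{n-1}$ as above, while the modified column has $\ell^2$-norm $\le\bigl(\sum_{k=1}^{n-1}k^2M(z_2)^{2(k-1)}\bigr)^{1/2}\le n^{3/2}M(z_2)^{n-2}$, where $M(z_1)\le M(z_2)$ is used to pull $M(z_2)$ out of each entry. Collecting the $n$ factors and using $M(z_2)^{n-2}\le M(z_2)^{n-1}$ yields $\abs{\det W}\le n^{(n+2)/2}\prod_jM(z_j)^{n-1}=n^{(n+2)/2}\bigl(\Mea(p)/\abs{p_n}\bigr)^{n-1}$, hence $\sigma_p\ge n^{-(n+2)/2}\abs{\Disc(p)}^{1/2}\Mea(p)^{1-n}$, and replacing $\Mea(p)$ by $\norm{p}_2$ via \eqref{upper bound on measure} gives \eqref{lower bound on separation}. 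The point that needs care is exactly the bookkeeping in this last step: the modified column absorbs the second root, so one must order the pair correctly ($M(z_1)\le M(z_2)$) for the root-magnitude factors to collapse cleanly to a single power of $\Mea(p)$ --- this is the only genuinely non-mechanical part of the lemma. (Alternatively, \eqref{lower bound on separation} may simply be quoted as Mahler's discriminant--separation inequality.)
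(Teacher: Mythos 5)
Your proposal is correct and establishes all seven estimates. The main structural difference is that the paper dispatches \eqref{upper bound on measure}, \eqref{lower bound on separation}, and \eqref{upper bound on discriminant} by citing Yap's book, whereas you reconstruct the underlying arguments: Landau's inequality for the first, Hadamard on the Vandermonde matrix for the third, and the column-replacement trick (essentially Mahler's original proof) for the separation bound. Your organization of the separation proof is sound, including the ordering $M(z_1)\le M(z_2)$ needed to pull the magnitude factors out of the modified column, and the power bookkeeping $n^{(n-1)/2}\cdot n^{3/2}=n^{(n+2)/2}$ closes cleanly. For \eqref{upper bound on sum of inverse derivatives}, your split into $S$ and its complement is equivalent to the paper's use of $M(x^{-1})=M(x)/\abs{x}$; both reduce to the discriminant identity $\abs{\Disc(p)}=\abs{p_n}^{n-2}\prod_i\abs{p'(z_i)}$ and then bound $\sum_i\log M(p'(z_i))$ via \eqref{upper bound on derivative at root}.

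One place where you diverge genuinely from the paper is \eqref{measure of shifted polynomial}. The paper bounds $\Mea(p(x-z_i))\le\norm{p(x-z_i)}_1$ and expands the shifted coefficients binomially, arriving exactly at $2^{\tau_p}2^{n+1}M(z_i)^n$. You instead go through the roots, $\Mea(p(x-z_i))=\abs{p_n}\prod_j M(z_i+z_j)\le 2^n M(z_i)^n\Mea(p)$, and then invoke \eqref{upper bound on measure} to replace $\Mea(p)$ by $(n+1)2^{\tau_p}$; this loses a factor of $(n+1)/2$ relative to the stated constant. You flag this explicitly, and it is harmless for the $\tilde O$-level use downstream, but strictly speaking your argument proves a slightly weaker inequality than the one displayed. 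Your cleaner root-level argument would match the constant if you used $\Mea(p)\le\norm{p}_\infty\sqrt{n+1}$ or, as the paper does, routed through the $1$-norm of the shifted polynomial directly.
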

\begin{proof} \cite[Lemma 4.14]{yap-fundamental} establishes (\ref{upper bound on measure}). \cite[Corollary 6.29]{yap-fundamental} establishes (\ref{lower bound on separation}) and (\ref{upper bound on discriminant}). For (\ref{upper bound on derivative at root}), observe 
\[  \log |p'(z_i)|  = \log ( \sum_{1 \le k \le n} \abs{p_k k z_i^{k-1}}) \le \log (n \cdot 2^{\tau_p} n M(z_i)^n) =O(\log n + \tau_{p}+n\log M(z_{i})).\]
(\ref{upper bound on sum of inverse derivatives}) follows from
\begin{align*}
\sum_i\log M(p'(z_i)^{-1}) &= \log \prod_i  M(p'(z_i)^{-1}) 
                                        = \log \frac{\prod_i M(p'(z_i))}{\prod_i |p'(z_i)|}\\
                                        &= \log \frac{ \abs{p_n}^{n-2} \prod_i M(p'(z_i))}{|\Disc(p)|}
                                        = O(n \tau_p + n^2 + n \log \Mea(p) + \log |\Disc(p)|^{-1}).
\end{align*}
For (\ref{measure of shifted polynomial}), we use $\Mea(p) \le \norm{p}_1$ and $p(x - z_i) = \sum_{0 \le k \le n} ( x^k \sum_{k \le j \le n} p_j {j \choose k} (-z_i)^{j - k})$, and hence, 
\begin{align*}
\norm{p(x - z_i)}_1 &\le \sum_{0 \le k \le n} \sum_{k \le j \le n} p_j {j \choose k} M(z_i)^{j - k} 
\le 2^{\tau_p} M(z_i)^n \sum_{0 \le j \le n} \sum_{k \le j} {j \choose k} \\
&\le 2^{\tau_p} M(z_i)^n \sum_{0 \le j \le n} 2^j \le 2^{\tau_p} M(z_i)^n 2^{n+1}. 
\end{align*}
For (\ref{taup in terms of measure}), we first recall that $\tau_P = \log M(\norm{p}_\infty)$. The coefficient $p_i$ is given by
\[    p_i = p_n \cdot\sum_{I \subseteq \sset{1,\ldots,n}, \abs{I} = n - i} \prod_{i \in I} z_i .\]
Thus 
\[ \abs{p_i} \le \abs{p_n} {n \choose n - i} \frac{\Mea(p)}{\abs{p_n}} \le 2^n \Mea(p).\]
\end{proof}

\section{The Algorithm}\label{sec:Algorithm}

We are now ready for our algorithm $\AD$\footnote{Our algorithm is an \emph{approximate arithmetic} variant of the algorithm \textsc{New}\textsc{Dsc} presented in~\cite{NewDsc}. 
\textsc{New}\textsc{Dsc} combines the classical Descartes method and Newton iteration. It uses exact rational arithmetic and only applies to polynomials with rational coefficients. Pronounce $\AD$ as either ``approximate arithmetic Newton-Descartes'' or ``a new Descartes''.} for isolating the real roots of $P$.
We maintain a list $\mathcal{A}$ of active intervals,\footnote{In fact, $\mathcal{A}$ is a list of pairs $(I,N_I)$, where $I$ is an interval and $N_I\in\N$ a power of two. For the high level introduction, the reader may think of $\mathcal{A}$ of a list of intervals only.} a list $\mathcal{O}$ of isolating intervals, and the invariant that the intervals in $\mathcal{O}$ are isolating and that each real root of $P$ is contained in either an active or an isolating interval. We initialize $\mathcal{O}$ to the empty set and $\mathcal{A}$ to the interval $\mathcal{I}=(-2^{\Gamma},2^{\Gamma})$, where $\Gamma=2^\gamma$ is defined as in (\ref{def:Gamma}). This interval contains all real roots of $P$. Our actual initialization procedure is more complicated, see Section~\ref{sec:Initialization}, but this is irrelevant for the high level introduction to the algorithm. 

In each iteration, we work on one of the active intervals, say $I$. We first apply the $0$-Test and the $1$-Test to $I$; see Section~\ref{sec:apxdsc} for a discussion of these tests.  If the $0$-Test succeeds, we discard $I$. This is safe, as a successful $0$-Test implies that $I$ contains no real root. If the $1$-Test succeeds, we add $I$ to the set of isolating intervals. This is safe, as a successful $1$-Test implies that $I$ contains exactly one real root. If neither $0$- or $1$-Test succeeds, we need to subdivide $I$. 

Classical bisection divides $I$ into two equal or nearly equal sized subintervals. This works fine, if the roots contained in $I$ spread out nicely, as then a small number of subdivision steps suffices to separate the roots contained in $I$. This works poorly if the roots contained in $I$ form a cluster of nearby roots, as then a larger number of subdivision steps are needed until $I$ is shrunk to an interval whose width is about the diameter of the cluster. 

In the presence of a cluster $\mathcal{C}$ of roots (i.e., a set of $k:=|\mathcal{C}|\ge 2$ nearby roots that are ``well separated'' from all other roots),
straight bisection converges only linearly, and it is much more efficient to obtain a good approximation of $\mathcal{C}$ by using Newton iteration. 
More precisely, if we consider a point $\xi$, whose distance $d$ to the cluster $\mathcal{C}$ is considerably larger than the diameter of the cluster, and whose distance to all remaining roots is considerably larger than $d$, then the distance from the point
\begin{align}\label{newton}
\xi':=\xi-k\cdot\frac{P(\xi)}{P'(\xi)}
\end{align}
to the cluster $\mathcal{C}$ is much smaller than the distance from $\xi$ to $\mathcal{C}$.\footnote{The following derivation gives intuition for the behavior of the Newton iteration. Consider $P(x) = (x - \alpha)^k g(x)$, where $\alpha$ is not a root of $g$, and consider the iteration $x_{n+1} = x_n - k \frac{P(x_n)}{P'(x_n)}$. Then, 
\begin{align*}
x_{n+1} - \alpha &= x_n - \alpha - k \frac{(x_n - \alpha)^k g(x_n)}{k (x_n - \alpha)^{k-1} g(x_n) + (x_n - \alpha)^k g'(x_n)} \\
&= (x_n - \alpha)(1 - \frac{k g(x_n)}{k g(x_n) + (x_n - \alpha) g'(x_n)} = (x_n - \alpha)^2 \frac{g'(x_n)}{k g(x_n) + (x_n - \alpha) g'(x_n)},
\end{align*}
and hence, we have quadratic convergence in an interval around $\alpha$. }. The distance $d'$ of $\xi'$ to the cluster is approximately $d^{2}$ if $d<1$. Thus, we can achieve quadratic convergence to the cluster $\mathcal{C}$ by iteratively applying (\ref{newton}). Unfortunately, when running the subdivision algorithm, we neither know whether there actually exists a cluster $\mathcal{C}$ nor do we know its size or diameter. Hence, the challenge is to make the above insight applicable to a computational approach.

We overcome these difficulties as follows. First, we estimate $k$. For this, we consider two choices for $\xi$, say $\xi_1$ and $\xi_2$. Let $\xi'_i$, $i = 1,2$, be the Newton iterates. For the correct value of $k$, we should have $\xi'_1 \approx \xi'_2$. Conversely, we can estimate $k$ by solving $\xi'_1 = \xi'_2$ for $k$. 
Secondly, we use quadratic interval refinement~\cite{abbott-quadratic}. With every active interval $I = (a,b)$, we maintain a number $N_I = 2^{2^{n_{I}}}$, where $n_I\ge 1$ is an integer. We call $n_I$ the \emph{level} of interval $I$. We hope to refine $I$ to an interval $I' = (a',b')$ of width $w(I)/N_I$. We compute candidates for the endpoints of $I'$ using Newton iteration, that is, we compute a point inside $I'$ and then obtain the endpoints of $I'$ by rounding. We apply the $0$-Test to $(a,a')$ and to $(b',b)$. If both $0$-Tests succeed, we add $(I',N_I^2)$ to the set of active intervals. Observe that, in a regime of quadratic convergence, the next Newton iteration should refine $I'$ to an interval of width $w(I')/N_I^2$. If we fail to identify $I'$, we bisect $I$ and add both subintervals to the list of active intervals (with $N_I$ replaced by $\max(4, \sqrt{N_I})$. 

The details of the Newton step are discussed in Section~\ref{sec:Newton-Test and Boundary-Test}, where we introduce the Newton-Test and the Boundary-Test. The Boundary-Test treats the special case that the subinterval $I'$ containing all roots in $I$ shares an endpoint with $I$, and there are roots outside $I$ and close to $I$. 

There is one more ingredient to the algorithm. We need to guarantee that $P$ is large at interval endpoints. Therefore, instead of determining interval endpoints as described above, we instead take an admissible point chosen from an appropriate multipoint.

We next give the details of the algorithm \AD:\medskip

\begin{mdframed}[frametitle={{\bf Algorithm: \AD}}]
{\color{black}
\noindent{\bf Input:}
A polynomial $P(x)$ as in (\ref{def:P}).\smallskip

\noindent{\bf Output:}
Disjoint isolating intervals $I_1,\ldots,I_m$ for all real roots of $P$ with $\var(P,I_j)=1$ for all $j=1,\ldots,n$.\smallskip

\begin{itemize}
\item[(1)] $\mathcal{A}:=\{(\mathcal{I}_{k},4)\}_{k=0,\ldots,2\gamma+2}$, with $\mathcal{I}_{k}$ as computed by Algorithm {\bf Initialization}, and $\mathcal{O}:=\emptyset$.
\item[ ]\noindent$\cosym$\textit{Algorithm {\bf Initialization} is defined in Section~\ref{sec:Initialization}. It computes a set of open and disjoint intervals $\mathcal{I}_k$, such that all real roots of $P$ are covered by the union of these intervals. We remark that $\AD$ works for any set of intervals with this property, however, the choice of Section~\ref{sec:Initialization} simplifies the complexity analysis of the algorithm.}
\item[(2)] while $\mathcal{A}\neq \emptyset$ do
\begin{itemize}
\item[(2.1)] Choose an arbitrary pair $(I,N_{I})$ from $\mathcal{A}$, with $I=(a,b)$, and remove $(I,N_I)$ from $\mathcal{A}$ 
\item[(2.2)] If the algorithm {\bf $\mathbf{0}$-Test} (with input $P$ and $I$) returns True, then go to Step (2.1)
\item[ ]\noindent$\cosym$\textit{In this case, we have $I$ contains no root of $P$, and thus $I$ can be discarded.}
\item[(2.3)] If the algorithm {\bf $\mathbf{1}$-Test} (with input $P$ and $I$) returns True and an interval $I'$, then add $I'$ to $\mathcal{O}$ and go to Step (2.1)
\item[ ]\noindent$\cosym$\textit{If the {\bf $\mathbf{1}$-Test} returns an interval $I'$, then $I'$ is isolating and contains all roots of $P$ that are contained in $I$. Hence, we can store $I'$ as isolating and discard $I\backslash I'$.}
\item[(2.4)] If the algorithm {\bf Boundary-Test} or the algorithm {\bf Newton-Test} returns True and an interval $I'$, then add $(I',N_{I'}):=(I',N_I^2)$ to $\mathcal{A}$, and go to Step (2.1)
\item[ ]\hfill\emph{(quadratic step)}\medskip
\item[ ]\noindent$\cosym$\textit{The {\bf Boundary-Test} and the {\bf Newton-Test} are defined in Section~\ref{sec:Newton-Test and Boundary-Test}. The interval $I'$ returned by one of these tests contains all roots that are contained in $I$, and thus we can proceed with $I'$. Further notice that $\frac{w(I)}{8N_I}\le w(I')\le \frac{w(I)}{N_I}$.}
\item[(2.5)] Compute an admissible point $m^* \in \multipoint{m(I)}{\frac{w(I)}{2^{\ceil{2 + \log n}}}}$ using the algorithm {\bf Admissible Point} and add $(I',N_{I'})$ and $(I'',N_{I''})$ to $\mathcal{A}$, where $I'=(a,m^*)$ and $I''=(m^*,b)$, and $N_{I'}:=N_{I''}:=\max(4,\sqrt{N_{I}})$.
\item[ ]\hfill\emph{(linear step)}\medskip
\item[ ]\noindent$\cosym$\textit{This step correspond to the classical bisection step, where the interval $I$ is split into two equally sized subintervals. Here, we make sure that $|P|$ takes a reasonably large value at the splitting point. In addition, we have $\frac{w(I)}{4}\le \min(w(I'),w(I''))\le \max(w(I'),w(I''))\frac{3w(I)}{4}$.}
\end{itemize}
\item[(3)] return $\mathcal{O}$
\end{itemize}
}
\end{mdframed}
\bigskip
\ignore{
\hrule\medskip
\noindent{\bf Algorithm \AD:} We maintain a list $\mathcal{A}:=\{(I,N_{I})\}$ of pairs, each consisting of an \emph{active} interval and a corresponding positive integer $N_{I}=2^{2^{n_{I}}}$ with $n_{I}\in\Z_{\ge 1}$. $\mathcal{O}$ denotes a list of \emph{isolating} intervals. Initially, set $\mathcal{A}:=\{(\mathcal{I}_{k},4)\}_{k=0,\ldots,2\gamma+2}$, with $\mathcal{I}_{k}$ as defined in Section~\ref{sec:Initialization}, and $\mathcal{O}:=\emptyset$.
In each iteration, we remove a pair $(I,N_{I})$ from $\mathcal{A}$, with $I=(a,b)$, and proceed as follows:

\begin{itemize}
\item[(T0)] We apply the $0$-Test to $I$. In case of success, we know that $I$ contains no root and discard it.
\item[(T1)] Otherwise, we apply the $1$-Test to $I$. In case of success, it returns an isolating interval\footnote{In fact, from the definition of the $1$-Test, it even holds that $\var(P,I')=1$.} $I'$ with $\frac{w(I)}{4}\le w(I')\le \frac{3w(I)}{4}$. We add $I'$ to 
$\mathcal{O}$.
\item[(Q)] If both of the former tests fail, we apply the Boundary-Test as well as the Newton-Test to $I$. If one of these tests succeed, we obtain an interval $I'\subseteq I$, with $\frac{w(I)}{8N_I}\le w(I')\le \frac{w(I)}{N_I}$, which contains all roots contained in $I$. We add $(I',N_{I'})$ to $\mathcal{A}$, where $N_{I'}:=N_{I}^{2}$\hfill\emph{(quadratic step)}.  
\item[(L)] If all of the steps (T0), (T1), and (Q) fail, we compute an admissible point\footnote{Notice that such a point has already been computed in the $1$-Test in (T0), and thus, there is no need to repeat this computation.}
$m^* \in \multipoint{m(I)}{\frac{w(I)}{2^{\ceil{2 + \log n}}}}$
and add $(I',N_{I'})$ and $(I'',N_{I''})$ to $\mathcal{A}$, where $I'=(a,m^*)$ and $I''=(m^*,b)$, and 
$N_{I'}:=N_{I''}:=\max(4,\sqrt{N_{I}})$\hfill\emph{(linear step)}.
\end{itemize}

We continue until the list $\mathcal{A}$ becomes empty. Then, we return the list $\mathcal{O}$ of isolating intervals.
\medskip
\hrule\bigskip
}
If we succeed in Step (2.4), we say that the subdivision step from $I$ to $I'$ is \emph{quadratic}. In a \emph{linear} step, we just split $I$ into two intervals of approximately the same size (i.e.,~of size in between $\frac{w(I)}{4}$ and $\frac{3w(I)}{4}$). From the definitions of our tests, the exactness of the algorithm follows immediately. In addition, since any interval $I$ of width $w(I)<\frac{\sigma_P}{2}$ satisfies $\var(P,I)=0$ or $\var(P,I)=1$, either the $0$-Test or the $1$-Test succeeds for $I$. This proves termination (i.e.,~Step (2.2) or Step (2.3) succeeds) because, in each iteration, an interval $I$ is replaced by intervals of width less than or equal to $\frac{3w(I)}{4}$.

\subsection{Initialization}\label{sec:Initialization}

Certainly, the most straight-forward initialization is to start with the interval $\mathcal{I}=(-2^{\Gamma},2^{\Gamma})$. In fact, this is also what we recommend doing in an actual implementation. However, in order to simplify the analysis of our algorithm, we proceed slightly differently. 
We first split $\mathcal{I}$ into disjoint intervals $\mathcal{I}_{k}=(s_{k}^{*},s_{k+1}^{*})$, with $k=0,\ldots,2\cdot\gamma+2$ and $\gamma=\log \Gamma$, such that for each interval, $P$ is large at the endpoints of the interval, and $\log M(x)$ is essentially constant within the interval. More precisely, the following conditions are fulfilled for all $k$:
\begin{align}\label{cond:start1}
\min(|P(s_{k}^{*})|,|P(s_{k+1}^{*})|)&>2^{-8n\log n},\quad\text{and}\\
\max_{x\in \mathcal{I}_k}\log M(x)&\le 2\cdot (1+\min_{x\in \mathcal{I}_k}\log M(x)). \nonumber
\end{align}

The intervals $(-2^{2^\gamma},-2^{2^{\gamma-1}})$, $(-2^{2^{\gamma - 1}}, -2^{2^{\gamma - 2}})$, \ldots, $(-2^{2^0},0)$, $(0,+2^{2^0})$, \ldots, $(-2^{2^{\gamma-1}},-2^{2^\gamma})$ satisfy the second condition. In order to also satisfy the first, consider the points 
\begin{align}\label{definition of sk}
s_{k}:= \begin{cases} -2^{2^{\gamma-k}} &\mbox{for } k=0,\ldots,\gamma \\
0 &\mbox{for } k=\gamma+1\\
+2^{2^{k-\gamma}} &\mbox{for } k=\gamma+2,\ldots,2\gamma+2
 \end{cases} 
\end{align}
and corresponding multipoints $M_k \assign \multipoint{s_{k}}{2^{-\lceil 2\log n\rceil}}$. In $M_k$, there exists at least one point   
with distance at least $\frac{1}{8n}$ or larger to all roots of $P$. Thus, $|P(s_{k}^{*})|\ge |P_{n}|\cdot\left(\frac{1}{8n}\right)^{n}\ge 2^{-4n-n\log n}>2^{-8n\log n}$, where $s_k^*$ is an admissible point in $M_k$.\medskip

\begin{mdframed}[frametitle={{\bf Algorithm: Initialization}}]
{\color{black}
\noindent{\bf Input:}
A polynomial $P(x)$ as in (\ref{def:P}) and a $\gamma$ as defined in (\ref{def:Gamma}).\medskip

\noindent{\bf Output:}
Disjoint open intervals $\mathcal{I}_{k}:=(s_{k}^{*},s_{k+1}^{*})$, with $k=0,\ldots,2\gamma+2$, such that $\bigcup_{k}\mathcal{I}_k$ contains all real roots of $P$ and the condition in (\ref{cond:start1}) is fulfilled.\medskip

 \begin{itemize}
\item For $k=0,\ldots,2\gamma+2$, define $s_k$ as in (\ref{definition of sk}).
\item For $k=0,\ldots,2\gamma+2$, compute an admissible point $s_k^* \in M_k \assign \multipoint{s_{k}}{2^{-\lceil 2\log n\rceil}}$ using the algorithm {\bf Admissible Point}.
\item Return the intervals
\begin{align}\label{def:startintervals}
\mathcal{I}_{k}:=(s_{k}^{*},s_{k+1}^{*})\quad\text{for }k=0,\ldots,2\gamma+2.
\end{align}
\end{itemize}}
\end{mdframed}
\bigskip
\ignore{
\hrule \nopagebreak  \medskip
\noindent{\bf Algorithm \AD-Initialization:} Let $s_k$'s be as in (\ref{definition of sk}). For each compute an admissible point $s_k^* \in M_k \assign \multipoint{s_{k}}{2^{-\lceil 2\log n\rceil}}$ and initialize $\cal A$ with the intervals 
\begin{align}\label{def:startintervals}
\mathcal{I}_{k}:=(s_{k}^{*},s_{k+1}^{*})\quad\text{for }k=0,\ldots,2\gamma+2.
\end{align}\vspace{-1em} \par
\hrule \bigskip}
It is easy to check that the second condition in (\ref{cond:start1}) is fulfilled for the intervals computed by the above algorithm.

\subsection{The Newton-Test and the Boundary-Test}\label{sec:Newton-Test and Boundary-Test}

The Newton-Test and the Boundary-Test are the key to quadratic convergence. The Newton-test
receives an interval $I=(a,b)\subseteq \mathcal{I}$ and an integer $N_{I}=2^{2^{n_{I}}}$, where $n_I \ge 1$ is an integer. In case of success, the test returns an interval $I'$ with $\frac{w(I)}{8N_{I}}\le w(I')\le \frac{w(I)}{N_{I}}$ that contains all roots that are contained in $I$. Success is guaranteed  if there is a subinterval $J$ of $I$ of width at most $2^{-13}\cdot\frac{w(I)}{N_{I}}$ whose one-circle region contains all roots that are contained in the one-circle region of $I$ and if the disk with radius $2^{\log n+10}\cdot N_{I}\cdot w(I)$ and center $m(I)$ contains no further root of $P$, see Lemma~\ref{newtonsuccess} for a precise statement. Informally speaking, the Newton-Test is guaranteed to succeed if the roots in $I$ cluster in a subinterval significantly shorter than $w(I)/N_I$, and roots outside $I$ are far away from $I$. \MS{In the following description of the Newton-Test, we inserted comments that explain the rationale behind our choices.} For this rationale, we assume the existence of a cluster $\mathcal{C}$ of $k$ roots centered at some point $\xi\in I$ with diameter $d(\mathcal{C})\ll w(I)$ and that there exists no other root in a large neighborhood of the one-circle region $\Delta(I)$ of $I$. The formal justification for the Newton-Test will be given in Lemma~\ref{newtonsuccess}.\medskip

\begin{mdframed}[frametitle={{\bf Algorithm: Newton-Test}}]
{\color{black}
\noindent{\bf Input:}
An Interval $I=(a,b)\subset \R$, an integer $N_I=2^{2^{n_I}}$ with $n_I\in\N$, and a polynomial $P\in\R[x]$
as defined in (\ref{def:P}) \smallskip

\noindent{\bf Output:}
True or False. In case of True, it also returns an interval $I'\subset I$, with $\frac{w(I)}{8N_I}\le w(I')\le\frac{w(I)}{N_I}$, that contains all real roots of $P$ that are contained in $I$.\smallskip

\begin{itemize}
\item[(1)] Let $\xi_{1}:=a+\frac{1}{4}\cdot w(I)$, $\xi_{2}:=a+\frac{1}{2}\cdot w(I)$, $\xi_{3}:=a+\frac{3}{4}\cdot w(I)$, and $\epsilon := 2^{-\ceil{5 + \log n}}$, and compute admissible points 
\begin{equation}\label{Newtonmultipoint1}
\xi_{j}^{*}\in \multipoint{\xi_{j}}{\epsilon \cdot w(I)},
\end{equation}
for $j=1,2,3$, using the Algorithm {\bf Admissible Point} from Section~\ref{sec:multipoint}.\medskip
\item[ ]\noindent$\cosym$\textit{At least two of the three points $\xi_{j}$ (say $\xi_{1}$ and $\xi_{2}$) have a distance from $\mathcal{C}$ that is large compared to the diameter of $\mathcal{C}$. In addition, their distances to all remaining roots are also large, and thus the points $\xi_{1}':=\xi_{1}-k\cdot v_{1}$ and $\xi_{2}':=\xi_{2}-k\cdot v_{2}$, with $v_{j_1}:=\frac{P(\xi^{*}_{j_{1}})}{P'(\xi^{*}_{j_{1}})}$ and $v_{j_2}:=\frac{P(\xi^{*}_{j_{2}})}{P'(\xi^{*}_{j_{2}})}$, obtained from considering one Newton step, with $\xi=\xi^*_{j_1}$ and $\xi=\xi^*_{j_2}$, have much smaller distances to $\mathcal{C}$ than the points $\xi_{1}$ and $\xi_{2}$. Notice that $k$ is not known to the algorithm at this point.\medskip} 
\item[(2)] For each of the three distinct pairs $(j_1,j_2)$ of indices $j_{1},j_{2}\in\{1,2,3\}$, with $j_{1}<j_{2}$, do:
\begin{itemize}
\item[(2.1)] For $L=2,4,8,\ldots$ , compute 
approximations $A_{j_{1}}$, $A_{j_{2}}$, $A'_{j_{1}}$, and $A'_{j_{2}}$ of $P(\xi^{*}_{j_{1}})$, $P(\xi^{*}_{j_{2}})$, $P'(\xi^{*}_{j_{1}})$, and 
$P'(\xi^{*}_{j_{2}})$ of quality $L$, respectively, until, for some $L=L_1$, it holds that
\begin{align}\label{condition1}
\frac{|A_{j_1}|-2^{-L}}{|A'_{j_1}|+2^{-L}}>w(I)\quad\text{or}\quad\frac{|A_{j_2}|-2^{-L}}{|A'_{j_2}|+2^{-L}}>w(I)
\end{align} 
or
\begin{align}\label{condition2}
|A_{j_1}|,|A_{j_2}|>2^{-L+1}\text{ and }|A'_{j_1}|,|A'_{j_2}|>2^{-L+1}
\end{align} 
Then, if (\ref{condition1}) holds, discard the pair $(j_1,j_2)$. Otherwise, proceed with Step (2.2).\medskip
\item[ ]\noindent$\cosym$\textit{The values $\tilde{v}_{j_1}:=\frac{A_{j_1}}{A'_{j_1}}$ and $\tilde{v}_{j_2}:=\frac{A_{j_2}}{A'_{j_2}}$ are approximations of $v_{j_1}$ and $v_{j_2}$, respectively. Condition (\ref{condition1}) implies that either $|v_{j_1}|>w(I)$ or $|v_{j_2}|>w(I)$. The proof of Lemma~\ref{newtonsuccess} shows that the existence of $\mathcal{C}$ implies that there is a pair $(j_1,j_2)$, for which $v_{j_1}$ and $v_{j_2}$ have distance $\ll \frac{w(I)}{N_I}$ to $\mathcal{C}$ and $|v_{j_1}|,|v_{j_2}|<\frac{w(I)}{k}\le w(I)$. Hence, such a pair cannot be discarded in step (2.1). Further notice that $L_1\le L_1^*:=\log \max\{M(|P(\xi^{*}_{j_{1}})|^{-1}),M(|P(\xi^{*}_{j_{1}})|^{-1})\}+\log M(w(I))+3$. Namely, for $L\ge L_1^*$, either (\ref{condition2}) holds or $$\frac{|A_{j}|-2^{-L}}{|A'_{j}|+2^{-L}}>\frac{|P(\xi^{*}_{j})|-2^{-L+1}}{2^{-L+1}+2^{-L}}\ge
\frac{8\cdot M(w(I))\cdot 2^{-L}-2^{-L+1}}{3\cdot 2^{-L}}\ge w(I)$$ for $j=j_1$ or $j=j_2.$
In addition, if (\ref{condition2}) holds for $L=L_1$, then (\ref{condition2}) also holds for any $L\ge 2\cdot L_1$. This also implies that, for $j=j_1,j_2$, $|\tilde{v}_{j}|$ is a $4$-approximation of $|v_{j}|$, that is, 
$|\tilde{v}_{j}|$ differs from $|v_{j}|$ by a factor in $(1/4,4)$. Since $\frac{|A_j|+2^{-L}}{|A'_j|-2^{-L}}$ differs from $\tilde{v}_j$ by a factor in $(1,4)$, it follows that $|\tilde{v}_{j}|<64w(I)$ for any $L\ge 2 L_1$ and $j=j_1,j_2$.\medskip
}
\item[(2.2)] For $L=2\cdot L_1,4\cdot L_1,8\cdot L_1,\ldots$ , compute 
approximations $A_{j_{1}}$, $A_{j_{2}}$, $A'_{j_{1}}$, and $A'_{j_{2}}$ of $P(\xi^{*}_{j_{1}})$, $P(\xi^{*}_{j_{2}})$, $P'(\xi^{*}_{j_{1}})$, and 
$P'(\xi^{*}_{j_{2}})$ of quality $L$, respectively, until, for some $L=L_2$, it holds that
\begin{align}\label{condition3}
\delta_{j_1},\delta_{j_2}<\min\left\{\frac{w(I)}{2^5\cdot n},\frac{w(I)}{2^{14}\cdot N_I}\right\},\text{ with }\delta_{j_1}:=\frac{|A_{j_1}|+|A'_{j_1}|}{2^{L-2}\cdot|A'_{j_1}|^2},\text{ }\delta_{j_2}:=\frac{|A_{j_2}|+|A'_{j_2}|}{2^{L-2}\cdot|A'_{j_2}|^2}.
\end{align} 
If the condition
\begin{align}\label{condition4}
\left|\tilde{v}_{j_1}-\tilde{v}_{j_2}\right|+\delta_{j_1}+\delta_{j_2}\ge\frac{w(I)}{n},\quad\text{with}\quad \tilde{v}_{j_1}:=\frac{A_{j_1}}{A'_{j_1}}\quad\text{and}\quad\tilde{v}_{j_2}:=\frac{A_{j_2}}{A'_{j_2}},
\end{align}
is fulfilled, then proceed with Step (2.3). Otherwise, discard the pair $(j_1,j_2)$.\medskip
\item[ ]\noindent$\cosym$\textit{A straight-forward computation shows that $|\tilde{v}_{j_1}-v_{j_1}|<\delta_{j_1}$ and $|\tilde{v}_{j_2}-v_{j_2}|<\delta_{j_2}$, where we use that $|A_j|$ and $|A'_j|$ are relative $2$-approximations of $|P(\xi^*_j)|$ and $|P'(\xi^*_j)|$, for $j=j_1,j_2$, respectively. Hence, if condition (\ref{condition4}) does not hold, then $|v_{j_1}-v_{j_2}|< \frac{w(I)}{n}$. Due to the proof of Lemma~\ref{newtonsuccess}, the existence of $\mathcal{C}$ yields that $|v_{j_1}-v_{j_2}|>\frac{w(I)}{k}$ for some pair $(j_1,j_2)$, and thus (\ref{condition4}) must be fulfilled for such a pair. Notice that the inequality in (\ref{condition3}) holds for $L=L_2$, with an $L_2$ of size 
\begin{align*}
L_2&=O(L_1+\log \max(M((32n)/w(I)),N_I))\\
&=O(\log \max\{n,M(|P(\xi^{*}_{j_{1}})|^{-1}),M(|P(\xi^{*}_{j_{2}})|^{-1}),M(w(I)),M(w(I)^{-1}),N_I\}),
\end{align*}
where we use that $\tilde{v}_{j}<64w(I)$ and $2^{L}\cdot |A_{j}'|\ge 2^{L}\cdot \frac{1}{4}\cdot 2^{-L_1+1}=2^{L-L_1-1}$ for all $L\ge 2\cdot L_1$ and $j=j_1,j_2$.\medskip  
}
\item[(2.3)] Compute
\begin{align}\label{def:lambdaapx}
\tilde{\lambda}_{j_{1},j_{2}}:=\xi_{j_{1}}^{*}+\frac{\xi^{*}_{j_{2}}-\xi^{*}_{j_{1}}}{\tilde{v}_{j_{1}}-\tilde{v}_{j_{2}}}\cdot \tilde{v}_{j_{1}}
\end{align}
If $\tilde{\lambda}_{j_1,j_2} \not\in \bar{I}=[a,b]$, discard the pair $(j_1,j_2)$. Otherwise, compute $\ell_{j_{1},j_{2}}:=\floor{\frac{\tilde{\lambda}_{j_{1},j_{2}}-a}{w(I)/(4N_I)}}$, which is an integer contained in $\{0,\ldots,4N_{I}\}$. Further define
\[
I_{j_{1},j_{2}}:=(a_{j_{1},j_{2}},b_{j_{1},j_{2}}):=(a+\max(0,\ell_{j_{1},j_{2}}-1)\cdot\frac{w(I)}{4N_{I}},a+\min(4N_{I},\ell_{j_{1},j_{2}}+2)\cdot\frac{w(I)}{4N_{I}}).
\]
If $a_{j_{1},j_{2}}=a$, set $a_{j_{1},j_{2}}^{*}:=a$, and if $b_{j_{1},j_{2}}=b$, set $b_{j_{1},j_{2}}:=b$. For all other values for $a_{j_{1},j_{2}}$ and $b_{j_{1},j_{2}}$, use Algorithm {\bf Admissible Point} from Section~\ref{sec:multipoint} to compute admissible points 
\begin{equation}\label{Newtonmultipoint2}
a_{j_{1},j_{2}}^{*}\in \multipoint{a_{j_{1},j_{2}}}{\epsilon\cdot \frac{w(I)}{N_I}}
\quad\text{and}\quad
b_{j_{1},j_{2}}^{*}\in \multipoint{b_{j_{1},j_{2}}}{\epsilon\cdot \frac{w(I)}{N_I}}.
\end{equation}
Define $I_{j_{1},j_{2}}^{*}:=(a_{j_{1},j_{2}}^{*},b_{j_{1},j_{2}}^{*})$.\medskip 
\item[ ]\noindent$\cosym$\textit{The Newton iteration (\ref{newton}) with $\xi=\xi_{j}^{*}$ for a $k$-fold root produces $\xi_j' = \xi_j^* - k v_j$. Equating $\xi'_{j_1} = \xi'_{j_2}$ yields $-k = \frac{\xi^{*}_{j_{2}}-\xi^{*}_{j_{1}}}{v_{j_{1}}-v_{j_{2}}}$. Then, $\xi'_{j_1}$ and $\xi'_{j_2}$ are given by 
\begin{align}\label{def:lambda}
\lambda_{j_{1},j_{2}}:=\xi_{j_{1}}^{*}+\frac{\xi^{*}_{j_{2}}-\xi^{*}_{j_{1}}}{v_{j_{1}}-v_{j_{2}}}\cdot v_{j_{1}}.
\end{align}
A straight-forward computation shows that $|\tilde{\lambda}_{j_1,j_2}-\lambda_{j_1,j_2}|<\frac{w(I)}{32N_I}$, where we use inequality (\ref{condition3}) and the fact that $|\tilde{v}_{j_1}|,|\tilde{v}_{j_2}|<64w(I)$, and $|\tilde{v}_{j_1}|,|\tilde{v}_{j_2}|>\frac{w(I)}{2n}$ due to (\ref{condition4}). If $\tilde{\lambda}_{j_1,j_2}$ is contained in $I$, we (conceptually) subdivide $I$ into $4N_I$ subintervals and determine the subinterval that contains $\tilde{\lambda}_{j_1,j_2}$. Extending the interval on both sides by $\frac{w(I)}{4N_I}$ yields an interval $I_{j_1,j_2}$, which contains $\lambda_{j_1,j_2}$. Finally, replacing the endpoints $a_{j_1,j_2}$ and $b_{j_1,j_2}$ by nearby admissible points yields an interval $I^*_{j_1,j_2}$ with $\frac{w(I)}{8n}\le w(I^*_{j_1,j_2})\le \frac{w(I)}{n}$. Lemma~\ref{newtonsuccess} then shows that the existence of $\mathcal{C}$ guarantees that $I^*_{j_1,j_2}$ contains all roots of $P$ that are contained in $I$.\medskip 
}
\item[(2.4)] Run the {\bf $\mathbf{0}$-Test} from Section~\ref{var0apx} with input $I'_{\ell}:=(a,a_{j_{1},j_{2}}^{*})$ and $I'_{r}:=(b_{j_{1},j_{2}}^{*},b)$. If it succeeds on both intervals, return {\bf True} and the interval $I':=I^*_{j_1,j_2}$. Otherwise, discard the pair $(j_1,j_2)$.\medskip
\item[ ]\noindent$\cosym$\textit{For intervals $I'_{\ell}$ or $I'_{r}$, which are empty (i.e.,~$I'_{\ell}=(a,a)$ or $I'_{r}=(b,b)$), nothing needs to be done. If the {\bf $\mathbf{0}$-Test} succeeds on $I'_{\ell}$ as well as on $I'_{r}$, then neither interval contains a root of $P$. Hence, $I^*_{j_1,j_2}$ contains all roots of $P$ that are contained in $I$.\medskip}
\end{itemize}
\item[(3)] If each of the three pairs $(j_1,j_2)$ is discarded in one of the above steps, return {\bf False}.
\end{itemize}
}
\end{mdframed}
\bigskip
\ignore{
\newpage
Consider the points\footnote{At least two of the three points $\xi_{j}$ (say $\xi_{1}$ and $\xi_{2}$) have a distance from $\mathcal{C}$ that is large  compared to the diameter of $\mathcal{C}$. In addition, their distances to all remaining roots are also large, and thus, the points $\xi_{1}':=\xi_{1}-k\cdot v_{1}$ and $\xi_{2}':=\xi_{2}-k\cdot v_{2}$ obtained from considering one Newton step have much smaller distances to the cluster $\mathcal{C}$ than the points $\xi_{1}$ and $\xi_{2}$. Note that $k$ is not known to the algorithm at this point.}
$\xi_{1}:=a+\frac{1}{4}\cdot w(I)$, $\xi_{2}:=a+\frac{1}{2}\cdot w(I)$, $\xi_{3}:=a+\frac{3}{4}\cdot w(I)$, and let $\epsilon := 2^{-\ceil{5 + \log n}}$.
For $j = 1,2,3$, compute admissible points 
\begin{equation}\label{Newtonmultipoint1}
\xi_{j}^{*}\in \multipoint{\xi_{j}}{\epsilon \cdot w(I)}
\end{equation}
 using the method from Lemma~\ref{lem:apxmultipointeval}. 
These points define values
$
v_{j}:=\frac{P(\xi_{j}^{*})}{P'(\xi_{j}^{*})}
$
as they appear in the Newton iteration (\ref{newton}) with $\xi=\xi_{j}^{*}$.

For the three distinct pairs of indices $j_{1},j_{2}\in\{1,2,3\}$ with $j_{1}<j_{2}$, we perform the following computations in parallel: 
For $L=1,2,4,\ldots$, we  compute 
approximations of $P(\xi^{*}_{j_{1}})$, $P(\xi^{*}_{j_{2}})$, $P'(\xi^{*}_{j_{1}})$, and 
$P'(\xi^{*}_{j_{2}})$ of quality $L$. We stop doubling $L$ for a particular pair $(j_1,j_2)$ if we can 
either verify that\footnote{The meanings of the conditions (\ref{condition1}) and (\ref{condition2}) will become clear in the proof of Lemma~\ref{newtonsuccess}.} 
\begin{align}\label{condition}
|v_{j_{1}}|,|v_{j_{2}}|>w(I)\quad\text{or}\quad |v_{j_{1}}-v_{j_{2}}|<\frac{w(I)}{4n}
\end{align} 
or that
\begin{align}\label{condition}
|v_{j_{1}}|,|v_{j_{2}}|<2\cdot w(I)\quad\text{and}\quad |v_{j_{1}}-v_{j_{2}}|>\frac{w(I)}{8n}.
\end{align} 

If (\ref{condition}) holds, we discard the pair $(j_{1},j_{2})$. Otherwise (i.e.,~(\ref{condition2}) holds), we compute sufficiently good
approximations of $P(\xi^{*}_{j_{1}})$, $P(\xi^{*}_{j_{2}})$, $P'(\xi^{*}_{j_{1}})$, and $P'(\xi^{*}_{j_{2}})$, such that we can derive an approximation $\tilde{\lambda}_{j_{1},j_{2}}$ of\footnote{The Newton iteration (\ref{newton}) with $\xi=\xi_{j}^{*}$ for a $k$-fold root  produces $\xi_j' = \xi_j^* - k v_j$. Equating $\xi'_{j_1} = \xi'_{j_2}$ yields $-k = \frac{\xi^{*}_{j_{2}}-\xi^{*}_{j_{1}}}{v_{j_{1}}-v_{j_{2}}}$. Then, $\xi'_{j_1}$ and $\xi'_{j_2}$ are given by (\ref{def:lambda}).}
\begin{align}\label{def:lambda}
\lambda_{j_{1},j_{2}}:=\xi_{j_{1}}^{*}+\frac{\xi^{*}_{j_{2}}-\xi^{*}_{j_{1}}}{v_{j_{1}}-v_{j_{2}}}\cdot v_{j_{1}}
\end{align}
with $|\tilde{\lambda}_{j_{1},j_{2}}-\lambda_{j_{1},j_{2}}|\le \frac{1}{32N_{I}}$.\footnote{\label{footnote in Newton}Notice that we can carry out all computations by approximate evaluation of the polynomials $P$ and $P'$ at the points $\xi_{j_{1}}^{*}$ and $\xi_{j_{2}}^{*}$ with fixed point arithmetic with fractional parts of $O(\log n+\log N_{I}+\log M(|P(\xi^{*}_{j_{1}})|^{-1}+\log M(|P(\xi^{*}_{j_{2}})|^{-1})+\log M(w(I)^{-1}))$ bits.}
If $\tilde{\lambda}_{j_1.j_2} \not\in[a,b]$, we discard the pair $(j_1,j_2)$. Otherwise, let $\ell_{j_{1},j_{2}}:=\floor{\frac{\tilde{\lambda}_{j_{1},j_{2}}-a}{w(I)/(4N)}}$. Then $\ell_{j_1,j_2} \in  \{0,\ldots,4N_{I}\}$. We further define
\[
I_{j_{1},j_{2}}:=(a_{j_{1},j_{2}},b_{j_{1},j_{2}}):=(a+\max(0,\ell_{j_{1},j_{2}}-1)\cdot\frac{w(I)}{4N_{I}},a+\min(4N_{I},\ell_{j_{1},j_{2}}+2)\cdot\frac{w(I)}{4N_{I}}).
\]
If $a_{j_{1},j_{2}}=a$, we set $a_{j_{1},j_{2}}^{*}:=a$, and if $b_{j_{1},j_{2}}=b$, we set $b_{j_{1},j_{2}}:=b$. For all other values for $a_{j_{1},j_{2}}$ and $b_{j_{1},j_{2}}$, we use the method from Lemma~\ref{lem:apxmultipointeval} to compute admissible points \begin{equation}\label{Newtonmultipoint2}
a_{j_{1},j_{2}}^{*}\in \multipoint{a_{j_{1},j_{2}}}{\epsilon\cdot \frac{w(I)}{N_I}}
\quad\text{and}\quad
b_{j_{1},j_{2}}^{*}\in \multipoint{b_{j_{1},j_{2}}}{\epsilon\cdot \frac{w(I)}{N_I}}.
\end{equation}
We define $I':=I_{j_{1},j_{2}}^{*}:=(a_{j_{1},j_{2}}^{*},b_{j_{1},j_{2}}^{*})$. Notice that $I'$ is contained in $I$ with width $\frac{w(I)}{8N_{I}}\le w(I')\le \frac{w(I)}{N_{I}}$ and that its endpoints are dyadic numbers (assuming that $a$ and $b$ are dyadic). 

In the final step, we apply the $0$-Test to the intervals $I'_{\ell}:=(a,a_{j_{1},j_{2}}^{*})$ and $I'_{r}:=(b_{j_{1},j_{2}}^{*},b)$.\footnote{For intervals $I'_{\ell}$ or $I'_{r}$, which are empty (i.e.,~$I'_{\ell}=(a,a)$ or $I'_{r}=(b,b)$), nothing needs to be done.} If both tests succeed, and hence, neither interval contains a root of $P$, we return $I'$. If one of the $0$-Tests fails, we discard the pair $(j_1,j_2)$. 

We say that the Newton-Test succeeds if it returns an interval $I'=I_{j_{1},j_{2}}^{*}$ for at least one of the three pairs $j_{1},j_{2}$. If we obtain an interval for more than one pair, we can output either one of them. Otherwise, the test fails.}

We next derive a sufficient condition for the success of the Newton-Test. 

\begin{lemma}\label{newtonsuccess}
Let $I=(a,b)$ be an interval, $N_{I}=2^{2^{n_{I}}}$ with $n_{I}\in\Z_{\ge 1}$, and $J=(c,d)\subseteq I$ be a subinterval of width $w(J)\le 2^{-13}\cdot\frac{w(I)}{N_{I}}$. Suppose that the one-circle region of $\Delta(J)$ contains $k$ roots $z_{1},\ldots,z_{k}$ of $P$, with $k\ge 1$, and that the disk with radius $2^{\log n+10}\cdot N_{I}\cdot w(I)$ and center $m(I)$ contains no further root of $P$. Then, the 
Newton-Test succeeds.
\end{lemma}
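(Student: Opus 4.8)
Write $w:=w(I)$ and $\zeta:=m(J)$; by hypothesis the roots $z_1,\dots,z_k$ all lie within $\tfrac{w(J)}{2}\le 2^{-14}\tfrac{w}{N_I}$ of the real point $\zeta$, while every other root of $P$ lies at distance $\ge R:=2^{\log n+10}N_I w$ from $m(I)$. Hence $I$, and the one-circle region of every sub-interval of $I$ that occurs below, contains no root of $P$ besides the $z_i$. Set $g(x):=P(x)\big/\prod_{i\le k}(x-z_i)$; then $g$ has no root within distance $R/2$ of $I$, and $\tfrac{P'(\xi)}{P(\xi)}=\sum_{i\le k}(\xi-z_i)^{-1}+\tfrac{g'(\xi)}{g(\xi)}$ for $\xi\in I$. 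I will exhibit a pair $(j_1,j_2)$ that is discarded in none of the Steps (2.1)--(2.4) of the Newton-Test; a surviving pair makes the test return \textbf{True} with $I'=I^*_{j_1,j_2}$, and by construction $\tfrac{w}{8N_I}\le w(I')\le\tfrac{w}{N_I}$ and $I'\supseteq J$ contains all roots of $P$ in $I$, which proves the lemma.

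\emph{Choosing a good pair.} Since $\epsilon=2^{-\lceil 5+\log n\rceil}$ forces $|\xi_j^*-\xi_j|\le\lceil n/2\rceil\epsilon w\le\tfrac{w}{32}$ and the nodes $\xi_1,\xi_2,\xi_3$ are pairwise $\ge\tfrac{w}{4}$ apart, at most one of them is within $\tfrac{w}{8}$ of $\zeta$; fix a pair $(j_1,j_2)$, $j_1<j_2$, with $|\xi_{j_1}-\zeta|,|\xi_{j_2}-\zeta|\ge\tfrac{w}{8}$. Then $\tfrac{w}{16}\le|\xi_{j_\ell}^*-z_i|\le 2w$ for $i\le k$, and $|\xi_{j_1}^*-\xi_{j_2}^*|\ge\tfrac{3w}{16}$.

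\emph{The Newton estimate.} For a good node $\xi=\xi_{j_\ell}^*$ set $u:=\xi-\zeta$ and $\delta_i:=z_i-\zeta$. A geometric expansion gives $\sum_{i\le k}(\xi-z_i)^{-1}=\tfrac{k}{u}\bigl(1+\tfrac{s_1}{u}+O(s_2/u^2)\bigr)$, where $s_m:=\tfrac1k\sum_i\delta_i^m$ satisfies $|s_m|\le(2^{-14}w/N_I)^m$, while $\bigl|\tfrac{g'(\xi)}{g(\xi)}\bigr|\le\tfrac{n-k}{R-w/2}=O\!\bigl(1/(N_I w)\bigr)$; combining,
\begin{equation*}
v_{j_\ell}=\frac{P(\xi_{j_\ell}^*)}{P'(\xi_{j_\ell}^*)}=\frac{\xi_{j_\ell}^*-\zeta}{k}\bigl(1+\phi_{j_\ell}\bigr),\qquad |\phi_{j_\ell}|\le\frac{1}{128\,N_I}.
\end{equation*}
In particular $|v_{j_\ell}|<w$, so the left-hand sides of (\ref{condition1}) never exceed $w$ and the pair is not discarded in Step (2.1); its loop exits via (\ref{condition2}). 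A short computation using $|\xi_{j_1}^*-\xi_{j_2}^*|\ge\tfrac{3w}{16}$ shows $|v_{j_1}-v_{j_2}|$ beats the threshold of (\ref{condition4}), and since (\ref{condition3}) is met for $L$ large and $|\tilde v_{j_\ell}-v_{j_\ell}|<\delta_{j_\ell}$, the pair survives Step (2.2). Substituting $v_{j_\ell}=\tfrac{\xi_{j_\ell}^*-\zeta}{k}(1+\phi_{j_\ell})$ into (\ref{def:lambda}) and cancelling, one gets
\begin{equation*}
\lambda_{j_1,j_2}-\zeta=\bigl(1+O(1/N_I)\bigr)\,\frac{(\xi_{j_1}^*-\zeta)(\xi_{j_2}^*-\zeta)(\phi_{j_1}-\phi_{j_2})}{\xi_{j_1}^*-\xi_{j_2}^*},
\end{equation*}
and expanding $\phi_{j_1}-\phi_{j_2}$ the centroid contribution reduces to $O(s_1)$ and the far-root contribution to $O\!\bigl(\tfrac{w^2}{k}\cdot\tfrac{n}{R}\bigr)$, so $|\lambda_{j_1,j_2}-\zeta|=O(w/N_I)$ --- small enough that, together with $|\tilde\lambda_{j_1,j_2}-\lambda_{j_1,j_2}|<\tfrac{w}{32N_I}$, the point $\tilde\lambda_{j_1,j_2}$ stays in $[a,b]$; hence the pair is not discarded in Step (2.3).

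\emph{The $0$-Tests, and the main obstacle.} As $|\tilde\lambda_{j_1,j_2}-\zeta|\ll w/N_I$, the grid cell of width $\tfrac{w}{4N_I}$ holding $\tilde\lambda_{j_1,j_2}$, padded by one cell on each side and with its endpoints moved to admissible points in multipoints of span $\le\tfrac{w}{32N_I}$, produces $I^*_{j_1,j_2}$ which brackets $\zeta$ with margin $\gg w(J)$; thus $J\subseteq I^*_{j_1,j_2}$ and the complementary intervals $(a,a^*_{j_1,j_2})$, $(b^*_{j_1,j_2},b)$ (where non-empty) contain no root of $P$. Each has width $\le w$ and midpoint in $I$, so its one-circle region lies inside the disk of radius $w$ about $m(I)$ and is root-free; by Theorem~\ref{Obreshkoff} with $k=0$ its sign-variation count is $0$, and since $P$ is nonzero at every interval endpoint, the $0$-Test succeeds on both, so Step (2.4) returns \textbf{True}. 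The real work is quantitative: turning each "$\approx$" above into explicit inequalities that simultaneously beat the hard-wired thresholds in (\ref{condition1})--(\ref{condition4}), the $4N_I$-grid resolution, and the one-cell padding --- the constants $2^{-13}$, $2^{\log n+10}$, $\epsilon$ and the $4N_I$-grid are calibrated precisely for this. The one delicate sub-case is a cluster lying within $O(w/N_I)$ of an endpoint of $I$: there one must control the sign and size of $\tilde\lambda_{j_1,j_2}-\zeta$ (falling back on the grid-endpoint cases $a^*_{j_1,j_2}=a$, $b^*_{j_1,j_2}=b$, where one complementary interval is empty) so that $\tilde\lambda_{j_1,j_2}$ does not slip out of $[a,b]$ --- the configuration that the companion Boundary-Test is designed to absorb.
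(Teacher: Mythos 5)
Your proposal takes the same route as the paper: choose a pair $(\xi^*_{j_1},\xi^*_{j_2})$ of sample points far from the cluster, expand $P'/P$ to show the Newton iterates from both land within $O(w(I)/N_I)$ of $\zeta=m(J)$, and conclude that $\tilde\lambda_{j_1,j_2}$ falls within a cell or so of $\zeta$, so that $I^*_{j_1,j_2}\supseteq J$. Your rewriting $v_{j_\ell}=\frac{\xi^*_{j_\ell}-\zeta}{k}(1+\phi_{j_\ell})$ with $|\phi_{j_\ell}|\le\frac{1}{128N_I}$ is an algebraic variant of the paper's $\frac{1}{k}\cdot\frac{(\xi-m)P'(\xi)}{P(\xi)}=1+\epsilon$, $|\epsilon|<\frac{1}{256N_I}$. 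But it is a plan, not a proof: the lower bound on $|v_{j_1}-v_{j_2}|$ that must clear the hard-wired threshold $w(I)/n$ in condition (\ref{condition4}) is announced as ``a short computation'' but never carried out, and the estimate $|\lambda_{j_1,j_2}-\zeta|=O(w(I)/N_I)$ is asserted without the explicit constant that has to beat the $w(I)/(4N_I)$ grid resolution in Step (2.3). Those explicit inequalities are the substance of the lemma; deferring them leaves the claim unverified.

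The boundary sub-case you flag is a genuine obstruction that you do not close. If $\zeta$ lies within $\sim\frac{3w(I)}{16N_I}$ of an endpoint of $I$, the bound $|\tilde\lambda_{j_1,j_2}-\zeta|<\frac{3w(I)}{16N_I}$ does not rule out $\tilde\lambda_{j_1,j_2}\notin[a,b]$, which discards the pair in Step (2.3); and since in this configuration all three $\xi_j$ are far from $\zeta$, all three pairs produce essentially the same $\tilde\lambda$ and can be discarded simultaneously, so the Newton-Test returns \textbf{False}. Deferring to the Boundary-Test does not rescue the lemma, whose conclusion asserts specifically that the Newton-Test succeeds. (The paper's own proof shares the omission: it derives $|m-\tilde\lambda_{j_1,j_2}|<\frac{3w(I)}{16N_I}$ and passes immediately to ``$J\subseteq I_{j_1,j_2}$'' without checking $\tilde\lambda_{j_1,j_2}\in[a,b]$, a precondition for Step (2.3) to define $I_{j_1,j_2}$ at all.) To close this gap one must either bound the overshoot beyond $a$ or $b$ by a quantity strictly smaller than $\zeta$'s distance to that endpoint, or weaken the conclusion to ``the Newton-Test or the Boundary-Test succeeds.''
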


\begin{proof}
We first show that, for at least two of the three points $\xi_{j}^{*}$, $j=1,2,3$, the inequality $\abs{ m(J)-(\xi_{j}^{*}-k\cdot\frac{P(\xi_{j}^{*})}{P'(\xi_{j}^{*})})}<\frac{w(I)}{128N_{I}}$ holds: There exist at least two points (say $\xi:=\xi_{j_{1}}^{*}$ and $\bar{\xi}:=\xi_{j_{2}}^{*}$ with $j_{1}<j_{2}$) whose distances to any root from $z_{1},\ldots,z_{k}$ are larger than $\frac{|\xi-\bar{\xi}|}{2}-\frac{w(J)}{2}>\frac{3}{32}w(I)-\frac{1}{2}w(J)\ge 512N_Iw(J)$. In addition, the distances to any of the remaining roots $z_{k+1},\ldots,z_{n}$ are larger than $2^{10}n N_{I}w(I)-w(I)\ge 512\cdot n N_{I}w(I)$. Hence, with $m:=m(J)$, it follows that
\begin{align} \nonumber
\left|\frac{1}{k}\cdot\frac{(\xi-m)P'(\xi)}{P(\xi)}-1\right|&=\left|\frac{1}{k}\sum_{i=1}^k\frac{\xi-m}{\xi-z_i}+\frac{1}{k}\sum_{i>k} \frac{\xi-m}{\xi-z_i}-1\right|=\frac{1}{k}\left|\sum_{i=1}^k\frac{z_i-m}{\xi-z_i}+\sum_{i>k} \frac{\xi-m}{\xi-z_i}\right|\\ \nonumber
&\le \frac{1}{k}\sum_{i=1}^k\frac{|z_i-m|}{|\xi-z_i|}+\frac{1}{k}\sum_{i>k} \frac{|\xi-m|}{|\xi-z_i|}<\frac{w(J)}{512nN_{I}w(J)}+\frac{(n-k)\cdot w(I)}{512 knN_{I}w(I)}\\ \nonumber
&\le \frac{1}{256N_{I}},
\end{align}
where we used that $\frac{P'(\xi)}{P(\xi)}=\sum_{i=1}^n (\xi-z_i)^{-1}$.
This yields the existence of an $\epsilon\in\R$ with $|\epsilon|<\frac{1}{256N_I}\le \frac{1}{1024}$ and $\frac{1}{k}\cdot\frac{(\xi-m)P'(\xi)}{P(\xi)}=1+\epsilon$. We can now derive the following bound on the distance between the approximation $\xi'=\xi-k\cdot\frac{P(\xi)}{P'(\xi)}$ obtained by the Newton iteration and $m$:
\begin{align}\nonumber
\left|m-\xi'\right|=|m-\xi|\cdot \left|1-\frac{1}{\frac{1}{k}\cdot\frac{(\xi-m)P'(\xi)}{P(\xi)}}\right|=|m-\xi|\cdot \left|1-\frac{1}{1+\epsilon}\right|=\left|\frac{\epsilon\cdot (m-\xi)}{1+\epsilon}\right|<\frac{w(I)}{128N_{I}}.
\end{align}
In a completely analogous manner, we show that $\abs{\bar{\xi}-k\cdot \frac{P(\bar{\xi})}{P'(\bar{\xi})}-m}<\frac{w(I)}{128N_I}$.

Let $v_{j_{1}}=\frac{P(\xi)}{P'(\xi)}$ and $v_{j_{2}}=\frac{P(\bar{\xi})}{P'(\bar{\xi})}$ be defined as in the Newton-Test. 
Then, from the above considerations, it follows that $|(\xi-k\cdot v_{j_{1}})-(\bar{\xi}-k\cdot v_{j_{2}})|<\frac{w(I)}{64N_{I}}$. Hence, since $|\xi-\bar{\xi}|>\frac{3w(I)}{16}$ and $1\le k\le n$, we must have $|v_{j_{1}}-v_{j_{2}}|>\frac{w(I)}{8k}$. Furthermore, it holds that $|k\cdot v_{j_{1}}|<w(I)$ since, otherwise, the point $\xi-k\cdot v_{j_{1}}$ is not contained in $(\xi-w(I),\xi+w(I))$, which 
contradicts the fact that $|\xi-k\cdot v_{j_{1}}-m|<\frac{w(I)}{128N_{I})}$ and $m\in I$. An analogous argument yields that $|k\cdot v_{j_{2}}|<w(I)$. \MS{Hence, none of the two inequalities in (\ref{condition1}) are fulfilled, whereas the inequality in (\ref{condition4}) must hold.} In the next step, we show that $\lambda:=\lambda_{j_{1},j_{2}}$ as defined in (\ref{def:lambda}) is actually a good approximation of $\xi-k\cdot v_{j_{1}}$: There exist $\epsilon$ and $\bar{\epsilon}$, both of magnitude less than $\frac{w(I)}{128N_{I}}$, such that $\xi-k\cdot v_{j_{1}}=m+\epsilon$ and $\bar{\xi}-k\cdot v_{j_{2}}=m+\bar{\epsilon}$. This yields
\[
\lambda=\xi+\frac{\bar{\xi}-\xi}{v_{j_{1}}-v_{j_{2}}}\cdot v_{j_{1}}=\xi+\left(\frac{\bar{\epsilon}-\epsilon+k\cdot(v_{j_{2}}-v_{j_{1}})}{v_{j_{1}}-v_{j_{2}}}\right)\cdot v_{j_1}=\xi+k\cdot v_{j_{1}}+\frac{(\bar{\epsilon}-\epsilon)v_{j_{1}}}{v_{j_{1}}-v_{j_{2}}}.
\]
The absolute value of the fraction on the right side is smaller than $\frac{w(I)\cdot |v_{j_{1}}|}{64N_{I}}\cdot\frac{8k}{w(I)} \le \frac{w(I)}{8N_{I}}$, and thus $|\xi-k\cdot v_{j_{1}}-\lambda|<\frac{w(I)}{8N_{I}}$.
\MS{Hence, with $\tilde{\lambda}_{j_1,j_2}$ as defined in (\ref{def:lambdaapx}), we have} 
$$|m-\tilde{\lambda}_{j_{1},j_{2}}|\le |m-(\xi-k\cdot v_{j_{1}})|+|(\xi-k\cdot v_{j_{1}})-\lambda|+|\lambda-\tilde{\lambda}_{j_{1},j_{2}}|<\frac{w(I)}{128N_{I}}+\frac{w(I)}{8N_{I}}+\frac{w(I)}{32N_{I}}<\frac{3w(I)}{16N_{I}}.$$
From the definition of the interval $I_{j_{1},j_{2}}$, we conclude that $J\subseteq I_{j_{1},j_{2}}$. Furthermore, each endpoint of $I_{j_{1},j_{2}}$ is either an endpoint of $I$, or its distance to both endpoints of $J$ is larger than $\frac{w(I)}{16N_{I}}-\frac{w(J)}{2}>\frac{w(I)}{32N_{I}}>\frac{w(J)}{2}$.
This shows that the interval $I'=I_{j_{1},j_{2}}^{*}$ contains $J$. Hence, the Newton-Test succeeds since the one-circle regions of $I'_{\ell}$ and $I'_{r}$ contain no roots of $P$.
\end{proof}

The Newton-Test is our main tool to speed up convergence to clusters of roots without actually knowing that there exists a cluster. 
However, there is one special case that has to be considered separately: Suppose that there exists a cluster $\mathcal{C}\subseteq\Delta(I)$ of roots whose center is close to one of the endpoints of $I$. If, in addition, $\mathcal{C}$ is not well separated from other roots that are located outside of $\Delta(I)$, then the above lemma does not apply.
For this reason, we
introduce the \emph{Boundary-Test}, which checks for clusters near the endpoints of an interval $I$. Its input is the same as for the Newton-Test. In case of success, it either returns an interval $I'\subseteq I$, with $\frac{w(I)}{4N_{I}}\le w(I')\le \frac{w(I)}{N_{I}}$, which contains all real roots that are contained in $I$, or it proves that $I$ contains no root.\medskip

\begin{mdframed}[frametitle={{\bf Algorithm: Boundary-Test}}]
{\color{black}

\noindent{\bf Input:}
An Interval $I=(a,b)\subset \R$, an integer $N_I=2^{2^{n_I}}$ with $n_I\in\N$, and a polynomial $P\in\R[x]$
as defined in (\ref{def:P}) \smallskip

\noindent{\bf Output:}
True or False. In case of True, it also returns an interval $I'\subset I$, with $\frac{w(I)}{8N_I}\le w(I')\le\frac{w(I)}{N_I}$, that contains all real roots of $P$ that are contained in $I$.\smallskip

\begin{itemize}
\item[(1)] Let $m_{\ell}:=a+\frac{w(I)}{2N_{I}}$, $m_{r}:=b-\frac{w(I)}{2N_{I}}$, and $\epsilon := 2^{-\ceil{2 + \log n}}$. Use algorithm {\bf Admissible Point} to compute admissible points 
\begin{equation}\label{Boundarymultipoint}
m_{\ell}^{*}\in  \multipoint{m_{\ell}}{\epsilon \cdot\frac{w(I)}{N_I}} \quad\text{and}\quad m_{r}^{*}\in \multipoint{m_{r}}{\epsilon\cdot\frac{w(I)}{N_I}}, 
\end{equation}
\item[(2)] If the $0$-Test returns true for the interval $I_{\ell}:=(m_{\ell}^*,b)$, then return $(a,m_{\ell}^ *)$.
\item[(3)] If the $0$-Test returns true for the interval $I_{r}:=(a,m_{r}^*)$, then return $(m_{r}^ *,b)$.
\item[(4)] return False
\end{itemize}
}
\end{mdframed}
\bigskip
\ignore{
\hrule \nopagebreak \medskip
\noindent{\bf Algorithm Boundary-Test:}
Let $m_{\ell}:=a+\frac{w(I)}{2N_{I}}$ and $m_{r}:=b-\frac{w(I)}{2N_{I}}$, and let $\epsilon := 2^{-\ceil{2 + \log n}}$. Compute 
admissible points 
\begin{equation}\label{Boundarymultipoint}
m_{\ell}^{*}\in  \multipoint{m_{\ell}}{\epsilon \cdot\frac{w(I)}{N_I}} \quad\text{and}\quad m_{r}^{*}\in \multipoint{m_{r}}{\epsilon\cdot\frac{w(I)}{N_I}}, 
\end{equation}
and run the $0$-Test for the intervals $I_{\ell}:=(m_{\ell},b)$ and $I_{r}:=(a,m_{r})$. If both $0$-Tests succeed, $I$ contains no root, and thus, we return this result.
If it succeeds only for $I_{\ell}$, then $I':=(a,m_{\ell})$ contains all roots that are contained in $I$, and we return $I'$. If it succeeds only for $I_{r}$, we return $I'=(m_{r},b)$. Notice that from our definition of $m_{\ell}^{*}$ and $m_{r}^{*}$, it follows that both intervals $I_{\ell}$ and $I_{r}$ have width in between $\frac{w(I)}{4N_{I}}$ and $\frac{w(I)}{N_{I}}$. If neither $0$-Test succeeds, the Boundary-Test fails. \nopagebreak
\medskip
\hrule\bigskip
}
Clearly, if all roots contained in $\Delta(I)$ have distance less than $\frac{w(I)}{4N_{I}}$ to one of the two endpoints of $I$, the Boundary-Test for $I$ is successful, as the one-circle region of either $I_{\ell}$ or $I_{r}$ contains no root of $P$.

\section{Complexity Analysis}\label{sec:analysis}

We bound the size of the subdivision tree in Section~\ref{sec:size-subdivision-tree} and the bit complexity in Section~\ref{sec:bit-complexity}.

\subsection{Size of the Subdivision Tree}\label{sec:size-subdivision-tree}

We use $\mathcal{T}$ to denote the subdivision forest which is induced by our algorithm $\AD$. More 
precisely, in this forest, we have one tree for each interval $\mathcal{I}_{k}$, with $k=0,\ldots,2\log\Gamma+2$,  as defined in (\ref{def:startintervals}). Furthermore, an interval $I'$ is a child of some $I\in \mathcal{T}$ if and only if it 
has been created by our algorithm when processing $I$. We have $\frac{w(I)}{8N_{I}}\le w(I')\le \frac{w(I)}{N_{I}}$ in a quadratic step and $\frac{1}{4}\cdot w(I)\le w(I')\le \frac{3}{4}\cdot w(I)$ in a linear step. An interval in $\mathcal{T}$ has two, one, or zero children. Intervals with zero children are called \emph{terminal}. Those are precisely the intervals for which either the $0$-Test or the $1$-Test is successful. 
Since each interval $I\neq \mathcal{I}$ with $\var(P,I)\le 1$ is terminal, it follows that, for each non-terminal interval $I$, the one circle region $\Delta(I)$ contains at least one root and the 
two-circle region of $I$ contains at 
least two roots of $P$. Thus, all non-terminal nodes have width larger than or equal to $\sigma_P/2$.

In order to estimate the size of $\mathcal{T}$, we estimate for each $\mathcal{I}_k$ the size of the tree $\mathcal{T}_k$ rooted at it.  If $\mathcal{I}_k$ is terminal, $\mathcal{T}_k$ consists only of the root. So, assume that $\mathcal{I}_k$ is non-terminal. Call a non-terminal $I\in\mathcal{T}_k$ \emph{splitting} if either $I$ is the root of $\mathcal{T}_k$, or $\mathcal{M}(I')\neq \mathcal{M}(I)$ for all children $I'$ of $I$ (recall that $\mathcal{M}(I)$ denotes the set of roots of $P$ contained in the one circle region $\Delta(I)$ of $I$), or if all children of $I$ are terminal. By the argument in the preceding paragraph, $\mathcal{M}(I) \not= \emptyset$ for all splitting nodes. A splitting node $I$ is called \emph{strongly splitting} if there exists a root $z \in\mathcal{M}(I)$ that is not contained in any of the one-circle regions of its children.
The number of splitting nodes in $\mathcal{T}_k$ is bounded by 
$2\abs{\mathcal{M}_k}$ since there are at most $\abs{\mathcal{M}_k}$ splitting nodes all of whose children are terminal, since at most $\abs{\mathcal{M}_k}-1$ splitting nodes all of whose children have a smaller set of roots in the one-circle region of the associated interval, and since there is one root. 
For any splitting node, consider the path of non-splitting nodes ending in it, and let $s_{\max}$ be the maximal length of such a path (including the splitting node at which the path ends and excluding the splitting node at which the path starts). Then, the number of non-terminal nodes in $\mathcal{T}_k$ is bounded by $1 + s_{\max} \cdot (2 \abs{\mathcal{M}_k} - 1)$, and the total number of non-terminal nodes in the subdivision forest is $O(\log \Gamma + n \cdot s_{\max})$. Hence, the same bound also applies to the number of all nodes in $\mathcal{T}$.

The remainder of this section is concerned with proving that 
\[s_{\max}=O(\log n+\log(\Gamma+\log M(\sigma_P^{-1}))).\] 
The proof consists of three parts. 
\begin{compactenum}[(1)]
\item We first establish lower and upper bounds for the width of \emph{all} (i.e.,~also for terminal) intervals $I\in\mathcal{T}$ and the corresponding numbers $N_{I}$ (Lemma~\ref{lemma:bounds1}). 
\item We then study an abstract version of how interval sizes and interval levels develop in quadratic interval refinement (Lemma~\ref{lemma:sequence}). 
\item In a third step, we then derive the bound on $s_{\max}$  (Lemma~\ref{lemma:bound on smax}). 
\end{compactenum}

\begin{lemma}\label{lemma:bounds1}
For each interval $I\in\mathcal{T}\backslash{\mathcal{I}}$, we have $$2^{\Gamma}\ge w(I)\ge 2^{-4\Gamma-6}\sigma_{P}^{5}\quad\text{and}\quad4\le N_I\le 2^{4(\Gamma+1)}\cdot \sigma_P^{-4}.$$
\end{lemma}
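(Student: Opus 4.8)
The plan is to prove the four inequalities by induction on the depth of $I$ in the subdivision forest $\mathcal{T}$, using the two structural facts already established: a linear step replaces $I$ by intervals $I'$ with $\frac{1}{4}w(I)\le w(I')\le\frac{3}{4}w(I)$ and $N_{I'}=\max(4,\sqrt{N_I})$, while a quadratic step replaces $I$ by $I'$ with $\frac{w(I)}{8N_I}\le w(I')\le\frac{w(I)}{N_I}$ and $N_{I'}=N_I^2$. The root intervals $\mathcal{I}_k$ from the Initialization have width at most $2\cdot 2^{\Gamma}$ (in fact the whole of $\mathcal{I}$ has width $2^{\Gamma+1}$, and each $\mathcal{I}_k$ is shorter), $N_{\mathcal{I}_k}=4$, so the upper bound on $w(I)$ and the lower bound $N_I\ge 4$ are immediate for the roots; the lower bound $N_I\ge 4$ is preserved by both update rules since $\sqrt{4}=2$ is clamped up to $4$ and $4^2=16\ge 4$. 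The upper bound $w(I)\le 2^{\Gamma}$ for $I\neq\mathcal{I}$ follows because the first step out of any root already shrinks the width by a factor $\ge 8N_I\ge 32$ (quadratic) or $\ge 4/3$ (linear), so after one step we are safely below $2^{\Gamma}$, and width never increases.

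The heart of the argument is the \emph{lower} bound on $w(I)$ and the \emph{upper} bound on $N_I$, which must be proven together since they feed into each other. The key observation is the one recorded just before the lemma: every \emph{non-terminal} interval has width $\ge\sigma_P/2$, because its two-circle region contains at least two roots of $P$ and the two-circle region of an interval of width $w$ has diameter $O(w)$ (more precisely, if $w(I)<\sigma_P/2$ then $\var(P,I)\le 1$ and $I$ is terminal). Hence the width can only drop below $\sigma_P/2$ in the \emph{last} step along any branch, i.e. when passing from a non-terminal parent $I$ (with $w(I)\ge\sigma_P/2$) to a terminal child. In a linear step the child has width $\ge\frac{1}{4}w(I)\ge\sigma_P/8$. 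In a quadratic step the child has width $\ge\frac{w(I)}{8N_I}$, so I need an upper bound on $N_I$ valid at such a parent. I would show by induction that for every $I\in\mathcal{T}$ one has $N_I\le 2^{4(\Gamma+1)}\sigma_P^{-4}$: the bound holds at the roots since $4\le 2^{4(\Gamma+1)}\sigma_P^{-4}$ (as $\sigma_P\le 2^{\Gamma+1}$ and $\Gamma\ge 1$). A quadratic step is the only one that increases $N$, via squaring, and a quadratic step is only ever performed on a non-terminal interval, which therefore has $w(I)\ge\sigma_P/2$; combined with the running lower bound on widths this caps how large $N_I$ can have grown. Concretely, writing $N_I=2^{2^{n_I}}$, the level $n_I$ increases by one at a quadratic step and roughly halves at a linear step; since the width shrinks by at least $N_I$ at each quadratic step and $w(I)$ can never go below (essentially) $\sigma_P/2$ while $I$ is non-terminal, one gets $\prod N_{I_j}\lesssim 2^{\Gamma}/\sigma_P$ over the quadratic steps on a branch, whence any individual $N_I$ on the branch is at most about $2^{\Gamma}\sigma_P^{-1}$; squaring once more for the child of the last non-terminal node and absorbing constants yields $N_I\le 2^{4(\Gamma+1)}\sigma_P^{-4}$. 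Plugging this into $w(I)\ge\frac{w(I_{\text{parent}})}{8N_I}\ge\frac{\sigma_P/2}{8\cdot 2^{4(\Gamma+1)}\sigma_P^{-4}}=2^{-4\Gamma-8}\sigma_P^{5}$, and being slightly more careful to get the stated constant $2^{-4\Gamma-6}$, gives the lower bound on $w(I)$.

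The main obstacle is the bookkeeping in the coupled induction: one cannot bound $N_I$ without already controlling widths along the branch, and one cannot get the sharp width lower bound without the $N_I$ bound, so the induction hypothesis has to carry \emph{both} statements simultaneously (together with the invariant ``$w(I)\ge\sigma_P/2$ unless $I$ is terminal''), and one must check that each of the three move types — linear split, quadratic step, and the transition into a terminal node — preserves all of them. A secondary subtlety is that the Newton-Test and Boundary-Test can return an interval $I'$ with $w(I')$ as small as $\frac{w(I)}{8N_I}$ rather than exactly $\frac{w(I)}{N_I}$, so the factor $8$ must be tracked; it only costs additive constants in the exponents, which is why the stated bounds carry the slack $2^{-4\Gamma-6}$ and the fourth power $\sigma_P^{-4}$ rather than tight values. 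I would organize the write-up as: (i) trivial base case and the easy upper bound $w(I)\le 2^{\Gamma}$ and lower bound $N_I\ge 4$; (ii) the invariant that non-terminal intervals have width $\ge\sigma_P/2$ (cited from the paragraph above); (iii) the inductive proof of $N_I\le 2^{4(\Gamma+1)}\sigma_P^{-4}$ using that squaring only happens at non-terminal nodes; (iv) deduce the width lower bound by a one-step argument from the last non-terminal ancestor.
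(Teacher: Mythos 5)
The key facts you identify are the right ones — that non-terminal nodes have width $\ge\sigma_P/2$ (because their two-circle region contains at least two roots), that quadratic steps square $N$ and shrink width by a factor at least $N$, and that the root intervals have bounded width and $N=4$. So the proposal is broadly on the right track. But the execution is muddled in two ways, one structural and one quantitative.

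Structurally, you announce an induction carrying several coupled invariants, but the argument you actually sketch (``$\prod N_{I_j}\lesssim 2^{\Gamma}/\sigma_P$ over the quadratic steps, whence any individual $N_I$ is at most about $2^\Gamma\sigma_P^{-1}$'') is not inductive at all — it is a direct telescoping argument along a root-to-leaf branch. The paper's own proof is also non-inductive, and simpler than either of your two framings: since $N_I=2^{2^{n_I}}$, for $N_I>16$ one can pick \emph{two} specific ancestors $J\supseteq J'\supseteq I$ at levels $n_I-2$ and $n_I-1$, so $N_J=N_I^{1/4}$ and $N_{J'}=N_I^{1/2}$, with a quadratic step out of each of $J$ and $J'$ along the branch. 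Then $w(J')\le w(J)/N_J\le 2^{\Gamma}N_I^{-1/4}$, and $J'$ is non-terminal so $w(J')\ge\sigma_P/2$, giving $N_I^{1/4}\le 2^{\Gamma+1}/\sigma_P$ directly. The lower bound on $w(I)$ then follows in one step from the parent $K$ of $I$ via $w(I)\ge w(K)/(8N_K)$, $w(K)\ge\sigma_P/2$, and the $N$-bound applied to $K$. No joint induction and no bookkeeping of a running product is needed.

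Quantitatively, your extraction of the individual $N_I$ bound from the product is where the argument goes soft. You claim ``squaring once more'' converts $N_J\lesssim 2^{\Gamma}/\sigma_P$ into the stated bound with $\sigma_P^{-4}$, but a single squaring gives only $\sigma_P^{-2}$; the exponent $4$ comes precisely from going back two quadratic levels (the $J$ and $J'$ above), i.e.\ squaring \emph{twice}. The phrase ``absorbing constants'' cannot turn $\sigma_P^{-2}$ into $\sigma_P^{-4}$: while $\sigma_P\le 2^{\Gamma+1}$ does let you trade one power of $\sigma_P^{-1}$ for a $2^{\Gamma+1}$, that is a change in the $\Gamma$-dependence, not a constant. (To be fair, your $\sigma_P^{-2}$ bound is actually \emph{stronger} than the lemma's when $\sigma_P<2^{\Gamma}$; the point is that your stated derivation is not airtight, and you should either carry it out to get the $\sigma_P^{-2}$ bound explicitly and then observe it implies the claimed one, or pick the two ancestors as the paper does.) There is also a small sign slip: you write that a quadratic step shrinks the width ``by a factor $\ge 8N_I$''; the guarantee $\frac{w(I)}{8N_I}\le w(I')\le\frac{w(I)}{N_I}$ says the shrink factor lies in $[N_I,8N_I]$, so the lower bound is $N_I$, not $8N_I$ (fortunately $N_I\ge 4$ still suffices for what you use it for).
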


\begin{proof}
The inequalities $2^{\Gamma}\ge w(I)$ and $N_I\ge 4$ are trivial. For $N_I>16$, there exist 
two ancestors (not necessarily parent and grandparent) $J'$ and $J$ of $I$, with $I\subseteq J'\subseteq J$, such that $w(I)\le w(J')/N_{J'}$ and $w(J')\le w(J)/N_{J}$, and $N_{I}=N_{J'}^{2}=N_{J}^{4}$. Hence, it follows that $J'$ is a non-terminal interval of width less than or equal to $2^{\Gamma}/N_{J}=2^{\Gamma}N_{I}^{-1/4}$.
Since each non-terminal interval has width $\sigma_{P}/2$ or more, the upper bound on $N_{I}$ follows. For the claim on the width of $I$, we remark that the parent interval $K$ of $I$ has width $\sigma_{P}/2$ or more and that $\frac{w(K)}{8N_{J}}\le w(I)\le \frac{3}{4}w(K)$. 
\end{proof}

We come to the evolution of interval sizes and levels in quadratic interval refinement. The following Lemma has been introduced in~\cite[Lemma 4]{NewDsc} in a slightly weaker form:

\begin{lemma}\label{lemma:sequence}
Let $w$, $w'\in\R^+$ be two positive reals with $w>w'$, and let $m\in\mathbb{N}_{\ge 1}$ be a positive integer. We recursively define the sequence $(s_i)_{i\in\N_{\ge 1}}:=((x_i,n_i))_{i\in\N_{\ge 1}}$ as follows: Let $s_1=(x_{1},n_1):=(w,m)$, and
$$s_{i+1}=\left(x_{i+1},n_{i+1}\right):=\begin{cases}
  \left(\epsilon_{i}\cdot x_{i},n_{i}+1\right)\text{ with an } \epsilon_{i}\in [0,\frac{1}{N_{i}}],   & \text{if }\frac{x_{i}}{N_{i}}\ge w'\\
  \left(\delta_{i}\cdot x_{i},\max(1,n_{i}-1)\right)\text{ with a } \delta_{i}\in [0,\frac{3}{4}],  & \text{if }\frac{x_{i}}{N_{i}}<w',
\end{cases}
$$
where $N_i:=2^{2^{n_{i}}}$ and $i\ge 1$. Then, the smallest index $i_0$ with $x_{i_0}\le w'$ is bounded by $8(n_1+\log\log \max(4,\frac{w}{w'}))$.
\end{lemma}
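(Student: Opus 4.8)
The plan is to analyze the sequence $(s_i) = ((x_i,n_i))$ by separating the ``quadratic phase'' (steps where $x_i/N_i \ge w'$, so the level increases) from the ``linear/recovery phase'' (steps where $x_i/N_i < w'$, so the level drops). The key quantitative fact driving the argument is that in a quadratic step the width shrinks by a factor of at least $N_i = 2^{2^{n_i}}$, so that iterating quadratic steps makes $n_i$ grow and hence makes the shrink factor grow doubly exponentially; this is what produces the $\log\log$ term. Concretely, I would first record the trivial monotonicity facts: $x_{i+1} \le x_i$ always (since $\epsilon_i \le 1/N_i \le 1/4$ and $\delta_i \le 3/4$), and $n_i \ge 1$ always. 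Let $r := \max(4, w/w')$, so $\log\log r \ge 1$ and we must show $i_0 \le 8(n_1 + \log\log r)$.

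The main step is to bound the number of \emph{quadratic} steps before termination. Suppose steps $i, i+1, \dots, i+t-1$ are all quadratic (so each increments the level). Then $n_{i+j} = n_i + j$ for $0 \le j \le t$, and
\[
x_{i+t} \le x_i \cdot \prod_{j=0}^{t-1} N_{i+j}^{-1} = x_i \cdot 2^{-\sum_{j=0}^{t-1} 2^{\,n_i+j}} = x_i \cdot 2^{-2^{n_i}(2^t - 1)} \le x_i \cdot 2^{-2^{t-1}},
\]
using $n_i \ge 1$. Since $x_i \le w \le r w'$, once $2^{t-1} \ge \log r$, i.e. $t \ge 1 + \log\log r$, we get $x_{i+t} \le w' $ and the process has terminated (or the run of quadratic steps was already broken earlier, but then the level was higher and the bound is only better). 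So any maximal run of consecutive quadratic steps, except possibly the last, has length at most $O(\log\log r)$ — more precisely, within $1 + \log\log r$ quadratic steps of its start, either the run ends with a linear step or we have reached $x \le w'$.

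Now I would do the accounting over the whole sequence. Let $q$ be the total number of quadratic steps and $\ell$ the total number of linear steps performed before index $i_0$, so $i_0 = q + \ell + 1$. Each linear step decreases the level by at most $1$ (and never below $1$), and each quadratic step increases it by exactly $1$; since the level starts at $n_1$ and stays $\ge 1$, the number of linear steps satisfies $\ell \le n_1 - 1 + q$ — but I need a bound the other way too. The clean way: consider the ``potential'' argument. Partition the quadratic steps into maximal runs; by the previous paragraph every run that is followed by a linear step has length $\le 1 + \log\log r$, and there is at most one final run (length $\le 1 + \log\log r$ as well, since after that many quadratic steps from level $\ge 1$ we are done). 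The number of maximal quadratic runs is at most $1 + \ell$ (each run is either final or immediately followed by a linear step), so $q \le (1+\ell)(1 + \log\log r)$. Combined with $\ell \le (n_1 - 1) + q$, substitute to solve for $q$ and $\ell$: from $\ell \le n_1 - 1 + q \le n_1 - 1 + (1+\ell)(1+\log\log r)$ one isolates $\ell$, and the doubly-exponential gain in the run-length bound makes the coefficients work out so that $q + \ell = O(n_1 + \log\log r)$, and a careful choice of constants gives $i_0 = q + \ell + 1 \le 8(n_1 + \log\log r)$.

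The main obstacle I anticipate is getting the constants exactly right in this last accounting step — the interplay $\ell \le n_1 - 1 + q$ and $q \le (1+\ell)(1+\log\log r)$ is a linear recursion in disguise, and one has to be careful that the ``$+1$''s and the distinction between ``$\le w'$'' being reached inside a run versus at a run boundary don't blow up the constant past $8$. A cleaner alternative that avoids solving the coupled inequalities is to track the level $n_i$ directly as a Lyapunov-type quantity: argue that over any window of $8$ consecutive steps, either we terminate, or $x$ decreases by a factor of at least $2$ relative to $w'$ after accounting for level changes — but I expect the run-decomposition argument above to be the most transparent, modulo the bookkeeping. (The cited reference~\cite{NewDsc}, Lemma 4, presumably establishes essentially this bound in a slightly weaker form, so the structure of the argument is known; the work here is re-deriving it with the stated constant and for the present choice of $N_i = 2^{2^{n_i}}$.)
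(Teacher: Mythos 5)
Your run-decomposition idea captures the right high-level picture and your Claim~1-style bound on a single run of quadratic steps is correct, but the accounting at the end does not close, and the reason is structural, not just a matter of constants. You derive $q\le(1+\ell)(1+\log\log r)$ and $\ell\le n_1-1+q$, and then hope to ``isolate $\ell$''; but with $c:=1+\log\log r\ge 2$ the substitution gives $\ell\le n_1-1+c+c\ell$, i.e.\ $\ell(1-c)\le n_1-1+c$, which is vacuous because $1-c<0$. The doubly-exponential gain only shortens each run to $O(\log\log r)$; it does not make the coefficient in front of $\ell$ drop below $1$, so the coupled linear system has no finite solution. In other words, your bound on the length of a quadratic run is applied uniformly, which would allow an unbounded alternation such as $\mathrm{SWSWSW\ldots}$, where each strong step raises the level by $1$ and each weak step lowers it by $1$: both of your inequalities are satisfied, yet $i_0$ is unbounded. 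Also, the inequality $\ell\le n_1-1+q$ is not quite valid as stated: a weak step at level $n_i=1$ does \emph{not} decrease the level (the $\max(1,\cdot)$ clamps it), so you cannot charge every linear step against the level potential without a separate argument.

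What the paper proves, and what you are missing, is that after the \emph{first} weak index both of those escapes are impossible. Its Claim~2 shows that in the suffix starting at the first weak index there is no $\mathrm{SS}$ and no $\mathrm{SWSWS}$: if $i$ is weak and $i+1$ is strong, then $N_{i+2}\ge N_i$ and $x_{i+2}\le x_i$ give $x_{i+2}/N_{i+2}\le x_i/N_i<w'$, so $i+2$ is weak; a similar computation excludes $\mathrm{SWSWS}$. So beyond the initial prefix of strong steps (whose length is indeed controlled by $\log\log(w/w')$, essentially your Claim~1), strong steps are isolated, and the suffix splits into chunks $\mathrm{W}$, $\mathrm{WSW}$, $\mathrm{WSWSW}$ of length at most $5$, each of which has \emph{strictly more} weak than strong steps. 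Combined with its Claim~3 (a weak step at level $1$ more than $6$ steps before $i_0$ forces termination within $5$ further steps, so weak indices outside the final tail have $n_i\ge 2$), the level becomes a genuine strictly decreasing potential across chunks, giving at most $n_1+k'-2$ chunks and hence the linear-in-$n_1$ bound. You would need to discover both of these observations (no $\mathrm{SS}$ after the first weak step, and the level-$1$ tail argument) before your decomposition can be made to work; as written the argument has a genuine gap.
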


\begin{proof}
The proof is similar to the proof given in~\cite{NewDsc}. However, there are subtle differences, and hence, we give the full proof. We call an index $i$ \emph{strong} (\textbf{S}) if $x_i/N_i\ge w'$ and \emph{weak} (\textbf{W}), otherwise. If $w/4<w'$, then each $i\ge 1$ is weak, and thus, $i_0\le 6$ because of $(3/4)^{5}<1/4$. 

So assume $w/4\ge w'$ and let $k'$ be the smallest weak index. We split the sequence $1,2,\ldots, i_0$ into three parts, namely (1) the prefix $1,\ldots,k'-1$ of strong indices, (2) the subsequence $k',\ldots,i_0-6$ starting with the first weak index and containing all indices but the last 6, and (3) the tail $i_0-5,\ldots,i_0$. The length of the tail is 6. 

We will show that the length of the prefix of strong indices is bounded by $k\in\mathbb{N}_{\ge 1}$ where $k$ is the unique integer with $$2^{-2^{k+1}}<w'/w\le 2^{-2^{k}}.$$ Then, $k\le\log\log\frac{w}{w'}$. Intuitively, this holds since we square $N_i$ in each strong step, and hence, after $O(\log \log w/w')$ strong steps we reach a situation where a single strong step  guarantees that the next index is weak. In order to bound the second subsequence, we split it into subsubsequences of maximal length containing no two consecutive weak indices. We will show that the subsubsequences have length at most five and that each such subsubsequence (except for the last) has one more weak index than strong indices. Thus, the value of $n$ at the end of a subsubsequence is one smaller than at the beginning of the subsubsequence, and hence, the number of subsubsequences is bounded by $n_1$. 
We turn to the bound on the length of the prefix of strong indices. 

\noindent\emph{Claim 1:}  $k'\le k+1$.

\noindent Suppose that the first $k$ indices are strong. Then, $x_{i+1}\le 2^{-2^{m+i}}x_{i}$ for $i=1,\ldots,k$, and hence, 
$$x_{k+1}\le w\cdot 2^{-(2^{m}+2^{m+1}+\cdots+2^{m+k-1})}=
w\cdot 2^{-2^{m}(2^{k}-1)}\le 4w\cdot 2^{-2^{k+1}}<4w',
$$
and $n_{k+1}\ge 2$. Thus, $x_{k+1}/N_{k+1}<w'$, and $k+1$ is weak.\smallskip

Let us next consider the subsequence $\mathcal{S}=k',k'+1,\ldots,i_0-6$. 

\noindent\emph{Claim 2:} $\mathcal{S}$ contains no subsequence of type \textbf{SS} or \textbf{SWSWS}.

\noindent Consider any weak index $i$ followed by a strong index $i+1$. Then, $N_{i+2} \ge N_i$ and $x_{i+2} \le x_i$, and hence, 
$x_{i+2}/N_{i+2} \le x_i/N_i < w'$. Thus, $i+2$ is weak. Since $\mathcal{S}$ starts with a weak index, the first part of our claim follows. For the second part, assume that $i$, $i+2$ are strong, and $i+1$ and $i+3$ are weak. Then, $N_i = N_{i+2} = N_{i+4}$, $N_{i+1}= N_i^2$, $x_{i+1} \le x_i/N_i$, $x_{i+3} \le x_{i+2}/N_{i+2}$, $x_{i+4} < x_{i+3}$, $x_{i+3} < x_{i+1}$, and hence, 
\[ \frac{x_{i+4}}{N_{i+4}}<\frac{x_{i+2}}{N_{i+2} N_{i+4}} \le \frac{x_{i+1}}{N_{i+2}^2}=\frac{x_{i+1}}{N_{i+1}} < w'.\]
Thus, $i+4$ is weak. \smallskip

\noindent\emph{Claim 3:} If $i$ is weak and $i\le i_0-6$, then $n_i\ge 2$.

\noindent Namely, if $i$ is weak and $n_i=1$, then $x_i/4=x_i/N_i<w'$, and thus, $x_{i_0-1}<w'$ because $(3/4)^{5}<1/4$. This contradicts the definition of $i_0$.\smallskip

We now partition the sequence $\mathcal{S}$ into maximal subsequences $\mathcal{S}_1,\mathcal{S}_2,\ldots,\mathcal{S}_r$, such that each $\mathcal{S}_j$, $j=1,\ldots,r$, contains no two consecutive weak elements. Then, according to our above results, each $\mathcal{S}_j$, with $j<r$, is of type \textbf{W}, \textbf{WSW}, or \textbf{WSWSW}. The last subsequence $\mathcal{S}_r$ is of type \textbf{W}, \textbf{WS}, \textbf{WSW}, \textbf{WSWS}, or \textbf{WSWSW}. Since $n_{i}\ge 2$ for all weak $i$ with $i\le i_{0}-6$, the number $n_i$ decreases by one after each $S_j$, with $j<r$. Thus, we must have $r\le n_1+k'-2$ since $n_{k'}=n_1+k'-1$, $n_{r - 1} = n_{k'} - (r-1)$, and $n_{r-1} \ge 2$. Since the length of each $\mathcal{S}_j$ is bounded by $5$, it follows that 
$$i_0= i_0-6 +6\le k'+5r+6\le k'  + 5(n_1+k'-2)+6\le 5(n_{1}+k)+k+2<8(n_{1}+k).$$
\end{proof}

We are now ready to derive an upper bound on $s_{\max}$. 

\begin{lemma}\label{lemma:bound on smax} The maximal length $s_{\max}$ of any path between splitting nodes  is bounded by $O((\log n+\log(\Gamma+ \log M(\sigma_{P}^{-1}))))$.
\end{lemma}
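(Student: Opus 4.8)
The plan is to derive the bound from \Cref{lemma:sequence}, fed by the width/level bounds of \Cref{lemma:bounds1} and the success criterion of \Cref{newtonsuccess}. First I would fix a splitting node $I_s$ and take the maximal chain of non-splitting nodes $I_1,\dots ,I_{s-1}$ with $I_{j+1}$ a child of $I_j$ and the parent of $I_1$ splitting, so that $I_1\supset I_2\supset\cdots\supset I_s$ and, by the paragraph preceding the lemma, every $I_j$ is non-terminal, hence $w(I_j)\ge\sigma_P/2$. Next I would record three facts along this chain. Since $I_{j+1}\subseteq I_j$ gives $\Delta(I_{j+1})\subseteq\Delta(I_j)$, the sets $\mathcal M(I_j)$ are non-increasing, so the roots of $\mathcal M(I_s)$ lie in every $\Delta(I_j)$. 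A short case distinction shows that $N_{I_j}\,w(I_j)$ is non-increasing — in a quadratic step $N$ is squared while $w$ shrinks by a factor $\ge N$, and in a linear step $N$ becomes $\max(4,\sqrt N)\le N$ while $w$ shrinks by a factor $\ge 4/3$ — and since the root interval $\mathcal I_k$ has $N=4$ and width at most $2^{\Gamma+2}$, this gives $N_{I_j}w(I_j)\le 2^{\Gamma+4}$. Finally \Cref{lemma:bounds1} supplies $w(I_j)\ge w':=2^{-4\Gamma-6}\sigma_P^{5}$ and $4\le N_{I_j}\le 2^{4(\Gamma+1)}\sigma_P^{-4}$, whence $n_{I_1}=\log\log N_{I_1}=O\bigl(\log(\Gamma+\log M(\sigma_P^{-1}))\bigr)$, and combining $w(I_j)\ge\sigma_P/2$ with the upper bound on $N_{I_j}$ yields $w(I_j)/N_{I_j}\ge 2w'>w'$ for all $j$.

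I would then feed the sequence $(x_j,n_j):=(w(I_j),n_{I_j})$, extended arbitrarily to an infinite sequence obeying the recursion, into \Cref{lemma:sequence} with parameters $w:=w(I_1)$ and $w':=2^{-4\Gamma-6}\sigma_P^{5}$. Since $x_j/N_j>w'$ for every $j\le s$, all indices of the chain are ``strong'' in the terminology of that lemma, so fitting the recursion reduces to showing that every step $I_j\to I_{j+1}$ with $j<s$ is a \emph{quadratic} step (then $x_{j+1}\le x_j/N_j$ and $n_{j+1}=n_j+1$, exactly as the recursion requires); equivalently, on a non-splitting chain the Newton-Test or the Boundary-Test always succeeds. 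This is where \Cref{newtonsuccess} is invoked: the non-splitting property forces the roots lying in $\Delta(I_j)$ to persist in $\Delta(I_{j'})$ for all later $j'$ on the chain, hence to be confined to a subinterval much shorter than $w(I_j)/N_{I_j}$, giving hypothesis~(a); hypothesis~(b) — that the disk of radius $2^{\log n+10}N_{I_j}w(I_j)$ about $m(I_j)$ contains no root outside that cluster — is where the bounds of \Cref{lemma:bounds1} and the estimate $N_{I_j}w(I_j)\le 2^{\Gamma+4}$ are used to keep the radius below the distance from $m(I_j)$ to the nearest other root, while the Boundary-Test handles the remaining case in which the cluster sits against an endpoint of $I_j$.

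Granting this, \Cref{lemma:sequence} bounds the first index $i_0$ with $x_{i_0}\le w'$ by $8\bigl(n_{I_1}+\log\log\max(4,w(I_1)/w')\bigr)$. As $w(I_j)\ge\sigma_P/2>w'$ for all $j\le s$ (using $\sigma_P<2^{\Gamma+1}$), the chain never reaches such an index, so $s<i_0$ and hence
\[
 s \;<\; 8\Bigl(n_{I_1}+\log\log\max\bigl(4,\,w(I_1)/w'\bigr)\Bigr).
\]
Substituting $w(I_1)\le 2^{\Gamma+2}$ and $w'=2^{-4\Gamma-6}\sigma_P^{5}$ gives $w(I_1)/w'\le 2^{5\Gamma+8}\sigma_P^{-5}$, so $\log\log\max(4,w(I_1)/w')=\log\bigl(O(\Gamma+\log M(\sigma_P^{-1}))\bigr)=O\bigl(\log(\Gamma+\log M(\sigma_P^{-1}))\bigr)$; together with $n_{I_1}=O\bigl(\log(\Gamma+\log M(\sigma_P^{-1}))\bigr)$ and an additive $O(\log n)$ absorbing the polynomial-in-$n$ factors hidden in the thresholds of the Newton- and Boundary-Tests, this yields $s_{\max}=O\bigl(\log n+\log(\Gamma+\log M(\sigma_P^{-1}))\bigr)$, as claimed. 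The technical heart, which I expect to be the main obstacle, is precisely the claim of the middle paragraph: establishing that the refinement along a non-splitting chain is always quadratic, i.e.\ verifying both hypotheses of \Cref{newtonsuccess} — most delicately the root-free disk — from the combinatorial fact that $\mathcal M(\cdot)$ cannot shrink along such a chain together with the quantitative width/level bounds of \Cref{lemma:bounds1}.
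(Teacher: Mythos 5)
Your reduction to \Cref{lemma:sequence} hinges on the claim that ``on a non-splitting chain the Newton-Test or the Boundary-Test always succeeds,'' i.e.\ that every step $I_j\to I_{j+1}$ is quadratic. This claim is false, and the gap is not repairable by the devices you propose. Consider the first few intervals of the chain, immediately below a splitting ancestor. Hypothesis~(b) of \Cref{newtonsuccess} requires the disk of radius $2^{\log n+10}N_{I_j}w(I_j)\ge n\cdot 2^{12}\cdot w(I_j)$ about $m(I_j)$ to contain no root outside the cluster; but a root of $P$ outside $\mathcal M(I_j)$ may sit just beyond $\Delta(I_j)$, at distance barely larger than $w(I_j)/2$ from $m(I_j)$, and your estimate $N_{I_j}w(I_j)\le 2^{\Gamma+4}$ does nothing to exclude this --- it bounds the \emph{radius}, not the \emph{emptiness} of the disk. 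Hypothesis~(a) also fails at the start of the chain: the roots of $\mathcal M(I_j)$ are only known to lie in $\Delta(I_s)$, and for small $j$ the ratio $w(I_s)/w(I_j)$ need not be anywhere near the required $2^{-13}/N_{I_j}$. Hence the algorithm genuinely falls back to \emph{linear} (bisection) steps at the top of the chain. Once you admit that, feeding the sequence $(w(I_j),n_{I_j})$ into \Cref{lemma:sequence} with $w':=2^{-4\Gamma-6}\sigma_P^5$ breaks down: that choice of $w'$ marks every index as strong, and the recursion then \emph{forces} a quadratic step at every index, which the actual sequence does not perform. Your sequence therefore does not satisfy the hypotheses of \Cref{lemma:sequence}, and the conclusion $s< i_0$ cannot be drawn.

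What the paper does instead, and what seems unavoidable, is to split the chain into three phases and apply \Cref{lemma:sequence} with a phase-specific threshold $w'$ that makes ``strong $\Rightarrow$ quadratic'' actually true. In the prefix, the intervals all share an endpoint with $I_1$, the cluster hugs that endpoint, and the \emph{Boundary-Test} (not the Newton-Test) is the source of quadratic steps; here $w'$ is taken proportional to $w(I_s)$. There is then a bridging segment of $O(\log n+\log(\Gamma+\log M(\sigma_P^{-1})))$ steps whose only purpose is to shrink $w(I_j)$ (and reset $N_{I_j}$) until $w(I_j)\le 2^{-13-\log n}w(I_{s_1})/N_{I_j}$, which is precisely what makes the root-free-disk precondition \eqref{outerbound} provable. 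Only in the final suffix can \Cref{newtonsuccess} be invoked, again via \Cref{lemma:sequence} with $w':=2^{13}w(I_s)$. Each phase contributes $O(\log n+\log(\Gamma+\log M(\sigma_P^{-1})))$, giving the bound. Your use of \Cref{lemma:bounds1} and the upper bound $N_{I_j}w(I_j)\le 2^{\Gamma+4}$ is correct as far as it goes, but these inequalities cannot substitute for the geometric preconditions of \Cref{newtonsuccess}, which is the part your argument glosses over with ``an additive $O(\log n)$ absorbing the thresholds.''
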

\begin{proof} Consider any path in the subdivision forest ending in a splitting node and otherwise containing only non-splitting nodes. Let 
$I_{1}:=(a_{1},b_{1}):=I$ to $I_{s}=(a_{s},b_{s})$  be the corresponding sequence of intervals. Then, the one-circle regions $\Delta(I_{j})$ of all intervals in the sequence contain exactly the same set of roots of $P$, and this set is non-empty. We show $s=O(\log n+\log(\Gamma+ \log M(\sigma_{P}^{-1})))$. We split the sequence into three parts:
\begin{compactenum}[(1)]
\item Let $s_1\in\{1,\ldots,s\}$ be the smallest index with $a_{s_1}\neq a_1$ and $b_{s_1}\neq b_1$. The first part consists of intervals $I_1$ to $I_{s_1 - 1}$.  
We may assume $a = a_1 = a_2 = \ldots = a_{s_1 - 1}$. We will show $s_1 = O(\log (\Gamma+\log M(\sigma_{P}^{-1}))$.
\item Let $s_2 \ge s_1$ be minimal such that either $s_2 = s$ or $w(I_{s_2})\le 2^{-13-\log n}w(I_{s_{1}})/N_{I_{s_2}}$. We will show $s_2 - s_1 = 
O(\log n+\log (\Gamma+\log M(\sigma_{P}^{-1}))$. The second part consists of intervals $I_{s_1}$ to intervals $I_{s_2 - 1}$. 
\item The third part consists of the remaining intervals $I_{s_2}$ to $I_s$. If $s_2 = s$, this part consists of a single interval. If $s_s < s$, we have 
$w(I_j)\le 2^{-13-\log n}w(I_{s_{1}})/N_{I_j}$ for all $j\ge s_2$. If $I_{j+1}$ comes from $I_j$ by a linear step, this is obvious because $w(I_{j+1}) \le w(I_j)$ and $N_{I_{j+1}} \le N_{I_j}$. If it is generated in a quadratic step, we have $w(I_{j+1}) \le w(I_j)/N_{I_j}$ and $N_{I_{j+1}} = N_{I_j}^2$. 
\end{compactenum}

In order to derive a bound on $s_1$, we appeal to Lemma~\ref{lemma:sequence}. 
If $w(I_j)/N_{I_j}\ge 4\cdot w(I_{s+1})$ for some $j$, then according to the remark following the definition of the Boundary-Test, the subdivision step from $I_{j}$ to $I_{j+1}$ is quadratic. 
However, it might also happen that the step from $I_{j}$ to $I_{j+1}$ is quadratic, and yet, $w(I_j)/N_{I_j}< 4\cdot w(I_{s+1})$. If such a $j$ exists, then let $j_0$ be the minimal such $j$; otherwise, we define $j_0=s$. In either case, $s = j_0 + O(1)$. This is clear if $s = j_0$. If $j_0 < s$, the step from $I_{j_0}$ to $I_{j_o + 1}$ is quadratic, and hence, $w(I_{j_0+1}) \le w(I_{j_0})/N_{I_{j_0}} < 4 \cdot w(I_{s+1})$, and hence, a constant number of steps suffices to reduce the width of $I_{j_0 +1}$ to the width of $I_{s+1}$. For $j=1,\ldots,j_0-1$, the sequence $(w(I_j),n_{I_j})$ coincides with a sequence $(x_j,n_j)$ as defined in Lemma~\ref{lemma:sequence}, where $w:=w(I_1)$, $w':=4w(I_{s+1})$, and $n_1=m:=n_{I_1}$. Namely, if $w(I_j)/N_{I_j}\ge w'$, we have $w(I_{j+1})\le w(I_j)/N_{I_j}$ and $n_{I_{j+1}}=1+n_{I_j}$, and otherwise, we have $w(I_{j+1})\le \frac{3}{4}\cdot w(I_j)$ and $n_{I_{j+1}}=\max(1,n_{I_j}-1)$. Hence, according to Lemma~\ref{lemma:sequence}, it follows that $j_0$ (and thus also $s$) is bounded by $$8(n_{I_1}+\log\log\max(4,w(I_1)/w(I_s)))=O(\log (\Gamma+\log M(\sigma_{P}^{-1})),$$
where we used the bounds for $N_{I_1}$, $w(I_{1})$, and $w(I_s)$ from Lemma~\ref{lemma:bounds1}.

We come to the bound on $s_2$. Observe first that $\min(|a_{1}-a_{s_1}|,|b_1-b_{s_1}|)\ge \frac{1}{8}w(I_{s_1})$.
Obviously, there exists an $s_1'=s_1+O(\log n)$, such that $w(I_{j})\le 2^{-13-\log n}
w(I_{s_1})$ for all $j\ge s_1'$. Furthermore, $N_{I_j}\le N_{\max} := 2^{O(\Gamma+\log M(\sigma_{P}^{-1}))}$ for all $j$ according to Lemma~\ref{lemma:bounds1}.
Thus, if the sequence $I_{s_1'},$ $I_{s_1'+1},\ldots$ starts with $m_{\max}:=\max(5,\log\log 
N_{\max}+1)$ or more consecutive linear subdivision steps, then $N_{I_{j'}}=4$ and $w(I_{j'})\le \frac{w(I_{s_1'})}{4}\le 2^{-13-\log n}\cdot \frac{w(I_{s_1})}{4}=2^{-13-\log n}\cdot 
w(I_{s_1})\cdot N_{I_{j'}}^{-1}$ for some $j'\le s_1'+ m_{\max}$. Otherwise, there exists a 
$j'$ with $s_1'\le j'\le s_1'+m_{\max}$, such that the step from $I_{j'}$ to $I_{j'+1}$ is 
quadratic. Since the length of a sequence of consecutive quadratic subdivision steps is also 
bounded by $m_{\max}$, there must exist a $j''$ with $j'+1\le j''\le j'+m_{\max}+1$ such that 
the step from $I_{j''-1}$ to $I_{j''}$ is quadratic, whereas the step from $I_{j''}$ to 
$I_{j''+1}$ is linear. Then, $N_{I_{j''+1}}=\sqrt{N_{I_{j''}}}=N_{I_{j''-1}}$, and 
$$w(I_{j''+1})\le \frac{3}{4}w(I_{j''})\le \frac{3w(I_{j''-1})}{4N_{I_{j''-1}}}\le 2^{-13-\log n}\cdot\frac{w(I_{s_1})}{N_{I_{j''+1}}}.$$ Hence, in any case, there exists an $s_2\le 
s_1+2m_{\max}+1$ with $w(I_{s_2})\le 2^{-13-\log n}\cdot w(I_{s_1})/N_{I_{s_2}}$. 

We next bound $s - s_2$. We only need to deal with the case that $s_2 < s$, and hence, $w(I_j)\le 2^{-13-\log n}N_{I_j}^{-1}w(I_{s_{1}})$ for all $j\ge s_2$. 
For $j \ge s_2$, we also have 
\begin{align}
|x-z_i|>2^{\log n+10} N_{I_j} w(I_j)\text{ for all }z_{i}\notin\mathcal{M}(I_{j})\text{ and all }x\in I_j.\label{outerbound}
\end{align}
From (\ref{outerbound}) and Lemma~\ref{newtonsuccess}, we conclude that the step from $I_{j}$ to $I_{j+1}$ is quadratic if $j\ge s_{2}$ and $w(I_{s})\le 2^{-13}\cdot\frac{w(I_{j})}{N_{I_{j}}}$. Again, it might also happen that there exists a $j\ge s_2$ such that the step from $I_{j}$ to $I_{j+1}$ is quadratic, and yet, $w(I_{s})> 2^{-13}\cdot\frac{w(I_{j})}{N_{I_{j}}}$. If this is the case, then we define $s_3$ to be the minimal such index; otherwise, we set $s_3:=s$. Clearly, $s = s_3 +O(1)$. We can now again apply Lemma~\ref{lemma:sequence}. 
The sequence $(w(I_{s_2+i}),n_{I_{s_2+i}})_{i=1,\ldots,s_3-s_2}$ coincides with a sequence $(x_i,n_i)_{1\le i\le s_3-s_2}$ as defined in Lemma~\ref{lemma:sequence}, where $n_1=m=n_{I_{s_2+1}}$ and $w':=2^{13}\cdot w(I_s)$. Namely, if $w(I_{s_2+i})\cdot N_{I_{s_2+i}}^{-1}\ge w'$, then $w(I_{s_2+i+1})\le w(I_{s_2+i})\cdot N_{s_2+i}^{-1}$ and $n_{I_{s_2+i+1}}=1+n_{I_{s_2+i}}$, whereas we have $w(I_{s_2+i+1})\le \frac{3}{4}w(I_{s_2+i})$ and 
$n_{I_{s_2+i+1}}=\max(n_{I_{s_2+i}}-1,1)$ for $w(I_{s_2+i})\cdot N_{I_{s_2+i}}^{-1}< w'$. It follows that $s_3-s_2$ is bounded by $8(n_1+\log\log\max(4,w(I_{s_2+1})/w'))=O(\log(\Gamma+\log M(\sigma_P^{-1})))$. 
\end{proof}

\MS{The following theorem now follows immediately from Lemma~\ref{lemma:bound on smax} and our considerations from the beginning of Section~\ref{sec:size-subdivision-tree}. For the bound on the size of the subdivision forest when running the algorithm on a square-free polynomial $P$ of degree $n$ and with integer coefficients of bit-size less than $\tau$, we use that $\gamma=O(\log\Gamma)$ and $\log n+\log(\Gamma+\log M(\sigma_{P}^{-1}))=O(\log (n\tau))$, which is due to Lemma~\ref{lem:usefulineq}.}

\begin{theorem}\label{thm:treesize} Let $K = \log n+\log(\Gamma+\log M(\sigma_{P}^{-1}))$. 
The size $\abs{\mathcal{T}}$ of the subdivision forest is 
\[ O\left( \sum_{k=0}^{2\gamma+2} \left(1 + \abs{\mathcal{M}(\mathcal{I}_{k})} \cdot K \right) \right)
=O(n K ),\]
where $\mathcal{I}_{k}$ are the intervals as defined in (\ref{def:startintervals}), and $\mathcal{M}(I_{k})$ denotes the set of all roots contained in the one-circle region $\Delta(\mathcal{I}_{k})$ of $\mathcal{I}_{k}$. \MS{In the case, where the input polynomial has integer coefficients of bit-size less than $\tau$, the above bound simplifies to
\[
O\left( \sum_{k=0}^{2\gamma+2} \left(1 + \abs{\mathcal{M}(\mathcal{I}_{k})} \cdot \log(n\tau) \right) \right)
=O(n\log(n\tau)).
\]}
\end{theorem}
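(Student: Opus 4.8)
The plan is to read the statement off the structural analysis carried out immediately before the theorem, together with the bound $s_{\max}=O(K)$ established in Lemma~\ref{lemma:bound on smax}, and then to estimate the two elementary sums $\sum_k 1$ and $\sum_k\abs{\mathcal{M}(\mathcal{I}_k)}$. First I would count the nodes of a single tree $\mathcal{T}_k$, $k\in\{0,\ldots,2\gamma+2\}$. If $\mathcal{I}_k$ is terminal, $\mathcal{T}_k$ has one node, which is $O(1+\abs{\mathcal{M}(\mathcal{I}_k)}K)$. If $\mathcal{I}_k$ is non-terminal, the discussion preceding the theorem bounds the number of non-terminal nodes of $\mathcal{T}_k$ by $1+s_{\max}\cdot(2\abs{\mathcal{M}(\mathcal{I}_k)}-1)$; since the one-circle region of every non-terminal interval contains at least one root, $\abs{\mathcal{M}(\mathcal{I}_k)}\ge 1$, so this is at most $\abs{\mathcal{M}(\mathcal{I}_k)}\cdot(1+2s_{\max})=O(\abs{\mathcal{M}(\mathcal{I}_k)}K)$ by Lemma~\ref{lemma:bound on smax} and $K\ge\log n\ge 1$. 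As each non-terminal node has at most two children, in any tree the number of terminal nodes exceeds the number of non-terminal nodes by at most one, so the total number of nodes of $\mathcal{T}_k$ is $O(1+\abs{\mathcal{M}(\mathcal{I}_k)}K)$ in both cases. Summing over $k$ gives $\abs{\mathcal{T}}=O\!\left(\sum_{k=0}^{2\gamma+2}\left(1+\abs{\mathcal{M}(\mathcal{I}_k)}K\right)\right)$, the first claimed estimate.

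Next I would reduce this to $O(nK)$ by bounding the two parts of the sum separately. For the first, $\sum_k 1=2\gamma+3=O(\gamma)$, and $\gamma=\log\Gamma\le\log(\Gamma+\log M(\sigma_P^{-1}))\le K$, so this contributes $O(K)=O(nK)$. For the second, I claim the one-circle regions $\Delta(\mathcal{I}_k)$ are pairwise disjoint: for $I=(a,b)$ the open disk $\Delta(I)$ has real projection exactly the open interval $(a,b)$, and the $\mathcal{I}_k$ are pairwise disjoint subintervals of $\mathcal{I}$, so the $\Delta(\mathcal{I}_k)$ lie in pairwise disjoint vertical strips; hence every complex root of $P$ lies in at most one $\Delta(\mathcal{I}_k)$, and $\sum_k\abs{\mathcal{M}(\mathcal{I}_k)}\le n$. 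Therefore $\abs{\mathcal{T}}=O(2\gamma+3)+O(nK)=O(nK)$.

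For the integer case it only remains to check $K=O(\log(n\tau))$ when $P\in\Z[x]$ is square-free with coefficients of bit size at most $\tau$; the first estimate then immediately yields $O\!\left(\sum_k(1+\abs{\mathcal{M}(\mathcal{I}_k)}\log(n\tau))\right)=O(n\log(n\tau))$. By the Cauchy root bound $\max_i\abs{z_i}=O(2^\tau)$ and (\ref{def:Gamma}) we get $\Gamma=O(\Gamma_P+\log n)=O(\tau+\log n)$. Since $P$ is square-free and integral, $\abs{\Disc(P)}\ge 1$, so (\ref{lower bound on separation}) together with (\ref{upper bound on measure}) gives $\log M(\sigma_P^{-1})=O(n(\tau+\log n))$. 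Hence $\log(\Gamma+\log M(\sigma_P^{-1}))=O(\log(n\tau))$, and adding $\log n$ gives $K=O(\log(n\tau))$; this last step is exactly the use of Lemma~\ref{lem:usefulineq}.

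There is no real obstacle here: the technical weight of the tree-size bound sits in Lemma~\ref{lemma:bound on smax}. The only points that need care are the passage from the count of non-terminal nodes to the count of all nodes, which uses the bounded out-degree of the subdivision trees, and the observation that the one-circle regions $\Delta(\mathcal{I}_k)$ are mutually disjoint (because the $\mathcal{I}_k$ partition the real axis, so the corresponding disks sit in disjoint vertical strips), which is what allows the replacement of $\sum_k\abs{\mathcal{M}(\mathcal{I}_k)}$ by $n$.
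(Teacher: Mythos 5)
Your proof is correct and follows essentially the same route as the paper: the paper's own argument consists precisely of the structural node count at the start of Section~\ref{sec:size-subdivision-tree} (splitting nodes $\le 2\abs{\mathcal{M}_k}$, non-terminal nodes $\le 1+s_{\max}(2\abs{\mathcal{M}_k}-1)$, hence all nodes by bounded out-degree), combined with the $s_{\max}=O(K)$ bound of Lemma~\ref{lemma:bound on smax}, the disjointness of the one-circle regions $\Delta(\mathcal{I}_k)$ giving $\sum_k\abs{\mathcal{M}(\mathcal{I}_k)}\le n$, and the $K=O(\log(n\tau))$ specialization via Lemma~\ref{lem:usefulineq}. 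You have merely spelled out the leaf-versus-internal-node count and the disjointness argument, which the paper leaves implicit.
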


\begin{figure}[t]
\begin{center}
\includegraphics[width=0.6\textwidth]{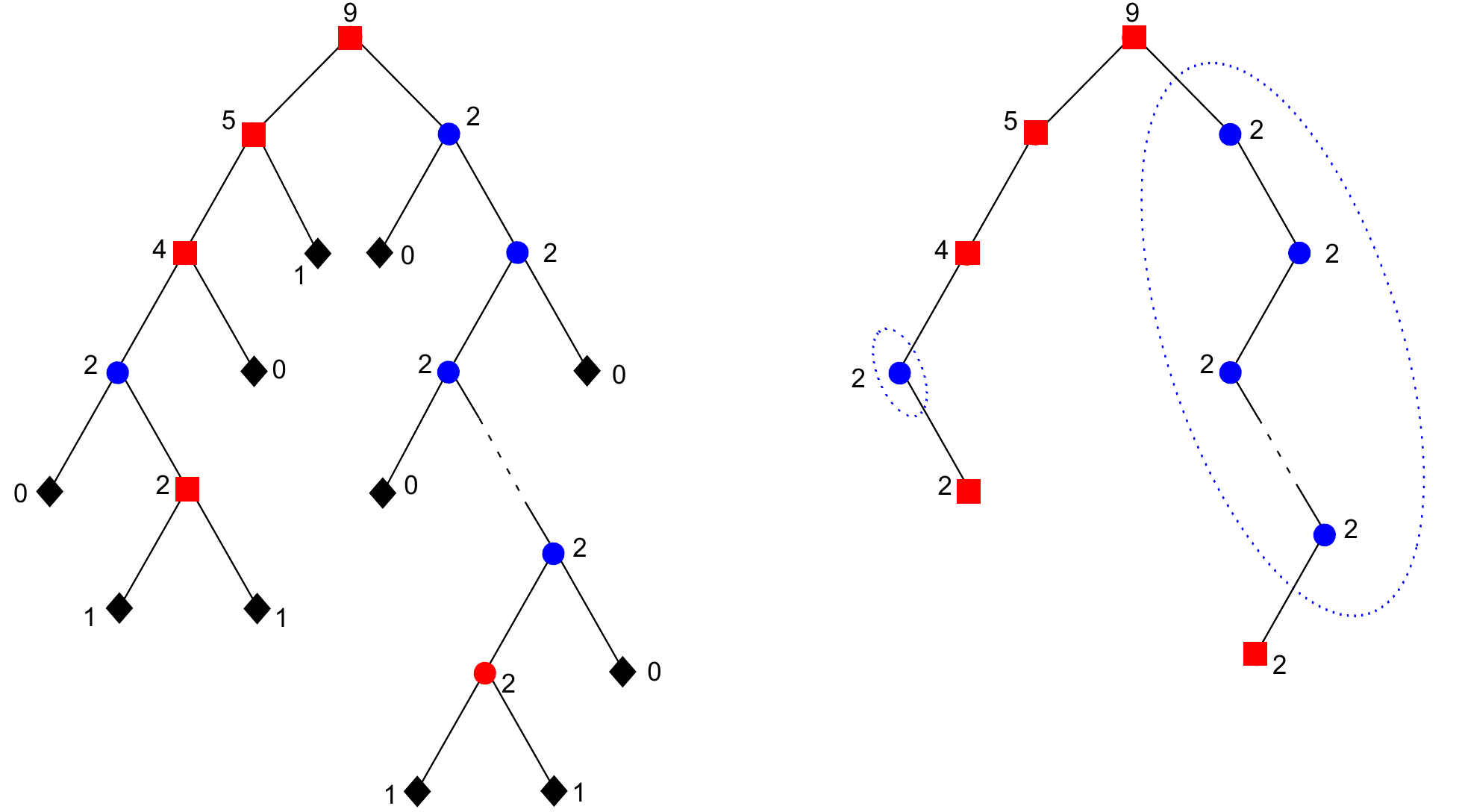}\end{center}
\caption{\label{tree} \MS{The left figure shows the subdivision tree $\mathcal{T}_k$ rooted at the interval $\mathcal{I}_k$, where, for each node $I$, the number $\var(f,I)$ of sign variations is given (e.g.~$\var(P,\mathcal{I}_k)=9$). Special nodes and terminal nodes are indicated by red squares and black diamonds, respectively. Blue dots indicate \emph{ordinary} nodes, which are neither special nor terminal. The right figure shows the subtree $\mathcal{T}_k^*$ obtained by removing all terminal nodes. $\mathcal{T}_k^*$ partitions into special nodes and chains of ordinary nodes connecting two consecutive special nodes.}}
\end{figure}
\MS{
We further provide an alternative bound on the number of iterations needed by the algorithm $\textsc{ANewDSC}$, which is stated in terms of the number of sign variations $v_k:=\operatorname{var}(f,\mathcal{I}_k)$ of $f$ on the initial intervals $\mathcal{I}_k$ instead of the values $|\mathcal{M}(I_{k})|$:
\begin{theorem}\label{thm:sizetreealt}
Let $K = \log n+\log(\Gamma+\log M(\sigma_{P}^{-1}))$. The size $\abs{\mathcal{T}_k}$ of the subdivision tree $\mathcal{T}_k$ rooted at the interval $\mathcal{I}_k$ is $O(\operatorname{var}(P,\mathcal{I}_k)\cdot K),$ and the size of the subdivision forest $\mathcal{T}$ is 
\[ O\left( \sum\nolimits_{k=0}^{2\gamma+2} \left(1 + \operatorname{var}(P,\mathcal{I}_k) \cdot K \right) \right).
\]
\end{theorem}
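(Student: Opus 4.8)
The plan is to keep the chain-length bound $s_{\max}=O(K)$ of Lemma~\ref{lemma:bound on smax} intact but to replace, in the counting argument of Section~\ref{sec:size-subdivision-tree}, the root count $|\mathcal{M}(\mathcal{I}_k)|$ by the sign-variation count $\var(P,\mathcal{I}_k)$. As there, it suffices to bound the number of non-terminal nodes of $\mathcal{T}_k$, since every terminal node is a leaf and, in a rooted tree all of whose internal nodes have one or two children, the number of leaves exceeds the number of two-child nodes by exactly one. Equivalently, we bound the number of nodes of the tree $\mathcal{T}_k^*$ obtained from $\mathcal{T}_k$ by deleting all terminal nodes (this is a subtree, since a terminal node has no children and hence no descendants, so the non-terminal nodes are closed under taking ancestors). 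Throughout we use that every non-terminal interval $I$ satisfies $\var(P,I)\ge 2$, because the $0$-Test and the $1$-Test succeed whenever $\var(P,I)\le 1$.

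First I would decompose $\mathcal{T}_k^*$ into \emph{special} nodes and \emph{chains} of ordinary nodes. Declare a node of $\mathcal{T}_k^*$ special if it is the root, has two children in $\mathcal{T}_k^*$, is a leaf of $\mathcal{T}_k^*$, or is a one-child node $I$ whose child $I'$ in $\mathcal{T}_k^*$ has $\var(P,I')<\var(P,I)$; the remaining (ordinary) nodes each have exactly one child in $\mathcal{T}_k^*$ and so form chains. The new ingredient is a charging argument bounding the number of special nodes by $O(\var(P,\mathcal{I}_k))$. Monotonicity $\var(P,J)\le\var(P,I)$ for $J\subseteq I$ and superadditivity $\var(P,J_1)+\var(P,J_2)\le\var(P,I)$ for disjoint $J_1,J_2\subseteq I$ (both consequences of Theorem~\ref{subad}) show that, for every $I\in\mathcal{T}_k^*$, the integer $\psi(I):=\var(P,I)-\sum_J\var(P,J)$, the sum being over the children of $I$ in $\mathcal{T}_k^*$, is nonnegative, and the telescoping identity
\[
\sum_{I\in\mathcal{T}_k^*}\psi(I)=\var(P,\mathcal{I}_k)
\]
holds. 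Since $\psi(I)\ge 2$ at every leaf, $\psi(I)\ge 1$ at every one-child node at which $\var$ strictly decreases, and the number of two-child nodes equals the number of leaves minus one, all three classes -- hence the whole set of special nodes -- have size $O(\var(P,\mathcal{I}_k))$.

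Next I would bound the length of each chain of ordinary nodes by $O(K)$. Since the unique $\mathcal{T}_k^*$-child of a chain's bottom node is special and this assignment is injective, the number of chains is at most the number of special nodes; hence the number of non-terminal nodes is $O(\var(P,\mathcal{I}_k))\cdot O(K)+O(\var(P,\mathcal{I}_k))=O(\var(P,\mathcal{I}_k)\cdot K)$, giving $|\mathcal{T}_k|=O(\var(P,\mathcal{I}_k)\cdot K)$, and summing over $k=0,\ldots,2\gamma+2$ yields the forest bound. For the chain-length estimate, observe that along a chain of ordinary nodes $\var(P,\cdot)$ is constant and the widths $w(I)$ and levels $N_I$ evolve according to the quadratic-interval-refinement recurrence of Lemma~\ref{lemma:sequence}, while the global bounds on $w(I)$ and $N_I$ from Lemma~\ref{lemma:bounds1} apply; the three-part analysis in the proof of Lemma~\ref{lemma:bound on smax} then applies and yields chain length $O(\log n+\log(\Gamma+\log M(\sigma_P^{-1})))=O(K)$.

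The main obstacle is precisely this last point: the proof of Lemma~\ref{lemma:bound on smax} tracks the set $\mathcal{M}(I)$ of roots in the one-circle region of $I$ along a chain and crucially uses that it is constant and that all other roots are far from $I$ (inequality~(\ref{outerbound})), whereas our chains are cut only where $\var(P,\cdot)$ decreases, not where $\mathcal{M}$ shrinks. What must be verified is that a strict shrinkage of $\mathcal{M}$ at a subdivision step forces $\var(P,\cdot)$ to drop, so that $\mathcal{M}$ is in fact constant along an ordinary chain. This should hold because the roots lost from the one-circle region come in complex-conjugate pairs that must lie in the flanking intervals discarded by the $0$-Test (for a quadratic step) or outside the surviving child's one-circle region near the splitting point (for a linear step), regions whose Obreshkoff areas are root-free, so that such a pair contributes at least two extra sign variations of $P$ on $I$ relative to the surviving child by Theorem~\ref{Obreshkoff}. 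If a fully rigorous version of this implication is awkward, the fallback is to additionally declare every $\mathcal{M}$-shrinking node special; but that route only works if such nodes can still be bounded by $O(\var(P,\mathcal{I}_k))$ rather than by $|\mathcal{M}(\mathcal{I}_k)|$, which is exactly the subtlety that the $\var$-based charging is meant to sidestep.
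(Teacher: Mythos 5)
Your decomposition of $\mathcal{T}_k^*$ into special nodes and $\var$-constant chains, together with the telescoping identity $\sum_{I\in\mathcal{T}_k^*}\psi(I)=\var(P,\mathcal{I}_k)$ to bound the number of special nodes, is correct and essentially the same scheme the paper uses (the paper phrases the count via a potential $\mu$ over the set of active intervals rather than a telescoping sum over the finished tree, but the content is the same). The genuine gap is exactly where you suspected it, in the chain-length bound, and neither of your proposed fixes closes it. The implication you hope for, that a strict shrinkage of $\mathcal{M}(\cdot)$ forces $\var(P,\cdot)$ to drop, is simply false: take a complex-conjugate pair of roots lying in $\Delta(I)\setminus L_n(I)$, i.e.\ in the outer part of the one-circle disk outside the Obreshkoff lens. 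After a subdivision step to the surviving child $I'$, that pair can exit $\Delta(I')$ while remaining inside the Obreshkoff area $A_n(I')$; Theorem~\ref{Obreshkoff} then gives no more than $\#L_n(I')\le\var(P,I')\le\#A_n(I')$, which leaves room for $\var(P,I')=\var(P,I)$ even though $\mathcal{M}$ shrank. The roots that leave the one-circle region have no reason to land in the flanking regions certified root-free by the $0$-Test, so your heuristic does not yield a rigorous argument, and your fallback of declaring $\mathcal{M}$-shrinking nodes special is precisely the thing you cannot afford, since those would be charged against $|\mathcal{M}(\mathcal{I}_k)|$ rather than $\var(P,\mathcal{I}_k)$.

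The paper closes this gap with an Obreshkoff sandwich, which is the missing ingredient. Along a $\var$-constant chain $I_1\supset\cdots\supset I_s$ with $\var(P,I_j)\equiv v$, Theorem~\ref{Obreshkoff} gives $\#L_n(I_j)\le v\le\#A_n(I_j)$ for every $j$. Applying the width/level bookkeeping of Lemma~\ref{lemma:bound on smax} to the prefix of the chain shows that after $s_2=O(K)$ steps both endpoints have moved and the width has shrunk enough that $\Delta(I_j)\subseteq L_n(I_1)$ for all $j\ge s_2$; since $L_n(I_1)$ holds at most $v$ roots and the $\Delta(I_j)$ are nested, this forces $|\mathcal{M}(I_j)|\le v$ for $j\ge s_2$. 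Dually, there is an $s_3$ with $s-s_3=O(K)$ such that $\Delta(I_j)\supseteq A_n(I_s)$ for $j\le s_3$, forcing $|\mathcal{M}(I_j)|\ge v$ there. On the middle stretch $s_2\le j\le s_3$ the nested sets $\mathcal{M}(I_j)$ therefore all have cardinality $v$, hence are equal, and Lemma~\ref{lemma:bound on smax} applies to that stretch to give $s_3-s_2=O(K)$, so $s=O(K)$. Without this sandwich step your argument does not reduce a $\var$-constant chain to an $\mathcal{M}$-constant chain, and Lemma~\ref{lemma:bound on smax} cannot be invoked.
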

\begin{proof}
Let $I_1,\ldots,I_s$ be a sequence of intervals produced by the algorithm such that $I_s\subset I_{s-1}\subset\cdots\subset I_1\subset\mathcal{I}_k$, and $v=\operatorname{var}(P,I_1)=\cdots=\operatorname{var}(P,I_s)$ for some $v\in\N$. Notice that $v\le \operatorname{var}(P,\mathcal{I}_k)$ according to Theorem~\ref{subad}. We first show that $s=O(K)$. For $j=1,\ldots,s$, let $L_n(I_j)$ and $A_n(I_j)$ be the Obreshkoff lens and the Obreshkoff area of $I_j$, respectively, as defined in Figure~\ref{fig:Obreshkoff}. Then, from Theorem~\ref{Obreshkoff}, we conclude that each $L_n(I_j)$ contains at most $v$ roots of $P$, whereas each $A_n(I_j)$ contains at least $v$ roots of $P$. Using the same argument in as in the proof of Lemma~\ref{lemma:bound on smax} shows that either $s=O(K)$ or there exists an $s_1=O(\log(\Gamma+\log M(\sigma_P^{-1})))$ such that $I_1$ does not share any endpoint with $I_{s_1}$, and the distance between any two endpoints of $I_1$ and $I_{s_1}$ is at least as large as the width $w(I_{s_1})$ of $I_{s_1}$. Hence, after $\ell=O(\log n)$, further subdivision steps, the one-circle region $\Delta(I_{s_1+\ell})$ of $I_{s_1+\ell}$ is completely contained in the lens $L_n(I_1)$, and thus the one-circle region $\Delta(I_j)$ of any interval $I_j$, with $j\ge s_2:=s_1+\ell=O(K)$, is contained in $L_n(I_1)$. Since $L_n(I_1)$ contains at most $v$ roots, we conclude that $|\mathcal{M}(I_{j})|\le v$ for all $j\ge s_2$. The same argument also shows that either $s=O(K)$ or that there exists an $s_3$, with $s_3\le s$ and $s-s_3=O(K)$ such that the one-circle region $\Delta(I_j)$ of each interval $I_j$, with $j\le s_3$, contains the Obreshkoff area $A_n(I_s)$. Thus, we have $|\mathcal{M}(I_j)|\ge v$ for all $j\le s_3$ as $A_n(I_s)$ contains at least $v$ roots. We can assume that $s_2\le s_3$ as, otherwise, $s\le s_2+(s-s_3)=O(K)$. Since $|\mathcal{M}(I_j)|\le v$ for all $j\ge s_2$ and $|\mathcal{M}(I_j)|\ge v$ for all $j\le s_3$, we have $|\mathcal{M}(I_j)|=v$ for $j=s_2,\ldots,s_3$. Then, from Lemma~\ref{lemma:bound on smax}, we conclude that $s_3-s_2=O(K)$, and thus also $s=O(K)$. Now, consider the sub-tree $\mathcal{T}_k^*$ of $\mathcal{T}_k$ obtained from $\mathcal{T}_k$ after removing all intervals $I\in\mathcal{T}_k$ with $\operatorname{var}(P,I)\le 1$; see also Figure~\ref{tree}. Then, $|\mathcal{T}_k|\le 2\cdot|\mathcal{T}_k^*|+1$, and thus it suffices to bound the size of $\mathcal{T}_k^*$. We 
call an interval $I\in\mathcal{T}^*_k$ \emph{special} if it is either the root $\mathcal{I}_k$ of $\mathcal{T}^*_k$ or if none of its children yields the same number of sign variations as $I$. Then, the following argument shows that the number of special intervals is at most $\var(P,\mathcal{I}_k)$. We can assume that $\operatorname{var}(P,\mathcal{I}_k)>1$ as otherwise $\mathcal{T}_k$ has size $1$. Now, let $\mathcal{A}_k\subset \mathcal{A}$ be the set of all active intervals in $\mathcal{T}_k$ produced by the algorithm in a certain iteration, and define 
\[\mu:=\sum_{I\in \mathcal{A}: I\subseteq\mathcal{I}_k\text{ and }\operatorname{var}(P,I)\ge 1} (\operatorname{var}(P,I)-1),\]
then $\mu$ decreases by at least one at each special interval, whereas it stays invariant at all other intervals. Since, $\mu=\operatorname{var}(P,\mathcal{I}_k)-1$ at the beginning, and $\mu\ge 0$ in each iteration, it follows that there can be at most $\operatorname{var}(f,\mathcal{I}_k)$ special intervals in $\mathcal{T}^*_k$. Notice that $\mathcal{T}_k^*$ splits into special intervals and chains of intervals $I_1,\ldots,I_s$ connecting two consecutive special intervals $I$ and $J$, that is, $I\subset I_s\subset\cdots\subset I_1\subset J$, and there exists no special interval $I'$ with $I\subset I'\subset J$. Since $\var(P,I_j)$ is invariant for all $j=1,\ldots,s$, our above considerations show that each such chain has length $O(K)$. Hence, the claim follows.
\end{proof}
\noindent\emph{Remark.} 
For polynomials $P=P_0+\cdots+P_n x^n\in\mathbb{Z}[x]$ with integer coefficients, 
we remark that the above bound can also be stated in terms of the number $v^+(P):=\operatorname{var}(P_0,\ldots,P_n)$ of sign 
variations of the coefficient sequence of $P$, and the number $v^-(P):=\operatorname{var}(P_0,-P_1,P_2,\ldots,(-1)^n\cdot P_n)$ of sign 
variations of the coefficient sequence of $P(-x)$: After removing a suitable factor $x^i$, we are left 
with a polynomial $\bar{P}=P\cdot x^{-i}$, which fulfills $|\bar{P}(0)|\ge 1$, $v^-(\bar{P})=v^-(P)$, and $v^+(\bar{P})=v^+(P)$. Let us now estimate the size of the subdivision forest induced by our algorithm when applied to the polynomial $\bar{P}$. In the definition (\ref{def:startintervals}) of the intervals $\mathcal{I}_k=(s_k^*,s_{k+1}^*)$, we may choose 
$s_{\gamma+1}^*=s_{\gamma}=0$ as this does not harm the requirements from (\ref{cond:start1}) posed on the 
intervals $\mathcal{I}_k$. 
Now, because of the sub-additivity of the function $\var(\bar{P},.)$, it follows that $\sum_{k=0}^{\gamma+1}\var(\bar{P},\mathcal{I}_k)\le \var(\bar{P},(s_0^*,s_{\gamma+1}^*))$ and $\sum_{k=\gamma+1}^{2\gamma+2}\var(\bar{P},\mathcal{I}_k)\le \var(\bar{P},(s_{\gamma+1}^*,s_{2\gamma+2}))$. The intervals $I^-:=(s_0^*,s_{\gamma+1}^*)$ and $I^+:=(s_{\gamma+1}^*,s_{2\gamma+2})$ are contained in $\R_{<0}$ and in $\R_{>0}$, respectively. Hence, it holds that $\var(\bar{P},I^-)\le v^-(\bar{P})$, and that $\var(\bar{P},I^+)\le v^+(\bar{P})$. Theorem~\ref{thm:sizetreealt} then implies that the subdivision forest has size $O(\gamma+K\cdot(v^-(P)+v^+(P)))=O((v^-(P)+v^+(P)+1)\log(n\tau))$, where $\tau$ bounds the bit size of the coefficients of $P$. In particular, we obtain:
\begin{theorem}
\label{thm:treesizesparse}
Let $P\in\Z[x]$ be a square-free polynomial of degree $n$ and with integer coefficients of bit size less than $\tau$. Let $k$ be the number of non-zero coefficients of $P$. Then, for isolating all real roots of $P$, \Kurt{$\AD$ generates a tree of size  $O(k\log(n\tau))$.}
\end{theorem}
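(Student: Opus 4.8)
The plan is to combine the sparsity bound on sign variations of $P$ with the tree-size bound of Theorem~\ref{thm:sizetreealt}, following exactly the route sketched in the Remark preceding the statement. First I would recall Descartes' rule of signs: for an integer polynomial $P$ with $k$ non-vanishing coefficients, the number $v^+(P) = \var(P_0,\ldots,P_n)$ of sign variations in the coefficient sequence is at most $k-1$, since each sign variation "uses up" at least one transition between consecutive non-zero coefficients and there are at most $k-1$ such transitions. Likewise, $v^-(P) = \var(P_0,-P_1,\ldots,(-1)^nP_n)$, which counts sign variations of the coefficient sequence of $P(-x)$, is also at most $k-1$, because $P(-x)$ has exactly the same support as $P$ and hence also at most $k$ non-zero coefficients. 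Thus $v^+(P) + v^-(P) \le 2(k-1) = O(k)$.

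Next I would invoke the displayed bound derived in the Remark: by choosing the split point $s_{\gamma+1}^* = 0$ in the initialization (which is compatible with the requirements~(\ref{cond:start1}) after dividing out a suitable power of $x$, so that $|\bar P(0)| \ge 1$), the subadditivity of $\var(\bar P, \cdot)$ (Theorem~\ref{subad}) gives $\sum_{k=0}^{\gamma+1}\var(\bar P,\mathcal{I}_k) \le \var(\bar P, I^-) \le v^-(\bar P) = v^-(P)$ and $\sum_{k=\gamma+1}^{2\gamma+2}\var(\bar P,\mathcal{I}_k) \le \var(\bar P, I^+) \le v^+(\bar P) = v^+(P)$, where $I^- \subset \R_{<0}$ and $I^+ \subset \R_{>0}$. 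Then Theorem~\ref{thm:sizetreealt}, which bounds $|\mathcal{T}|$ by $O\!\left(\sum_k (1 + \var(P,\mathcal{I}_k)\cdot K)\right)$ with $K = \log n + \log(\Gamma + \log M(\sigma_P^{-1}))$, yields
\[
|\mathcal{T}| = O\!\left(\gamma + K\cdot(v^+(P) + v^-(P))\right) = O\!\left(\gamma + K\cdot k\right).
\]

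Finally I would simplify the parameters for integer polynomials of bit size $\tau$. By Lemma~\ref{lem:usefulineq} (specifically the bounds relating $\Gamma$ to $\Gamma_P$ and the standard root separation and root-bound estimates for integer polynomials), we have $\Gamma = O(\Gamma_P + \log n) = O(\tau + \log n)$ and $\log M(\sigma_P^{-1}) = O(n(\log n + \tau))$, so $K = O(\log(n\tau))$ and $\gamma = O(\log \Gamma) = O(\log(n\tau))$. Substituting gives $|\mathcal{T}| = O(\log(n\tau) + k\log(n\tau)) = O(k\log(n\tau))$, since $k \ge 1$. Note that isolating the real roots of $\bar P = P \cdot x^{-i}$ also isolates all non-zero real roots of $P$, and the root $0$ (if present) is handled trivially, so the tree for $P$ has the same asymptotic size.

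The main obstacle I anticipate is not in the final arithmetic — that is routine — but in making sure the modified initialization with $s_{\gamma+1}^* = 0$ genuinely satisfies condition~(\ref{cond:start1}): one needs $|\bar P(0)| = |P_i| \ge 1$ for the lowest non-vanishing coefficient $P_i$ of the integer polynomial $P$, which holds precisely because after stripping the factor $x^i$ the constant term is a non-zero integer, hence of absolute value at least $1$. Once this is observed, the subadditivity argument applies verbatim and the rest follows mechanically from Theorems~\ref{subad}, \ref{thm:sizetreealt} and Lemma~\ref{lem:usefulineq}.
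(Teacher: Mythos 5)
Your proposal is correct and follows the same route the paper takes (and explicitly sketches in the Remark preceding the theorem): bound the tree size via Theorem~\ref{thm:sizetreealt} and the subadditivity of $\var(\bar{P},\cdot)$ to get $O(\gamma + K\cdot(v^+(P)+v^-(P)))$, then use $v^+(P),v^-(P)\le k-1$ and the bounds $\gamma,K=O(\log(n\tau))$. The only thing you add beyond what the paper writes out is the elementary observation $v^\pm(P)\le k-1$ and the verification that the integer-coefficient hypothesis guarantees $|\bar{P}(0)|\ge 1$ after stripping the $x$-factor; both are correct and are exactly the implicit step the paper elides with ``In particular, we obtain''.
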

\Kurt{Notice that, in the special case, where $P$ is a sparse integer polynomial with only $(\log(n\tau))^{O(1)}$ non-vanishing coefficients, our algorithm generates a tree of size $(\log(n\tau))^{O(1)}$. An illustrative example of the latter kind are Mignotte polynomials of the form $P=x^n-(a\cdot x-1)^2$, with $a$ an integer of bit size less than $\tau$. In order to isolate the real roots of $P$, our algorithm generates a tree of size $(\log(n\tau))^{O(1)}$, whereas bisection methods, such as the classical Descartes method, generate a tree of size $\Omega(n\tau)$.}
}

\subsection{Bit Complexity}\label{sec:bit-complexity}

In order to bound the bit complexity of our algorithm, we associate a root $z_i$ of $P$ with every interval $I$ in the subdivision forest and argue that the cost (in number of bit operations) of  processing $I$ is 
\begin{equation}\label{bit bound}       \tilde{O}(n(n+\tau_P+n \log M(z_{i})+\log M(P'(z_{i})^{-1}))). \end{equation}
The association is such that each root of $P$ is associated with at most $O(s_{\max} \log n+\log\Gamma)$ intervals, and hence, the total bit complexity can be bounded by summing the bound in (\ref{bit bound}) over all roots of $P$ and multiplying by $s_{\max} \log n+\log\Gamma$. Theorem~\ref{main:theorem} results.

We next define the mapping from $\mathcal{T}$ to the set of roots of $P$. 
Let $I$ be any non-terminal interval in $\mathcal{T}$.
We define a path of intervals starting in $I$. Assume we have extended the path to an interval $I'$. The path ends in $I'$ if $I'$ is strongly splitting or if $I'$ is terminal; see the introduction of Section~\ref{sec:size-subdivision-tree} for the definitions. If $I'$ has a child $I''$ with $\mathcal{M}(I')=\mathcal{M}(I'')$, the path continues to this child. If $I'$ has two children $J_1$ and $J_2$ with $\mathcal{M}(J_1)\cup\mathcal{M}(J_2)=\mathcal{M}(I')$, and both $\mathcal{M}(J_1)$ and $\mathcal{M}(J_2)$, are nonempty (and hence $\max(\abs{\mathcal{M}(J_1)},\abs{\mathcal{M}(J_1)}) < \abs{\mathcal{M}(I')}$), the path continues to the child with smaller value of $\abs{\mathcal{M}(*)}$. Ties are broken arbitrarily, but consistently, i.e., all paths passing through $I'$ make the same decision. Let $J$ be the last interval of the path starting in $I$. Then, the one-circle region of $J$ contains at least one root that is not contained in the one-circle region of any child of $J$. We call any such root $z\in\mathcal{M}(J)\subset\mathcal{M}(I)$ \emph{associated with $I$}.
With terminal intervals $I$ that are different from any $\mathcal{I}_k$, we associate the same root as with the parent interval. With terminal intervals $\mathcal{I}_k$, we associate an arbitrary root. More informally, with an interval $I$, we associate a root $z_i\in\mathcal{M}(I)$, which is either "discarded" or isolated when processing the last interval of the path starting in $J$.

The path starting in an interval has length at most $s_{\max} \cdot\log n$ as there are at most $s_{\max}$ intervals $I$ with the same set $\mathcal{M}(I)$, and $\abs{\mathcal{M}(*)}$ shrinks by a factor of at least $1/2$ whenever the path goes through a splitting node. There are at most $2\log\Gamma+1$ intervals $\mathcal{I}_k$ with which any root can be associated, and each root associated with an interval $I\subsetneq \mathcal{I}_k$ cannot be associated with any interval $I'\subsetneq \mathcal{I}_{k'}$ with $k\neq k'$ as the corresponding one-circle regions are disjoint.
As a consequence, any root of $P$ is associated with at most $s_{\max}\cdot \log n+2\log\Gamma+1=O(s_{\max}\log n+\log\Gamma)$ intervals. 

We next study the complexity of processing an interval $I$. We 
first derive a lower bound for $|P|$ at the subdivision points that are considered when processing $I$. We introduce the following notation: For an interval $I=(a,b)\in\mathcal{T}$, we call a point $\xi$ \emph{special with respect to $I$} (or just special if there is no ambiguity) if $\xi$ is
\begin{compactenum}
\item[(P1)] an endpoint of $I$, that is, $\xi=a$ or $\xi=b$.
\item[(P2)] an admissible point $m^{*}\in \multipoint{m(I)}{w(I)\cdot 2^{-\lceil \log n+2 \rceil}}$ as computed in the $1$-Test.
\item[(P3)] an admissible point $\xi_{j}^{*}\in\multipoint{\xi_{j}}{w(I)\cdot 2^{-\lceil 5+\log n\rceil}}$ as computed in (\ref{Newtonmultipoint1}) in the Newton-Test.
\item[(P4)] an admissible point $a_{j_{1},j_{2}}^{*}\in \multipoint{a_{j_{1},j_{2}}}{2^{-\lceil 5+\log n\rceil}\cdot\frac{w(I)}{N_{I}}}$ or an admissible point $b_{j_{1},j_{2}}^{*}\in\multipoint{b_{j_{1},j_{2}}}{2^{-\lceil 5+\log n\rceil}\cdot\frac{w(I)}{N_{I}}}$ as computed in (\ref{Newtonmultipoint2}) in the Newton-Test.
\item[(P5)] an admissible point $m_{\ell}^{*}\in\multipoint{m_{\ell}}{2^{-\lceil 2+\log n\rceil}\cdot\frac{w(I)}{N_{I}}}$ or an admissible point $m_{r}^{*}\in\multipoint{m_{r}}{2^{-\lceil 2+\log n\rceil}\cdot\frac{w(I)}{N_{I}}}$ as computed in (\ref{Boundarymultipoint}) in the Boundary-Test.
\end{compactenum}
For intervals $I$ with $\var(P,I)=0$, we have only special points of type (P1), and for intervals with $\var(P,I)=1$, we have only special points of type (P1) and type (P2). For other intervals, we consider all types. The following lemma provides a lower bound for the absolute value of $P$ at special points.

\begin{lemma}\label{lem:inequalityspecial}
Let $I\in\mathcal{T}$ be an arbitrary interval, and let $\xi$ be a special point with respect to $I$. If $\Delta(I)$ contains a root of $P$, then
\begin{align}\label{special:lowerbound}
|P(\xi)|>2^{-40n\log n-2\tau_{P}}\cdot M(z_{i})^{-5n}\cdot \min(1,\sigma(z_i,P))^{5}\cdot \min(1,|P'(z_{i})|)^{5}
\end{align}
for all $z_{i}\in\Delta(I)$. If $\Delta(I)$ contains no root, then $\xi$ fulfills the above inequality for all roots contained in $\Delta(J)$, where $J$ is the parent of $I$.\footnote{If $I=\mathcal{I}_k$ for some $k$ and $\Delta(I)$ contains no root, then $z_i$ can be chosen arbitrarily.}
\end{lemma}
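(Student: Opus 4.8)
The plan is to establish the bound~(\ref{special:lowerbound}) for each of the five types of special points by reducing everything to Corollary~\ref{cor:multipoint}, together with the elementary bounds from Lemma~\ref{lem:usefulineq}. The unifying observation is that every special point of type (P2)--(P5) is an admissible point of a multipoint $\multipoint{m}{\epsilon}$ of the shape required by Corollary~\ref{cor:multipoint}: in each case the center $m$ lies in (or very close to) $\bar I$, the spacing $\epsilon$ is $w(I)$ (resp. $w(I)/N_I$) times a negative power of~$2$ comparable to $1/n$, and the relevant disk $\Delta_{K\epsilon}(m)$ with $K\asymp n$ (resp. $K\asymp nN_I$) is contained in a disk concentric with $I$ whose radius is $O(w(I))$ (resp. $O(w(I))$ as well, using $w(I)/N_I \le w(I)$). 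Thus $\Delta_{K\epsilon}(m)\subseteq \Delta_{cw(I)}(m(I))$ for an absolute constant~$c$, hence $\mu(\Delta_{K\epsilon}(m))\le n$ and, since $\Delta(I)\subseteq \Delta_{cw(I)}(m(I))$, every root $z_i\in\Delta(I)$ lies in $\Delta_{K\epsilon}(m)$ whenever that disk contains at least two roots. Corollary~\ref{cor:multipoint} then gives, for the admissible point $\xi$,
\[
|P(\xi)| > 2^{-4n-1}\cdot K^{-\mu-1}\cdot \sigma(z_i,P)\cdot |P'(z_i)|
\]
for all such $z_i$, and it remains only to absorb the factor $K^{-\mu-1}$ and to handle the degenerate cases.

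First I would dispose of the degenerate cases. If $\Delta_{K\epsilon}(m)$ contains at most one root, then in particular $\Delta(I)$ contains at most one root; but we are in the case $\var(P,I)\ge 2$ for points of type (P3)--(P5) (types (P2)--(P5) only occur for non-terminal $I$, where the two-circle region, hence $\Delta_{cw(I)}(m(I))$, contains $\ge 2$ roots), so choosing $c$ large enough forces $\Delta_{K\epsilon}(m)$ to contain $\ge 2$ roots and this case does not arise. For type (P1) and for the clause about an interval $I$ with $\Delta(I)$ root-free, note that the endpoints $a,b$ of $I$ are, by construction of the algorithm (Step~(2.5), the Newton-Test, the Boundary-Test, and Initialization), themselves admissible points of multipoints of the parent interval $J$, so they fall under the cases already treated with $I$ replaced by $J$; the only genuinely separate situation is an initial endpoint $s_k^*$, for which the Initialization section already proves $|P(s_k^*)|\ge 2^{-4n-n\log n}$, which is far stronger than~(\ref{special:lowerbound}). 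So it suffices to treat types (P2)--(P5) under the standing assumption that $\Delta_{K\epsilon}(m)$ contains $\ge 2$ roots.

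Next I would bound $K^{-\mu-1}$. We have $K = O(n N_I)$ in the worst case and $\mu\le n$, so $K^{-\mu-1}\ge (c'nN_I)^{-n-1}$ for an absolute constant~$c'$. By Lemma~\ref{lemma:bounds1}, $N_I \le 2^{4(\Gamma+1)}\sigma_P^{-4}\le 2^{4(\Gamma+1)}\min(1,\sigma(z_i,P))^{-4}$, and $\Gamma = O(\Gamma_P+\log n)$ with $\Gamma_P\le \log M(z_i)+1$ for the associated root... more precisely $2^\Gamma \ge \max_i|z_i|+1$, so $2^{\Gamma}\le 2\cdot 2^{8\log n}\cdot M(\max_i|z_i|)$; combined with $\log M(\max_i|z_i|) = O(\log n + \log\Mea(P)) = O(\log n + \tau_P)$ this gives $\Gamma = O(\log n + \tau_P + \log M(z_i))$ — in fact I would keep the cruder but sufficient $\Gamma\le \tau_P + O(\log n) + n\log M(z_i)$ type bound so that $2^\Gamma = 2^{O(\log n)} 2^{\tau_P} M(z_i)^{O(1)}$. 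Plugging in, $K^{-\mu-1}$ is at least $2^{-O(n\log n)}2^{-O(n\tau_P)}M(z_i)^{-O(n)}\min(1,\sigma(z_i,P))^{O(n)}$, which after taking the constants generously is $\ge 2^{-40n\log n - 2\tau_P}M(z_i)^{-5n}\min(1,\sigma(z_i,P))^{5}$ divided by $\sigma(z_i,P)|P'(z_i)|$-type slack. Combining with the Corollary's bound and then replacing $\sigma(z_i,P)$ and $|P'(z_i)|$ by $\min(1,\sigma(z_i,P))$ and $\min(1,|P'(z_i)|)$ (legitimate since these only decrease the right-hand side) and re-collecting all the $2^{-O(n\log n)}$, $2^{-O(\tau_P)}$, $M(z_i)^{-O(n)}$ factors into the stated constants yields~(\ref{special:lowerbound}).

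The main obstacle is the bookkeeping of exponents: one must verify that the exponent on $K^{-\mu-1}$, once $K\asymp nN_I$ and $N_I$ are expressed via Lemma~\ref{lemma:bounds1} in terms of $\sigma_P$ and $\Gamma$, and $\Gamma$ in turn in terms of $\tau_P$ and $M(z_i)$, really does fit under the budget $40n\log n + 2\tau_P$ in the exponent, $5n$ in the power of $M(z_i)$, and $5$ in the powers of $\min(1,\sigma(z_i,P))$ and $\min(1,|P'(z_i)|)$ — in particular that $\log N_I$ contributes $O(\log \sigma_P^{-1})$, which multiplied by $\mu+1 = O(n)$ stays within the allotted $5$th power of $\min(1,\sigma(z_i,P))$ only because the $n\cdot\log\sigma_P^{-1}$ term is dominated, for the degenerate-free case, by the constraint $w(I)\ge \sigma_P/2$ forcing $N_I$ not too large on the relevant part of the tree. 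I would track this carefully but it is entirely routine once the reduction to Corollary~\ref{cor:multipoint} is in place.
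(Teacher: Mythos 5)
The proposal correctly identifies that Corollary~\ref{cor:multipoint} is the engine, and correctly observes that a point of type (P1) can be pushed to an ancestor (the paper formalizes exactly this as an induction on depth with base case from (\ref{cond:start1})). However, there is a genuine gap at the heart of the exponent bookkeeping, and you yourself flag it without resolving it.

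You propose to bound $K^{-\mu-1}$ by using $\mu\le n$ and the global tree-wide bound $N_I\le 2^{4(\Gamma+1)}\sigma_P^{-4}$ from Lemma~\ref{lemma:bounds1}. This gives $K^{-\mu-1} \gtrsim (nN_I)^{-n-1}$, which contains a factor of order $\sigma_P^{\Theta(n)}$. That is not within the budget: the lemma asks for $\min(1,\sigma(z_i,P))^5$, a fixed power of the \emph{local} separation, not an $n$-th power of the global one. Your last sentence acknowledges this ("stays within the allotted $5$th power \ldots only because \ldots $w(I)\ge\sigma_P/2$ forcing $N_I$ not too large") but $w(I)\ge\sigma_P/2$ does not cap $N_I$ at a constant; it still only gives the bound from Lemma~\ref{lemma:bounds1}, and $(n+1)\log\sigma_P^{-1}$ is not $O(\log\sigma_P^{-1})$. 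So the reduction to Corollary~\ref{cor:multipoint} alone does not close the proof, and the claim that the rest is ``entirely routine'' is where it breaks.

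The missing idea is a \emph{local} estimate that converts the exponent $\mu(\Delta)+1$ into a single power of $|P'(z_i)|$ and a constant power of $\sigma(z_i,P)$. The paper does this in two steps. First it replaces the global $N_I$-bound with the sharper $N_J\le 4w(\mathcal{I}_k)/w(J)$ (valid because the root interval of the chain containing $J$ has width $\le w(\mathcal{I}_k)$ and $w(J)\le 4w(\mathcal{I}_k)/N_J$), which turns $\left(\tfrac{w(J)}{w(I)}\right)^{-\mu(\Delta)-1}$ into $w(J)^{2(\mu(\Delta)+1)}$ times benign factors. Second, and crucially, it factors $P'(z_i)=n P_n\prod_{j\ne i}(z_i-z_j)$ and splits the product over roots inside and outside $\Delta$, obtaining $|P'(z_i)|\le 2^{O(n+\tau_P)}M(z_i)^n\,w(J)^{\mu(\Delta)-1}$. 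Inverting, $w(J)^{-\mu(\Delta)+1}\lesssim M(z_i)^n 2^{O(n+\tau_P)}/|P'(z_i)|$, and the remaining $w(J)^{-2}\lesssim\sigma(z_i,P)^{-2}$ uses only the two-roots-in-$\Delta$ observation. This cancellation — distributing the $\mu(\Delta)$ in the exponent over the $\mu(\Delta)-1$ near-roots contributing to $P'(z_i)$ — is what caps the power of $\sigma$ and $|P'|$ at a constant, and your outline does not contain it. Without it, the bound you would prove is of the shape $\sigma_P^{\Theta(n)}|P'(z_i)|$, which is strictly weaker than (\ref{special:lowerbound}) and not sufficient for the downstream complexity bound of Theorem~\ref{main:theorem}.

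One further, smaller gap: you describe the argument as a direct case analysis over (P1)--(P5), but for (P1) the endpoints of $I$ need not be admissible points of a multipoint of the \emph{parent} $J$ — they may have been created many levels up. The clean way to handle this is the induction on depth that the paper runs; stating it explicitly would be needed to make the (P1) case rigorous.
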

\begin{proof}
We will prove the claim via induction on the depth $k$ of an interval $I$, where the depth of the intervals $\mathcal{I}_{k}$ is one. According to (\ref{cond:start1}), the endpoints of $\mathcal{I}_{k}$ fulfill inequality (\ref{special:lowerbound}). Now, if $\xi$ is a special point (with respect to $\mathcal{I}_{k}$) of type (P2), then 
$$|P(\xi)|\ge \frac{1}{4}\cdot \max \set{|P(x)|}{x\in \multipoint{m(\mathcal{I}_{k})}{w(\mathcal{I}_{k})\cdot 2^{-\lceil \log n+2 \rceil}}}.$$
Since at least one of the points in $\multipoint{m(\mathcal{I}_{k})}{w(\mathcal{I}_{k})\cdot 2^{-\lceil \log n+2 \rceil}}$ has distance more than $w(\mathcal{I}_{k})\cdot 2^{-\lceil \log n+3 \rceil}>\frac{1}{8n}$ to all roots of $P$, it follows that $|P(\xi)|>\frac{|P_{n}|}{(8n)^{n}}> 2^{-8n\log n}$, where we use that $|P_n|\ge 1/4$ and $w(\mathcal{I}_{k})\ge 1$. An analogous argument yields $|P(\xi)|>\frac{1}{(64n)^{n+1}}\ge 2^{-8n\log n}$ if $\xi$ is a special point of type (P3) to (P5), where we additionally use $N_{\mathcal{I}_{k}}=4$ for all $k$.\smallskip

For the induction step from $k$ to $k+1$, suppose that $I=(a,b)$ is an interval of depth $k+1$ with parent interval $J=(c,d)$ of depth $k$. We distinguish the following cases.\smallskip

\noindent\emph{The point $\xi$ is a special point of type (P1):}
The endpoints of $I$ are either subdivision points (as constructed in Steps (Q) or (L) in our algorithm) or endpoints of some interval $J'\in\mathcal{T}$ with $I\subseteq J'$. Hence, they are special points with respect to an interval $J'$ that contains $I$. Thus, from our induction hypothesis (the depth of $J'$ is smaller than or equal to $k$) and the fact that $\Delta(I)\subseteq\Delta(J')$, it follows that the inequality (\ref{special:lowerbound}) holds for all admissible points $\xi$ of type (P1).\smallskip

\noindent\emph{The point $\xi$ is a special point of type (P2)}:
Since $\var(P,J)\ge 2$, the two-circle region of $J$ contains at least two roots of $P$, and thus the disk $\Delta:=\Delta_{2w(J)}(m(I))$ with radius $2w(J)$ centered at the midpoint $m(I)$ of $I$ contains at least two roots. This shows that $\sigma(z_i,P)<2w(J)$ for any root $z_{i}\in\Delta$. With $\epsilon:=w(I)\cdot 2^{-\lceil \log n+2 \rceil}$ and $K:=\frac{w(J)}{w(I)}\cdot 2^{\lceil\log n+3\rceil}$, we can now use Lemma~\ref{lem:boundonmax} to show that
\begin{align}\label{inequ1:special}
|P(\xi)|>2^{-4n-1}\cdot K^{-\mu(\Delta)-1}\cdot\sigma(z_i,P)\cdot |P'(z_{i})|>2^{-8n\log n}\cdot \left(\frac{w(J)}{w(I)}\right)^{-\mu(\Delta)-1}\cdot\sigma(z_i,P)\cdot |P'(z_{i})|,
\end{align}
where $z_{i}$ is any root in $\Delta$, and $\mu(\Delta)$ denotes the number of roots contained in $\Delta$. If the subdivision step from $J$ to $I$ is linear, then $w(J)/w(I)\in [4/3,4]$, and thus the bound in (\ref{special:lowerbound}) is fulfilled. Otherwise, we have $w(J)/w(I)\in [N_{I},4N_{I}]=[N_{J}^{2},(2N_{J})^{2}]$. In addition, $w(J)\le 4w(\mathcal{I}_{k})/N_{J}$, where $k$ is the unique index with $J\subseteq\mathcal{I}_{k}$. We conclude that $N_J\le 4w(\mathcal{I}_k)/w(J)$, and thus
\begin{align}\label{ineq:w(J)}
\frac{w(J)}{w(I)}\le (2N_J)^2\le \left(\frac{8w(\mathcal{I}_{k})}{w(J)}\right)^{2}\le \frac{2^{12}\cdot M(z_{i})^{2}}{w(J)^2}
\end{align}
since $w(\mathcal{I}_{k})\le 8M(x)^{2}$ for all $x\in \mathcal{I}_{k}$. Furthermore, since 
\begin{align*}
|P'(z_{i})|&=n|P_{n}|\cdot\prod_{z_{j}\in\Delta:z_{j}\neq z_{i}}|z_{i}-z_{j}|\cdot\prod_{z_{j}\notin\Delta}|z_{i}-z_{j}|\le n\cdot (2w(J))^{\mu(\Delta)-1}\cdot \prod_{z_{j}\notin\Delta}|z_{i}-z_{j}|\\
&\le n\cdot 2^{n}\cdot w(J)^{\mu(\Delta)-1}\cdot \prod_{z_{j}\notin\Delta}M(z_{i}-z_{j})\le n\cdot 2^{n}\cdot w(J)^{\mu(\Delta)-1}\cdot \frac{\Mea(P(z_{i}-x))}{|P_n|}\\
&\le n\cdot 2^{n}\cdot w(J)^{\mu(\Delta)-1}\cdot 2^{\tau_{P}}2^{n}M(z_{i})^{n}<2^{4n+\tau_{P}}\cdot M(z_{i})^{n}\cdot w(J)^{\mu(\Delta)-1},
\end{align*}
it follows that 
\begin{align*}
w(J)^{-\mu(\Delta)-1}=w(J)^{-\mu(\Delta)+1}w(J)^{-2}<\frac{2^{8n+\tau_{P}}\cdot M(z_{i})^{n}}{\sigma(z_i,P)^{2}\cdot |P'(z_{i})|},
\end{align*}
where we used that $2w(J)>\sigma(z_i,P)$ in order to bound $w(J)^{-2}$.
Hence, it follows from (\ref{ineq:w(J)}) that 
\begin{align*}
\left(\frac{w(J)}{w(I)}\right)^{-\mu(\Delta)-1}&\ge \left(2^{12}\cdot M(z_{i})^{2}\cdot w(J)^{-2}\right)^{-\mu(\Delta)-1}\\
&\ge 2^{-12(n+1)}\cdot M(z_{i})^{-2(n+1)}\cdot 2^{-16n-2\tau_{P}}\cdot M(z_{i})^{-2n}\sigma(z_i,P)^{4}\cdot |P'(z_{i})|^{4}\\
&>2^{-32n-2\tau_{P}}\cdot M(z_{i})^{-5n}\cdot \sigma(z_i,P)^{4}\cdot |P'(z_{i})|^{4}.
\end{align*}
Plugging the latter inequality into (\ref{inequ1:special}) eventually yields
\[
|P(\xi)|>2^{-8n\log n}\cdot \left(\frac{w(J)}{w(I)}\right)^{-\mu(\Delta)-1}\cdot\sigma(z_i,P)\cdot |P'(z_{i})|>2^{-40n\log n-2\tau_{P}}\cdot M(z_{i})^{-5n}\cdot \sigma(z_i,P)^{5}\cdot |P'(z_{i})|^{5}.
\]
Thus, $\xi$ fulfills the bound (\ref{special:lowerbound}).\smallskip

\noindent\emph{The point $\xi$ is a special point of type (P3)}: The same argument as in the preceding case also works here.
Namely, each disk $\Delta:=\Delta_{2w(J)}(\xi_j)$ with radius $2w(J)$ centered at the point $\xi_j$ contains at least two roots, and thus we can use Lemma~\ref{lem:boundonmax} with $\epsilon:=w(I)\cdot 2^{-\lceil \log n+5 \rceil}$ and $K:=\frac{w(J)}{w(I)}\cdot 2^{\lceil\log n+6\rceil}$.\smallskip

\noindent\emph{The point $\xi$ is a special point of type (P4)}:
The Newton-Test is only performed if the $0$-Test and the $1$-Test have failed. Hence, we must have $\var(P,I)\ge 2$, and thus, each disk $\Delta:=\Delta_{2w(I)}(x_0)$ with radius $2w(I)$ centered at any point $x_0\in I$  contains at least two roots. We use this fact $x_0=a_{j_1,j_2}$ and $x_0=b_{j_1,j_2}$ and obtain, 
using Lemma~\ref{lem:boundonmax} with $\epsilon:=\frac{w(I)}{N_I}\cdot 2^{-\lceil \log n+5 \rceil}$ and $K:=N_I\cdot 2^{\lceil\log n+4\rceil}$, 
\begin{align}\label{inequ4:special}
|P(\xi)|>2^{-4n-1}\cdot K^{-\mu(\Delta)-1}\cdot\sigma(z_i,P)\cdot |P'(z_{i})|>2^{-8n\log n}\cdot N_I^{-\mu(\Delta)-1}\cdot\sigma(z_i,P)\cdot |P'(z_{i})|.
\end{align}
If the subdivision step from $J$ to $I$ is linear, then $N_I\le N_J\le  4w(\mathcal{I}_{k})/w(J)$, where $k$ is the unique index with $J\subseteq\mathcal{I}_{k}$. If the step from $J$ to $I$ is quadratic, then $N_I= N_J^2\le  (4w(\mathcal{I}_{k})/w(J))^2$. Now, the same argument as in the type (P2) case (see (\ref{ineq:w(J)}) and the succeeding computation) shows that 
\begin{align*}
N_I^{-\mu(\Delta)-1}&\ge \left(\frac{4w(\mathcal{I}_k)}{w(J)}\right)^{-\mu(\Delta)-1}\ge \left(2^{10}\cdot M(z_i)^2\cdot w(J)^{-2}\right)^{-\mu(\Delta)-1}\\
&\ge 2^{-32n-2\tau_{P}}\cdot M(z_{i})^{-5n}\cdot \sigma(z_i,P)^{4}\cdot |P'(z_{i})|^{4},
\end{align*}
and thus
\[
|P(\xi)|>2^{-8n\log n}\cdot N_I^{-\mu(\Delta)-1}\cdot\sigma(z_i,P)\cdot |P'(z_{i})|>2^{-40n\log n-2\tau_{P}}\cdot M(z_{i})^{-5n}\cdot \sigma(z_i,P)^{5}\cdot |P'(z_{i})|^{5}.
\]\smallskip

\noindent\emph{The point $\xi$ is a special point of type (P5)}: The same argument as in the previous case works since the 
Boundary-Test is only applied if $\var(P,I)\ge 2$.
\end{proof}

We can now derive our final result on the bit complexity of $\AD$:

\begin{theorem}[Restatement of Theorem~\ref{thm: main}]\label{main:theorem} Let $P = P_n x^n + \ldots + P_0 \in \R[x]$ be a real polynomial with $1/4 \le \abs{P_n} \le 1$. 
The algorithm $\AD$ computes isolating intervals for all real roots of $P$ with a number of bit operations bounded by
\begin{align}\label{result:bitcomplexity:isolation} 
&\tilde{O}(n\cdot(n^2+n \log \Mea(P)+\sum_{i=1}^n \log M(P'(z_i)^{-1}))) \\
                              &\qquad= \tilde{O}(n(n^2 + n \log \Mea(P) + \log M(\Disc(P)^{-1}))).
\end{align}
The coefficients of $P$ have to be approximated with quality
\[ \tilde{O}(n+\tau_{P}+\max_{i}(n\log M(z_{i})  + \log M(P'(z_{i})^{-1}) )). \]
\end{theorem}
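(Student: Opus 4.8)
The plan is to combine the bound on the size of the subdivision forest (Theorem~\ref{thm:treesize}) with a per-interval cost bound, where the cost of processing each interval $I$ is charged to a root $z_i$ associated with $I$ via the mapping constructed just before the theorem. First I would recall that every interval $I$ in the forest $\mathcal{T}$ has an associated root $z_i\in\mathcal{M}(I)$ (or $\mathcal{M}(J)$ for the parent $J$ when $\Delta(I)$ contains no root), and that each root of $P$ is associated with at most $O(s_{\max}\log n+\log\Gamma)$ intervals. By Lemma~\ref{lemma:bound on smax}, $s_{\max}=O(\log n+\log(\Gamma+\log M(\sigma_P^{-1})))$, and via Lemma~\ref{lem:usefulineq} (relating $\sigma_P$, $\Disc(P)$, $\Mea(P)$) this is polylogarithmic in the relevant quantities, so the multiplicative overhead of $O(s_{\max}\log n+\log\Gamma)$ is absorbed into the $\tilde{O}$. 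Hence it suffices to show that processing a single interval $I$ associated with $z_i$ costs $\tilde{O}(n(n+\tau_P+n\log M(z_i)+\log M(P'(z_i)^{-1})))$ bit operations, and then sum this over all roots.

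Next I would bound the cost of processing one interval $I=(a,b)$. Processing consists of the $0$-Test, the $1$-Test, possibly the Newton-Test and Boundary-Test, and a linear or quadratic subdivision step. Each of these is, by Corollaries~\ref{cor:cost0test},~\ref{cor:cost1test},~\ref{cor:multipoint} and Lemma~\ref{lem:singlepolyeval}, dominated (up to log factors and a factor polynomial in $\log N_I$) by a term of the form $\tilde{O}(n(n+\tau_P+n\log M(\text{endpoints or special points})+\log M(\text{value of }P\text{ at a special point})^{-1}))$. The two ingredients I then need are: (i) every special point $\xi$ of $I$ satisfies $\log M(\xi)=O(\log M(z_i)+\Gamma_P+\log n)$ — this follows because $I\subseteq\mathcal{I}_k$ for some $k$, the second condition in (\ref{cond:start1}) says $\log M(x)$ is essentially constant on $\mathcal{I}_k$ and comparable to $\log M(z_i)$ (since $z_i\in\Delta(I)\subseteq\Delta(\mathcal{I}_k)$), and the perturbations defining special points are by at most $O(w(I))\le 2^\Gamma$; and (ii) $\log M(P(\xi)^{-1})=O(n\log n+\tau_P+n\log M(z_i)+\log M(\sigma(z_i,P)^{-1})+\log M(P'(z_i)^{-1}))$, which is exactly the content of Lemma~\ref{lem:inequalityspecial}. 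Also $\log N_I=O(\Gamma+\log M(\sigma_P^{-1}))$ by Lemma~\ref{lemma:bounds1}, which is polylogarithmic and absorbed. Plugging (i) and (ii) into the cost bounds, using $\log M(\sigma(z_i,P)^{-1})\le \log M(\sigma_P^{-1})$ which contributes only a polylog overhead once we recall $\sigma_P^{-1}$ is controlled by $\Disc(P)^{-1}$, $\Mea(P)$ via (\ref{lower bound on separation}), I get that the cost of processing $I$ is $\tilde{O}(n(n+\tau_P+n\log M(z_i)+\log M(P'(z_i)^{-1})))$, which is (\ref{bit bound}).

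Finally I would assemble the total: summing (\ref{bit bound}) over all $O(n)$ roots, each counted with the $O(s_{\max}\log n+\log\Gamma)$ multiplicity, and using $\tau_P=O(n+\log\Mea(P))$ from (\ref{taup in terms of measure}) to rewrite the $n\cdot(n+\tau_P)$ terms as $n\cdot n^2+n\cdot n\log\Mea(P)$ after the $O(n)$-fold sum, I obtain
\begin{align*}
\tilde{O}\Big(n\cdot\big(n^2+n\log\Mea(P)+\sum_{i=1}^n\log M(P'(z_i)^{-1})\big)\Big),
\end{align*}
and the second form follows from (\ref{upper bound on sum of inverse derivatives}) in Lemma~\ref{lem:usefulineq}, which converts $\sum_i\log M(P'(z_i)^{-1})$ into $O(n\tau_P+n^2+n\log\Mea(P)+\log M(\Disc(P)^{-1}))$. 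The coefficient-precision bound is read off the same per-interval estimates: the worst required quality across all tests and all special points is $O(n+\tau_P+n\log M(z_i)+\log M(P'(z_i)^{-1}))$ (again using Lemma~\ref{lem:inequalityspecial} and (i) above), and taking the maximum over $i$ gives the stated $\tilde{O}(n+\tau_P+\max_i(n\log M(z_i)+\log M(P'(z_i)^{-1})))$.

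The main obstacle I anticipate is not any single estimate but the bookkeeping that ties everything together cleanly: making precise that the cost of \emph{every} sub-routine invoked while processing $I$ (including the adaptive "double $L$ until\dots" loops inside Admissible Point, the Newton-Test, and the computation of $P_{I'},P_{I''}$ via Lemma~\ref{lem:computingpI}) is genuinely dominated by the single bound (\ref{bit bound}) with the \emph{same} associated root $z_i$, and that the polylogarithmic overheads ($s_{\max}$, $\log N_I$, $\log\Gamma$, the $\log$-factor blow-ups in fast multiplication and multipoint evaluation) really do collapse into the $\tilde{O}$. In particular one must check that when $\Delta(I)$ contains no root of $P$, the association to the parent's root still yields valid lower bounds on $|P|$ at the special points of $I$ — which is precisely why Lemma~\ref{lem:inequalityspecial} is phrased with the parent fallback — and that terminal intervals, whose processing only involves a successful $0$- or $1$-Test, are covered as well.
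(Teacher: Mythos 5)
Your high-level plan is the same as the paper's: associate a root $z_i$ with each interval $I$ via a path to a strongly splitting node, bound the per-interval cost via Lemma~\ref{lem:inequalityspecial}, observe each root is charged $O(s_{\max}\log n+\log\Gamma)$ times, and sum. However, several of your absorption claims are wrong in a way that actually breaks the complexity bound.

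The central error is the repeated assertion that $\log N_I=O(\Gamma+\log M(\sigma_P^{-1}))$ (and likewise $\log M(\sigma(z_i,P)^{-1})\le\log M(\sigma_P^{-1})$) is ``polylogarithmic and absorbed'' into the $\tilde{O}$. It is not: for an integer polynomial of degree $n$ and coefficient bitsize $\tau$ one can have $\log M(\sigma_P^{-1})=\Theta(n\tau)$, and $\Gamma=\Theta(\tau)$. If you plug the \emph{uniform} bound $\log M(\sigma_P^{-1})$ into the per-interval cost and sum over $\tilde{O}(n)$ intervals, you pick up a term of order $n^2\log M(\sigma_P^{-1})=\tilde{O}(n^3\tau)$, which exceeds the claimed $\tilde{O}(n^3+n^2\tau)$. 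The same holds if you bound $\log M(\xi)$ as in your~(i) by $O(\log M(z_i)+\Gamma_P+\log n)$: the extra $\Gamma_P$ (which is $\Theta(\log\Mea(P))$, not polylogarithmic) inflates the sum by a factor of roughly $n$. Condition~(\ref{cond:start1}) gives the tighter, per-root bound $\log M(\xi)\le 2(1+\log M(z_i))$ for $\xi\in I\subset\mathcal{I}_k$ and $z_i\in\Delta(I)$, with no $\Gamma_P$ slack.

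The paper avoids this pitfall by keeping the separation and the level $N_I$ tied to the \emph{associated root} rather than to global worst-case quantities. Specifically, in the proof of Lemma~\ref{lem:inequalityspecial} (cases (P2) and (P4)) it is shown that $\log N_I=O(\log M(z_i)+\log M(\sigma(z_i,P)^{-1}))$ for $z_i\in\Delta(I)$ (because $N_I$ is controlled by $w(\mathcal{I}_k)/w(J)$ with $w(J)>\sigma(z_i,P)/2$), and at the end of the proof of Theorem~\ref{main:theorem} the per-root estimate
\[
\sigma(z_i,P)\;\ge\;\frac{|P'(z_i)|}{\Mea(P(x-z_i))}\;\ge\;\frac{|P'(z_i)|}{2^{\tau_P+n}\,M(z_i)^n},
\]
derived from~(\ref{measure of shifted polynomial}), is used to absorb $\sum_i\log M(\sigma(z_i,P)^{-1})$ into $O(n^2+n\log\Mea(P)+\sum_i\log M(P'(z_i)^{-1}))$. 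It is this per-root control, not a polylogarithmic magnitude, that lets the sum over the subdivision forest collapse to the stated bound. You should replace the ``polylog overhead'' claims by this per-root argument; your final assembly step and the conversion via (\ref{upper bound on sum of inverse derivatives}) and (\ref{taup in terms of measure}) are then correct.
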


\begin{proof}
We first derive an upper bound on the cost for processing an interval $I\in\mathcal{T}$. Suppose that $\Delta(I)$ contains at least one root: When processing $I$, we consider a constant number of special points $\xi$ with respect to $I$. Since each of these points fulfills the inequality (\ref{special:lowerbound}), we conclude from Lemma~\ref{lem:apxmultipointeval} that the computation of all special points $\xi$ uses 
\begin{align}\label{complexityatnode}
\tilde{O}(n(n+\tau_P+n\log M(z_i)+\log M(\sigma(z_i,P)^{-1})+\log M(P'(z_i)^{-1})))
\end{align}
bit operations, where $z_i$ is any root contained in $\Delta(I)$. We remark that when applying Lemma~\ref{lem:apxmultipointeval}, we used (\ref{cond:start1}) which implies that $\log M(x)\le 2(1+\log M(z_i))$ for all $x\in I$ and all $z_i\in\Delta(I)$.

In addition, Corollary~\ref{cor:cost0test} and Corollary~\ref{cor:cost1test} yield the same complexity bound as stated in (\ref{complexityatnode}) for each of the considered $0$-Tests and $1$-Tests. Since we perform only a constant number of such tests for $I$, the bound in (\ref{complexityatnode}) applies to all $0$-Tests and $1$-Tests.

It remains to bound the cost for the computation of the values $\tilde{\lambda}_{j_1,j_2}$ in the Newton-Test: According to the remark following Step (2.2) in the description of the Newton-Test, it suffices to evaluate $P$ and $P'$ at the points $\xi_{j_1}^*$ and $\xi_{j_2}^*$ to an absolute precision of
\[
O(\log n+\log N_{I}+\log M(P(\xi^{*}_{j_{1}})^{-1}+\log M(P(\xi^{*}_{j_{2}})^{-1})+\log M(w(I)^{-1})+\log M(w(I))).
\]
Thus, according to Lemma~\ref{lem:singlepolyeval} and Lemma~\ref{lem:inequalityspecial}, the total cost for this step is bounded by
\[
\tilde{O}(n(n+\tau_P+n\log M(z_i)+\log M(\sigma(z_i,P)^{-1})+\log M(P'(z_i)^{-1}))
\]
bit operations, where $z_i$ is any root contained in $\Delta(I)$. Here, we used the fact that $2w(I)>\sigma(z_i,P)$ (notice that $\var(P,I)\ge 2$, and thus, the two-circle region of $I$ contains at least two roots) and that $\log N_I=O(\log M(\sigma(z_i,P)^{-1})+\log M(z_i))$ as shown in the proof of Lemma~\ref{lem:inequalityspecial}.
In summary, for any interval $I$ whose one-circle region contains at least one root, the cost for processing $I$ is bounded by (\ref{complexityatnode}). A completely analogous argument further shows that, for intervals $I$ whose one-circle region does not contain any root, the cost for processing $I$ is also bounded by (\ref{complexityatnode}), where $z_i$ is any root in $\Delta(J)$ and $J\in\mathcal{T}$ is the parent of $I$.

Since we can choose an arbitrary root $z_i\in\mathcal{M}(I)$ (or $z_i\in\mathcal{M}(J)$ for the parent $J$ of $I$ if $\mathcal{M}(I)$ is empty) in the above bound, we can express the cost of processing an interval $I$ in terms of the root associated with $I$. Since any root of $P$ has at most $O(s_{\max} \log n+\log\Gamma)$ many roots associated with it, the total cost for processing all intervals is bounded by 
\[
\tilde{O}(n\cdot(n^2+n \log \Mea(P)+\sum_{i=1}^n\log M(\sigma(z_i,P)^{-1})+\sum_{i=1}^n \log M(P'(z_i)^{-1})))
\]
bit operations, where we used that $\sum_{i=1}^n \log M(z_i)=\log\frac{\Mea(P)}{|P_n|}$, $\tau_P\le n+\log\Mea(P)$, and that the factor $s_{\max} \log n+\log\Gamma$ is swallowed by $\tilde{O}$. We can further discard the sum $\sum_{i=1}^n\log M(\sigma(z_i,P)^{-1})$ in the above complexity bound. Namely, if $z_k$ denotes the root with minimal distance to $z_i$, then (we use inequality (\ref{measure of shifted polynomial}))
$$\sigma(z_i,P)\ge \frac{|P'(z_i)|}{|P_n|\cdot\prod_{j\neq k,i}|z_j-z_i|}\ge\frac{|P'(z_i)|}{\Mea(P(z-z_i))}\ge\frac{|P'(z_i)|}{2^{\tau_P}\cdot 2^{n}\cdot M(z_i)^n},$$
and thus, $\sum_{i=1}^n\log M(\sigma(z_i,P)^{-1})=O(n^2+n \log \Mea(P)+\sum_{i=1}^n\log M(P'(z_i)^{-1})$.
Since the cost for the computation of $\Gamma$ is bounded by $\tilde{O}(n^2\Gamma_P)=\tilde{O}(n^2\cdot M(\log \Mea( P)))$ bit operations, the bound follows.

For the alternative bound, we use inequalities (\ref{upper bound on sum of inverse derivatives}) and (\ref{taup in terms of measure}).
\end{proof}

For the special case where the input polynomial $p$ has integer coefficients, we can specify the above complexity bound to obtain the following result:

\begin{corollary}[Restatement of Theorem~\ref{thm: main integer case}]\label{complexity:integer}
For a polynomial $p(x)=p_n\cdot x^n+\cdots+p_0\in\Z[x]$ with integer coefficients of absolute value $2^\tau$ or less, the algorithm $\AD$ computes isolating intervals for all real roots of $p$ with $\tilde{O}(n^3+n^2\tau)$ bit operations. \Kurt{If $p$ has only $k$ non-vanishing coefficients, the bound becomes $\tilde{O}(n^2(k+\tau))$ bit operations.}
\end{corollary}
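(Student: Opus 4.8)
The plan is to derive both statements from Theorem~\ref{main:theorem}: the dense bound by substituting the integer‑polynomial estimates into its complexity bound, and the sparse bound by revisiting its proof and replacing the generic tree‑size estimate by the sharper one of Theorem~\ref{thm:treesizesparse}.

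\emph{The dense bound.} First I would normalize: compute $t:=\lceil\log|p_n|\rceil\in\{0,\dots,\lceil\tau\rceil\}$ in $\tilde{O}(\tau)$ bit operations and run $\AD$ on $P:=\pm2^{-t}p$, with the sign chosen so that $1/4<P_n\le 1$; then $P$ has the same real roots as $p$, $\tau_P=M(\log\|P\|_\infty)=O(\tau+1)$, and approximations of $P$ of any quality $L$ can be produced from the exact integer coefficients of $p$ at cost $O(n(L+\tau))$. It remains to estimate the quantities in (\ref{result:bitcomplexity:isolation}) and in the quality requirement of Theorem~\ref{main:theorem}. Since $p$ is square‑free, $\Disc(p)$ is a non‑zero integer, so $|\Disc(p)|\ge 1$; because $\Disc(2^{-t}p)=2^{-t(2n-2)}\Disc(p)$, this gives $\log M(\Disc(P)^{-1})\le t(2n-2)=O(n\tau+n)$. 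By (\ref{upper bound on measure}), $\Mea(P)=2^{-t}\Mea(p)\le\|p\|_1\le(n+1)2^\tau$, hence $\log\Mea(P)=O(\tau+\log n)$ and likewise $\Gamma_P=O(\tau+\log n)$, so $\max_i n\log M(z_i)\le n\Gamma_P=O(n\tau+n)$. Writing $\sum_i\log M(P'(z_i)^{-1})=\log\bigl(|P_n|^{n-2}\prod_iM(P'(z_i))/|\Disc(P)|\bigr)$ as in the proof of (\ref{upper bound on sum of inverse derivatives}) and using $|P_n|\le 1$, $M(P'(z_i))\le M(p'(z_i))$, the bound (\ref{upper bound on derivative at root}), and $|\Disc(P)|\ge 2^{-t(2n-2)}$, one obtains $\sum_i\log M(P'(z_i)^{-1})=O(n\tau+n\log n)$. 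Plugging these estimates into Theorem~\ref{main:theorem} yields bit complexity $\tilde{O}\bigl(n(n^2+n(\tau+\log n)+n\tau+n)\bigr)=\tilde{O}(n^3+n^2\tau)$ and a coefficient quality of $\tilde{O}(n\tau+n)$, whose reading cost is also within this bound.

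\emph{The sparse bound.} Here I would use the \emph{proof} of Theorem~\ref{main:theorem} rather than its statement. That proof bounds the total cost (including the cost of reading the required coefficient approximations) by $\tilde{O}\bigl(n\sum_{I\in\mathcal{T}}A_I\bigr)$, where $A_I:=n+\tau_P+n\log M(z_I)+\log M(\sigma(z_I,P)^{-1})+\log M(P'(z_I)^{-1})$ and $z_I$ is the root of $P$ associated with $I$ (or with its parent) in the sense defined there. Normalizing $p$ to $\pm2^{-t}p$ (after, if necessary, splitting off a trivial factor $x^j$) changes neither its set of non‑zero coefficients nor its roots, hence neither $\Gamma$, $\sigma_P$, nor the numbers $\var(\cdot,\mathcal{I}_k)$; thus Theorem~\ref{thm:treesizesparse} (via Theorem~\ref{thm:sizetreealt}) gives $|\mathcal{T}|=O(k\log(n\tau))=\tilde{O}(k)$, while Lemma~\ref{lemma:bound on smax} and the counting argument in the proof of Theorem~\ref{main:theorem} still show that each root of $P$ is associated with at most $O(s_{\max}\log n+\log\Gamma)=\tilde{O}(1)$ intervals. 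I would then split $\sum_IA_I$ into a part that is constant per interval, contributing $|\mathcal{T}|\cdot(n+\tau_P)=\tilde{O}(k(n+\tau))=\tilde{O}(kn+n\tau)$ (using $k\le n$), and root‑dependent parts handled with two facts: at most $|\mathcal{T}|=\tilde{O}(k)$ distinct roots occur as associated roots, and each root is associated with $\tilde{O}(1)$ intervals. Hence $n\sum_I\log M(z_I)\le\tilde{O}(1)\cdot n\sum_i\log M(z_i)=\tilde{O}(n\tau)$ and $\sum_I\log M(P'(z_I)^{-1})\le\tilde{O}(1)\sum_i\log M(P'(z_i)^{-1})=\tilde{O}(n\tau)$; for the separation term, $\sigma(z_i,P)\ge|P'(z_i)|/(2^{\tau_P}2^nM(z_i)^n)$ (from the proof of Theorem~\ref{main:theorem}) makes each summand $O(\log M(P'(z_I)^{-1})+\tau_P+n+n\log M(z_I))$, and summing the $O(\tau_P+n)$ pieces over the $\tilde{O}(k)$ distinct associated roots and the remaining pieces over all roots gives $\sum_I\log M(\sigma(z_I,P)^{-1})=\tilde{O}(kn+n\tau)$. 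Altogether $\sum_IA_I=\tilde{O}(kn+n\tau)$, so the total cost is $\tilde{O}\bigl(n(kn+n\tau)\bigr)=\tilde{O}(n^2(k+\tau))$.

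\emph{Main obstacle.} The only delicate point is the separation term in the sparse case: summed over \emph{all} $n$ roots, $\sum_i\log M(\sigma(z_i,P)^{-1})$ is of order $n^2$ (one ``$+n$'' per root), and multiplying by the outer factor $n$ would reintroduce the $n^3$ term and destroy the sparse bound. The resolution is that, once $|\mathcal{T}|=\tilde{O}(k)$, only $\tilde{O}(k)$ roots occur as associated roots, so the ``$+n$'' contributions must be summed over $\tilde{O}(k)$ rather than $n$ terms; the bookkeeping must be done piecewise — using $|\mathcal{T}|=\tilde{O}(k)$ for the pieces that are large per interval (the constants $n,\tau_P$ and the ``$+n$''‑parts of the separation estimate) and ``$\tilde{O}(1)$ intervals per root'' for the pieces whose total over all $n$ roots is already small (the $\log M(z_i)$ and $\log M(P'(z_i)^{-1})$ terms). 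A secondary, routine check is that the normalization $p\mapsto\pm2^{-t}p$ affects neither the subdivision tree nor, up to polylogarithmic factors, the per‑interval quantities $A_I$.
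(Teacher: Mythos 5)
Your proof is correct, and for the dense bound it is essentially identical to the paper's: normalize to $P:=2^{-t}p$, then plug $\tau_P\le\tau$, $\Mea(P)\le\Mea(p)\le(n+1)2^\tau$, $\Disc(P)=2^{-(2n-2)t}\Disc(p)$ and integrality of $\Disc(p)$ into the form $\tilde O\bigl(n(n^2+n\log\Mea(P)+\log M(\Disc(P)^{-1}))\bigr)$ of Theorem~\ref{main:theorem}.

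For the sparse bound the paper is extremely terse -- it invokes Theorem~\ref{thm:treesizesparse} and then simply writes ``The bound stated follows'' -- while you supply the amortized accounting that this one-liner actually requires. Your decomposition of $\sum_I A_I$ is the right one, and you correctly identify why it is needed: a naive \emph{(tree size)} $\times$ \emph{(worst per-node cost)} estimate gives roughly $\tilde O(k)\cdot\tilde O(n^2\tau)=\tilde O(kn^2\tau)$, which is too weak, and the per-root sum $\sum_i\log M(\sigma(z_i,P)^{-1})$ over all $n$ roots carries an unavoidable $\Theta(n^2)$ term that would destroy the sparse bound if summed naively. Your piecewise treatment -- charging the interval-constant pieces $n+\tau_P$ to $|\mathcal T|=\tilde O(k)$, and the genuinely root-dependent pieces $n\log M(z_i)$, $\log M(P'(z_i)^{-1})$ to roots via the $\tilde O(1)$-association multiplicity -- is exactly what closes the gap. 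One point worth flagging: you implicitly use a version of inequality \cref{upper bound on sum of inverse derivatives} sharper than the one stated in Lemma~\ref{lem:usefulineq} (with $n\log n$ in place of $n^2$, i.e.\ $\sum_i\log M(P'(z_i)^{-1})=\tilde O(n\tau)$ for integer input), but you justify this by redoing the computation from its proof rather than citing the stated bound, so the argument is sound. Altogether your write-up is a correct and in fact more detailed rendering of the paper's intended argument.
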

\begin{proof}
We first compute a $t\in\N$ with $2^{t-1}\le|p_n|< 2^t$. Then, we apply $\AD$ to the polynomial $P:=2^{-t}\cdot p$ whose leading coefficient has absolute value between $1/2$ and $1$. The complexity bound now follows directly from Theorem~\ref{main:theorem}, where we use that $\tau_P\le\tau$, $\Mea(P)\le\Mea(p) \le (n+1)2^\tau$ (inequality (\ref{upper bound on measure})), $\Disc(P)=2^{-(2n-2)t}\cdot\Disc(p)$, and the fact that the discriminant of an integer polynomial is integral.

\Kurt{For a polynomial with $k$ non-vanishing coefficients, $\AD$ needs $O(k\cdot (\log n+\log(\Gamma+\log M(\sigma_p^{-1}))))=O(k\cdot\log(n\tau))$ iterations to isolate the real roots of $p$, where $k$ is defined as the number of non-vanishing coefficients of $p$, by Theorem~\ref{thm:treesizesparse}. The bound stated follows.}
\end{proof}


\section{Root Refinement}\label{sec:Root Refinement}

In the previous sections, we focused on the problem of isolating all real roots of a square-free polynomial $P\in\R[x]$. Given sufficiently good approximations of the coefficients of $P$, our algorithm $\AD$ returns isolating intervals $I_1$ to $I_m$ with the property that $\var(P,I_k)=1$ for all $k=1,\ldots,m$. This is sufficient for some applications (existence of real roots, computation of the number of real roots, etc.); however, many other applications also need very good approximations of the roots. In particular, this holds for algorithms to compute a cylindrical algebraic decomposition, where we have to approximate polynomials whose coefficients are polynomial expressions in the root of some other polynomial.\footnote{For instance, when computing the topology of an algebraic curve defined as the real valued zero set of a bivariate polynomial $f(x,y)\in\Z[x,y]$, many algorithms compute the real roots $\alpha$ of the resultant polynomial $R(x):=\operatorname{res}(f,f_y;y)\in\Z[x]$ first and then isolate the real roots of the fiber polynomials $f(\alpha,y)\in\R[y]$. The second step requires very good approximations of the root $\alpha$ in order to obtain good approximations of the coefficients of $f(\alpha,y)$.} 

In this section, we show that our algorithm $\AD$ can be easily modified to further refine the intervals $I_k$ to a width less than $2^{-\kappa}$, where $\kappa$ is any positive integer. Furthermore, our analysis in Sections~\ref{analysis:refinement1} and~\ref{analysis:refinement2} shows that the cost for the refinement is the same as for isolating the roots plus $\tilde{O}(n\cdot\kappa)$. Hence, as a bound in $\kappa$, the latter bound is optimal (up to logarithmic factors) since the amortized cost per root and bit of precision is
logarithmic in $n$ and $\kappa$.\\

Throughout this section, we assume that $z_1$ to $z_m$ are exactly the real roots of $P$ 
and that $I_k=(a_k,b_k)$, with $k=1,\ldots,m$, are corresponding isolating intervals as computed by $\AD$. In particular, it holds that $\var(P,I_k)=1$.
According to Theorem~\ref{Obreshkoff}, the Obreshkoff lens $L_n$ of each interval $I_k$ is also isolating for the root $z_k$. Hence, from the proof of~\cite[Lemma~5]{NewDsc} (see also~\cite[Figure~3.1]{NewDsc}), we conclude that 
\begin{align}\label{distance-to-other-roots}
|x-z_j|>\frac{\min(|x-a_k|,|x-b_k|)}{4n}\quad\text{for all }x\in I_k\text{ and all }j\neq k.
\end{align}   

\subsection{The Refinement Algorithm}\label{algo:refinement}

We modify $\AD$ so as to obtain an efficient algorithm for root refinement. The modification is based on two observations, namely that we can work with a simpler notion of multipoints and that we can replace the $0$-Test and the $1$-Test with a simpler test based on the sign of $P$ at the endpoints of an interval. In $\AD$, we used:
\begin{itemize}
\item[(A)] computation of an admissible point $m^*\in m[\epsilon]$, where 
$$\multipoint{m}{\epsilon} := \set{m_{i}:=m+(i-\lceil n/2\rceil)\cdot \epsilon}{i=0,\ldots,2\cdot \lceil n/2\rceil}$$ is a multipoint of size $2\cdot\lceil n/2\rceil+1$.
\item[(B)] execution of the $0$-Test/$1$-Test for an interval $(a',b')\subset (a,b)$, where $a'$ and $b'$ are admissible points of corresponding multipoints contained in $I$.\footnote{Notice that this step also uses the computation of admissible points.}
\end{itemize}
The reason for putting more than $n$ points into a multipoint was to guarantee, that at least one constituent point has a reasonable distance from all roots contained in the interval. Now, we are working on intervals containing only one root, and hence, can use multipoints consisting of only two points. An interval known to contain at most one root of $P$ contains no root if the signs of the polynomial at the endpoints are equal and contains a root if the signs are distinct (this assumes that the polynomial is nonzero at the endpoints). We will, therefore, work with the following modifications when 
 processing an interval $I\subset I_k$:
\begin{itemize}
\item[(A')] computation of an admissible point $m^*\in m[\epsilon]'$, where 
$$m[\epsilon]' =\{m_1',m_2'\}:= \{m-\lceil n/2\rceil\epsilon,m+\lceil n/2\rceil\epsilon\}\subset m[\epsilon]$$ consists of the first and the last point from $m[\epsilon]$ only.\footnote{In fact, one can show that choosing two arbitrary points from $m[\epsilon]$ does not affect any of the following results.}
\item[(B')] execution of a \emph{sign-test} on an interval $I'=(a',b')\subset (a,b)$ (i.e.,~the computation of $\sgn (f(a')\cdot f(b'))$), where $a'$ and $b'$ are admissible points in some $m[\epsilon]'$.\footnote{More precisely, we compute $s:=\sgn (P(a')\cdot P(b'))$. If $s>0$, then $I'$ contains no root. If $s<0$, then $I'$ isolates the root $z_k$. Since $\var(P,I_k)=1$, it follows that $s>0$ if and only if $\var(P,I')=0$, and $s<0$ if and only if $\var(P,I')=1$.} 
\end{itemize}

We now give details of our refinement method which we denote $\textsc{Refine}$. As input, \textsc{Refine} receives isolating intervals $I_1$ to $I_m$ for the real roots of $P$ as computed by $\AD$ and a positive integer $\kappa$. It returns isolating intervals $J_k$, with $J_k\subset I_k$ and width $w(J_k)<2^{-\kappa}$. \medskip

\begin{mdframed}[frametitle={{\bf Algorithm: Refine}}]
{\color{black}

\noindent{\bf Input:}
A polynomial $P(x)$ as in (\ref{def:P}), isolating intervals $I_j$ for the real roots of $P$ with $\var(P,I_j)=1$ for $j=1,\ldots,m$, and a positive integer $\kappa$.\smallskip

\noindent{\bf Output:}
Isolating intervals $I_j'$ for the real roots of $P$ with $I_j'\subseteq I_j$ and $w(I_j')<2^{-\kappa}$ for $j=1,\ldots,m$.\smallskip

\begin{itemize}
\item[(1)] $\mathcal{A}:=\{(I_{j},4)\}_{j=1,\ldots,m}$ and $\mathcal{O}:=\emptyset$.
\item[(2)] while $\mathcal{A}\neq \emptyset$ do
\begin{itemize}
\item[(2.1)] Choose an arbitrary pair $(I,N_{I})$ from $\mathcal{A}$, with $I=(a,b)$, and remove $(I,N_I)$ from $\mathcal{A}$ 
\item[(2.2)] Run the Boundary-Test and the Newton-Test with input $P$ and $I$, where the steps in (A) and (B) are replaced by the respective modifications in (A') and (B'). 
\item[(2.3)] If one of the tests in Step (2.2) returns True and an interval $I'\subseteq I$, then add $I$ to $\mathcal{O}$ if $w(I')<2^{-\kappa}$, and add $(I',N_{I'})=(I',N_I^2)$ to $\mathcal{A}$ if $w(I')\ge 2^{-\kappa}$. Then, go to Step (2.1).
\item[]\hfill\emph{(quadratic step)}
\item[(2.4)] Compute an admissible point
$m^* \in \multipoint{m(I)}{\frac{w(I)}{2^{\ceil{2 + \log n}}}}'$ using the algorithm {\bf Admissible Point}. 
\begin{itemize}
\item[(2.4.1)] $I':=(a,m^*)$, $I'':=(m^*,b)$, and $N_{I'}:=N_{I''}:=\max(4,\sqrt{N_{I}})$.
\item[(2.4.2)] If $P(a')\cdot P(m^*)<0$ and $w(I')<2^{-\kappa}$, add $I'$ to $\mathcal{O}$. 
\item[(2.4.3)] If $P(a')\cdot P(m^*)<0$ and $w(I')\ge 2^{-\kappa}$, add $(I',N_{I'})$ to $\mathcal{A}$.
\item[(2.4.4)] If $P(a')\cdot P(m^*)>0$ and $w(I'')<2^{-\kappa}$, add $I''$ to $\mathcal{O}$.
\item[(2.4.5)] If $P(a')\cdot P(m^*)>0$ and $w(I'')\ge 2^{-\kappa}$, add $(I'',N_{I''})$ to $\mathcal{A}$.
\end{itemize}\hfill\emph{(linear step)} 
\end{itemize}
\item[(3)] return $\mathcal{O}$
\end{itemize}
}
\end{mdframed}\bigskip
\ignore{
\hrule\medskip
\noindent{\bf Algorithm Refine:} We maintain a list $\mathcal{A}:=\{(I,N_{I})\}$ of pairs, each consisting of an \emph{active} interval and a corresponding positive integer $N_{I}=2^{2^{n_{I}}}$ with $n_{I}\in\N_{\ge 1}$. $\mathcal{O}$ denotes a list of isolating intervals of size less than $2^{-\kappa}$. Initially, set $\mathcal{A}:=\{(I_k,4)\}_{k=1,\ldots,m}$ and $\mathcal{O}:=\emptyset$.
In each iteration, we remove a pair $(I,N_{I})$ from $\mathcal{A}$, with $I=(a,b)$, and proceed as follows:

\begin{itemize}
\item[(Q')] We apply the Boundary-Test as well as the Newton-Test to $I$, where the steps in (A) and (B) are replaced by the respective modifications (A') and (B'). If one of these tests succeeds, we obtain an interval $I'\subseteq I$, with $\frac{w(I)}{8N_I}\le w(I')\le \frac{w(I)}{N_I}$, which isolates the root that is isolated by $I$. If $w(I')<2^{-\kappa}$, we add $I$ to $\mathcal{O}$. Otherwise, we add $(I',N_{I'})$ to $\mathcal{A}$, where $N_{I'}:=N_{I}^{2}$.
\hfill\emph{(quadratic step)}  
\item[(L')] If (Q') fails, we compute an admissible point
$m^* \in \multipoint{m(I)}{\frac{w(I)}{2^{\ceil{2 + \log n}}}}'$. Let $I':=(a,m^*)$, $I'':=(m^*,b)$, and $N_{I'}:=N_{I''}:=\max(4,\sqrt{N_{I}})$.
\begin{itemize}
\item If $P(a')\cdot P(m^*)<0$ and $w(I')<2^{-\kappa}$, we add $I'$ to $\mathcal{O}$. 
\item If $P(a')\cdot P(m^*)<0$ and $w(I')\ge 2^{-\kappa}$, we add $(I',N_{I'})$ to $\mathcal{A}$.
\item If $P(a')\cdot P(m^*)>0$ and $w(I'')<2^{-\kappa}$, we add $I''$ to $\mathcal{O}$.
\item If $P(a')\cdot P(m^*)>0$ and $w(I'')\ge 2^{-\kappa}$, we add $(I'',N_{I''})$ to $\mathcal{A}$.
\end{itemize}\hfill\emph{(linear step)}  
\end{itemize}
We continue until the list $\mathcal{A}$ becomes empty. Then, we return the list $\mathcal{O}$.
\medskip
\hrule\bigskip 
}
The reader may notice that, in comparison to $\AD$, there are only a constant number of polynomial evaluations at each node, and thus, there is no immediate need to use an algorithm for fast approximate multipoint evaluation.\footnote{However, we will later show how to make good use of approximate multipoint evaluation in order to improve the worst case bit complexity.} Namely, when processing a certain interval $I\in\mathcal{A}$, we have to compute admissible points $m^*\in m[\epsilon]'$ for a constant number of $m[\epsilon]'$, and each $m[\epsilon]'$ consists of two points (i.e.,~$m'_1=m-\lceil n/2\rceil$ and $m_2'=m+\lceil n/2 \rceil$) only. For computing an admissible point $m^*$, we evaluate $P(x)$ at $x=m_1'$ and $x=m_2'$ to an absolute error less than $2^{-L}$, where $L=1,2,4,8\ldots$. We stop as soon as we have computed a 4-approximation for at least one of the values $P(m_1')$ and $P(m_2')$. The cost for each such computation is bounded by 
\begin{align}\label{cost:admissible:outoftwo}
\tilde{O}(n(\tau_p+n\log \max(M(m_1'),M(m_2'))+\log M(\max(|P(m_1'|,|P(m_2')|)^{-1})))
\end{align}
bit operations; see the proof of Lemma~\ref{lem:singlepolyeval}.

\subsection{Analysis}\label{analysis:refinement1}
We first derive bounds on the number of iterations that \textsc{Refine} needs to refine an isolating interval $I_k$ to a size less than $2^{-\kappa}$.

\begin{lemma}\label{number-of-iterations}
For refining an interval $I_k$ to a size less than $2^{-\kappa}$, \textsc{Refine} needs at most 
$$s_{\max,k}\cdot |\mathcal{M}(I_k)|=O((\log n+\log(\log M(z_k)+\kappa)))\cdot |\mathcal{M}(I_k)|$$ 
iterations, where $s_{\max,k}$ has size $O(\log n+\log(\log M(z_k)+\kappa))=O(\log n+\log(\Gamma+\kappa))$ and $\mathcal{M}(I_k)$ is the set of roots contained in the one-circle region of $I_k$. The total number of iterations to refine all intervals $I_k$ to a size less than $2^{-\kappa}$ is $\tilde{O}(n(\log n+\log(\Gamma+\kappa)))$.
\end{lemma}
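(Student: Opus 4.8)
The plan is to reuse the tree-size analysis of Section~\ref{sec:size-subdivision-tree}, but applied to the restricted forest that \textsc{Refine} generates when started from a single isolating interval $I_k$. The structure of \textsc{Refine} is exactly that of $\AD$ except that (i) the $0$-Test/$1$-Test are replaced by sign tests, which never change the subdivision behavior on an interval known to contain at most one root, and (ii) we stop refining $I\subseteq I_k$ as soon as $w(I)<2^{-\kappa}$. So the subdivision tree $\mathcal{T}_k^{\textsc{Refine}}$ rooted at $I_k$ has width lower bound $2^{-\kappa-3}$ on every non-terminal node (rather than $\sigma_P/2$): a non-terminal node $I$ satisfies $w(I)\ge 2^{-\kappa}$, and its children have width at least $w(I)/(8N_I)$, but more to the point every node processed further has width $\ge 2^{-\kappa}$, and hence the analogue of Lemma~\ref{lemma:bounds1} gives $w(I)\ge 2^{-O(\kappa)}$ and $N_I\le 2^{O(\kappa+\Gamma)}$ for all $I\in\mathcal{T}_k^{\textsc{Refine}}$ (the upper bound $w(I)\le w(I_k)\le 2^{O(\Gamma)}$ is trivial; in fact $w(I_k)\le 2^{\Gamma_P+O(\log n)}$).

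Next I would replay the path-length argument of Lemma~\ref{lemma:bound on smax} verbatim, with $\sigma_P^{-1}$ replaced by $2^{\kappa}$ and $\Gamma$ kept. Concretely: a maximal chain of non-splitting intervals (intervals all of whose one-circle regions contain the same nonempty root set, here necessarily $\{z_k\}$ plus possibly roots outside $I_k$) splits into three parts exactly as before. For the first part one applies Lemma~\ref{lemma:sequence} with $w=w(I_1)$ and $w'=\Theta(w(I_s))$, giving length $8(n_{I_1}+\log\log\max(4,w(I_1)/w(I_s)))=O(\log n+\log(\Gamma+\kappa))$ using the new width bounds. For the second and third parts one again uses Lemma~\ref{newtonsuccess} together with the fact that roots outside $I_k$ are far away: by~(\ref{distance-to-other-roots}), any root $z_j\ne z_k$ with $z_j\in\Delta(I)$ for some $I\subseteq I_k$ has $|x-z_j|>\min(|x-a_k|,|x-b_k|)/(4n)$, which after $I$ has shrunk by a factor $2^{O(\log n)}$ below $w(I_k)$ guarantees the ``no further root in the big disk'' hypothesis of Lemma~\ref{newtonsuccess}, so the Newton-Test (or Boundary-Test, for a cluster near an endpoint) is guaranteed to fire as soon as the width drops below $2^{-13-\log n}w(I_{s_1})/N_I$. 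This yields $s_{\max,k}=O(\log n+\log(\Gamma+\kappa))$, and since $\log M(z_k)\le\Gamma$ this is also $O(\log n+\log(\log M(z_k)+\kappa))$.

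Finally, I would count intervals. As in Theorem~\ref{thm:treesize}, the number of splitting nodes in $\mathcal{T}_k^{\textsc{Refine}}$ is $O(|\mathcal{M}(I_k)|)$ (at most $|\mathcal{M}(I_k)|-1$ where the root set strictly shrinks, at most $|\mathcal{M}(I_k)|$ whose children are all terminal, plus the root), and between consecutive splitting nodes there is a chain of length $O(s_{\max,k})$; hence $|\mathcal{T}_k^{\textsc{Refine}}|=O(s_{\max,k}\cdot|\mathcal{M}(I_k)|)$, which is the stated per-interval bound. Summing over $k$ and using $\sum_k|\mathcal{M}(I_k)|\le n$ (the one-circle regions $\Delta(I_k)$ are pairwise disjoint since the $I_k$ are disjoint and $\var(P,I_k)=1$ makes $\Delta(I_k)$ contain only $z_k$ among real roots — more carefully, one uses that each complex root lies in $\Delta(I_k)$ for at most one $k$, which again follows from disjointness of the $\Delta(I_k)$) gives the total bound $O(n(\log n+\log(\Gamma+\kappa)))=\tilde O(n(\log n+\log(\Gamma+\kappa)))$.

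\textbf{Main obstacle.} The delicate point is the bound on the second and third chain-segments: one must check that, on an interval $I\subseteq I_k$, the hypotheses of Lemma~\ref{newtonsuccess} (a subinterval $J$ of width $\le 2^{-13}w(I)/N_I$ carrying all roots of $\Delta(I)$, and no root of $P$ in the disk of radius $2^{\log n+10}N_I w(I)$ about $m(I)$) are eventually met. Unlike in the isolation analysis, here $\Delta(I)$ may contain complex roots $z_j\ne z_k$ that approach $z_k$; one has to verify via~(\ref{distance-to-other-roots}) — or directly from the Obreshkoff geometry of the isolating lens $L_n$ for $I_k$ — that once $w(I)$ is a $\mathrm{poly}(n)$-factor below $w(I_k)$, all such $z_j$ are already expelled from $\Delta(I)$, so that in fact $\mathcal{M}(I)=\{z_k\}$ (a single root, $k_{\text{cluster}}=1$) for all sufficiently deep $I$, and then the Newton/Boundary test behaves as in the well-separated case. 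Granting this, the rest is a mechanical transcription of Lemmata~\ref{lemma:bounds1}, \ref{lemma:sequence}, \ref{lemma:bound on smax} and Theorem~\ref{thm:treesize} with the substitution $\log M(\sigma_P^{-1})\rightsquigarrow\kappa$.
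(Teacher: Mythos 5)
Your main line of argument is exactly the paper's: derive explicit bounds on $w(I)$ and $N_I$ for the active intervals produced by \textsc{Refine} (using $w(I)\ge 2^{-\kappa}$ on non-terminals and $w(I_k)\le 4M(z_k)^2\le 2^{O(\Gamma)}$), then rerun the chain-length argument of Lemma~\ref{lemma:bound on smax} to get $s_{\max,k}=O(\log n+\log(\Gamma+\kappa))$, and finally count splitting nodes along the (single) refinement path, at most $O(|\mathcal{M}(I_k)|)$ of them, to get the per-interval bound; summing over $k$ using disjointness of the $\Delta(I_k)$ gives the total. Two remarks. First, the stated lower bound $w(I)\ge 2^{-O(\kappa)}$ is off: the correct bound, which you in fact rely on later, is $w(I)\ge 2^{-O(\kappa+\Gamma)}$ (the paper derives $w(I)\ge 2^{-3\kappa-4(\Gamma+1)}$), since the child of a non-terminal $J$ can shrink by a factor $N_J\le 2^{O(\kappa+\Gamma)}$; this does not change the $\log\log$ quantity. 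Second, and more importantly, the ``main obstacle'' you raise is a red herring. You need \emph{not} show that $\mathcal{M}(I)=\{z_k\}$ once $w(I)$ drops a $\mathrm{poly}(n)$-factor below $w(I_k)$ --- this is false in general (a pair of complex-conjugate roots may hug $z_k$ arbitrarily closely, so $\Delta(I)$ can still contain them for $w(I)$ almost as small as $\sigma(z_k,P)$) and it is not what Lemma~\ref{lemma:bound on smax} uses. That lemma's three-part segmentation works for \emph{any} constant nonempty $\mathcal{M}(I_j)$ along a chain, with $|\mathcal{M}(I_j)|\ge 1$ roots possibly still clustered, and the hypothesis of Lemma~\ref{newtonsuccess} (small cluster $J$, no other roots nearby) is established inside that lemma via the inequality (\ref{outerbound}), not by reducing to the well-separated case $k_{\text{cluster}}=1$. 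So ``replay Lemma~\ref{lemma:bound on smax} verbatim,'' which you already propose, is the whole argument; there is no extra cluster-expulsion lemma to prove, and (\ref{distance-to-other-roots}) is used elsewhere (to lower-bound $|P|$ at evaluation points in Lemma~\ref{cost-at-node}), not here.
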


\begin{proof}
Similar as in the proof of Lemma~\ref{lemma:bounds1}, we first derive upper and lower bounds for the values $N_I$ and $w(I)$, respectively, where $I\subset I_k$ is an \emph{active} interval produced by \textsc{Refine}. 
According to property (\ref{cond:start1}), we have $w(I)\le w(I_k)\le 4\cdot M(z_k)^2$. Hence, it follows that either $N_I=4$ or $w(I)\le w(I_k)/\sqrt{N_I}\le 4\cdot M(z_k)^2/\sqrt{N_I}$, and thus, (notice that $w(I)\ge 2^{-\kappa}$)
$$N_I\le 16\cdot \frac{M(z_k)^4}{w(I)^2}\le 2^{4(\Gamma+1)+2\kappa}.$$
Furthermore, for \emph{each} interval $I\subset I_k$ produced by \textsc{Refine}, we have
\[
\min(2^{\Gamma},4 M(z_k)^2)\ge w(I)\ge \frac{w(J)}{N_J}\ge 2^{-3\kappa-4(\Gamma+1)},  
\]
where $J$ is the parent interval of $I$ of size $w(J)\ge 2^{-\kappa}$. The bound for the number of 
iterations is then an immediate consequence of Lemma~\ref{newtonsuccess} and of our considerations in the proof 
of Lemma~\ref{lemma:bound on smax}. Namely, exactly the same argument as in the proof of 
Lemma~\ref{lemma:bound on smax} shows that the \emph{maximal length of any path between splitting nodes}, denoted $s_{\max,k}'$, is $O(\log n +\log(\log M(z_k)+\kappa))$,\footnote{For \textsc{Refine}, a node $I$ is splitting if either $I$ is terminal (i.e.,~$w(I)<2^{-\kappa}$) or $\mathcal{M}(I)\neq \mathcal{M}(I')$ for the child $I'$ of $I$. If $I_k^*$ denotes the first node whose one-circle region isolates the root $z_k$ (i.e.~$|\mathcal{M}(I_k^*)|=1$), then it follows that the path connecting $I_k^*$ with $J_k$ has length less than or equal to $s_{\max,k}'$.} and thus the path from $I_k$ to the refined interval $J_k\subset I_k$ of size less 
than $2^{-\kappa}$ has length $s_{\max,k}'\cdot |\mathcal{M}(I_k)|$.  
\end{proof}

In the next step, we estimate the cost for processing an active interval $I$.

\begin{lemma}\label{cost-at-node}
For an active interval $I\subset I_k$ of size $w(I)\ge \sigma(z_k,P)/2$, the cost for processing $I$ is bounded by
$$
\tilde{O}(n(n+\tau_P+n\log M(z_i)+\log M(P'(z_i)^{-1}))),
$$
where $z_i$ is any root contained in the one-circle region of $I$. If $w(I)<\sigma(z_k,P)/2$, the cost for processing $I$ is bounded by
$$
\tilde{O}(n(\kappa+n+\tau_P+n \log M(z_k)+\log M(P'(z_k)^{-1}))).
$$
\end{lemma}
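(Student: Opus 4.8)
The cost of processing an active interval $I \subset I_k$ comes from three sources: computing admissible points (for the modified multipoints $m[\epsilon]'$ appearing in the Newton-Test, the Boundary-Test, and the linear step), running the sign-tests at the endpoints of candidate subintervals, and evaluating $P$ and $P'$ to compute the Newton iterates $\tilde\lambda_{j_1,j_2}$. Each of these reduces to evaluating $P$ (and in the Newton step also $P'$) at a constant number of special points with a working precision governed by a lower bound on $|P|$ at those points, via the single-point evaluation bound of \Cref{lem:singlepolyeval} and the admissible-point analysis leading to \Cref{cost:admissible:outoftwo}. So the heart of the proof is, exactly as in the isolation case, to establish a \emph{lower bound} on $|P(\xi)|$ at every special point $\xi$ relevant to $I$, and then to read off the cost from $\tilde O(n(\tau_P + n\log M(\xi) + \log M(P(\xi)^{-1})))$.

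First I would record the geometry. Since $I \subset I_k$ and $I_k$ is isolating with $\var(P,I_k)=1$, the root $z_k$ is the unique root in (the one-circle region of) $I_k$, and \Cref{distance-to-other-roots} gives $|x-z_j| > \min(|x-a_k|,|x-b_k|)/(4n)$ for all $x\in I_k$, $j\neq k$. All special points $\xi$ considered while processing $I$ lie in $I$, hence in $I_k$, so $M(\xi) = O(M(z_k))$ (using condition~\Cref{cond:start1}, $\log M(x) \le 2(1+\log M(z_k))$ on $I_k$), which disposes of the $n\log M(\xi)$ term in favour of $n\log M(z_k)$. Then I would split into the two regimes of the statement.

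For $w(I) < \sigma(z_k,P)/2$: here $I$ isolates $z_k$ tightly, all special points of $I$ have $\min(|\xi-a_k|,|\xi-b_k|)$ comparable to $w(I)$ (the multipoints used are perturbations of size $\epsilon\cdot w(I)/N_I$ or $\epsilon\cdot w(I)$ around interior points), so by \Cref{distance-to-other-roots} the distance of $\xi$ to every $z_j$, $j\neq k$, is $\gtrsim w(I)/(4n)$. Writing $P(\xi) = P_n(\xi-z_k)\prod_{j\neq k}(\xi-z_j)$ and using $|\xi - z_k| \ge$ (distance from $\xi$ to the boundary of the relevant sub-multipoint), the dominant uncertainty is the factor $|\xi - z_k|$, which can be as small as the width of the active interval. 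Since $w(I) \ge 2^{-\kappa}$ for active intervals and, by the $N_I$/$w(I)$ bounds from the proof of \Cref{number-of-iterations}, $w(I) \ge 2^{-3\kappa - 4(\Gamma+1)}$, we get $\log M(|\xi-z_k|^{-1}) = O(\kappa + \Gamma)$. Combining with $|\xi-z_j| \gtrsim w(I)/(4n)$ for the other factors and the estimate $\sigma(z_k,P) \ge |P'(z_k)|/(2^{\tau_P}2^nM(z_k)^n)$ (inequality~\Cref{measure of shifted polynomial}, as used at the end of \Cref{main:theorem}'s proof) to convert a $\log M(\sigma(z_k,P)^{-1})$ term into $O(n + \tau_P + n\log M(z_k) + \log M(P'(z_k)^{-1}))$, one gets $\log M(P(\xi)^{-1}) = O(\kappa + n + \tau_P + n\log M(z_k) + \log M(P'(z_k)^{-1}))$. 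Feeding this and $\log N_I = O(\kappa+\Gamma)$ into \Cref{lem:singlepolyeval} and \Cref{cost:admissible:outoftwo} yields the second claimed bound; the number of evaluations per interval is $O(1)$. For the Newton-step evaluations of $P$ and $P'$, I would reuse the precision bound from the comment following Step~(2.2) of the Newton-Test, $O(\log n + \log N_I + \log M(P(\xi_{j_1}^*)^{-1}) + \log M(P(\xi_{j_2}^*)^{-1}) + \log M(w(I)) + \log M(w(I)^{-1}))$, which is absorbed by the same bound, and invoke \Cref{lem:singlepolyeval}(a) for $P'$ as well (noting $\tau_{P'} = O(\tau_P + \log n)$).

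For $w(I) \ge \sigma(z_k,P)/2$: this is precisely the regime covered by the isolation analysis, so I would quote \Cref{lem:inequalityspecial} essentially verbatim — its inductive proof of $|P(\xi)| > 2^{-40n\log n - 2\tau_P} M(z_i)^{-5n}\min(1,\sigma(z_i,P))^5\min(1,|P'(z_i)|)^5$ for special points $\xi$ of an interval $I$ whose one-circle region contains a root $z_i$ goes through unchanged for \textsc{Refine}, since the special-point types (A'), (B') are sub-configurations of (P1)–(P5), and the $N_I$ bounds and the relation $\log N_I = O(\log M(\sigma(z_i,P)^{-1}) + \log M(z_i))$ from that proof still hold (with the refinement-specific $N_I \le 2^{4(\Gamma+1)+2\kappa}$ only \emph{improving} nothing but remaining consistent). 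Then the cost is $\tilde O(n(n + \tau_P + n\log M(z_i) + \log M(\sigma(z_i,P)^{-1}) + \log M(P'(z_i)^{-1})))$, and discarding $\log M(\sigma(z_i,P)^{-1})$ exactly as in the proof of \Cref{main:theorem} (again via~\Cref{measure of shifted polynomial}) gives the first claimed bound.

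\textbf{Main obstacle.} The delicate point is the $w(I) < \sigma(z_k,P)/2$ case: one must be careful that \emph{every} special point generated while processing such an $I$ — including the admissible points inside the sub-multipoints of the Newton- and Boundary-Tests, which are perturbed by $\epsilon\cdot w(I)/N_I$ — stays far enough from $z_k$ itself (distance $\gtrsim w(I)/N_I$, not accidentally landing on $z_k$) so that the factor $|\xi-z_k|^{-1}$ is controlled by $2^{O(\kappa+\Gamma)}$ rather than blowing up, while simultaneously using \Cref{distance-to-other-roots} for the remaining $n-1$ factors. Getting the bookkeeping of $w(I)/N_I$ versus $2^{-\kappa}$ right, and making sure the admissibility guarantee ($|P(\xi)| \ge \tfrac14\max$ over the multipoint) is combined correctly with the existence of a constituent point at distance $\gtrsim w(I)/N_I$ from all roots, is where the proof needs genuine care rather than routine calculation.
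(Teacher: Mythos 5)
Your plan follows the paper's overall structure --- derive lower bounds on $|P|$ at the special points of $I$ and feed them into the single-point evaluation cost of \Cref{lem:singlepolyeval} --- but the two steps where you defer to earlier lemmas are exactly where the genuine work lies, and in both places the argument as written does not go through.

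In the case $w(I)\ge\sigma(z_k,P)/2$ you claim \Cref{lem:inequalityspecial} applies ``essentially verbatim'' because the sets $m[\epsilon]'$ are sub-configurations of the multipoints of types (P1)--(P5). But admissibility is defined \emph{relative to the multipoint}: a point admissible for the two-point set $m[\epsilon]'$ need not satisfy $|P(\xi)|\ge\tfrac14\max_{m_i\in m[\epsilon]}|P(m_i)|$, so neither \Cref{lem:boundonmax} nor the (P2)--(P5) cases of \Cref{lem:inequalityspecial} can be invoked directly. The key step in \Cref{lem:boundonmax} is the pigeonhole among $2\ceil{n/2}+1>n$ points yielding a constituent at distance $\ge\epsilon/2$ from \emph{all} roots; with only two points this fails. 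The paper's proof of the present lemma replaces that step: of $m_1',m_2'$, one (say $m_1'$) is at distance $\ge\ceil{n/2}\epsilon$ from $z_k$ and from the endpoints of $I$, and then inequality~(\ref{distance-to-other-roots}) --- available precisely because $I\subset I_k$ and $I_k$ already isolates $z_k$ --- gives distance $\ge\epsilon/8$ to every other root; only then does the computation from \Cref{lem:boundonmax} apply with $x_{i_0}:=m_1'$, after which the (P2)--(P5) bookkeeping of \Cref{lem:inequalityspecial} can be reused.

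In the case $w(I)<\sigma(z_k,P)/2$ your estimate $|\xi-z_j|\gtrsim w(I)/(4n)$ for $j\ne k$, applied to all $n-1$ factors, gives $\log M\bigl(\prod_{j\ne k}|\xi-z_j|^{-1}\bigr)=\Omega(n\log M(w(I)^{-1}))=\Omega(n\kappa)$, so the working precision and the per-node cost would come out as $\tilde{O}(n^2\kappa)$, overshooting the claimed $\tilde{O}(n\kappa)$. (Your remark that $|\xi-z_k|$ is ``the dominant uncertainty'' is in fact false under your own estimate: the other $n-1$ factors dominate.) The paper does not use (\ref{distance-to-other-roots}) at all here; instead it uses the hypothesis $w(I)<\sigma(z_k,P)/2$ directly to obtain $|x-z_j|\ge|z_k-z_j|/4$ for all $x\in I$ and $j\ne k$, so that $\prod_{j\ne k}|x-z_j|\ge 4^{-(n-1)}|P'(z_k)|/|P_n|$, and then $|m_1'-z_k|\gtrsim w(I)/N_I$ together with the bounds on $w(I)$ and $N_I$ from \Cref{number-of-iterations} yields $\log M(P(\xi)^{-1})=O(\kappa+\Gamma+n+\log M(P'(z_k)^{-1}))$, which is linear in $\kappa$ as required.
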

\begin{proof}
Suppose that $w(I)\ge \sigma(z_k,P)/2$, and let $\xi\in m[\epsilon]'$ be an admissible point that is computed when processing $I$. For at least one of the two points (w.l.o.g.~say $m_1'$) in $m[\epsilon]'$, the distance to the root $z_k$ as well as the distance to both endpoints of $I$ is at least $\lceil n/2\rceil\cdot \epsilon\ge n\cdot \epsilon/2$. Hence, from inequality (\ref{distance-to-other-roots}) we conclude that the distance from $m_1'$ to any root of $P$ is at least $\epsilon/8$. Now, exactly the same argument as in the proof of Lemma~\ref{lem:boundonmax} (with $x_{i_0}:=m_1'$) shows that 
\[
|P(\xi)|>2^{-6n-1}\cdot K^{-\mu(\Delta)-1}\cdot\sigma(z_i,P)\cdot|P'(z_i)|\quad\text{for all }z_i\in\Delta,
\] 
where $K\ge 2\cdot\lceil n/2\rceil$ is any positive real value, such that the disk $\Delta:=\Delta_{K\cdot\epsilon}(m)$ contains at least two roots of $P$. Since $w(I)\ge \sigma(z_k,P)/2$, it further follows that the disk $\Delta_{2w(I)}(m(I))$ contains at least two roots. Thus, we can use the same argument as in the proof of Lemma~\ref{lem:inequalityspecial} (type (P2)-(P5) cases) to prove that the inequality (\ref{special:lowerbound}) holds for $\xi$. In addition, inequality (\ref{special:lowerbound}) also holds for the endpoints of $I_k$ (as already proven in the analysis of the root isolation algorithm), and thus, by induction, it holds for the endpoints of any node $I\subset I_k$.
Hence, when processing $I$, there are a constant number of approximate polynomial evaluations with a precision bounded by 
\begin{align}\label{precisionbound1}
O(n\log n+\tau_P+n\log M(z_i)+\log M(\sigma(z_i,P)^{-1})+\log M(P'(z_i)^{-1}))\nonumber\\
O(n\log n+\tau_P+n\log M(z_i)+\log M(P'(z_i)^{-1})),
\end{align} 
where we again used that $\log M(\sigma(z_k,P)^{-1})=O(n\log M(z_i)+\tau_P+\log M(P'(z_i)^{-1}))$.
This proves the first part; see the proof of Theorem~\ref{main:theorem} and Lemma~\ref{lem:singlepolyeval}.

For the second part, we now assume that $w(I)<\sigma(z_k,P)/2$. Let $\xi\in m[\epsilon]'$ be an admissible point that is considered when processing $I$. Then, the disk $\Delta_{w(I)}(m(I))$ contains the root $z_k$ but no other root of $P$. Hence, for any $x\in I$, it holds that 
\begin{align*}
|P(x)|=|P_n|\cdot |x-z_k|\cdot\prod_{i\neq k}|x-z_i|\ge |P_n|\cdot|x-z_k|\cdot \prod_{i\neq k}\frac{|z_k-z_i|}{4}=|x-z_k|\cdot \frac{|P'(z_k)|}{ 2^{2(n-1)}}.
\end{align*}
Since the distance of at least one of the two points in $m[\epsilon]'$ (w.l.o.g.~say $m_1'$) to the root $z_k$ is larger than or equal to $n\epsilon/2\ge w(I)/(8N_I)$, it follows that
\begin{align*}
|P(\xi)|&\ge \frac{|P(m_1')|}{4}\ge\frac{1}{4}\cdot \frac{w(I)}{8N_I}\cdot 2^{-2(n-1)}\cdot |P'(z_k)|\\
&\ge \frac{w(I)^3}{2^9 M(z_k)^4}\cdot 2^{-2(n-1)}\cdot |P'(z_k)|
>2^{-4\Gamma-3\kappa-2n-7}\cdot |P'(z_k)|,
\end{align*}
where we used the bounds for $w(I)$ and $N_I$ as computed in the proof of Lemma~\ref{number-of-iterations}. Furthermore, the endpoints of $I$ fulfill the inequality (\ref{special:lowerbound}), and thus all approximate polynomial evaluations (when processing $I$) are carried out with an absolute precision of \begin{align}\label{precisionbound2}
& O(\log M(w(I)^{-1})+n\log n+\tau_P+n\cdot \log M(z_k)+\log M(\sigma(z_k,P)^{-1})+\log M(P'(z_k)^{-1}))=\nonumber\\
& O(\kappa+n\log n+\tau_P+n\cdot \log M(z_k)+\log M(P'(z_k)^{-1}))
\end{align}
This proves the second claim.
\end{proof}

Combining Lemma~\ref{number-of-iterations} and Lemma~\ref{cost-at-node} now yields the following result:

\begin{theorem}\label{complexity-refinement1}
The cost for refining $I_k$ to an interval of size less than $2^{-\kappa}$ is bounded by
$$
\tilde{O}(n\kappa+\sum_{i:z_i\in\mathcal{M}(I_k)}n(n+\tau_P+n\log M(z_i)+\log M(P'(z_i)^{-1}))).
$$
The cost for refining all isolating intervals to a size less than $2^{-\kappa}$ is bounded by
\begin{align}\label{totalcomplexity-refinement1}
\tilde{O}(n^2\kappa+n(n^2+n\log \Mea(P)+\sum_{i=1}^{n}\log M(P'(z_i)^{-1}))).
\end{align}
\end{theorem}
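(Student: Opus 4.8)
The plan is to combine Lemma~\ref{number-of-iterations}, which counts the intervals produced by \textsc{Refine}, with Lemma~\ref{cost-at-node}, which bounds the cost of processing a single interval, via a charging argument in the spirit of the proof of Theorem~\ref{main:theorem}. Fix an isolating interval $I_k$ and let $\mathcal{T}_k$ be the sequence of active intervals generated by \textsc{Refine} from $I_k$ down to the refined interval $J_k$ of width $<2^{-\kappa}$. Since \textsc{Refine} processes at most one child of every interval, $\mathcal{T}_k$ is a single chain $I_k=I^{(0)}\supset I^{(1)}\supset\cdots\supset I^{(\ell_k)}=J_k$; every $I^{(j)}$ is isolating for $z_k$, so $z_k\in\mathcal{M}(I^{(j)})$ for all $j$ while $\mathcal{M}(I^{(j)})$ is non-increasing in $j$. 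Call $I^{(j)}$ \emph{splitting} if $\mathcal{M}(I^{(j+1)})\subsetneq\mathcal{M}(I^{(j)})$; by Lemma~\ref{number-of-iterations} (which uses the argument of Lemma~\ref{lemma:bound on smax}) any two consecutive splitting intervals are separated by at most $s_{\max,k}=O(\log n+\log(\Gamma+\kappa))$ non-splitting ones, and there are at most $|\mathcal{M}(I_k)|-1$ splitting intervals, so $\ell_k=O(s_{\max,k}\cdot|\mathcal{M}(I_k)|)$.

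First I would bound the cost for refining a single $I_k$. Partition the chain into blocks delimited by the splitting intervals; each block has length $O(s_{\max,k})$. To a block ending at a splitting interval $I^{(j)}$ I associate a root $r\in\mathcal{M}(I^{(j)})$ lost at that step (i.e.\ $r\notin\mathcal{M}(I^{(j+1)})$), which is distinct for distinct blocks, and to the final block (ending at $J_k$, where $\mathcal{M}=\sset{z_k}$) I associate $z_k$. Thus every root is charged $O(s_{\max,k})$ intervals; every charged root other than $z_k$ lies in $\mathcal{M}(I_k)\setminus\sset{z_k}$; and only $z_k$ can be charged intervals that are \emph{small} in the sense $w(I)<\sigma(z_k,P)/2$, because $w(I)<\sigma(z_k,P)/2$ forces $\mathcal{M}(I)=\sset{z_k}$. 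By Lemma~\ref{cost-at-node}, processing an interval charged to $z_i$ costs $\tilde O(n(n+\tau_P+n\log M(z_i)+\log M(P'(z_i)^{-1})))$, unless it is small, in which case $z_i=z_k$ and it costs $\tilde O(n(\kappa+n+\tau_P+n\log M(z_k)+\log M(P'(z_k)^{-1})))$. Multiplying by $O(s_{\max,k})$ (a polylogarithmic factor absorbed by $\tilde O$, since the final bound dominates $n$, $\kappa$ and $\Gamma$) and summing over the charged roots gives $\tilde O(n\kappa+\sum_{i:z_i\in\mathcal{M}(I_k)}n(n+\tau_P+n\log M(z_i)+\log M(P'(z_i)^{-1})))$, the first claimed bound.

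For the total cost I would sum this over $k=1,\dots,m$. The terms $\tilde O(n\kappa)$ add up to $\tilde O(mn\kappa)=\tilde O(n^2\kappa)$ because $m\le n$. For the double sum, the key point is that the sets $\mathcal{M}(I_k)$ are pairwise disjoint: any root $z\in\Delta(I_k)$ has $\Re z\in(a_k,b_k)=I_k$, and the isolating intervals $I_1,\dots,I_m$ are pairwise disjoint, so $z$ lies in at most one $\Delta(I_k)$. Hence $\sum_k\sum_{i:z_i\in\mathcal{M}(I_k)}n(n+\tau_P+n\log M(z_i)+\log M(P'(z_i)^{-1}))\le\sum_{i=1}^{n}n(n+\tau_P+n\log M(z_i)+\log M(P'(z_i)^{-1}))$, and using $n\tau_P=O(n^2+n\log\Mea(P))$ (inequality~(\ref{taup in terms of measure})) together with $\sum_i\log M(z_i)=\log(\Mea(P)/|P_n|)\le\log\Mea(P)+2$, this is $\tilde O(n(n^2+n\log\Mea(P)+\sum_{i=1}^{n}\log M(P'(z_i)^{-1})))$, which gives~(\ref{totalcomplexity-refinement1}).

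The main obstacle is the bookkeeping of the $\kappa$-dependence: simply multiplying the $\tilde O(n)$ total number of iterations from Lemma~\ref{number-of-iterations} by the worst-case per-interval cost of Lemma~\ref{cost-at-node} would produce a bound with $\max_i$ in place of $\sum_i$ and superfluous powers of $n$. The resolution is that only the $O(s_{\max,k})=\tilde O(1)$ intervals of $\mathcal{T}_k$ with $w(I)<\sigma(z_k,P)/2$ carry the $\kappa$-term, and these are all singleton-$\mathcal{M}$ and hence charged to the single root $z_k$; everything else is governed by exactly the charging of the isolation analysis. Checking that $z_k$ is never the lost root — so the $\kappa$-accounting and the isolation-style accounting do not overlap — and that the chain structure of $\mathcal{T}_k$ makes the charging as simple as stated, is the only content beyond the two lemmas and the disjointness of the $\mathcal{M}(I_k)$.
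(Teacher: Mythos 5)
Your proof is correct and follows essentially the same route as the paper's: split the refinement chain for $I_k$ at the width threshold $\sigma(z_k,P)/2$, note that intervals below the threshold have $\mathcal{M}(I)=\{z_k\}$ so only they incur the $\kappa$-dependent cost of Lemma~\ref{cost-at-node} and there are only $O(s_{\max,k})$ of them, charge the larger intervals to roots in $\mathcal{M}(I_k)$ via Lemma~\ref{number-of-iterations}, and then sum over $k$ using disjointness of the one-circle regions $\Delta(I_k)$. You spell out the charging (the block decomposition by splitting nodes, and the observation that each block is charged to a distinct ``lost'' root, with the last block charged to $z_k$) and the disjointness argument (via $\Re z\in I_k$) more explicitly than the paper, which leaves these to a footnote referencing Section~\ref{sec:bit-complexity} and a one-line remark, but the substance is the same.
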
 

\begin{proof}
We split the total cost into those for refining the interval $I_k$ to a size less than $\sigma(z_k,P)/2$ and into those for the additional refinement steps until the interval has size less than $2^{-\kappa}$. For the latter cost, we remark that $|\mathcal{M}(I)|=1$ if $I$ is an isolating interval for $z_k$ of width $w(I)<\sigma(z_k,P)/2$. Hence, there are at most $s_{\max,k}'$ refinement steps of $I$, each of cost $\tilde{O}(n(\kappa+n+\tau_P+n\log M(z_k)+\log M(P'(z_k)^{-1})))$. It remains to bound the cost for refining $I_k$ to a width of less than $\sigma(z_k,P)/2$. According to Lemma~\ref{cost-at-node}, the cost for processing an interval $I$ of width $w(I)\ge\sigma(z_k,P)/2$ is bounded by $\tilde{O}(n(n+\tau_P+n\log M(z_i)+\log M(P'(z_i)^{-1})))$, where we can choose an arbitrary root $z_i\in\mathcal{M}(I)$.
If we choose the root $z_i$ that is associated\footnote{Essentially, we use the the same definition as in Section~\ref{sec:bit-complexity}. More precisely, we say that a root $z_i$ is associated with $I$ if $z_i\in\mathcal{M}(I)$ and the number of children $I'\subset I$ with $z_i\in\mathcal{M}(I')$ is minimal for all roots in $\mathcal{M}(I)$. Notice that each root $z_i\in\Delta(I_k)$ is associated with at most $s_{\max,k}'$ intervals; see Lemma~\ref{number-of-iterations}.} with $I$, then each root in $\mathcal{M}(I_k)$ is considered at most $s_{\max,k}'$ many times. Thus, the first complexity bound follows.
The bound (\ref{totalcomplexity-refinement1}) for the total cost for refining all intervals follows immediately from the first bound and from the fact that the one-circle regions of the intervals $I_k$ are pairwise disjoint.    
\end{proof}

\subsection{Asymptotic Improvements}\label{analysis:refinement2}

In this section, we show that our complexity bound (\ref{totalcomplexity-refinement1}) for refining all intervals can be further improved, that is, the term $n^2\kappa$ can be replaced by $n\kappa$. We achieve this result by using fast approximate multipoint evaluation. 
For an integer $l$, with $0\le l\le\log\kappa+1$, we consider all active intervals in the refinement process whose width is larger than or equal to $2^{-2^l}$. We call each such interval an \emph{$l$-active} interval. 
We start with $l=0$ and proceed in rounds: For a fixed $l$, let (w.l.o.g.) $I_1',\ldots,I_{m(l)}'$, with $m(l)\le m$ and $I_k'\subset I_k$, be all $l$-active intervals. 
The crucial idea is now to carry out the polynomial evaluations for each of the $l$-active intervals "in parallel" by using fast approximate multipoint evaluation. That is, instead of computing admissible points $m_k^*$ of $m_k[\epsilon_k]'$ for each interval $I_k'$ independently, we aggregate these evaluations in one multipoint evaluation. We continue refining all $l$-active intervals in this way until all intervals have size less than $2^{-2^l}$. Once this happens, we proceed in the same manner with $l:=l+1$.
Notice that after a few iterations (for a fixed $l$) some of the $l$-active intervals might become smaller than $2^{-2^l}$, whereas other intervals are still $l$-active. Intervals which become smaller than $2^{-2^l}$ are then not considered anymore in this round.
The cost for each multipoint evaluation is comparable to the cost of the most expensive individual evaluation multiplied by a logarithmic factor. Furthermore, in each iteration, a constant number of multipoint evaluations is sufficient because for each interval $I_k'$, there are only constantly many evaluations. Hence, the cost for each iteration is bounded by
\begin{align}\label{bound-multipoint}
\tilde{O}(n(2^l+n+\tau_P+n\cdot \log M(z_i)+M(P'(z_i)^{-1})),
\end{align}
where $z_i$ is any root in the one-circle region of the interval $I_{k_0}'$, and $I_{k_0}'$ is the interval for which the highest precision is needed; see (\ref{precisionbound1}) and (\ref{precisionbound2}), and use that $w(I_{k_0}')\ge 2^{-2^l}$.
We now distinguish the following three cases:
\begin{itemize}
\item[(1)] $l=0$: The cost in (\ref{bound-multipoint}) is bounded by
\[
\tilde{O}(n(n+\tau_P+n\cdot \log M(z_i)+M(P'(z_i)^{-1})).
\]
  We allocate the cost to a root $z_i$ that is associated to the interval $I_{k_0}'$.
    
\item[(2)] $l>0$ and there exists an interval $I_k'$ with $\mathcal{M}(I_k')>1$: Since $2^{-2^{l-1}}>w(I_k')\ge 2^{-2^l}$, each root $z_i$ in $\mathcal{M}(I_k)$ has separation $\sigma(z_i,P)<2^{-2^{l-1}}$, and thus, replacing the term $2^l$ in (\ref{bound-multipoint}) by $\log M(\sigma(z_i,P)^{-1})$ yields
\[
\tilde{O}(n(\log M(\sigma(z_i,P)^{-1})+n+\tau_P+n\cdot \log M(z_j)+M(P'(z_j)^{-1})),
\] 
where $z_i$ is some root in $\mathcal{M}(I_k')$ and $z_j$ is some root in $\mathcal{M}(I_{k_0}')$. We allocate the cost to roots $z_i$ and $z_j$ that are associated to the intervals $I_k'$ and $I_{k_0}'$, respectively.
\item[(3)] $l>0$ and $\mathcal{M}(I_k')=1$ for all $k=1,\ldots,m(l)$: We allocate the cost to a root $z_i$ that is associated to $I_{k_0}'$. 
\end{itemize}

It remains to sum up the cost over all iterations. The sum over all iterations of type (1) and (2) is bounded by
\begin{align*}
\tilde{O}(n(n^2+n\log \Mea(P)+\sum_{i=1}^n\log M(\sigma(z_i,P)^{-1})+\sum_{i=1}^{n}\log M(P'(z_i)^{-1})))=\\
\tilde{O}(n(n^2+n\log \Mea(P)+\sum_{i=1}^{n}\log M(P'(z_i)^{-1})))
\end{align*}
because the cost of an iteration is allocated to a certain root $z_i$ only a logarithmic number of times. For the sum over all iterations of type (3), we remark that, for a certain $l$, there can be at most 
$\max\nolimits_{k=1,\ldots,m} s_{\max,k}'$ iterations of type (3). Namely, the number of iterations to refine a certain interval $I_k'$ with $\mathcal{M}(I_k')=1$ to a size less than $2^{-\kappa}$ is bounded by $s_{\max,k}'$. Hence, the sum of the first term $n\cdot 2^l$ in (\ref{bound-multipoint}) over all $l$ is bounded by $\max\nolimits_{k=1,\ldots,m} s_{\max,k}\cdot n\cdot\kappa$. The sum over the remaining term is again bounded by
\[
\tilde{O}(n(n^2+n\log \Mea(P)+\sum_{i=1}^{n}\log M(P'(z_i)^{-1})))
\]  
because the cost of an iteration is allocated to a certain root $z_i$ only a logarithmic number of times. 
We summarize:

\begin{theorem}[Restatement of Theorem~\ref{main: refinement}]\label{mainresult2}
Let $P = P_n x^n + \ldots + P_0 \in \R[x]$ be a real polynomial with $1/4 \le \abs{P_n} \le 1$, and let $\kappa$ be a positive integer. Computing isolating intervals of size less than $2^{-\kappa}$ for all real roots needs a number of bit operations bounded by
\begin{align}\label{result:bitcomplexity:total} 
& \tilde{O}(n\cdot(\kappa+n^2+n \log \Mea(P)+\sum_{i=1}^n \log M(P'(z_i)^{-1}))) \\
&\qquad = \tilde{O}(n\cdot(\kappa+n^2+n \log \Mea(P)+  \log M(\Disc(P)^{-1}))). \end{align}
The coefficients of $P$ have to be approximated with quality
$$\tilde{O}(\kappa+n+\tau_{P}+\max_{i}(n\log M(z_{i})  + \log M(P'(z_{i})^{-1}) )).
$$
For a polynomial $P$ with integer coefficients of size less than $2^\tau$, computing isolating intervals of size less than $2^{-\kappa}$ for all real roots needs $\tilde{O}(n(n^2+n\tau+\kappa))$ bit operations.
\end{theorem}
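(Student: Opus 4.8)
The plan is to obtain Theorem~\ref{mainresult2} by combining the cost estimate of Theorem~\ref{complexity-refinement1} with the batching argument of Section~\ref{analysis:refinement2}: the bound (\ref{totalcomplexity-refinement1}) already contains all the ``right'' terms except for an $n^{2}\kappa$ summand, and the whole task is to replace this by $n\kappa$ using fast approximate multipoint evaluation. First I would recall, from Lemma~\ref{number-of-iterations} and Lemma~\ref{cost-at-node}, that \textsc{Refine} performs only $\tilde{O}(n)$ processing steps in total, that each step carries out a constant number of approximate evaluations of $P$ and $P'$, and that the precision needed when processing an active interval $I\subset I_{k}$ of width $w(I)\ge 2^{-\kappa}$ is bounded by (\ref{precisionbound1}) when $w(I)\ge\sigma(z_{k},P)/2$ and by (\ref{precisionbound2}), i.e.\ $O(\kappa+n\log n+\tau_{P}+n\log M(z_{k})+\log M(P'(z_{k})^{-1}))$, otherwise. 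Taking the maximum of (\ref{precisionbound2}) over all roots gives at once the asserted quality $\tilde{O}(\kappa+n+\tau_{P}+\max_{i}(n\log M(z_{i})+\log M(P'(z_{i})^{-1})))$ of the coefficient approximations, so only the bit-complexity estimate remains.

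Next I would set up the round structure: partition the refinement into rounds indexed by $l=0,1,\ldots,\ceil{\log\kappa}+1$, where in round $l$ one simultaneously processes all \emph{$l$-active} intervals, i.e.\ active intervals of width at least $2^{-2^{l}}$; intervals that drop below $2^{-2^{l}}$ during a round are simply removed from that round. Since there are at most $m\le n$ such intervals and each needs only a constant number of evaluations of $P$ and $P'$, a constant number of calls to approximate multipoint evaluation (Theorem~\ref{lem:multpolyeval}) on $O(n)$ aggregated points suffices per iteration, and the cost of such a call is, up to logarithmic factors, that of the single most expensive individual evaluation. Using $w(I_{k_{0}}')\ge 2^{-2^{l}}$ for the worst interval $I_{k_{0}}'$ together with (\ref{precisionbound1}) and (\ref{precisionbound2}), the cost of one iteration is $\tilde{O}(n(2^{l}+n+\tau_{P}+n\log M(z_{i})+\log M(P'(z_{i})^{-1})))$ for some root $z_{i}$ in the one-circle region of $I_{k_{0}}'$, which is exactly the bound (\ref{bound-multipoint}).

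Then I would run the three-case charging argument of Section~\ref{analysis:refinement2}. For $l=0$ the $2^{l}$ term is $O(1)$ and the iteration cost is charged to a root associated with $I_{k_{0}}'$. For $l>0$ in which some $l$-active interval has at least two roots in its one-circle region, those roots have separation smaller than $2^{-2^{l-1}}$, so $2^{l}=O(\log M(\sigma(z_{i},P)^{-1}))$ and the cost is charged accordingly. For $l>0$ in which every $l$-active interval isolates a single root, the number of iterations of this type is bounded by $\max_{k}s_{\max,k}'$, hence summing $n\cdot 2^{l}$ over all such rounds yields $\tilde{O}(n\kappa)$. Since the remaining (non-$2^{l}$) part of (\ref{bound-multipoint}) is charged to each root only polylogarithmically often, and since $\sum_{i}\log M(\sigma(z_{i},P)^{-1})=O(n^{2}+n\log\Mea(P)+\sum_{i}\log M(P'(z_{i})^{-1}))$ by inequality (\ref{measure of shifted polynomial}) (as already used in the proof of Theorem~\ref{main:theorem}), the total is $\tilde{O}(n(\kappa+n^{2}+n\log\Mea(P)+\sum_{i}\log M(P'(z_{i})^{-1})))$; the discriminant form follows from inequality (\ref{upper bound on sum of inverse derivatives}). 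Finally, the integer case is obtained by applying the algorithm to $2^{-t}p$ with $2^{t-1}\le|p_{n}|<2^{t}$ and substituting $\tau_{P}\le\tau$, $\Mea(P)\le(n+1)2^{\tau}$ (inequality (\ref{upper bound on measure})), $\Disc(P)=2^{-(2n-2)t}\Disc(p)$ with $|\Disc(p)|\ge 1$ integral, and $\log M(z_{i})=O(\tau+\log n)$, giving $\tilde{O}(n(n^{2}+n\tau+\kappa))$.

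The main obstacle I anticipate is the amortized bookkeeping in the round structure rather than any hard new estimate: one must argue cleanly that removing intervals mid-round does not inflate the iteration count, that the number of type-(3) iterations in a fixed round is genuinely bounded by $\max_{k}s_{\max,k}'$, and that each root receives only polylogarithmically many charges across all rounds and cases; in addition one must confirm that aggregating the $O(1)$ evaluation points of each of the $O(n)$ active intervals into a single multipoint evaluation costs no more than the most expensive individual evaluation up to logarithmic factors, which is where Theorem~\ref{lem:multpolyeval} is invoked with the worst-case precision $2^{l}$ in place of $L$.
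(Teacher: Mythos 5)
Your proposal follows the paper's proof of Theorem~\ref{mainresult2} almost exactly: the rounds indexed by $l$ with $l$-active intervals of width at least $2^{-2^{l}}$, the aggregation of per-interval evaluations into $O(1)$ approximate multipoint evaluations via Theorem~\ref{lem:multpolyeval}, the three-case charging scheme (constant $2^{l}$ at $l=0$; $2^{l}=O(\log M(\sigma(z_{i},P)^{-1}))$ when a round contains an interval with several roots; $\max_{k}s_{\max,k}'$ iterations per round otherwise), the elimination of $\sum_{i}\log M(\sigma(z_{i},P)^{-1})$ via inequality~(\ref{measure of shifted polynomial}), and the integer specialization are precisely the content of Section~\ref{analysis:refinement2} together with Theorem~\ref{complexity-refinement1}. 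The potential gaps you flag at the end are the genuine bookkeeping points, and the paper dispatches each of them with a one-line remark, so you have identified exactly where the remaining work sits.
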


\section{Conclusion}

We have introduced a novel subdivision algorithm, denoted $\AD$, to compute isolating intervals for the real roots of a square-free polynomial with real coefficients. The algorithm can also be used to further refine the isolating intervals to an arbitrary small size.

In our approach, we combine the Descartes method with Newton iteration and approximate (but certified) arithmetic. As a result, $\AD$ uses an almost optimal number of iterations, and the precision demand as well as the working precision are directly related to the actual geometric locations of the roots; hence, the algorithm adapts to the actual hardness of the input. The bit complexity of our method matches that of Pan's method from 2002, which is the best algorithm known and goes 
back to Sch\"onhage's splitting circle method from 1982. By comparison, our approach is completely different from Pan's method and, in addition, it is simpler. Because of its simpleness, we consider our algorithm to be well suited for an efficient implementation. Furthermore, it can be used to isolate the roots in a given interval only, whereas Pan's method has to compute all complex roots at the same time.

\Kurt{The first author, A. Kobel, and F. Rouillier are currently working on an implementation of $\AD$. More precisely, they are considering a randomized version of the algorithm, where admissible (subdivision) points at chosen randomly and not via approximate multipoint evaluation as proposed in this paper (see Section~\ref{sec:multipoint}). They expect the randomized version to show good practical behavior. It may even have an expected bit complexity comparable to the algorithm presented in this paper.}

\bibliography{bib}
\bibliographystyle{abbrv}
\end{document}